\newtheorem{thm}{Theorem}[section]
\newtheorem{cor}[thm]{Corollary}
\newtheorem{lem}[thm]{Lemma}
\newtheorem{prop}[thm]{Proposition}
\theoremstyle{definition}
\newtheorem{defn}[thm]{Definition}
\newtheorem{conj}{Conjecture} 
\newtheorem{ex}[thm]{Examples}
\newtheorem{example}[thm]{Example}
\theoremstyle{remark}
\newtheorem{rem}[thm]{Remark}
\numberwithin{equation}{section}
\newcommand{\Z}{\mathbb Z}
\newcommand{\C}{\mathbb C}
\newcommand{\R}{\mathbb R}
\newcommand{\N}{\mathbb N}
\newcommand{\Pro}{\mathbb P}
\newcommand{\gr}{\mathrm{gr}}
\newcommand{\FF}{\mathcal{F}}
\newcommand{\motic}{motic\,\,}
\newcommand{\Motic}{Motic\,\,}
\newcommand{\Real}{\mathrm{Re}}
\newcommand{\MT}{\mathcal{MT}}
\newcommand{\sd}{\mathrm{sd}}
\newcommand{\fP}{P}
\newcommand{\fD}{D}
\newcommand{\zetam}{\zeta^{ \mathfrak{m}}}
\newcommand{\Q}{\mathbb Q}
\newcommand{\Li}{\mathrm{Li}}
\newcommand{\s}{\mathbf{s}}
\newcommand{\To}{\longrightarrow}
\newcommand{\A}{\mathbb{A}}
\newcommand{\mom}{\mathrm{mom}}
\newcommand{\dR}{\mathfrak{dr}}
\newcommand{\gen}{\mathrm{gen}}
\newcommand{\Or}{\mathcal{O}}
\newcommand{\V}{\mathcal{V}}
\newcommand{\IR}{\mathrm{IR}}
\newcommand{\UV}{\mathrm{UV}}
\newcommand{\mot}{\mathrm{mot}}
\newcommand{\mm}{\mathfrak{m} }
\newcommand{\HH}{\mathcal{H} }
\newcommand{\Lef}{\mathbb{L} }
\newcommand{\GG}{\mathbb{G} }
\newcommand{\id}{\mathrm{id} }
\newcommand{\HF}{\mathcal{FP} }
\newcommand{\ext}{\mathrm{ext}\,}
\newcommand{\q}{/\!\!/}
\newcommand{\per}{\mathrm{per}}
\newcommand{\uu}{\mathfrak{u}}
\newcommand{\Spec}{\mathrm{Spec} \,}
\newcommand{\Pe}{\mathcal{P}}
\begin{document}
\author{Francis Brown}
\begin{title}[Feynman amplitudes and cosmic Galois group]{Feynman amplitudes, coaction principle,  and cosmic Galois group}\end{title}
\maketitle
\begin{abstract} The first part of a set of  notes  based on lectures  given at the IHES in May 2015 on Feynman  amplitudes and motivic periods.  
\end{abstract}

\subsection{Some motivation for physicists} Scattering amplitudes are ubiquitous in high energy physics and have been intensively studied from  at least three angles:
\begin{enumerate}
\item in phenomenology, where amplitudes in quantum field theory are obtained as a sum of Feynman integrals associated to graphs  which represent interactions between fundamental particles. This presents a huge computational challenge with important applications to collider experiments.
\item in superstring perturbation theory, where amplitudes are expressed as integrals over moduli spaces of curves with marked points.
\item  in various modern  approaches, most notably in the planar limit of $N=4$ SYM,  which avoid the use of Feynman graphs altogether and seek to construct the amplitude directly, either  via the bootstrap method, or  via geometric approaches such as on-shell diagrams or the amplituhedron. 
\end{enumerate}
The goal of these notes is to study a new kind of structure which is potentially satisfied by amplitudes  in all three situations.  
To motivate it, consider first the case of  the dilogarithm function, defined for $|z| <1$ by the sum
$$\Li_2(z) = \sum_{n\geq 1} {z^n \over n^2}  \ . $$
 It is an iterated integral over the projective line minus three points, and is the universal function describing amplitudes at one loop.  Chen's general theory of iterated integrals  naturally associates to it a coproduct $\Delta^{\mathrm{it}}$ satisfying
$$\Delta^{\mathrm{it}} \,  \Li_2(z) = 1 \otimes \Li_2(z) + \Li_1(z) \otimes \log(z) + \Li_2(z) \otimes 1\ , $$
where $\Li_1(z) = - \log(1-z)$. 
Suitably interpreted, this coproduct  encodes both the differential equation $ {\partial \over \partial z} \Li_2(z) =  \Li_1(z)\,  d\log(z)$, and also the action of monodromy
$\Li_2(z) \mapsto \Li_2(z) +  2  \pi i   \log(z)$ as $z$ winds around the point $z=1$ in the positive direction.  It is well-known that Feynman integrals and amplitudes of different orders can be  related both with respect to differentiation, and  also with respect to branch cuts, and so it comes as no surprise that the  coproduct $\Delta^{\mathrm{it}}$ has found many uses in high-energy physics via the so-called method of symbols. 

Now consider  the much more subtle situation when  $z=1$. Then $\Li_2(1)= \zeta(2)$ is simply a number and all the  structure described above seems to be lost.  It can, however, be retrieved by replacing $\zeta(2)$ with a `motivic period' $\zetam(2)$, which as a first approximation, can be simply  thought of as   a matrix of numbers (in this case, a two by two matrix).  It now satisfies a \emph{coaction}, rather than a coproduct:
$$\Delta \zetam(2) = \zetam(2) \otimes 1\ . $$
The dilogarithm can also be promoted a motivic version  $\Li_2^{\mm}(z)$ in a similar manner, and  has  a   coaction:
$$\Delta \Li^{\mm}_2(z) =  \Li_2^{\mm}(z)  \otimes 1  + \Li_1^{\mm}(z) \otimes  \log^{\uu}(z) + 1 \otimes  \Li_2^{\uu}(z)$$
 which is valid both for $z$ viewed as a  function, and also for any algebraic values of $z$, including $z=1$ (in which case, $\Li_2^{\mm}(1) = \zetam(2)$ and  $\log^{\uu}(1)= \Li_2^{\uu}(1)= 0$).  The quantities on the right-hand side of the tensor product are of a different nature from those on the left, and could be called  \emph{unipotent de Rham} periods.  This coaction is a much deeper structure than the coproduct  $\Delta^{\mathrm{it}}$.
 Motivic periods have a natural homomorphism $\per$ to the complex numbers: for example,  $\per\, (\Li_2^{\mm}(z)) = \Li_2(z)$.

If we imagine that Feynman integrals  and scattering amplitudes more generally have canonical `motivic' versions, as one certainly expects,  then they inherit a coaction, and it is natural to ask how this coaction relates to the  structure of amplitudes. Indeed, any of the three situations described above should generate a space $H$ of motivic periods, and  a corresponding  algebra $A$ of de Rham periods.  A general `coaction principle'   would be the equation 
\begin{equation} \label{introcoact}   \Delta H \subset H \otimes A \ .
\end{equation} 
In other words, the class of amplitudes is stable under the coaction. An equivalent way to phrase this is in terms of group theory.  Indeed, $A$ is naturally a Hopf algebra, and defines  a group $C$ whose points are the homomorphisms from $A$ to any commutative ring. The equation  $(\ref{introcoact})$ is equivalent
to a group action
\begin{equation}     C \times H \To H\ . 
\end{equation}
In other words, the space of amplitudes in the theory are stable under the action of a group, which could be called a  `cosmic' Galois group, to borrow a phrase from  Cartier \cite{Folle}.

What possible evidence is there for such a structure? Taking each of the three situations in turn, we find the following:

\begin{enumerate}

\item   In quantum field theory, Panzer and Schnetz computed every known amplitude in massless $\phi^4$ theory, and, assuming the `period conjecture' replaced them with their motivic versions \cite{BrMTZ}, generating an algebra $H$.  Remarkably, they found that the coaction principle $(\ref{introcoact})$ holds in every case. Evidence in  \cite{CutsCoproducts, CutsCoproducts2}  suggests that Feynman amplitudes of small graphs with non-trivial masses and momenta also satisfy a similar property. 

\item In string perturbation theory, Stieberger and Schlotterer  \cite{SS} replaced the multiple zeta values in the expansion of  the genus zero open string with their motivic versions \cite{BrMTZ}.  They found that the coaction  gives rise to a spectacular factorisation of the amplitude, which is invisible  on the level of numbers. Similarly, the closed genus zero string can be expressed in terms of single-valued multiple zeta values \cite{ClosedString}, whose motivic versions are also known to be closed under the coaction $(\ref{introcoact})$.  
\item Several features of the hexagon bootstrap \cite{Bootstrap} are expressible in terms of  a coaction principle $(\ref{introcoact})$. 
Here, $H$ is a subspace of the space of motivic iterated integrals on the moduli space  of curves $\mathcal{M}_{0,6}$ of genus $0$ with six marked points \cite{NotesMot} \S10.6.2, and   equation $(\ref{introcoact})$ is equivalent to  `last $n$ entries' constraints on the iterated integrals.    \end{enumerate}

In all these settings, we believe that a version of the coaction principle $(\ref{introcoact})$ holds, after possibly enlarging the space of integrals or amplitudes under consideration. 
It is also important to note that in the first setting, the coaction principle holds graph by graph, whereas in the third setting, it   operates on the entire amplitude, i.e., the sum of all graphs (these  two  statements are by no means equivalent).

 In this paper, we concentrate only on the first setting. Our first goal, then,  is to define canonical motivic Feynman integrals for a large class of graphs in perturbative quantum field theory in an even number of space-time dimensions.  We then develop tools to prove that, after enlarging the  space of motivic integrals under consideration slightly, the coaction principle $(\ref{introcoact})$ does indeed hold. This is just the first step in a programme to study amplitudes via the representation theory of groups.

\subsection{Statement of results}
To any  Feynman graph  $G$ one associates a Feynman integral,  which is given  by a possibly divergent projective integral of the general parametric form
\begin{equation} \label{introIG}
I_G(q,m) = \int_{\sigma_G}  \omega_G \quad \hbox{ where }  \quad \omega_G=  {P(\alpha_e) \Omega_G  \over \Psi_G^A\, \Xi_G(q,m)^B}\ .
\end{equation} 
Here $\Psi_G$, $\Xi_G(q,m)$, and $P(\alpha_e)$ are certain  polynomials in variables $\alpha_e$ indexed by the edges of $G$, $\Omega_G$ is defined in $(\ref{OmegaGdef})$,  $A,B \in \Z$, and $\sigma_G$ is the domain where all
$\alpha_e \geq 0$. 
These quantities are involved in  predictions for particle collider experiments.
It is immediate from this integral representation, when it converges, that amplitudes are families of periods \cite{KoZa}, depending on kinematic data such as particle
momenta  $q=(q_i)_i$ and masses $m=(m_e)_e$.   
A  deep idea, originating with  Grothendieck, is that there should exist a Galois theory of periods \cite{An1,An2}, extending the classical Galois theory of algebraic numbers. We shall apply these ideas to the integrals of the form $(\ref{introIG})$.

The first problem, when trying to set up a Galois theory of periods, is that one immediately runs into difficult conjectures concerning  motives.
A simple way around this is to work in a category of  systems of realisations\footnote{We shall abusively use the word  `motive' to signify an object in such a category which is the image of the cohomology of an algebraic variety.}
and the second part of these notes \cite{NotesMot}  explain how this can be done without difficulty. 
In brief, the objects in a   category of realisations $\HH(S)$  on a smooth scheme $S$ over $\Q$ are triples $\V=(\V_{B}, \V_{dR}, c)$ where  $\V_B$ is a local system of $\Q$-vector spaces over $S(\C)$;
$\V_{dR}$ is an algebraic vector bundle with integrable connection on $S$
and regular singularities at infinity, and $c$ is a  Riemann-Hilbert correspondence  between $\V_B$ and $\V_{dR}$.  This data should define a variation of mixed Hodge structures on $S$.
A further subtlety, which has nothing to do with 
questions about motives, is how to interpret any given family of integrals, such as $(\ref{introIG})$,  as a period of the cohomology of an algebraic variety since this can involve choices.  It turns out that it  can be done canonically for Feynman amplitudes.

\begin{thm}  For any Feynman graph $G$ with generic kinematics $q,m$, there is a canonical way to associate to a  convergent integral $(\ref{introIG})$:

(i).   an object
$\mot_G $ in $\HH(S)$, where  $S$ is a Zariski open in  a space of kinematics,

(ii).  a  de Rham class $[\omega_G]$ in the generic fiber of $(\mot_G)_{dR}$,

(iii). a Betti class $[\sigma_G]$ in a certain (Euclidean) fiber  of $(\mot_G)^{\vee}_B$, 

\noindent
such that the integral $(\ref{introIG})$ is the period
$$ \sigma_G(c(\omega_G)) = I_G(q,m) \ .$$
\end{thm}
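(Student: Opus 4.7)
The plan is to realize the convergent integral $I_G(q,m)$ as the period of an explicitly constructed object in $\HH(S)$, obtained by canonically resolving the singularities of the integrand in projective edge space. The ambient space is $\mathbb{P}^{N-1}$ with homogeneous coordinates $(\alpha_e)_{e \in E(G)}$, where $N = |E(G)|$. Inside this sit the polar locus $Y_G = V(\Psi_G \Xi_G)$ and the coordinate divisor $B = \bigcup_{e} V(\alpha_e)$, on which the Zariski closure of $\sigma_G$ has its boundary. Typically $Y_G \cap B \neq \emptyset$, so $\omega_G$ is not a well-defined relative de Rham class on $(\mathbb{P}^{N-1}\setminus Y_G, B \setminus (B\cap Y_G))$ and the desired Betti--de Rham pairing is not yet available.

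First I would construct a canonical blowup $\pi : P_G \to \mathbb{P}^{N-1}$ that separates the polar locus from the boundary. Following the Bloch--Esnault--Kreimer philosophy and its extension developed in the paper, one iteratively blows up the linear subspaces $L_\gamma = \bigcap_{e \in \gamma} V(\alpha_e)$ indexed by the \motic subgraphs $\gamma \subset G$, along which $\Psi_G$ and $\Xi_G$ degenerate in a controlled, subgraph/quotient manner. The collection of motic subspaces forms a building set, so the order of blowups is immaterial and $P_G$ depends only on $G$. Writing $\widetilde{Y}_G$ for the total transform of $Y_G$ and $\widetilde{B}_G$ for the strict transform of $B$, the key geometric input is that, for $(q,m)$ generic, convergence of the parametric integral translates by power counting on exceptional divisors into the disjointness $\widetilde{Y}_G \cap \widetilde{B}_G = \emptyset$.

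With this geometry at hand, set $X_G = P_G \setminus \widetilde{Y}_G$ and $D_G = \widetilde{B}_G \cap X_G$, and carry out the construction in a family over an appropriate Zariski open $S$ of kinematic space on which $\Psi_G$, $\Xi_G$, and the combinatorial structure of $\widetilde{Y}_G \cup \widetilde{B}_G$ remain generic. Define $\mot_G \in \HH(S)$ to be the system of realisations attached to the relative cohomology $H^{N-1}(X_G, D_G)$: its algebraic de Rham realisation is the relative algebraic de Rham bundle with Gauss--Manin connection, its Betti realisation is the local system of singular cohomology groups, and $c$ is the Riemann--Hilbert comparison. Then $[\omega_G]$ is the class of $\pi^{*}\omega_G$, which is regular on $X_G$ thanks to disjointness, and $[\sigma_G]$ is the class of the strict transform of the positive simplex, taken in the fiber of $(\mot_G)^{\vee}_B$ over a real Euclidean basepoint of $S(\R)$. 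The identity $\sigma_G(c(\omega_G)) = I_G(q,m)$ follows because $\pi$ is an isomorphism over the interior of $\sigma_G$, reducing the period pairing to the original integral.

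The main obstacle will be the canonical resolution step itself: pinning down precisely which motic subspaces must be blown up, verifying that they form a building set so that $P_G$ is intrinsic to $G$, and proving that after blowup the pole divisor and boundary become disjoint exactly when the parametric integral converges. A second, milder difficulty will be to carry out these blowups in families, controlling the locus $S \subset \Spec \Q[q_i, m_e]$ over which the geometry is uniform and the cohomology $H^{N-1}(X_G, D_G)$ forms an admissible variation. Once these two points are established, the packaging into $\HH(S)$ and the identification of the period are formal consequences of the realisation-theoretic machinery developed in the companion notes.
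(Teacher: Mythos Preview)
Your overall architecture is correct and matches the paper: blow up $\Pro^{N-1}$ along the coordinate subspaces $L_\gamma$ indexed by \motic subgraphs (these are closed under unions, which is what makes the blow-up canonical), then take relative cohomology of the complement of the strict transform of the graph hypersurface relative to the boundary divisor. However, the ``key geometric input'' you state is wrong, and this is not a cosmetic slip: it is precisely the point where the actual work lies.

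First, you have the transforms reversed. In the paper, $Y_G$ is the \emph{strict} transform of $V(\Psi_G \Xi_G)$, and the boundary $D$ is the \emph{total} transform of the coordinate hyperplanes (so $D$ contains the exceptional divisors $D_\gamma$). If you take the total transform of the polar locus, you remove every exceptional divisor and the blow-up accomplishes nothing. If you take only the strict transform of the coordinate hyperplanes as your boundary, the lifted simplex $\widetilde\sigma_G$ has facets on the exceptional divisors that are not contained in your $\widetilde B_G$, so it does not define a relative cycle.

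Second, and more seriously, the disjointness $\widetilde Y_G \cap \widetilde B_G = \emptyset$ is simply false, for any reasonable interpretation of the transforms. The paper's Theorem~\ref{thm: recursive} computes $Y_G \cap D_\gamma$ explicitly and shows it is a non-empty union of products of smaller graph hypersurfaces; this is the whole recursive structure. This is why $\mot_G$ is defined with $D \backslash (D\cap Y_G)$ rather than $D$ alone. Convergence does \emph{not} make the pole divisor and boundary disjoint. What convergence buys, via power counting (Lemma~\ref{lem: valuationsconvergence}), is that $\pi^*\omega_G$ acquires no poles along the exceptional divisors $D_\gamma$, so that it extends to a regular form on $P^G\backslash Y_G$ and hence gives a de Rham class. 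The well-definedness of the Betti class $[\sigma_G]$ is a separate argument (Theorem~\ref{thmsigmaGavoidsY}): one shows $\widetilde\sigma_G \cap Y_G(\C)=\emptyset$ using positivity of the coefficients of $\Psi_G,\Xi_G$ on the region $U^{\gen}_{Q,M}$, not convergence. You have collapsed these two distinct mechanisms into one false geometric claim.
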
 

The object $\mot_G$ is defined as a compatible system of cohomology groups  (namely,    Betti  and de Rham)  of a family of pairs of algebraic varieties 
\begin{equation} \label{intropair} 
(\Pro^G \backslash Y_G ,  D \backslash (D \cap Y_G))
\end{equation}
where $\Pro^G$ is a  blow up of projective space along linear subvarieties, $Y_G$ is the strict transform of the zero loci of $\Psi_G$ and $\Xi_G(q,m)$, and 
$D$ is a certain strict normal crossing divisor independent of $q,m$. This theorem generalises a  result in \cite{BEK} which treats a family of cases with no kinematic dependence ($B=0$ and $S$ is a point).

This theorem enables us to replace the Feynman integral  $I_G(q,m)$ with a canonical `motivic version' $I^{\mm}_G(q,m)=[\mot(G), \sigma_G, \omega_G]^{\mm}$ which is 
defined as a matrix coefficient of the torsor of isomorphisms between two fiber functors on $\HH(S)$. The integral itself $(\ref{introIG})$ can be retrieved from it by applying the period homomorphism.  The motivic integral now carries the action  of an affine group scheme which is the Tannaka group of $\HH(S)$ with respect to the de Rham fiber functor at the generic point.  This group factors through a certain quotient which acts faithfully
on the motivic periods of $\mot_G$ relative to $\sigma_G$,  where $G$ has  at most $Q$ external momenta and $M$  possible non-zero masses, hereafter called `of type $(Q,M)$'. This quotient is denoted by $C_{Q,M}$ and could be called a \emph{cosmic Galois group}, following \cite{Folle}. 
In this way, every convergent integral $(\ref{introIG})$ is replaced by a finite-dimensional representation of $C_{Q,M}$, and  this enables us to assign an array of new invariants
to amplitudes. Examples include: a weight filtration\footnote{This provides a rigorous meaning to many statements in the physics literature referring
to the `transcendental weight' of amplitudes}, a canonical connection (differential equation with regular singularities),
Hodge polynomials, Galois conjugates,  and various measures of complexity such as the  rank, dimension, and unipotency degree. These ideas are explored in the second half of these notes \cite{NotesMot}. In particular, we generalise the notion of the `$f$-alphabet' decomposition of multiple zeta values, which has various applications in physics, 
to all periods. 
 The next step in this programme is to formulate conjectures which relate topological invariants of graphs to the above-mentioned invariants of their motivic periods.

We can subsequently define $\HF^{\mm}_{Q,M}$ to be the vector space spanned by  all motivic periods of \motic\!\!\footnote{This notion is defined in \S\ref{sect: Section3}   and generalises
 the notion of one-particle irreducible.}  Feynman graphs of type $(Q,M)$ relative to $\sigma_G$.  The group action 
\begin{equation} \label{introCQMaction} 
C_{Q,M} \times \HF^{\mm}_{Q,M} \To \HF^{\mm}_{Q,M}
\end{equation} 
can be expressed equivalently as a coaction 
\begin{equation} \label{introDeltacoaction} 
 \Delta: \HF^{\mm}_{Q,M} \To \HF^{\mm}_{Q,M} \otimes_{k_{Q,M}} \HF^{\dR}_{Q,M}
 \end{equation}
where $\HF^{\dR}_{Q,M}$ is the ring generated by the de Rham periods of $\mot_G$. Note that  the formula we gave for  the motivic dilogarithm  in the first paragarph involved the unipotent coaction and unipotent de Rham periods for simplicity (the full coaction involves powers of $\Lef^{\dR}$, see \cite{NotesMot}, last lines of \S10). 
A key point is that there is a general formula for this coaction in terms of the cohomology
$(\mot_G)_{dR}$, and this can be computed explicitly
in some cases.  The apparently unphysical case of graphs with no masses or momenta plays a special role in this theory. Indeed,  $\HF^{\mm}_{0,0}$ 
is an algebra, and  each $\HF^{\mm}_{Q,M}$ is a  module over it:
$$\HF^{\mm}_{0,0}  \times \HF^{\mm}_{Q,M} \To \HF^{\mm}_{Q,M}\ .$$

\subsection{Product structure and stability} 
There is \emph{a priori} no reason whatsoever for  the action of $C_{Q,M}$ to preserve the space of motivic  amplitudes, which form a small subfamily of 
integrals $(\ref{introIG})$ with highly specific numerators.\footnote{From now on, we shall  loosely call \emph{amplitude} a  Feynman  integral of the form $(\ref{introIG})$ with a specific numerator,  which arises, for instance,  from the Feynman rules of a given quantum field theory,  as opposed to an arbitrary integral with that shape. In the literature, the word amplitude  is often reserved to describe the sum of all Feynman integrals at a given loop order. We shall call the  latter  the `full amplitude' in accordance with some authors. }  However, using the formalism of motivic periods,  Panzer  and Schnetz  
computed the Galois action on a huge family of amplitudes in $\phi^4$ theory of type $(0,0)$ 
and verified, astonishingly,  that they are preserved by $C_{0,0}$ in all cases.    The motivation for these lectures was to try to understand this extraordinary fact.

The theory outlined in these notes is best explained by  the following conjecture.

\begin{conj}   \label{conjIntro}
The  motivic periods of a graph $G$ of type $(Q,M)$ are regularised  versions of  motivic integrals of the form  $(\ref{introIG})$. Those   of weight
$\leq k$
are in the algebra generated by regularised  motivic periods of graph minors   of $G$ with at most $k+1$ edges. 
\end{conj}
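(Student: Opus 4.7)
The plan has two halves, corresponding to the two clauses of the conjecture.

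For the first clause, my strategy is to give an explicit model for the algebraic de Rham cohomology $(\mot_G)_{dR}$ by rational differential forms on $\Pro^G$ with poles along $Y_G \cup D$, and dually for $(\mot_G)^\vee_B$ using $\sigma_G$ together with relative cycles supported on tubular neighbourhoods of strata of $D$. Up to exact forms, such differentials have the shape $P(\alpha_e)\, \Omega_G / (\Psi_G^A\, \Xi_G^B)$ appearing in $(\ref{introIG})$. A general motivic period of $\mot_G$ is a matrix coefficient $[\mot_G, \gamma, \omega]^{\mm}$ for such $\gamma$ and $\omega$; the corresponding numerical integral is then of the claimed form, once a regularisation near the corners of $\sigma_G$ that meet $Y_G \cup D$ is supplied by tangential base points at the components of $D$, in the spirit of Deligne's regularisation of multiple zeta values.

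For the second clause, my plan is to exploit the weight filtration on the relative cohomology in question. The components of $D$ in the blow-up $\Pro^G$ are indexed by \motic subgraphs $\gamma \subseteq G$, and iterated intersections correspond to nested families of such subgraphs. A standard spectral sequence identifies $\gr^W_k (\mot_G)_{dR}$ with the cohomology of these boundary strata (with Tate twists tracking the number of residues taken). The contraction--deletion properties of $\Psi_G$ and $\Xi_G$ identify these residue strata with the motives $\mot_{G'}$ of graph minors $G'$ of $G$, obtained by contracting or deleting the edges indexed by $\gamma$. Each residue drops both the weight and the edge count by one, yielding the bound ``weight $\leq k$ implies a minor with at most $k+1$ edges''. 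Induction on weight via the coaction $\Delta$, whose explicit formula sends a motivic period to a sum of tensor products of motivic periods of graded pieces of $\mot_G$, then expresses any weight-$\leq k$ motivic period of $G$ as an element of the algebra generated by regularised motivic Feynman integrals of such minors.

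The \emph{main obstacle} is the combinatorial-geometric step: identifying the strata of $\Pro^G$, indexed by nested \motic subgraphs, with minors of $G$, compatibly with both the weight filtration and the recursive structure of the $\mot_{G'}$. This requires tracking how $\Psi_G$ and $\Xi_G(q,m)$ restrict and factor on boundary strata (contraction--deletion in families, with the full kinematic dependence on $q,m$), and establishing the sharp edge-count bound $k+1$ rather than some looser linear bound. A secondary technical obstacle is ensuring that the motivic regularisations remain inside $\HF^{\mm}_{Q,M}$ uniformly across the kinematic family $S$, so that the inductive statement closes within the class of graphs of type $(Q,M)$ and is not forced into a strictly larger ambient space of motivic periods.
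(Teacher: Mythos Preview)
This statement is a \emph{conjecture} in the paper, not a theorem; the paper explicitly says that ``what is presently lacking for a proof of this conjecture is a suitable notion of regularisation for motivic periods'' and that the required theory of limiting mixed Hodge structures and limiting motivic periods was not developed. So your proposal must be judged as an attempted proof of an open statement, and it has two genuine gaps that the paper itself isolates.

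\textbf{First gap: regularisation is not a construction you have given.} Your first clause relies on representing every class in $(\mot_G)_{dR}$ by a global rational form of the shape $P\,\Omega_G/(\Psi_G^A\Xi_G^B)$, and then ``supplying'' a regularisation by tangential base points. But the paper notes explicitly that not every de Rham class is of this global form (see the remark after Definition~\ref{defnmotFeynper}), and Appendix~\S\ref{sect: nonglobalexample} is devoted to computing a non-global period by hand. Your phrase ``in the spirit of Deligne's regularisation'' is a name for the missing input, not a replacement for it: one needs a functorial notion of limiting motivic period compatible with the weight filtration and the coaction, and this is precisely what the paper says is absent.

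\textbf{Second gap: the spectral sequence controls graded pieces, not the filtration.} Your second clause is close in spirit to what the paper does prove (Theorem~\ref{thmstabgrad}: $\gr^W_k\mot_G$ is a subquotient of cohomology of products of small-graph hypersurface complements). But the inductive step you need---that a motivic period of weight $\le k$ is actually in the image of the face maps from descendants of degree $\le k$---fails for $\mot_G$ as defined. The paper flags this exactly (remark after Theorem~\ref{thmkskeleton}): $P^G\backslash Y_G$ is \emph{not} affine, so Artin vanishing does not apply, and one cannot deduce that the face maps surject onto $W_k\mot_G$. Your appeal to ``induction on weight via the coaction $\Delta$'' does not circumvent this: the coaction formula~(\ref{eqn: CoactionImG}) runs over a basis of $\omega_{dR}^{\gen}(\mot_G)$, and without surjectivity of the face maps you cannot identify the low-weight pieces of that basis with periods of minors.

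The paper's actual strategy is to sidestep both gaps by passing to the \emph{affine} model $A^G\subset P^G$ of \S\ref{sectAffineModel}: there Artin vanishing holds, the face maps do surject, and one obtains the weaker Theorem~\ref{thmMainAffine} (affine motivic periods of weight $\le k$ are generated by affine periods of descendants of degree $\le k$), from which finiteness (Theorem~\ref{thm: finiteness}) follows. Your outline would become a correct proof of \emph{that} theorem if you replaced $P^G$ by $A^G$ throughout and dropped the word ``regularised''; as written it does not prove the conjecture.

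A minor point: the boundary strata are not indexed by arbitrary minors but by \motic subgraphs $\gamma\subset G$, and the recursive structure is $D_\gamma\cong P^\gamma\times P^{G/\gamma}$ (Theorem~\ref{thm: recursive}), so the relevant operations are ``pass to a \motic subgraph'' and ``contract a \motic subgraph'', not deletion in the graph-minor sense.
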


This conjecture means  that the Galois conjugates of a motivic amplitude of low weight of a graph  should be a regularised motivic period of its sub-quotient graphs of the form $(\ref{introIG})$. Since there 
are few graphs with a bounded number of edges, this  provides a  constraint on amplitudes to all orders. We call this the small graphs principle.
The upshot is that the topology of a graph  constrains the Galois theory of its amplitudes.  In the case $(Q,M)=(0,0)$, this theory partially  explains the observations of Panzer and Schnetz. 

What is presently  lacking for a proof of this conjecture is a suitable  notion of regularisation for motivic periods.\footnote{Unfortunately, when writing the technical background \cite{NotesMot} for these notes, I had not forseen that convergent Feynman integrals might require  the theory of limiting mixed Hodge structures, and a corresponding notion of limiting motivic period, and so it was not discussed.  I do not believe that this, or conjecture  \ref{conjIntro}, should pose  any major difficulties.}  In these notes, we prove a weaker, but more precise version of this conjecture, in which we replace the word `regularised' with `affine', which has a technical meaning (theorem \ref{thmMainAffine}).  It implies the

\begin{thm} The vector space $W_k \HF_{Q,M}^{\mm}$ is finite-dimensional.  In particular, the vector space generated by convergent   integrals $(\ref{introIG})$ which are of bounded weight\footnote{let us call a period  of weight $\leq k$
if it is the image of a motivic period of weight $\leq k$ under the period homomorphism}, for $G$ of any fixed type $(Q,M)$,  is finite-dimensional. 
\end{thm}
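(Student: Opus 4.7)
My plan is to derive the finite-dimensionality of $W_k \HF^{\mm}_{Q,M}$ as an essentially formal consequence of the affine strengthening of conjecture \ref{conjIntro}, namely theorem \ref{thmMainAffine}, combined with Theorem~1.1 on the existence and finite rank of $\mot_G$. Granting theorem \ref{thmMainAffine}, every motivic period of weight $\leq k$ in $\HF^{\mm}_{Q,M}$ lies in the subalgebra generated by affine motivic periods attached to graph minors of some motic Feynman graph $G$ of type $(Q,M)$ having at most $k+1$ edges. So I would split the task into two finiteness statements: first, that for each such fixed minor the contributing subspace is finite-dimensional; second, that only finitely many such minors can arise.

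For the first point, I would fix a motic graph $G'$ with at most $k+1$ edges and type bounded by $(Q,M)$, and apply Theorem~1.1 to it. The object $\mot_{G'} \in \HH(S')$ arises from the cohomology of the pair (\ref{intropair}), so its Betti and de Rham realisations are finite-dimensional over $\Q$. Hence the whole space of motivic periods of $\mot_{G'}$ relative to $[\sigma_{G'}]$ is a finite-dimensional $k_{Q,M}$-vector space, and in particular any weight-bounded subspace is too. For the second point, I would use the elementary combinatorial fact that, up to isomorphism, there are only finitely many Feynman graphs with at most $k+1$ edges, at most $Q$ external legs and at most $M$ distinct masses, and that each such graph has only finitely many minors. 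Combining the two, $W_k \HF^{\mm}_{Q,M}$ is contained in a finite sum of finite-dimensional subspaces, hence is itself finite-dimensional. The ``in particular'' clause then follows by applying the period homomorphism $\per$: every convergent integral $I_G(q,m)$ of bounded weight is the image under $\per$ of an element of $W_k \HF^{\mm}_{Q,M}$, and the image of a finite-dimensional space under a linear map is finite-dimensional.

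The hard part, which I would take as a black box here, is theorem \ref{thmMainAffine} itself. The non-trivial content is the statement that raising the weight by one costs at most one extra edge in the generating minor, and establishing it should rely crucially on the blow-up geometry of $\Pro^G$, the explicit form of the normal crossing divisor $D$ in (\ref{intropair}), and the compatibility of the coaction (\ref{introDeltacoaction}) with the restriction maps to boundary strata of $D$. Once that geometric statement is in hand, the deduction of the finite-dimensionality theorem above is entirely formal.
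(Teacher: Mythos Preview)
Your proposal is correct and follows the same route as the paper: invoke theorem \ref{thmMainAffine} as a black box, then use that there are only finitely many motic descendants of bounded degree (remark \ref{rem:  mot(triv)isQ(0)}), each contributing a finite-dimensional space of (affine) motivic periods since $\mot^a_{G'}$ has finite rank. The paper's deduction of theorem \ref{thm: finiteness} from theorem \ref{thmMainAffine} is literally one sentence; your argument spells out the same two finiteness ingredients (finitely many small graphs, finite rank per graph) and is essentially identical in content.
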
 

This theorem is non-trivial since   there are infinitely many graphs, and therefore implies infinitely many relations between periods of different graphs.

We also show that the weight-graded pieces of $\mot(G)$ in weight $\leq k$ are indeed generated by minors  of $G$ with at most $\leq k+1$ edges,  and we completely determine $W_2 \HF^{\mm}_{0,0}$. These simple facts already imply strong and concrete constraints on amplitudes to all loop orders - for example,   the last statement  gives a non-trivial and rather subtle  condition  on  amplitudes which has been 
verified numerically for nine loop scalar graphs by Panzer and Schnetz.

At the heart of this theory is a  set of identities for graph polynomials. For the usual graph (Kirchhoff) polynomial it takes the form of 
 a partial factorisation 
$$\Psi_G = \Psi_{\gamma} \Psi_{G/\gamma} + R^{\Psi}_{\gamma, G}$$
where $\gamma \subset G$ is any subgraph (defined by a subset of edges of $G$), and $R^{\Psi}_{\gamma, G}$ is a remainder term
of higher degree in the variables of $\Psi_{\gamma}$ than $\deg(\Psi_{\gamma})$. This identity has been known for some time and is used in the parametric
theory of renormalisation \cite{Angles}, although  only  in the special case when $\gamma$ is a divergent subgraph.  It generalises in two different ways for
the graph polynomial $\Xi_G(q,m)$, which we call the ultraviolet and infra-red factorisations. This requires some genericity assumptions on the external kinematics. The infra-red factorisation  identities are new.

The geometric incarnation of these identities implies that the open strata  of $(\ref{intropair})$
$$D_i \backslash (Y_G \cap D_i)$$
are products of varieties of the same type. This was already observed in \cite{BEK} in the case $(Q,M)=(0,0)$ mentioned above. 
Such a family of varieties defines a type of operad in the category of schemes over $\Z$.
Although we shall barely mention operads in these notes,  the notion of `operad in the category of motives'  imposes  strong constraints
on its possible periods, and merits further  study.\footnote{A similar example of such a system of stratified varieties with product structure are the moduli spaces $\mathfrak{M}_{g,n}$ of curves of genus $g$ with $n$ marked points. } A similar theory to the one described here should
hold more generally for any family of varieties with this product-structure.

We are still very far from exploiting all the consequences of this geometric structure underlying amplitudes.
In fact, the product  structure on the faces of $(\ref{intropair})$ is such a  rigid constraint  that it almost completely determines the polynomials $\Psi_G$ and $\Xi_G(q,m)$, as we prove in \S\ref{sect:  moticHopfalgebra}.

\subsection{Contents}
In \S\ref{sect: Section1} we recall some basic notions relating to Feynman graphs and graph polynomials.  In \S\ref{sect: Section2} we
prove factorisation theorems for graph polynomials. 
In \S\ref{sect: Section3} we study the notion of a \motic subgraph of a Feynman graph. These are in one-to-one correspondence
with the locus where the domain of integration $\sigma_G$ meets the singularities of the Feynman integrand and subsume both ultraviolet and infra-red type 
divergences. In \S\ref{sect:  moticHopfalgebra} we consider a Hopf algebra of graphs where the coproduct ranges over \motic subgraphs and characterize graph polynomials
by their partial factorisation properties.  The motic Hopf algebra is a  generalisation of the Connes-Kreimer coproduct  for scalar graphs since it also  takes into account  certain infra-red subdivergences. 
In \S \ref{sect: LinearBlowups} we study blow-ups of projective space along linear subspaces, and define some affine models which generalise the partial compactifications of  moduli spaces $\mathfrak{M}_{0,n}^{\delta}$ of the author's thesis.  In \S\ref{sect: SectionGraphMotives} we define the graph motive and prove its recursive product structure.   In  \S\ref{sect: MotAmp} we define the motivic amplitude of a Feynman graph, and prove some stability results  in 
\S\ref{sect: Weightsstability}. 
In \S \ref{sect: ConstantCosmic} we focus on the case of graphs with no kinematic dependence and prove some modest results in the direction of the coaction conjecture of \cite{PanzerSchnetz}. Although they  are of limited physical significance,  this family of  graphs plays an important and central  role in the theory.
In \S \ref{sect: Finalsection} we discuss applications of the cosmic Galois group in the case of graphs with general kinematics and state some conjectures
and problems for further study.

The appendix   \S \ref{sect: nonglobalexample} provides a fully worked example of such a graph  using the methods described here. It requires some technical tools
which are set out in  \S  \ref{sect: App1}.

\subsection{Relation to other work}
A number of expressions  in this paper have appeared in  the literature with possibly different meanings. They are listed below:
\begin{itemize}
\item `Cosmic Galois group'.  The phrase `cosmic Galois group' was invented by P. Cartier. Later, Connes and Marcolli made a precise definition of a cosmic Galois group in relation to  renormalisation, in the papers 
\cite{CM1, CM2}. It is not clear if it is at all related to the groups  defined here.
\item  `Motivic amplitude'.  This phrase occurs with a different  definition  in \cite{Cluster}, where it means a certain tensor of elements in a field, and only makes
sense  in the mixed Tate case. This notion can be retrieved as a  very special case of the symbol (defined in \cite{NotesMot}) of   de Rham motivic amplitudes. 
\item  `Motivic multiple zeta value'. There are two versions of motivic multiple zeta values in use, an earlier one due to Goncharov, for which the 
motivic version of $\zeta(2)$ vanishes, and which  do not posses a period map,  and another  for which $\zetam(2)$ is non-zero. 
Sending $\zetam(2)$ to zero would destroy much of the  structure in amplitudes discovered in \cite{PanzerSchnetz}. 
 \end{itemize}
It is also important to emphasize that motivic amplitudes do not form a Hopf algebra. The main coaction formula $(\ref{introDeltacoaction})$ is asymmetric - on one side we have motivic periods which have
a well-defined map to numbers, and on the other,  de Rham periods which do not. However we can associate symbols to  de Rham periods  (in the  differentially unipotent case), and also single-valued periods \cite{NotesMot}. 

\subsection{Acknowledgements} This work was partly funded by ERC grant 257638.
I owe many thanks to the  participants of  my lecture series at IHES for their interest and many pertinent questions, especially Pierre Cartier and Joseph Oesterl\'e.   Many thanks to Oliver Schnetz and Erik Panzer for their remarkable computations,  which  motivated this project, and also  to Cl\'ement Dupont for helpful discussions.

\section{Scalar Feynman graphs and Symanzik polynomials}\label{sect: Section1}
We first recall some basic definitions of  scalar Feynman graphs, before describing the corresponding integrals in parametric form.  The reader should be aware that our conventions occasionally differ from the  standard ones in a  few  minor details.

\subsection{Feynman graphs} \label{sect: FeynmanGraphs}
A   Feynman graph is a   graph $G$ defined by 
$$( V_G, E_G, E^{\ext}_G)$$
where $V_G$ is the set of vertices of $G$, $E_G$ is the set of internal edges of $G$, 
and $E^{\ext}$ is a set of external half-edges (also known as legs). Their endpoints are encoded by maps
 $\partial: E_G \rightarrow   \mathrm{Sym}^2 \,V_G$ and  
$\partial: E^{\ext}_G\rightarrow  V_G$.
We shall assume that the vertices  with external legs  (image of $E^{\ext}_G$ in $V_G$) lie in a single connected component of $G$.

 A Feynman graph  additionally comes with kinematic data:
\begin{itemize}
\item a particle mass $m_e \in \R $ for every internal edge $e\in E_G$, 
\item a  momentum $q_i  \in \R^d$ for every external half-edge $i \in E^{\ext}_G$,
\end{itemize} 
where $d\geq 0$, the dimension of space-time,  is fixed. The  internal edges of $G$ are labelled  if there is a bijection $E_G \leftrightarrow S$ with a fixed set $S$. 
The external half-edges will be oriented inwards,   so all momenta are  incoming and are  subject to momentum conservation 
 \begin{equation} \label{eqn: momcons}
 \sum_{i \in E^{\ext}_G} q_i = 0 \ .
 \end{equation} 
 Some of the internal masses $m_e$ will be zero. Let $M_G \subset E_G$ denote the set of  internal edges $e$ of $G$ for which
$m_e\neq 0$. In our figures, the mass-carrying edges in $M_G$ will be drawn with a doubled edge.

In this paper, a  subgraph $H$ of $G$ will be a graph  defined by a triple $(V_H, E_H, E_H^{\ext})$ where $V_H \subset V_G, E_H \subset E_G$ and either $E^{\ext}_H = E_G^{\ext}$ or
$E^{\ext}_H = \emptyset$.    Note that for $H$ to be a Feynman subgraph, the extra condition that  the vertices $E_H^{\ext}$  lie in a single connected component of $H$ must also  hold.  
This guarantees  that momentum conservation holds for every component of $H$.
  The particle masses of a Feynman subgraph $H\subset  G$ are determined by  the following condition:
$$\begin{array}{lccl}
\hbox{either}    &   E^{\ext}_H= E^{\ext}_G  & \hbox{ and}  &M_H = M_G    \\
 \hbox{or }\qquad   &    E^{\ext}_H= \emptyset  & \hbox{ and}&  M_H = \emptyset   \ .\end{array}
$$
In the  former case, $H$ contains all massive edges of $G$, and inherits the corresponding  masses.
In the  latter case,  $H$ is viewed as a massless diagram and  $m_e=0$ for all $e\in E_H$.
  Thus in these notes,  a Feynman subgraph   either meets all external legs in a single connected component, and contains all massive edges; or  is
  considered to be massless  with no external momenta.

 Note that  all external legs correspond to a potentially non-zero momentum;  external legs which would ordinarily be considered to have zero incoming momentum
will  simply be omitted. This forces  our  graphs $G$  to have vertices of  varying degrees: a graph $G$ is said to be in $\phi^n$ if every vertex has degree $\leq n$.

A tadpole, or self-edge, is a subgraph of $G$ of the form $(\{v\},\{v,v\}, \emptyset)$.

We shall use the following notation 
for the basic combinatorial invariants of $G$:
\begin{eqnarray}
h_G  &= &  h^1(G) \quad \hbox{ the loop number of } G \nonumber \\
\kappa_G  &= &  h^0(G) \quad \hbox{ the number of connected components  of } G \nonumber \\
N_G & = & |E_G| \quad \,\,\,\,\hbox{ the number of edges of } G\ .  \nonumber 
\end{eqnarray}
They do not depend on the external legs of $G$. 
Euler's formula  states that
\begin{equation} \label{eqn: Eulerformula} N_{G} - V_G = h_G - \kappa_G\ .
\end{equation}
We  define the following equivalence relation on Feynman graphs. If a vertex $v\in V_G$  has several incoming momenta $q_1,\ldots, q_n$
we can replace it with a single incoming momentum $q_1+ \ldots + q_n$:
\begin{center} 
\fcolorbox{white}{white}{
  \begin{picture}(192,72) (115,-27) 
    \SetWidth{1.0}
    \SetColor{Black}
    \Line[arrow,arrowpos=0.4,arrowlength=5,arrowwidth=2,arrowinset=0.2](128,29)(160,-19)
    \Line[arrow,arrowpos=0.4,arrowlength=5,arrowwidth=2,arrowinset=0.2](144,29)(160,-19)
    \Line[arrow,arrowpos=0.4,arrowlength=5,arrowwidth=2,arrowinset=0.2](176,29)(160,-19)
    \Line[arrow,arrowpos=0.4,arrowlength=5,arrowwidth=2,arrowinset=0.2](192,29)(160,-19)
    \Line[arrow,arrowpos=0.4,arrowlength=5,arrowwidth=2,arrowinset=0.2](288,29)(288,-19)
    \Text(118,31)[lb]{{\Black{$q_1$}}}
    \Text(137,31)[lb]{{\Black{$q_2$}}}
    \Text(154,25)[lb]{{\Black{$\ldots$}}}
    \Text(166,31)[lb]{{\Black{$q_{n-1}$}}}
    \Text(190,31)[lb]{{\Black{$q_n$}}}
     \Text(230,0)[lb]{\Large{\Black{$\sim$}}}
    \Text(262,31)[lb]{{\Black{$q_1+\ldots + q_n$}}}
    \Vertex(160,-19){2}
    \Vertex(288,-19){2}
  \end{picture}
}
\end{center}Our notion of Feynman subgraph respects this equivalence relation.  
The  graph polynomials defined below will only depend on  equivalence classes.

We say that a Feynman graph is \emph{of type} $(Q,M)$ if it is equivalent to a graph with at most $Q$ external kinematic parameters, and at most  $M$ non-zero particle masses.  We shall call a graph  \emph{one-particle irreducible}, or 1PI, if  every connected component is  $2$-edge connected (i.e. deleting any edge  causes the loop number to drop). 
\begin{example} \label{example: Dunce}The following  Feynman graph will be  our basic example to illustrate the ideas in this paper. It will be referred to several times throughout this text.
\begin{center} 
\fcolorbox{white}{white}{
  \begin{picture}(192,105) (19,-5)
    \SetWidth{1.0}
    \SetColor{Black}
    \Line[arrow,arrowpos=0.5,arrowlength=5,arrowwidth=2,arrowinset=0.2](32,41)(80,41)
    \Vertex(80,41){3}
    \Line[double,sep=2](80,41)(128,73)
    \Line(80,41)(128,9)
    \Arc(197.643,41)(76.643,155.322,204.678)
    \Arc[clock](68,41)(68,28.072,-28.072)
    \Vertex(128,9){3}
    \Vertex(128,73){2.828}
    \Line[arrow,arrowpos=0.5,arrowlength=5,arrowwidth=2,arrowinset=0.2](162,93)(128,73)
    \Text(100,61)[lb]{{\Black{$1$}}}
    \Text(100,16)[lb]{{\Black{$2$}}}
    \Text(114,38)[lb]{{\Black{$3$}}}
    \Text(139,38)[lb]{{\Black{$4$}}}
    \Text(24,38)[lb]{{\Black{$q_1$}}}
    \Text(164,92)[lb]{{\Black{$q_2$}}}
  \end{picture}
}
\end{center} 
This graph has a single non-zero mass, namely $m_1$, and  $m_2=m_3=m_4=0$. Momentum conservation demands that 
$q_1  =- q_2 .$  The bottom right vertex, which meets edges $2,3,4$ has zero incoming external momentum.

\end{example}

\subsection{Graph polynomials} Let $G $ be a Feynman graph.  Recall that a tree is a connected graph $T$ with $h_T=0$.  A forest is any graph $T$ with $h_T=0$. 
\begin{defn} A spanning $k$-tree of $G$ is a subgraph $T =T_1 \cup \ldots \cup T_k \subset G$ which has exactly $k$ components $T_i$ such that $T_i$ is a tree and $V_T = V_G$.

\end{defn}
A spanning $1$-tree is simply called  a spanning tree.

\begin{defn} Let $G$ be a \emph{connected} Feynman graph. \label{defn: Symanzik}
The Kirchhoff polynomial (or 1st Symanzik polynomial) is the  polynomial in $\Z[\alpha_e, e\in E_G]$ defined by 
\begin{equation} \label{eqn: psidefn}
\Psi_G = \sum_{T \subset G}  \prod_{e \notin T} \alpha_e\ , \end{equation} 
where the sum is over all spanning trees $T$ of $G$. If  $G$ has several connected components $G_1,\ldots, G_n$ we shall  define
\begin{equation} \label{eqn: psiproddefn}
\Psi_G = \prod_{i=1}^n \Psi_{G_i}
\ . \end{equation} 
Note that one sometimes takes $(\ref{eqn: psidefn})$ as the general definition of graph polynomial. It  differs from $(\ref{eqn: psiproddefn})$ and vanishes if $G$ has more than one connected component.

The second Symanzik polynomial is defined for  \emph{connected}  $G$ by 
\begin{equation} \label{eqn: phidefn}
\Phi_G(q) = \sum_{T_1 \cup T_2 \subset G}  \,(q^{T_1})^2\prod_{e \notin T_1 \cup T_2} \alpha_e \ ,
\end{equation}
where the sum is over all spanning $2$-trees $T=T_1 \cup T_2$ of $G$ and  $q^{T_1} =\sum_{i\in E^{\ext}_{T_1}}q_i $ is the total momentum entering $T_1$. It equals $ -q^{T_2}$  by momentum conservation $(\ref{eqn: momcons}).$
If  $G$ has several connected components $G_0,G_1, \ldots, G_n$, then by our definition of a Feynman graph, exactly one component, say $G_0$, 
contains all external momenta.

 In this case we  define
\begin{equation} \label{eqn: phiproddefn}
 \Phi_G(q) = \Phi_{G_0}(q) \prod_{i\geq 1}  \Psi_{G_i}
\ . \end{equation}

\end{defn} 

Applying   Euler's formula $(\ref{eqn: Eulerformula})$ to $G$ and  a spanning $\kappa_T$-tree $T\subset  G$  implies  that 
\begin{equation} \label{eqn: Edgecomplementnumber}
N_G - N_T =h_G+\kappa_T-\kappa_G \ .
\end{equation} 
Since the right-hand side is independent of $T$, 
 $\Psi_G$ and $\Phi_G$ are homogeneous with respect to Schwinger parameters and  have the following degrees (in the $\alpha_e$):
\begin{eqnarray} \label{eqn: degreesofPsiandPhi}
\deg\, \Psi_G  & =  & h_G \\ 
\deg\, \Phi_G(q) & =  & h_G + 1  \nonumber 
\end{eqnarray} 
These equations can also be deduced from the contraction-deletion relations which are stated below.
It is a crucial fact for the arithmetic of Feynman integrals that the coefficients of every monomial in $\Psi_G$ are only $0$ or $1$. Furthermore, $\Psi_G$ and $\Phi_G$ are of  degree at most one in every Schwinger parameter $\alpha_e$, but these two facts  play a minor role in these notes. See \cite{BrFeyn}, \cite{Modularphi4} for applications of these facts.

\begin{defn} Let $G$ be a  Feyman graph. Define
\begin{equation} 
\label{eqn:  Xidefn}
\Xi_G(q,m) = \Phi_G(q) +  \big(\sum_{e \in E_G} m^2_e \alpha_e \big) \Psi_G\ .
\end{equation}
By $(\ref{eqn: degreesofPsiandPhi})$, the polynomial $\Xi_G$ is homogeneous in the $\alpha_e$ of degree $h_G+1$.
\end{defn} 

 Since the graph polynomials $\Psi_G$, $\Phi_G(q)$, $\Xi_G(q,m)$ only depend on total momentum flow, they  are well-defined on equivalence classes of graphs. 
\begin{example} Let $G$ be the graph of example $\ref{example: Dunce}.$ Then we have 
\begin{eqnarray}
\Psi_G & = & \alpha_1 \alpha_3 + \alpha_1 \alpha_4 + \alpha_2 \alpha_3 + \alpha_2 \alpha_4 + \alpha_3 \alpha_4  \nonumber \\
\Phi_G(q) & = & q_1^2 ( \alpha_1 \alpha_2 \alpha_3 + \alpha_1 \alpha_2 \alpha_4 + \alpha_1  \alpha_3  \alpha_4)  \nonumber  \\
\Xi_G(q,m ) & = & q_1^2 ( \alpha_1 \alpha_2 \alpha_3 + \alpha_1 \alpha_2 \alpha_4 + \alpha_1  \alpha_3  \alpha_4) +m_1^2 \alpha_1 \Psi_G  \nonumber 
\end{eqnarray} 
\end{example}

\begin{rem} The reader is warned that in  the literature, the polynomial $(\ref{eqn:  Xidefn})$ is often written with the opposite sign (or equivalently, all momentum terms $q$ are replaced by $i q$). To find one's bearings, consider the following familiar bubble diagram with two equal masses $m_1=m_2=m$:
\begin{center}
\fcolorbox{white}{white}{
  \begin{picture}(160,64) (19,-27)
    \SetWidth{1.0}
    \SetColor{Black}
    \Line[arrow,arrowpos=0.5,arrowlength=5,arrowwidth=2,arrowinset=0.2](32,0)(80,0)
    \Vertex(80,0){3}
    \Vertex(128,0){3}
    \Text(24,0)[lb]{\Large{\Black{$q$}}}
    \Line[arrow,arrowpos=0.5,arrowlength=5,arrowwidth=2,arrowinset=0.2,flip](128,0)(176,0)
    \Arc[double,sep=2](104,-8)(25.298,18.435,161.565)
    \Arc[double,sep=2,clock](104,8)(25.298,-18.435,-161.565)
    \Text(164,2)[lb]{\Large{\Black{$-q$}}}
    \Text(100,22)[lb]{\Large{\Black{$1
$}}}
    \Text(100,-28)[lb]{\Large{\Black{$2$}}}
  \end{picture}
}
\end{center}
 It satisfies
$$\Xi_G (q,m) = q^2 \alpha_1 \alpha_2 + m^2 (\alpha_1+ \alpha_2)^2\ . $$
Its discriminant is $4m^2+q^2$, and so its Landau singularity occurs at $q^2 =- 4m^2$. With our chosen  sign convention, it will lie outside the Euclidean region where $q, m$ are real. This, and other,  typical physical infra-red singularities (where, for example, terms in $\Xi_G(q,m)$ cancel out  altogether)  will be excluded from the present set-up (although certain other types of infra-red singularities will   be allowed).   However, they can  still be treated   in the present theory on a graph by graph basis after analytic continuation in the space of kinematics $(q,m)$. That the analytic continuation exists follows from the fact that the discriminant locus is algebraic of codimension $\geq 1$, and its complement is connected in the analytic topology. Therefore there exists a path from a point in the Euclidean region to an open subset of the region  $q \in i\R$,  along which the integral is  analytic. 
 \end{rem}

\subsection{Feynman integral in projective space} Let $d\in 2 \N$ be an even integer, which denotes the dimension of space-time. 
Here  it  will always be fixed, and will be dropped from the notation. 
Our version of the Feynman integral in parametric form differs marginally the usual presentation. 
In order to kill two birds (namely, the case with no kinematics, and the case with non-trivial kinematics) with one stone we shall take the following definition, after omitting certain pre-factors (see  \cite{Angles} \S3 of a more rigorous derivation from first principles):
 \begin{equation}
\label{eqn: projectiveFeynmanint}
I_G(q,m) = \int_{\sigma} \omega_G(q,m)
\end{equation} 
where 
\begin{equation} \label{eqn:  omegaGdef}
\omega_G(q,m) =  {1\over \Psi_G^{d/2}} \Big( { \Psi_G \over \Xi_G(q,m)} \Big)^{N_G - h_G d/2}  \Omega_G
\end{equation} 
and
\begin{equation} \label{OmegaGdef}
\Omega_G = \sum_{i=1}^{N_G} (-1)^i \alpha_i \,  d\alpha_1 \wedge \ldots \wedge \widehat{d\alpha_i} \wedge \ldots \wedge d \alpha_{N_G}
\end{equation}
where $\widehat{d\alpha_i}$ means that the term $d\alpha_i$ is omitted. Note that 
the form $\omega_G$ is homogeneous of degree $0$, which follows from $(\ref{eqn: degreesofPsiandPhi}).$
Finally, let 
$$\sigma \subset \Pro^{N_G-1}(\R)$$
be the coordinate simplex  defined in projective coordinates by 
\begin{equation} \label{eqn: sigmadef}
\sigma = \{(\alpha_1: \ldots : \alpha_{N_G} )  \in  \Pro^{N_G-1}(\R) :  \  \alpha_i \geq 0 \}\ .
\end{equation}
The integral $(\ref{eqn: projectiveFeynmanint})$ may not necessarily converge.  Necessary and sufficient conditions for its convergence, in a certain kinematic region, will be stated below.
The derivation of the parametric form of the Feynman integral $(\ref{eqn: projectiveFeynmanint})$ from its momentum space representation
 using the Schwinger trick is nicely explained in \cite{PanzerPhd}.

\begin{rem}
The integral $(\ref{eqn: projectiveFeynmanint})$ is a drastic simplification in certain situations. For example, if $d=4$ and $G$ is primitive overall log-divergent ($N_G = 2h_G$), then the second factor in $\omega_G(q,m)$ drops out and it has no dependence on external masses or momenta. For such a graph, 
 we obtain 
$$I_G = \int_{\sigma} {\Omega_G \over \Psi_G^2}\ .$$ 
In the case of the wheel with three spokes,  this equals $6 \zeta(3)$ which is  its residue (coefficient of  $1/\varepsilon$ in dimensional regularisation).  The full vertex function is
$$\int_{\sigma}  { \log (\Xi_G(q,m)/ \Xi_G(q_0,m_0))  \over \Psi_G^2} \Omega_G$$
where $q_0,m_0$ is a chosen renormalisation point. Such integrals, and their derivation, are discussed at length in \cite{Angles}. They can also be viewed as period integrals either by writing  the $\log$ in the numerator as  an integral, or by differentiating with respect to a renormalisation scale as in \cite{Angles} to make the integrand algebraic.  It is highly likely that the theory described in this paper also extends to this situation,  but   a discussion of renormalisation would have made the present paper overly lengthy. 
\end{rem} 

\subsection{Edge subgraphs and their quotients}
Let $G=(V_G, E_G, E^{\ext}_G)$ be a Feynman graph.
A set of internal edges $\gamma \subset E_G$  defines a  subgraph of $G$ as follows. Write $E_{\gamma} = \gamma$ and let $V_{\gamma}$  be the set of endpoints of elements of $E_{\gamma}$.

\begin{defn}  \label{defn: momentumspanning} A set of edges $\gamma \subset E_G$ is \emph{momentum-spanning} if $\partial E^{\ext}_G \subset V_{\gamma}$, and
the vertices $E^{\ext}_G$ lie in a single connected component of the graph $(V_{\gamma}, E_{\gamma})$. 
\end{defn}

We define the  subgraph associated to $\gamma\subset E_G$  by 
\begin{equation} \label{eqn: Edgesubgraphdefn} 
 ( V_{\gamma}, E_{\gamma},E^{\ext}_{\gamma} ) 
\end{equation}
where $E^{\ext}_{\gamma} = E^{\ext}_G$ if $\gamma$ is momentum-spanning, and $E^{\ext}_{\gamma} = \emptyset$  otherwise. Thus the Feynman graph $ (\ref{eqn: Edgesubgraphdefn}) $ 
inherits all external momenta of $G$ if it is momentum-spanning and has no external momenta otherwise. We shall call $(\ref{eqn: Edgesubgraphdefn})$
the \emph{edge-subgraph} associated to $\gamma$, and denote it  also by $\gamma$ when no confusion arises.

The quotient of $G$ by an edge-subgraph $\gamma$ is   defined by 
$$ G/ \gamma = (V_G/\! \sim,\,  (E_G \backslash \gamma)/\!\sim ,\,  E^{\ext}_G /\!\sim ) $$
where $\sim$ is the equivalence relation on vertices of $G$ where  two vertices are equivalent if and only if they are vertices of  the same connected component of $\gamma$, and the induced equivalence relation on edges (unordered pairs of vertices).  It is a Feynman graph. Every connected component of $\gamma$ corresponds to a unique  vertex in  $G/\gamma$.
Note that  $\gamma$ is momentum-spanning if and only if  $G/\gamma$ is equivalent to a graph with no external momenta (by momentum conservation). In this case we can take $E^{\ext}_{G/\gamma} = \emptyset$.

In this way,  exactly one of the two Feynman graphs $\gamma$ and $G/\gamma$ is  equivalent to a Feynman graph with non-zero  external momenta: 
if $\gamma$ is momentum spanning it is $\gamma$, otherwise it is $G/\gamma$.

\subsection{Contraction-deletion}

Let $G = (V_G, E_G, E^{\ext}_G)$  be a Feynman graph. The \emph{deletion} of an edge $e$ in $G$ is the graph $G\backslash e$ defined by 
deleting the edge $e$ but retaining  its endpoints:
$$ G\backslash e = (V_G,\,  E_G \backslash \{e\},\,  E^{\ext}_G)\ .$$
In general, it is not  a union of  Feynman graphs since momentum conservation may not hold on each of its connected components.
 
One sometimes encounters the following variant of the previous notion of graph-quotient.  It will be denoted by a double slash to distinguish it from the ordinary quotient.   For an edge-subgraph $\gamma$,
let
$G\q \gamma$  be the empty graph  if  $h_\gamma>0$ and  $$G \q \gamma = G/\gamma$$
if $\gamma$ is a forest. In the case of a single edge $e$, $G\q e$ is empty whenever $e$ is a tadpole.

It follows from Euler's formula $(\ref{eqn: Eulerformula})$ that 
\begin{equation} \label{eqn:  additivityofh}
h_G = h_{\gamma} + h_{G/\gamma} 
\end{equation}
for any  edge-subgraph $\gamma \subset G$ (which is not necessarily connected).

\begin{lem} (Contraction-deletion) Let $G$ be connected, and $e\in E_G$. Then 
\begin{eqnarray} \label{eqn: ContractionDeletion}
\Psi_G & = &  \Psi^0_{G\backslash e} \alpha_e + \Psi_{G\q e} \\ 
\Phi_G(q) & = &  \Phi^0_{G\backslash e}(q) \alpha_e + \Phi_{G\q e}(q) \ ,  \nonumber  
\end{eqnarray} 
where $\Psi^0_{G\backslash e}$ is given by the right-hand side of  $(\ref{eqn: psidefn})$: it is $\Psi_{G\backslash e}$ if $G\backslash e $ is connected and $0$ otherwise.  Likewise $\Phi^0_{G\backslash e}(q)$  is given by the right-hand side of  $(\ref{eqn: phidefn})$: it is equal to   $\Phi_{G\backslash e}(q)$ if $G\backslash e$ is connected and equal to   $\Psi_{G_1} \Psi_{G_2} (q^{G_1})^2=\Psi_{G_1} \Psi_{G_2} (q^{G_2})^2$ if $G\backslash e $ has two connected components $G_1, G_2$. 
\end{lem}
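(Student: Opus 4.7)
\medskip

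\noindent\textbf{Plan of proof.} The natural approach is to partition each defining sum ($(\ref{eqn: psidefn})$ for $\Psi_G$ and $(\ref{eqn: phidefn})$ for $\Phi_G$) according to whether the distinguished edge $e$ lies in the spanning ($k$-)tree or not, and then identify each piece with the appropriate graph polynomial of $G\backslash e$ or $G\q e$.

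\smallskip

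\noindent\emph{Step 1: Kirchhoff polynomial.} Split the sum defining $\Psi_G$ as
\begin{equation*}
\Psi_G \;=\; \sum_{\substack{T \text{ span.\ tree}\\ e\notin T}}\prod_{e'\notin T}\alpha_{e'} \;+\; \sum_{\substack{T\text{ span.\ tree}\\ e\in T}}\prod_{e'\notin T}\alpha_{e'}.
\end{equation*}
For the first sum, factor out $\alpha_e$: spanning trees of $G$ not containing $e$ are in bijection with spanning trees of $G\backslash e$, which exist iff $G\backslash e$ is connected, yielding $\alpha_e\,\Psi^0_{G\backslash e}$. For the second sum, observe that no spanning tree can contain a tadpole edge; when $e$ is not a tadpole, contraction $T\mapsto T/e$ is a bijection between spanning trees of $G$ containing $e$ and spanning trees of $G/e$, and since $e\in T$ the product $\prod_{e'\notin T}\alpha_{e'}$ is unchanged. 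This yields $\Psi_{G/e}=\Psi_{G\q e}$; in the tadpole case both the sum and $\Psi_{G\q e}$ vanish.

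\smallskip

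\noindent\emph{Step 2: Second Symanzik polynomial.} Apply the same partition to the sum $(\ref{eqn: phidefn})$ over spanning $2$-trees $T=T_1\cup T_2$. A $2$-tree cannot contain an edge connecting its two components, so $e\in T$ means $e$ lies inside $T_1$ or inside $T_2$. For $2$-trees not containing $e$: if $G\backslash e$ is connected these are precisely the spanning $2$-trees of $G\backslash e$, yielding $\alpha_e\,\Phi_{G\backslash e}(q)$; if $G\backslash e$ splits into components $G_1,G_2$, then every spanning $2$-tree of $G\backslash e$ must consist of a spanning tree of $G_1$ together with one of $G_2$, and $q^{T_1}=q^{G_1}$, giving $\alpha_e\,\Psi_{G_1}\Psi_{G_2}(q^{G_1})^2$. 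In both cases this equals $\alpha_e\,\Phi^0_{G\backslash e}(q)$. For $2$-trees containing $e$ (impossible when $e$ is a tadpole): contracting $e$ gives a bijection $T=T_1\cup T_2\mapsto T/e$ onto spanning $2$-trees of $G/e$; since $e$ is interior to whichever component contains it, the partition of external legs between $T_1$ and $T_2$ is preserved by contraction, so $q^{T_1}$ is unchanged. The resulting contribution is $\Phi_{G/e}(q)=\Phi_{G\q e}(q)$.

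\smallskip

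\noindent\emph{Main obstacle.} The bulk of the argument is combinatorial bookkeeping. The one place requiring care is the disconnecting case: when $G\backslash e$ has two components, one must verify that every spanning $2$-tree of $G\backslash e$ is forced to have its two tree-components aligned with the connected components $G_1,G_2$ (rather than, say, splitting one $G_i$ further), which uses that a spanning $2$-tree has exactly two connected components covering all vertices. Once this is observed, the $q^{T_1}=q^{G_1}$ identification and the factorisation $\Psi_{G_1}\Psi_{G_2}$ follow immediately, and the matching with the stated formula for $\Phi^0_{G\backslash e}(q)$ is automatic.
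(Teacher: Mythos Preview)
Your proof is correct and follows essentially the same approach as the paper: partition the defining sum according to whether $e$ lies in the spanning $k$-tree, identify the $e\notin T$ piece with spanning $k$-trees of $G\backslash e$ (factoring out $\alpha_e$), and for $e\in T$ use the contraction bijection $T\mapsto T/e$ with spanning $k$-trees of $G/e$, treating the tadpole case separately. Your version is somewhat more explicit than the paper's (you spell out the disconnecting case for $\Phi$ and the preservation of $q^{T_1}$ under contraction), but the underlying argument is identical.
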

\begin{proof} Let $T$ be a spanning $k$-tree of $G$ (where $k\in \{1,2\}$).  The edge $e$ is not an edge of $T$ if and only if $T$ is a spanning $k$-tree of $G \backslash e$. 
By the definition of the graph polynomials, this gives rise to the first terms in the right hand sides of $(\ref{eqn: ContractionDeletion})$. Note that if $e$ is a tadpole, this is the only case which can occur. Now suppose that $e$ is not a tadpole.  If $e$ is  an edge of $T$, then $T/e$ is a spanning $k$-tree of $G/e$.  Conversely,
if $T'$ is a spanning $k$-tree of $G/e$, then there is a unique component of $T'$ which meets the vertex in $G/e$ defined by the endpoints of $e$. It follows that the inverse image of $T'$ in $G$, together with the edge $e$, 
is a spanning $k$-tree in $G$. This establishes a bijection between the set of  spanning $k$-trees in $T$ which contain $e$ and those of $G/e$.  The rest follows from definition  $\ref{defn: Symanzik}.$  \end{proof}

\begin{cor} \label{cor: Restrictiontoalpha=0}
It follows that from $(\ref{eqn: ContractionDeletion})$ and $(\ref{eqn:  Xidefn})$ that
\begin{eqnarray}
\Psi_G \big|_{\alpha_e=0}   & = &  \Psi_{G\q e}  \nonumber  \\
\Xi_G(q,m) \big|_{\alpha_e=0}   & = &  \Xi_{G\q e}(q,m) \ . \nonumber 
\end{eqnarray} 

\end{cor}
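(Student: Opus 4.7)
The proof is essentially a one-line deduction from the preceding lemma and the definition of $\Xi_G$, so the plan is brief. The approach is to substitute $\alpha_e = 0$ into the two contraction-deletion identities and exploit the $\Z$-linearity of the polynomial ring.

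First, from the lemma we have $\Psi_G = \Psi^0_{G\backslash e}\,\alpha_e + \Psi_{G\q e}$. Setting $\alpha_e = 0$ kills the first summand (note that $\Psi_{G\q e}$ and $\Psi^0_{G\backslash e}$ involve only variables $\alpha_{e'}$ with $e' \neq e$, so the substitution acts trivially on them), yielding the first equation immediately. In the degenerate case where $e$ is a tadpole, the lemma gives $\Psi_G = \Psi^0_{G\backslash e}\,\alpha_e$, while $G\q e = \emptyset$ so $\Psi_{G\q e}$ is an empty sum, consistent with the substitution.

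Second, I would expand $\Xi_G(q,m)$ using its definition $(\ref{eqn:  Xidefn})$:
\begin{equation*}
\Xi_G(q,m) = \Phi_G(q) + m_e^2 \alpha_e\,\Psi_G + \Big(\sum_{e' \neq e} m_{e'}^2 \alpha_{e'}\Big) \Psi_G\ .
\end{equation*}
Setting $\alpha_e = 0$, the middle term vanishes outright, and by the contraction-deletion identity for $\Phi_G$ together with the first part of the corollary already established, one gets
\begin{equation*}
\Xi_G(q,m)\big|_{\alpha_e=0} = \Phi_{G\q e}(q) + \Big(\sum_{e' \neq e} m_{e'}^2 \alpha_{e'}\Big) \Psi_{G\q e}\ .
\end{equation*}
Since the edge set of $G\q e$ is precisely $E_G \setminus \{e\}$ and inherits the masses $m_{e'}$ from $G$, the right-hand side is by definition $\Xi_{G\q e}(q,m)$.

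The only minor subtlety is bookkeeping when $G$ is disconnected or when $e$ is a tadpole: in the disconnected case one applies the product definitions $(\ref{eqn: psiproddefn})$ and $(\ref{eqn: phiproddefn})$ componentwise, reducing to the connected case component by component (since $e$ lies in a single component), and in the tadpole case one interprets $\Psi_{\emptyset}$ and $\Xi_{\emptyset}$ as empty sums. Neither presents a genuine obstacle; the content of the corollary is really just the observation that contraction-deletion specializes, at $\alpha_e = 0$, to the contraction part.
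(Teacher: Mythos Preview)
Your proof is correct and matches the paper's own treatment: the paper gives no separate proof of this corollary, stating in the body of the statement itself that it follows directly from the contraction--deletion relations $(\ref{eqn: ContractionDeletion})$ and the definition $(\ref{eqn:  Xidefn})$, which is precisely what you have spelled out.
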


\subsection{Generic kinematics  and non-vanishing of graph polynomials}
We establish some non-vanishing results for graph polynomials which  hold for generic momenta. These will be  important for the sequel.

\begin{lem}  \label{lem: PsiGvanishing} A connected graph $G$ has a spanning tree. Equivalently,  
$\Psi_G \neq 0$. 
\end{lem}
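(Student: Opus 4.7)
The plan is to prove the equivalent combinatorial statement---that every connected graph admits a spanning tree---and to deduce non-vanishing of $\Psi_G$ directly from the definition $(\ref{eqn: psidefn})$.

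First I would note the equivalence. By definition,
$$\Psi_G = \sum_{T\subset G} \prod_{e\notin T}\alpha_e,$$
summed over spanning trees $T$. Distinct spanning trees $T_1\neq T_2$ have distinct edge-complements $E_G\setminus T_1 \neq E_G\setminus T_2$ and therefore contribute distinct monomials. Hence every coefficient of $\Psi_G$ in $\Z[\alpha_e]$ is $0$ or $1$, and $\Psi_G\neq 0$ in $\Z[\alpha_e]$ if and only if at least one spanning tree exists.

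To exhibit a spanning tree I would proceed by induction on $N_G$. If $h_G=0$, Euler's formula $(\ref{eqn: Eulerformula})$ with $\kappa_G=1$ gives $N_G=V_G-1$, so $G$ is itself a tree and is its own spanning tree. Otherwise $h_G\geq 1$, so $G$ contains a cycle $C$; pick any edge $e$ lying on $C$. The graph $G\backslash e$ has the same vertex set as $G$, strictly fewer edges, and is still connected (any path in $G$ that uses $e$ can be rerouted through the remaining arc of $C$). By induction $G\backslash e$ contains a spanning tree $T$, and this $T$ is a spanning tree of $G$ as well, since $V_{G\backslash e}=V_G$.

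The only step that requires a small argument is the preservation of connectedness when removing an edge of a cycle, which is the standard path-rerouting observation, so I do not anticipate any genuine obstacle. Tadpoles $(v,v)$ are handled by the same induction, as they form cycles of length one whose removal obviously does not disconnect $G$.
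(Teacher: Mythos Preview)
Your proof is correct. It differs from the paper's argument: the paper proceeds by contradiction via the contraction-deletion relation $(\ref{eqn: ContractionDeletion})$, observing that if $\Psi_G=0$ then $\Psi_{G\q e}=0$ for every edge $e$, and then repeatedly contracts non-loop edges until reaching a single-vertex graph $G'$, for which $\Psi_{G'}=\prod_{e}\alpha_e\neq 0$. So the paper reduces by \emph{contraction} to a rose, while you reduce by \emph{deletion} of cycle edges to a tree. Your argument is the standard elementary combinatorial proof and is self-contained; the paper's version is shorter in context because it recycles the contraction-deletion lemma already in hand, and it works directly at the level of the polynomial $\Psi_G$ without separately arguing the equivalence with the existence of a spanning tree.
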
 
\begin{proof} 
Let $G$ be a connected graph with  $\Psi_G =0$. Then $\Psi_{G\q e}=0$ for all $e$ by $(\ref{eqn: ContractionDeletion})$. By repeatedly contracting  edges with distinct endpoints,  we obtain a graph  $G'$ with a single vertex such that $\Psi_{G'}=0$. It has a unique spanning tree consisting of this vertex,
so $\Psi_{G'} = \prod_{e\in E_{G'}} \alpha_e$, which is non-zero, a contradiction.
\end{proof} 
Consider the following condition on external momenta
\begin{equation} \label{eqn: genericmomenta} 
(\sum_{i \in I} q_i)^2  \neq 0 \quad \hbox{ for all } I \subsetneq E_G^{\ext}\ .
\end{equation} 
It respects the equivalence relation of Feynman graphs. 
\begin{lem}  \label{lem: PhiGvanishing} Let  $G$ be a  Feynman graph with  non-trivial external momenta (in other words, there exists a vertex $v\in \partial E^{\ext}_G$ such that the total momentum  $q^{\{v\}}$ entering $v$   is non-zero).  Then with condition $(\ref{eqn: genericmomenta})$, 
$ \Phi_G(q) \neq 0\ .$
\end{lem}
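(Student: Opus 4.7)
The plan is as follows. First, I would reduce to the case in which $G$ is connected. By the product formula $(\ref{eqn: phiproddefn})$, $\Phi_G(q) = \Phi_{G_0}(q) \prod_{i\geq 1} \Psi_{G_i}$, where $G_0$ is the unique component containing all external legs. By Lemma \ref{lem: PsiGvanishing}, $\Psi_{G_i} \neq 0$ for every component $G_i$, so it suffices to prove $\Phi_{G_0}(q) \neq 0$. Henceforth I would assume $G$ connected.

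Second, I would record the key combinatorial fact that distinct spanning $2$-trees of $G$, regarded as edge-subgraphs, have distinct edge sets and therefore distinct complements in $E_G$. Consequently, the Schwinger monomials $\prod_{e \notin T_1 \cup T_2} \alpha_e$ appearing in the defining sum $(\ref{eqn: phidefn})$ are pairwise distinct, and no cancellation between terms is possible. Thus $\Phi_G(q) \neq 0$ as soon as one finds a single spanning $2$-tree $T_1 \cup T_2$ whose coefficient $(q^{T_1})^2$ is non-zero.

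Third, I would produce such a spanning $2$-tree using the hypothesis. Let $v_0 \in \partial E^{\ext}_G$ satisfy $q^{\{v_0\}} \neq 0$. Momentum conservation $(\ref{eqn: momcons})$ forces the set $I_{v_0} \subset E^{\ext}_G$ of external half-edges incident to $v_0$ to be a proper non-empty subset; in particular there exists a second vertex $v_1 \in \partial E^{\ext}_G$ distinct from $v_0$. Choose any spanning tree $T$ of $G$ (which exists by Lemma \ref{lem: PsiGvanishing}) and let $e$ be any edge on the unique path in $T$ from $v_0$ to $v_1$. Deleting $e$ partitions $T$ into a spanning $2$-tree $T_1 \cup T_2$ with $v_0 \in V_{T_1}$ and $v_1 \in V_{T_2}$.

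Finally, the set $I = \{i \in E^{\ext}_G : \partial i \in V_{T_1}\}$ of external half-edges carried by $T_1$ contains $I_{v_0}$, and hence is non-empty, but misses the half-edges at $v_1$, and hence is a proper subset of $E^{\ext}_G$. By the genericity assumption $(\ref{eqn: genericmomenta})$, $(q^{T_1})^2 = \bigl(\sum_{i \in I} q_i\bigr)^2 \neq 0$, which finishes the argument. There is no serious obstacle: the only subtlety is step two, where the absence of cancellation between terms in $(\ref{eqn: phidefn})$ reduces the problem from a polynomial non-vanishing statement to the purely combinatorial task of exhibiting one suitable $2$-tree.
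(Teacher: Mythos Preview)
Your proof is correct and follows essentially the same route as the paper: reduce to the connected component carrying the momenta via $(\ref{eqn: phiproddefn})$ and Lemma~\ref{lem: PsiGvanishing}, pick a spanning tree, delete an edge on the path between two vertices carrying external legs to obtain a spanning $2$-tree, and invoke $(\ref{eqn: genericmomenta})$. The one small difference is in the non-cancellation step: the paper argues that the contribution ``cannot cancel out since all signs in the definition of $\Phi_{G_0}(q)$ are positive,'' whereas you observe more sharply that distinct spanning $2$-trees yield distinct monomials, so no cancellation is even possible. Your formulation is arguably cleaner, since it does not rely on any positivity of the scalars $(q^{T_1})^2$.
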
 
\begin{proof} 
By momentum conservation $(\ref{eqn: momcons})$, there exist at least two vertices $v_1, v_2$ with  non-zero total incoming momenta $q_1, q_2$ respectively.  
Since $G$ is a Feynman graph, $v_1,v_2$ lie in the same connected component $G_0$. By the previous lemma,  there exists a spanning tree $T$ in $G_0$. Since $T$ is connected, there is a (shortest) path from $v_1$ to $v_2$ contained in $T$. Delete any edge $e'$ in this path  to obtain a spanning $2$-tree $T\backslash e' = T_1 \cup T_2$ such that $v_1 \in V_{T_1}$ and $v_2 \in V_{T_2}$. It contributes a non-zero monomial $(q^{T_1})^2 \prod_{e\notin T_1 \cup T_2} \alpha_e $ to the second Symanzik polynomial $\Phi_{G_0}(q)$ by $(\ref{eqn: genericmomenta})$. It cannot cancel out since all signs in the definition of  $\Phi_{G_0}(q)$ are positive.  Now apply $(\ref{eqn: phiproddefn})$ and the previous lemma to deduce  that $\Phi_G(q) \neq 0$.  
\end{proof} 
Now consider the following  condition on momenta and masses:
\begin{equation} \label{eqn: genericmassmomenta} 
(\sum_{i \in I} q_i)^2 + m_e^2  \neq 0 \quad \hbox{ for all } I \subsetneq E_G^{\ext}  \hbox{ and } e\in E_G \ .
\end{equation}

\begin{lem}  \label{lem:  Xivanishing} If   $(\ref{eqn: genericmomenta})$  and $(\ref{eqn: genericmassmomenta})$ hold then $\Xi_G(q,m)=0$ if and only if $G$ has no massive edges, and no incoming momenta (i.e., 
$q^{\{v\}}=0$ for all $v\in \partial  E^{\ext}_G$). 
\end{lem}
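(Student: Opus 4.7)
The easy direction is immediate: if all $m_e=0$ and $q^{\{v\}}=0$ for every vertex $v\in \partial E^{\ext}_G$, then for every spanning $2$-tree $T_1\cup T_2$ one has $q^{T_1}=\sum_{v\in V_{T_1}}q^{\{v\}}=0$, so $\Phi_G(q)=0$, and the mass term also vanishes, giving $\Xi_G(q,m)=0$. For the converse I would prove the contrapositive: if $G$ carries any nonzero mass or any nontrivial incoming momentum, then $\Xi_G(q,m)\neq 0$. The driving observation is that by the contraction-deletion relations $(\ref{eqn: ContractionDeletion})$, both $\Psi_G$ and $\Phi_G(q)$ are multilinear in each Schwinger parameter $\alpha_e$, whereas $m_e^2\alpha_e\Psi_G$ can contribute an $\alpha_e^2$ term. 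This asymmetry lets me isolate specific monomials of $\Xi_G$ and read off clean non-vanishing criteria.

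\textbf{Step 1: no massive edge.} I assume $m_e\neq 0$ for some $e\in E_G$ and split into two cases. If $G\backslash e$ is connected (in particular when $e$ is a tadpole), then by contraction-deletion the coefficient of $\alpha_e^2$ in $\Xi_G$ equals $m_e^2\,\Psi_{G\backslash e}^0=m_e^2\,\Psi_{G\backslash e}$, which is nonzero by Lemma \ref{lem: PsiGvanishing}. If instead $e$ is a bridge, let $G_1,G_2$ be the two components of $G\backslash e$; then $\Psi_G=\Psi_{G_1}\Psi_{G_2}$ is independent of $\alpha_e$, and contraction-deletion applied to $\Phi_G$ identifies the coefficient of $\alpha_e$ in $\Phi_G$ with $\Psi_{G_1}\Psi_{G_2}(q^{G_1})^2$. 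Consequently the coefficient of $\alpha_e$ in $\Xi_G$ equals $\Psi_{G_1}\Psi_{G_2}\bigl((q^{G_1})^2+m_e^2\bigr)$. By Lemma \ref{lem: PsiGvanishing} the factors $\Psi_{G_i}$ are nonzero; the bracket is nonzero either by $(\ref{eqn: genericmassmomenta})$ applied with $I=E^{\ext}_{G_1}$ when $E^{\ext}_{G_1}\subsetneq E^{\ext}_G$, or, in the degenerate cases $E^{\ext}_{G_1}=\emptyset$ or $E^{\ext}_{G_1}=E^{\ext}_G$, by momentum conservation $(\ref{eqn: momcons})$, which forces $q^{G_1}=0$ so the bracket reduces to $m_e^2\neq 0$. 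Either way, $\Xi_G\neq 0$.

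\textbf{Step 2: no incoming momentum.} Once Step 1 rules out massive edges, $\Xi_G=\Phi_G(q)$. The genericity condition $(\ref{eqn: genericmassmomenta})$ specialised to $m_e=0$ reproduces hypothesis $(\ref{eqn: genericmomenta})$; so if some $q^{\{v\}}\neq 0$ then Lemma \ref{lem: PhiGvanishing} forces $\Phi_G(q)\neq 0$, contradicting $\Xi_G=0$. Hence $q^{\{v\}}=0$ for every $v\in \partial E^{\ext}_G$.

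\textbf{Main obstacle.} The only delicate point is the bridge case of Step 1: one must correctly extract the quadratic factor $(q^{G_1})^2+m_e^2$ from contraction-deletion on $\Phi_G$, and keep careful track of how external legs are distributed between $G_1$ and $G_2$ in order to invoke $(\ref{eqn: genericmassmomenta})$ in the correct form, falling back on momentum conservation in the degenerate cases. The potentially disconnected case of $G$ introduces no new difficulty, since $\Psi_G$ and $\Phi_G$ factor through connected components and the coefficient extractions above go through verbatim on the component of $G$ containing the problematic edge $e$.
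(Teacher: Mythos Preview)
Your proof is correct and follows essentially the same approach as the paper's: both split on whether there exists a massive edge $e$, extract the $\alpha_e^2$ coefficient via contraction--deletion when $G\backslash e$ is connected and the $\alpha_e$ coefficient when $e$ is a bridge, and then reduce the massless case to Lemma~\ref{lem: PhiGvanishing}. You are in fact slightly more careful than the paper in spelling out the easy direction, the degenerate subcases $E^{\ext}_{G_1}\in\{\emptyset,E^{\ext}_G\}$ of the bridge analysis, and the disconnected case.
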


\begin{proof} Let $e \in E_G$ such that $m_e \neq 0$. Note that $\Phi_{G}(q)$ is of degree at most one  in $\alpha_e$.  If $G\backslash e$ is connected, then by 
 $(\ref{eqn: ContractionDeletion})$, the coefficient of $\alpha_e^2$ in $\Xi_G$ is $m_e^2 \Psi_{G\backslash e}$ which is non-zero by lemma $\ref{lem: PsiGvanishing}$.
 In particular, $\Xi_G(q,m)\neq 0$. In the opposite case, $e$ is a bridge in $G$, and $G\backslash e$ has two connected components $G_1, G_2$. Then $\Psi_G = \Psi_{G_1} \Psi_{G_2}$ and  by $(\ref{eqn: ContractionDeletion})$ the coefficient of $\alpha_e$ in $\Xi_G(q,m)$ is $((q^{G_1})^2 + m_e^2)\Psi_{G_1} \Psi_{G_2}$. This is non-zero
 by $(\ref{eqn: genericmassmomenta})$ and  so $\Xi_G(q,m) \neq 0$.  Finally, if all edges of $G$ are massless, then $\Xi_G(q,m) = \Phi_G(q)$ and we can appeal
 to the previous lemma. 
  \end{proof}

\begin{example} Consider the following Feynman graph 
\begin{center}
\fcolorbox{white}{white}{
  \begin{picture}(192,86) (35,-10)
    \SetWidth{1.0}
    \SetColor{Black}
    \Line[arrow,arrowpos=0.4,arrowlength=5,arrowwidth=2,arrowinset=0.2](48,70)(80,38)
    \Line[arrow,arrowpos=0.4,arrowlength=5,arrowwidth=2,arrowinset=0.2](48,54)(80,38)
    \Line[arrow,arrowpos=0.4,arrowlength=5,arrowwidth=2,arrowinset=0.2](48,22)(80,38)
    \Line[arrow,arrowpos=0.4,arrowlength=5,arrowwidth=2,arrowinset=0.2](48,6)(80,38)
     \Line[double,sep=2](80,38)(144,38)
    \Line[arrow,arrowpos=0.4,arrowlength=5,arrowwidth=2,arrowinset=0.2](176,70)(144,38)
    \Line[arrow,arrowpos=0.4,arrowlength=5,arrowwidth=2,arrowinset=0.2](176,54)(144,38)
    \Line[arrow,arrowpos=0.4,arrowlength=5,arrowwidth=2,arrowinset=0.2](176,22)(144,38)
    \Line[arrow,arrowpos=0.4,arrowlength=5,arrowwidth=2,arrowinset=0.2](176,6)(144,38)
    \Vertex(80,38){3}
    \Vertex(144,38){3}
    \Text(34,34)[lb]{\Large{\Black{$q_i$}}}
     \Text(56,34)[lb]{\Large{$\vdots$}}
    \Text(180,34)[lb]{\Large{\Black{$q_j$}}}
      \Text(164,34)[lb]{\Large{$\vdots$}}
    \Text(110,42)[lb]{\Large{\Black{$1$}}}
  \end{picture}
}
\end{center}
where the momenta entering on the left are $q_i$, for $i\in I$, and those on the right $q_j$, $j\in J$.
Then
 $\Xi_G(q,m) = \big( (\sum_{i\in I} q_i)^2 +m^2\big)\alpha_1$  is identically zero if 
 $(\sum_{i\in I} q_i)^2 +m^2=0$, where $m=m_1$.
  These examples
 imply, by contraction and deletion,  that the conditions $(\ref{eqn: genericmomenta})$ and $(\ref{eqn: genericmassmomenta})$ are optimal. 
  \end{example}

\subsection{Space of generic kinematics} \label{sect:  GenKin}
The previous discussion motivates the following definition. In order to allow the possibility of  masses and momenta taking values in different fields, we work with  affine spaces.  Suppose that we wish to consider
processes in $d\in 2\N$ spacetime dimensions with $Q$ external momenta $\underline{q} = (q_1,\ldots, q_Q)\in \A^{Qd}$, which are subject to momentum conservation 
\begin{equation} \label{eqn: spacegenmom} q_1+ \ldots + q_Q=0\ ,\end{equation}
 in a theory with $M$  possible non-zero particle masses $\underline{m}=(m_1,\ldots, m_M) \in \GG_m^M$, where $\GG_m = \A^1 \backslash \{0\}$. The graph polynomials and hence the  integral  $(\ref{eqn: projectiveFeynmanint})$  
are invariant under the action of the orthogonal group in $d$ dimensions. Therefore set
$$s_{i,j} = s_{j,i}= q_i . q_j \qquad \hbox{for } 1\leq i \leq j \leq Q$$
to be the Euclidean inner product of the momenta $q_i,q_j$, and write
$$s_I = \sum_{i,j\in I} s_{i,j} =( \sum_{i\in I} q_i)^2 \qquad \hbox{ for  }  I \subset \{1,\ldots, Q\}\ .$$
Condition $(\ref{eqn: spacegenmom})$ implies that the $s_{i,j} \in \A^{\binom{Q+1}{2}}$ lie in a subspace isomorphic to
$\A^{\binom{Q}{2}}$ since we can solve for $q_Q$. It is parameterised, for example, by the 
$s_{i,j}$ for $1\leq i\leq j \leq Q-1$.   The Feynman amplitude is a function only of the $s_{i,j}$ and $m_k$. 

\begin{defn} \label{defn:  OpenUKinematic} Define a space of generic kinematics 
\begin{equation}  \label{eqn:  KQMdefn}
K_{Q,M}^{\gen} \subset   K_{Q,M}=   \A^{\binom{Q}{2}} \times \GG_m^{M}
\end{equation}
with coordinates $(\underline{s}, \underline{m})$, to be the open complement of the  union of the spaces
\begin{equation} \label{eqn: finalgenericmomcondition}
  s_I + m^2_j =0    
   \end{equation}
for all $I \subsetneq \{1,\ldots, Q\}$ and $j\in \{0,1,\ldots, M\}$,  where we set $m_0=0$. Compare $(\ref{eqn: genericmomenta})$. It is an affine scheme defined over $\Z$ of dimension $M+\binom{Q}{2}$.
 Define the \emph{Euclidean region} to be its set of real points $K_{Q,M}^{\gen}(\R)$.
 Define  a  subspace 
$$ U_{Q,M}^{\gen} \subset K_{Q,M}^{\gen}(\C)$$
to be the open region (in the usual topology) of $K_{Q,M}^{\gen}(\C)$ defined by 
\begin{multline}  \label{Ugendef}
 U_{Q,M}^{\gen} = \{ (\underline{s}, \underline{m})  \in  K_{Q,M}^{\gen}(\C):  \ \qquad \mathrm{Re}\,  s_I >0 \hbox{ for all }   I \subsetneq \{1,\ldots, Q\} \ , \\ \!\!\!\! \hbox{and }   \quad \mathrm{Re}\, m^2_j >0  \quad \hbox{for all }  \  j\in \{1,\ldots, M\}\} \ .
 \end{multline}
The  region $U_{Q,M}^{\gen}$ contains the Euclidean region $K_{Q,M}^{\gen}(\R)$.
\end{defn}
Note that  $K_{1,M}^{\gen} = K_{0,M}^{\gen} = \GG_m^M$, and in particular $K_{1,0}^{\gen} = K_{0,0}^{\gen}= \Spec(\Z)$.

\begin{defn} Let  $k_{Q,M} $  denote the field of fractions of $\Or(K_{Q,M})$. 
It is isomorphic to $\Q((s_{i,j})_{1\leq i \leq j<Q}, (m_k)_{1\leq k \leq M})$. In particular, $k_{0,0} = \Q$. 
\end{defn}
\section{Partial factorization theorems} \label{sect: Section2}

The factorization theorems presented below are crucial to the construction of the cosmic Galois group. The so-called ultraviolet factorisations
are used in the theory of renormalisation, the infra-red factorisations below are apparently new.
\subsection{UV factorizations}
Let $G$ be a connected Feynman graph, and  let  $\gamma \subset E_G$ be an edge-subgraph   with connected components $\gamma_1,\ldots, \gamma_n$.

\begin{lem}  \label{lem: Bijectionspanningtrees} The map $T \ \mapsto \ ( T/(T\cap \gamma) ,  T\cap \gamma_1  ,   \ldots,   T\cap \gamma_n)$ 
is a bijection from:
$$\{\hbox{Spanning } k\hbox{-trees } T \hbox{ such that } \gamma_i\cap T \hbox{ is connected for all }i=1,\ldots, n\}$$
to
$$ \{\hbox{Spanning }k\hbox{-trees in } G/\gamma\}  \ \times\   \prod_{i=1}^n \{\hbox{Spanning trees in } \gamma_i\}  $$
\end{lem}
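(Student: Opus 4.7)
The plan is to exhibit an explicit inverse map and verify both compositions are the identity, which yields the bijection directly. Define
\[
\Phi\colon (T', t_1, \ldots, t_n) \longmapsto T' \cup t_1 \cup \ldots \cup t_n \subset E_G,
\]
identifying $E_{G/\gamma}$ with $E_G \setminus E_\gamma$ and each $E_{\gamma_i}$ with its image in $E_G$.

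First, I would verify well-definedness of the forward map. If $T$ is a spanning $k$-tree of $G$ with $T \cap \gamma_i$ connected for each $i$, then $T \cap \gamma_i$ must span $V_{\gamma_i}$ (since any isolated vertex of $\gamma_i$ would form its own component of $T \cap \gamma_i$), and being cycle-free as a subgraph of the forest $T$, it is a spanning tree of $\gamma_i$. The edge count
\[
|E_{T/(T \cap \gamma)}| = |E_T| - \sum_{i=1}^n (|V_{\gamma_i}| - 1) = |V_{G/\gamma}| - k
\]
matches that of a spanning $k$-tree of $G/\gamma$; since contracting a connected subgraph preserves the number of connected components, $T/(T \cap \gamma)$ has exactly $k$ components, and it clearly covers $V_{G/\gamma}$. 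Hence it is a spanning $k$-tree of $G/\gamma$, as required.

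To check that $\Phi$ is a two-sided inverse, one observes directly that $\Phi(T', t_1, \ldots, t_n) \cap \gamma_i = t_i$ and that the quotient by $T \cap \gamma$ returns $T'$, both by construction. The only non-trivial point is that $T := \Phi(T', t_1, \ldots, t_n)$ is itself a spanning $k$-tree of $G$: the edge count is again $|V_G| - k$, the vertex set is covered because the $t_i$ span $V_\gamma$ while $T'$ spans $V_{G/\gamma}$, and the absence of cycles follows because any hypothetical cycle in $T$ would either lie entirely inside some $t_i$ (impossible, as $t_i$ is a tree) or project under contraction of the $t_j$'s to a genuine cycle in $T'$, contradicting that $T'$ is a spanning $k$-forest of $G/\gamma$.

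The main obstacle is notational rather than conceptual: one must fix compatible conventions for the subgraph $T \cap \gamma_i \subset \gamma_i$ and for the image $T/(T \cap \gamma)$ inside $G/\gamma$, so that the words \emph{spanning} and \emph{connected} refer consistently to the intrinsic graph structure. With these conventions in place, the entire argument reduces to Euler's formula together with the standard fact that contracting a connected subgraph preserves the number of connected components of a graph.
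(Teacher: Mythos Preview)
Your proof is correct and follows essentially the same approach as the paper's: both construct the explicit inverse $(T',t_1,\ldots,t_n)\mapsto T'\cup t_1\cup\cdots\cup t_n$ and verify well-definedness in each direction via Euler's formula. The only stylistic difference is that the paper packages the counting into the single observation $h_{T/(T\cap\gamma)}=h_T$ and $\kappa_{T/(T\cap\gamma)}=\kappa_T$ (valid whenever $T\cap\gamma$ is a forest), which handles both directions uniformly, whereas you do the forward direction by edge-counting plus preservation of $\kappa$ and the backward direction by an explicit cycle-projection argument; these are equivalent formulations of the same idea.
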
 
\begin{proof} Let $T$ be any subgraph of $G$ such that $T\cap \gamma$ is a union of  trees. Then 
\begin{equation}  \label{eqninproof: hT}
h_{ T/(T\cap \gamma)} = h_T  \quad \hbox{ and  } \quad \kappa_{ T/(T\cap \gamma)} = \kappa_T\ .
\end{equation}
The first formula follows from $(\ref{eqn:  additivityofh})$, the second is clear.
If  $T$ is a spanning $k$-tree such that each $T\cap \gamma_i$ is  connected,  then each $T\cap \gamma_i$ is a spanning tree in $\gamma_i$  and 
 $T/(T\cap \gamma)$ is a spanning $k$-tree in $G/\gamma$  by  $(\ref{eqninproof: hT})$.

In the other direction, suppose that $S\subset G/\gamma$ is a spanning $k$-tree, and let $T_i\subset \gamma_i$ be spanning trees.
There is a unique subgraph $T$ of $G$
such that $T\cap \gamma_i= T_i$ and $T/(T\cap \gamma)=S$.   By $(\ref{eqninproof: hT})$, $h_T = h_S=0$ and $\kappa_T=\kappa_S=k$, and since $T$ meets every vertex of $G$  it follows that $T$ is a spanning $k$-tree.
\end{proof}

The following factorisation formulae are essentially well-known \cite{Angles}.

\begin{prop} Let $G$ be connected, $\gamma \subset E_G$ as above. Then
\begin{eqnarray} \label{eqn: UVfactorizations}
\Psi_G  &=&   \Psi_{\gamma} \Psi_{G/\gamma} + R^{\Psi}_{\gamma, G}   \\
\Phi_G(q) &=& \Psi_{\gamma}  \Phi_{G/\gamma}(q) + R^{\Phi,\UV}_{\gamma, G}(q)  \nonumber 
\end{eqnarray}
where  the degree of $R^{\Psi}_{\gamma,G}$ and $R^{\Phi, \UV}_{\gamma,G}(q)$  in the variables $\alpha_e$, $e\in E_{\gamma}$ is strictly greater than 
$$\deg  \Psi_{\gamma} = \deg  \prod_{i=1}^n \Psi_{\gamma_i} = h_{\gamma}\ .$$
\end{prop}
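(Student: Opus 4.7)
My plan is to partition the defining sums for $\Psi_G$ and $\Phi_G(q)$ according to whether the intersection $T \cap \gamma$ of the spanning tree (or $2$-tree) $T$ with the subgraph $\gamma$ is a \emph{spanning forest of $\gamma$ with exactly $n$ components} (one in each $\gamma_i$), i.e.\ whether $T\cap \gamma_i$ is connected for every $i$. This is the condition appearing in Lemma~\ref{lem: Bijectionspanningtrees}. Call such spanning trees/$2$-trees \emph{good}, and the rest \emph{bad}; then $\Psi_G$ (resp.\ $\Phi_G(q)$) splits into a good piece, which I will identify with $\Psi_\gamma \Psi_{G/\gamma}$ (resp.\ $\Psi_\gamma \Phi_{G/\gamma}(q)$), and a bad piece, which will form the remainder $R^{\Psi}_{\gamma,G}$ (resp.\ $R^{\Phi,\UV}_{\gamma,G}(q)$).

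For the good piece, I invoke Lemma~\ref{lem: Bijectionspanningtrees} with $k=1$ (resp.\ $k=2$) to parametrise good $T$ by a pair $(S, (T_i)_{i=1}^n)$ where $S$ is a spanning $k$-tree of $G/\gamma$ and $T_i$ is a spanning tree of $\gamma_i$. Under this bijection the edge complement splits as a disjoint union
\[
E_G\setminus E_T \;=\; \bigsqcup_{i=1}^n (E_{\gamma_i}\setminus E_{T_i}) \;\sqcup\; (E_{G/\gamma}\setminus E_S),
\]
so $\prod_{e\notin T}\alpha_e = \prod_i \prod_{e\in E_{\gamma_i}\setminus E_{T_i}} \alpha_e \cdot \prod_{e\notin S}\alpha_e$. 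Summing the first factor over $T_i$ and using $(\ref{eqn: psiproddefn})$ gives $\Psi_\gamma$; the second factor sums to $\Psi_{G/\gamma}$ (resp.\ to the $\alpha$-monomial part of $\Phi_{G/\gamma}(q)$). In the $\Phi_G$ case I must also match momentum weights: writing $T=T_1'\cup T_2'$ as a $2$-tree of $G$, the hypothesis that $T\cap\gamma_i$ is connected forces each component $\gamma_i$ to lie entirely in $T_1'$ or in $T_2'$, so under contraction $S_1 := T_1'/(T_1'\cap\gamma)$ and $S_2 := T_2'/(T_2'\cap\gamma)$ inherit exactly the same external legs as $T_1'$ and $T_2'$ respectively; hence $q^{T_1'} = q^{S_1}$ and the factorisation $\sum (q^{T_1'})^2 \prod_{e\notin T}\alpha_e = \Psi_\gamma \cdot \Phi_{G/\gamma}(q)$ follows.

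For the bad piece, the key observation is that since $T$ is spanning we have $V_{T\cap\gamma_i} = V_{\gamma_i}$, so $T\cap\gamma_i$ is always a spanning forest of $\gamma_i$. If $T\cap\gamma_i$ has $c_i \geq 1$ components, then by Euler's formula $(\ref{eqn: Eulerformula})$ applied to $\gamma_i$,
\[
|E_{\gamma_i}\setminus E_T| \;=\; |E_{\gamma_i}| - (|V_{\gamma_i}| - c_i) \;=\; h_{\gamma_i} + (c_i - 1).
\]
A $T$ is good precisely when $c_i=1$ for all $i$, in which case the total degree of $\prod_{e\notin T}\alpha_e$ in the variables $\{\alpha_e : e\in E_\gamma\}$ equals $\sum_i h_{\gamma_i} = h_\gamma = \deg\Psi_\gamma$; a $T$ is bad precisely when some $c_i\geq 2$, in which case this degree is strictly greater than $h_\gamma$. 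Summing the bad contributions (weighted by $1$ or by $(q^{T_1'})^2$) therefore yields the remainders $R^{\Psi}_{\gamma,G}$ and $R^{\Phi,\UV}_{\gamma,G}(q)$, each of which has degree in $\{\alpha_e : e\in E_\gamma\}$ strictly exceeding $h_\gamma$, as required.

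The main subtlety I anticipate is the momentum-weight bookkeeping in the $\Phi_G$ case, specifically verifying that contraction of a connected subforest inside a single component of a spanning $2$-tree does not alter which external legs are attached to which component. This is essentially tautological from the definition of $G/\gamma$ (external half-edges are only reassigned to the contracted vertex, never split between the two $2$-tree components), but it is where the assumption that each $T\cap\gamma_i$ is connected is crucially used; without it, contraction could fuse the two $2$-tree components and the momentum-weight would not match $\Phi_{G/\gamma}(q)$.
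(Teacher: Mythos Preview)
Your proposal is correct and follows essentially the same route as the paper's proof: partition spanning $k$-trees into those with $T\cap\gamma_i$ connected for all $i$ (``good'') versus the rest (``bad''), invoke Lemma~\ref{lem: Bijectionspanningtrees} to identify the good piece with $\Psi_\gamma\Psi_{G/\gamma}$ (resp.\ $\Psi_\gamma\Phi_{G/\gamma}(q)$), and use the Euler-formula count $|E_{\gamma_i}\setminus E_T|=h_{\gamma_i}+(c_i-1)$ to bound the $\gamma$-degree of the bad piece from below. Your explicit verification that $q^{T_1'}=q^{S_1}$ (because connectedness of $T\cap\gamma_i$ forces $V_{\gamma_i}$ to lie in a single component of the $2$-tree) is a detail the paper leaves implicit.
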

\begin{proof} We shall prove both formulae simultaneously. Let $k=0$ (resp.  $1$).  
By $(\ref{eqn: psidefn})$ and $(\ref{eqn: phidefn})$, the set of monomials in  $\Psi_G$ (resp.  $\Phi_G(q)$)
are in one-to-one correspondence with the set of spanning $k$-trees   $T\subset G$. 
The latter can be partitioned into two subsets: those for which   $T\cap \gamma_i$ is connected for all $i$,
and those for which $T\cap \gamma_i$ is  not connected for some $i$. The former class is in one-to-one correspondence, by  lemma $\ref{lem: Bijectionspanningtrees}$, with the monomials in 
$ \Psi_{\gamma_1}\ldots \Psi_{\gamma_n} \times \Psi_{G/\gamma}$ (resp. $ \Psi_{\gamma_1}\ldots \Psi_{\gamma_n} \times \Phi_{G/\gamma}(q)$). 
The latter class correspond to monomials in the remainder terms $R^{\Psi}_{\gamma, G}$ (resp. $R^{\Phi, \UV}_{\gamma, G}(q)$). To see this, 
observe by  $(\ref{eqn: Edgecomplementnumber})$ applied to $\gamma_i \cap T \subset \gamma_i$ that  for each $i$, 
$$N_{\gamma_i}- N_{T\cap \gamma_i}=  h_{\gamma_i} + \kappa_{T\cap \gamma_i} -1\ .$$ 
Thus the degree of the monomial $\prod_{e\notin T} \alpha_e$ in the variables $\alpha_e$ for $e\in E_{\gamma}$ is 
$$\sum_{i=1}^n ( N_{\gamma_i} - N_{T\cap \gamma_i} )= h_{\gamma}+   \sum_{i=1}^n  (  \kappa_{T\cap \gamma_i} -1)\ .$$
This is strictly greater than $h_{\gamma}$ whenever some $T\cap \gamma_i$ is not connected.
\end{proof}

Equivalently, setting $\alpha'_e = \lambda \alpha_e$ for $e\in E_{\gamma}$ and $\alpha'_e = \alpha_e$ otherwise, we have
\begin{eqnarray} 
 \Psi_G  (\alpha'_e)   & \equiv &   \lambda^{h_{\gamma}}   \Psi_{\gamma} (\alpha'_e)  \Psi_{G/\gamma} (\alpha'_e) \pmod{\lambda^{h_\gamma+1}}   \nonumber \\
 \Phi_G  (\alpha'_e)(q)   & \equiv &   \lambda^{h_{\gamma}}  \Psi_{\gamma} (\alpha'_e)  \Phi_{G/\gamma} (q)(\alpha'_e) \pmod{\lambda^{h_\gamma+1}}   \nonumber
  \end{eqnarray} 

\begin{rem} One can show \cite{Angles} that the formulae $(\ref{eqn: UVfactorizations})$, in the special case when $\gamma$ is a divergent subgraph,
are sufficient to recover some of the main theorems of the theory of renormalization. The full strength of the factorisation formulae $(\ref{eqn: UVfactorizations})$, for $\gamma$ an arbitrary subgraph, will only manifest itself in the motivic period.
\end{rem}

\subsection{IR factorizations}
Let $G$ be a connected Feynman graph.

With generic momenta $(\ref{eqn: genericmomenta})$, 
      $\gamma\subset E_G$ is momentum-spanning\footnote{If wants to consider non-generic
momentum configurations, one could take $(\ref{eqn:  PhiG/gammanvanishing})$ as  the definition of momentum-spanning. But in this case the factorisation theorems stated below will fail without some additional assumptions on momenta. See example \ref{ex:  Circle}.} if and only if 
\begin{equation} \label{eqn:  PhiG/gammanvanishing}
\Phi_{G/\gamma}(q) = 0\ .
\end{equation} 
This follows immediately from lemma  $\ref{lem: PhiGvanishing}$. In this situation, the second factorization formula $(\ref{eqn: UVfactorizations})$ is degenerate. It turns out that the remainder term $R^{\Phi, \UV}_{\gamma, G}$ can be further factorized  via 
the following  formula, which  is apparently new.
\begin{prop} Let $\gamma\subset E_G$ be a momentum spanning edge-subgraph. Then 
\begin{equation}\label{eqn: IRfactPhi}
\Phi_G(q) =   \Phi_{\gamma}(q)   \Psi_{G/\gamma} + R^{\Phi,\IR}_{\gamma,G}(q)
\end{equation}
where  the degree of  $R^{\Phi,\IR}_{\gamma,G}(q)$ in the variables $\alpha_e$, $e\in E_{\gamma}$ is strictly greater than 
$$  \deg \Phi_{\gamma}(q) = h_{\gamma} +1 \ .$$
\end{prop}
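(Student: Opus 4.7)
The plan is to adapt the strategy used for the UV factorisations $(\ref{eqn: UVfactorizations})$, but with a finer classification of spanning $2$-trees together with an essential new input: momentum conservation will kill the ``UV-type'' contribution that would otherwise spoil the stronger IR degree bound. Write $\gamma=\gamma_0\cup\gamma_1\cup\cdots\cup\gamma_n$ with $\gamma_0$ the (unique) connected component containing all external legs, which exists because $\gamma$ is momentum-spanning. For any spanning $2$-tree $T=T_1\cup T_2$ of $G$, set $c_i=\kappa_{T\cap\gamma_i}$.

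First, I would compute the degree in $\alpha_e$, $e\in E_\gamma$, of the monomial $\prod_{e\notin T}\alpha_e$. Since $h_T=0$, formula $(\ref{eqn:  additivityofh})$ applied to $T\cap\gamma\subset\gamma$ gives $h_{T\cap\gamma}=0$, so each $T\cap\gamma_i$ is a forest. Euler's formula $(\ref{eqn: Eulerformula})$, applied to the spanning subgraph $T\cap\gamma_i$ of $\gamma_i$, yields $N_{\gamma_i}-N_{T\cap\gamma_i}=h_{\gamma_i}-1+c_i$. Summing over $i$ gives
$$\deg_{\alpha_{E_\gamma}}\Big(\prod_{e\notin T}\alpha_e\Big)\;=\;h_\gamma\,+\,\sum_{i=0}^n c_i\,-\,(n+1).$$

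Second, the crucial vanishing step: whenever $c_0=1$, the subgraph $T\cap\gamma_0$ is a spanning tree of $\gamma_0$, hence $V_{\gamma_0}\subset V_{T_1}$ (say). Because $\gamma$ is momentum-spanning, all external legs of $G$ attach to $V_{\gamma_0}\subset V_{T_1}$, so $q^{T_1}=\sum_{i\in E_G^{\mathrm{ext}}}q_i=0$ by momentum conservation $(\ref{eqn: momcons})$. Such terms therefore contribute zero to $\Phi_G(q)$. The only surviving spanning $2$-trees have $c_0\geq 2$, hence $\sum c_i\geq n+2$, so their contribution has degree at least $h_\gamma+1$ in $\alpha_{E_\gamma}$.

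Third, I would isolate the terms of minimal degree $h_\gamma+1$: these are precisely the spanning $2$-trees with $c_0=2$ and $c_i=1$ for $i\geq 1$. A straightforward adaptation of lemma $\ref{lem: Bijectionspanningtrees}$ gives a bijection
$$T\;\longmapsto\;\bigl(T\cap\gamma_0,\;T\cap\gamma_1,\,\ldots,\,T\cap\gamma_n,\;T/(T\cap\gamma)\bigr)$$
between such $T$ and tuples $(S_1\cup S_2, U_1,\ldots, U_n, W)$, where $S_1\cup S_2$ is a spanning $2$-tree of $\gamma_0$, each $U_i$ a spanning tree of $\gamma_i$, and $W$ a spanning tree of $G/\gamma$ (the latter is connected by the computation behind $(\ref{eqn:  PhiG/gammanvanishing})$). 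Under this correspondence, $V_{T_1}\cap V_{\gamma_0}=V_{S_1}$ and all external legs lie in $\gamma_0$, so $q^{T_1}=q^{S_1}$. Summing yields $\Phi_{\gamma_0}(q)\prod_{i\geq 1}\Psi_{\gamma_i}\cdot\Psi_{G/\gamma}=\Phi_\gamma(q)\Psi_{G/\gamma}$ by $(\ref{eqn: phiproddefn})$. All remaining surviving spanning $2$-trees have $\sum c_i\geq n+3$, hence degree $\geq h_\gamma+2$; these assemble into $R^{\Phi,\IR}_{\gamma,G}(q)$.

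The main subtlety is the second step: a priori there are spanning $2$-trees with $c_0=1$ and some $c_k=2$ ($k\geq 1$) whose degree is exactly $h_\gamma+1$ and which could obstruct the factorisation. The proof works precisely because momentum conservation, combined with the momentum-spanning hypothesis on $\gamma$, forces every such term to vanish, leaving only the ``genuinely IR'' contributions in which the $2$-tree cut happens inside $\gamma_0$.
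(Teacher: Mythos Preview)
Your argument is correct and follows essentially the same route as the paper's proof: both use the momentum-spanning hypothesis to kill the contributions with $c_0=1$, then set up a bijection for the minimal-degree surviving terms with $c_0=2$ and $c_i=1$ for $i\geq1$.

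Two small points to tighten. First, your notation $T/(T\cap\gamma)$, read literally as a contraction inside $T$, still has $\kappa_T=2$ components and is not a subgraph of $G/\gamma$; the object you want is $(T\cup\gamma)/\gamma$, the image of $T$ in $G/\gamma$, where the two pieces $S_1,S_2$ are identified to the single $\gamma_0$-vertex (the paper uses this notation for exactly this reason). Second, the bijection is not with \emph{all} spanning $2$-trees having $c_0=2$, $c_i=1$: there can be ``bad'' such $T$ in which both $S_1$ and $S_2$ lie in the same component $T_j$ (then $(T\cup\gamma)/\gamma$ has a cycle and two components, so is not a spanning tree of $G/\gamma$). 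These bad $T$ again satisfy $(q^{T_1})^2=0$ by your vanishing step, so they contribute nothing and the factorisation is unaffected; but the honest bijection is between the tuples and the $T$ for which $S_1,S_2$ land in different components of $T$.
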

\begin{proof}  Suppose that $\gamma$ has  connected components $\gamma', \gamma_1, \ldots, \gamma_n$  such that $\gamma'$ is  momentum spanning.
Monomials in $\Phi_G(q)$ are in one-to-one correspondence with spanning 2-trees $T=T_1 \cup T_2$ such that $(q^{T_1})^2 \neq 0$.
For such a $2$-tree, $T\cap \gamma'$ cannot be connected because each component $T_i$  intersects $\gamma'$ non-trivially (otherwise,  $V_{\gamma'} \cap  V_{T_i} =\emptyset$ for some $i$, 
which implies that  $q^{T_{i}} = 0$ because $\gamma'$ is momentum-spanning).

Partition the set of spanning 2-trees such that $(q^T_1)^2\neq 0$ into two classes:  those such that  $T\cap\gamma'$  has 2 components and $T\cap \gamma_i$ is connected for all $i$ (call this class $C_1$), and those for which $T\cap \gamma'$ or some $T\cap \gamma_i$ has  strictly more components ($C_2$). 

There is a bijection from the first set $C_1$ to
$$\{\hbox{Spanning } 2\hbox{-trees in } \gamma'\} \times\{\hbox{Spanning trees in } G/\gamma\} \times \prod_i \{\hbox{Spanning trees in } \gamma_i\}\ . $$
It is given by the map
$$T \mapsto   ( T\cap \gamma'  \ ,  \  (T \cup \gamma)/\gamma \ , \  T\cap \gamma_1,\ \ldots,\  T\cap \gamma_n)\ .$$
The proof is similar to lemma $\ref{lem: Bijectionspanningtrees}$, on noting that  $(T\cup \gamma)/\gamma$ is the one-vertex join of $T_1/(T_1 \cap \gamma)$ and $T_2 / (T_2 \cap \gamma)$ along the vertex given by the image of $\gamma$, and has exactly one connected component. One checks  that 
given   a spanning $2$-tree $T' \subset \gamma'$, and spanning trees  $S \subset G/\gamma$ and $T_i \subset \gamma_i$, there is a unique graph  $T\subset G$ 
such that $T\cap \gamma'= T'$, $(T\cup \gamma) /\gamma = S$ and $T_i = T\cap \gamma_i$, 
and that it   has exactly two connected components.

This gives a one-to-one correspondence between the set $C_1$ and monomials in 
$$\Phi_{\gamma}(q) = \big( \Phi_{\gamma'}(q)\prod_{i=1}^n \Psi_{\gamma_i} \big)  \Psi_{G/\gamma} \ .$$
Spanning $2$-trees  $T$ in the set $C_2$ are such that $T\cap \gamma'$ has at least 3 components, or some $T\cap \gamma_i$ has at least 2 components.
In this case,  the degree of the monomial $\prod_{e\notin T} \alpha_e$ in the variables $\alpha_e$ for $e\in E_{\gamma}$ is, by equation $(\ref{eqn: Edgecomplementnumber})$ applied to $T\cap \gamma' \subset \gamma'$ and $T\cap \gamma_i \subset \gamma_i$
$$N_{\gamma'} - N_{T\cap \gamma' } + \sum_{i=1}^n (N_{\gamma_i} - N_{T \cap \gamma_i }) = h_{\gamma}+  (\kappa_{T\cap \gamma'}-1)  + \sum_{i=1}^n  (  \kappa_{T\cap \gamma_i} -1)\ ,$$
which is strictly greater than $h_{\gamma}+1$, and contributes to $R^{\Phi,\IR}_{\gamma,G}(q)$.
\end{proof}

One can derive the contraction-deletion relations $(\ref{eqn: ContractionDeletion})$ from the factorizations $(\ref{eqn: UVfactorizations})$  and $(\ref{eqn: IRfactPhi})$ by setting $\gamma=e$ in the former and $\gamma = G\backslash e$ in the former and latter. 

\begin{example} \label{ex:  Circle} (Degenerate momenta).
 Consider the following Feynman graph
\begin{center}
\fcolorbox{white}{white}{
  \begin{picture}(110,97) (166,-58)
    \SetWidth{1.0}
    \SetColor{Black}
    \Arc(220,-10)(32,263,623)
    \Line[arrow,arrowpos=0.5,arrowlength=5,arrowwidth=2,arrowinset=0.2](165,35)(195,10)
    \Line[arrow,arrowpos=0.5,arrowlength=5,arrowwidth=2,arrowinset=0.2](165,-55)(195,-30)
    \Line[arrow,arrowpos=0.5,arrowlength=5,arrowwidth=2,arrowinset=0.2](275,35)(245,10)
    \Line[arrow,arrowpos=0.5,arrowlength=5,arrowwidth=2,arrowinset=0.2](275,-55)(245,-30)
    \Vertex(195,10){2}
    \Vertex(245,10){2}
    \Vertex(245,-30){2}
    \Vertex(195,-30){2}
      \Text(170,15)[lb]{{\Black{$q_1$}}}
              \Text(265,15)[lb]{{\Black{$q_2$}}}
            \Text(265,-40)[lb]{{\Black{$q_3$}}}
         \Text(170,-40)[lb]{{\Black{$q_4$}}}
           \Text(217,25)[lb]{{\Black{$1$}}}
              \Text(256,-13)[lb]{{\Black{$2$}}}
   \Text(180,-13)[lb]{{\Black{$4$}}}
               \Text(217,-52)[lb]{{\Black{$3$}}}
  \end{picture}
}
\end{center}
 which satisfies 
 $$\Phi_G(q) = (q_2+q_3)^2 \alpha_1\alpha_3 + (q_1+q_2)^2 \alpha_2 \alpha_4 + q_1^2 \alpha_1 \alpha_4 + q_2^2 \alpha_1 \alpha_2+ q_3^2 \alpha_2\alpha_3 +q_4^2 \alpha_3\alpha_4$$
 and impose the condition 
 $(q_2+q_3)^2 = (q_1+q_4)^2=0$. In this case the subgraph $\gamma$ defined  by the two edges $2$ and $4$ satisfies  $\phi_{G/\gamma}(q)=0$, but is not momentum spanning according to our stricter definition, because the incoming momenta do not all lie in the same connected component.  To leading order in the subgraph variables $\alpha_2, \alpha_4$ we have
 $$\Phi_G(q) = \alpha_2 ( q_3^2 \alpha_3+q_2^2 \alpha_1) + \alpha_4 (q_1^2 \alpha_1 +  q_4^2 \alpha_3) + R $$ 
 where $R = (q_1+q_2)^2 \alpha_2 \alpha_4$, and the leading terms do not factorize. If however, one further imposes the conditions
 $q_2+q_3=0$ and $q_1+q_4=0$ (so that the subgraph $\gamma$ now satisfies momentum conservation in each connected component), we obtain
 $$ \Phi_G(q) = (q^2_2 \alpha_2 + q^2_4 \alpha_4)(\alpha_1+\alpha_3) + R $$
 and a factorization  formula for  the leading term is restored.
  These types of phenomena  suggest our results  generalise, but  will not be considered in these notes.
\end{example}

\subsection{Factorization formulae for $\Xi$} A UV-factorization formula for $\Xi_G(q,m)$ follows immediately  from $(\ref{eqn: UVfactorizations})$.
The IR-factorization formula requires a further constraint on the distribution of masses.
\begin{defn} A subgraph $\gamma\subset G$ is  \emph{mass-spanning}  if it contains all massive edges of $G$: for every edge $e\in E_G$ such that $m_e \neq 0$,   $e\in E_{\gamma}$.

We shall say that a subgraph $\gamma$ is \emph{mass-momentum spanning} (or simply $m.m.$ for short) if it is both mass and momentum-spanning.
\end{defn}

 For generic kinematics $(\ref{eqn: genericmassmomenta})$, 
a subgraph  $\gamma\subset E_G$ satisfies
\begin{equation}  \label{eqn:  mmequivalentXi}\gamma \hbox{ is } m.m.  \quad \Longleftrightarrow  \quad 
\Xi_{G/\gamma}(q,m)=0 \ .
\end{equation}
This is a direct consequence of  lemma $\ref{lem:  Xivanishing}$.

\begin{thm} Let $G$ be a connected Feynman graph, and let  $\gamma\subset E_G$ be an edge-subgraph with any number of connected components. Then
\begin{equation} \label{eqn: XiUVfact}
\Xi_G(q,m) =    \Psi_{\gamma} \,  \Xi_{G/\gamma}(q,m) + R^{\Xi, \UV}_{\gamma, G} (q,m)
\end{equation} 
where $R^{\Xi, \UV}_{\gamma, G}(q,m)$ has degree $> h_\gamma$ in the $\alpha_e$, $e\in E_{\gamma}$.
Now suppose that $\gamma$ is a mass-momentum subgraph. In this case,  
\begin{equation} \label{eqn: XiIRfact}
\Xi_G(q,m) =      \Xi_{\gamma}(q,m)\,   \Psi_{G/\gamma}+ R^{\Xi, \IR}_{\gamma, G} (q,m)
\end{equation} 
where $R^{\Xi, \IR}_{\gamma, G}(q,m)$ has degree $> h_\gamma+1$ in the $\alpha_e$, $e\in E_{\gamma}$.
\end{thm}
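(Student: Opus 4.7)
The strategy is to deduce both identities by substituting the already-proved factorisations of $\Psi_G$ and $\Phi_G$ into the definition $\Xi_G(q,m) = \Phi_G(q) + \bigl(\sum_{e\in E_G} m_e^2 \alpha_e\bigr)\Psi_G$. The bookkeeping on degrees in the variables $\alpha_e$, $e\in E_\gamma$, is entirely routine once one keeps track of which mass terms are supported inside $\gamma$ and which lie outside.

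For the UV factorisation, I would write
\[
 \sum_{e\in E_G} m_e^2 \alpha_e \;=\; \sum_{e\in E_\gamma} m_e^2 \alpha_e \;+\; \sum_{e\notin E_\gamma} m_e^2 \alpha_e,
\]
note that $\Xi_{G/\gamma}(q,m) = \Phi_{G/\gamma}(q) + \bigl(\sum_{e\notin E_\gamma} m_e^2 \alpha_e\bigr)\Psi_{G/\gamma}$ (the edge set of $G/\gamma$ is $E_G\setminus E_\gamma$ and the masses are inherited), and substitute the known UV factorisations $\Psi_G = \Psi_\gamma \Psi_{G/\gamma} + R^\Psi_{\gamma,G}$ and $\Phi_G = \Psi_\gamma \Phi_{G/\gamma} + R^{\Phi,\UV}_{\gamma,G}$. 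Regrouping gives
\[
\Xi_G - \Psi_\gamma \Xi_{G/\gamma} \;=\; R^{\Phi,\UV}_{\gamma,G}(q) \;+\; \Bigl(\sum_{e\in E_\gamma} m_e^2 \alpha_e\Bigr)\Psi_\gamma \Psi_{G/\gamma} \;+\; \Bigl(\sum_{e\in E_G} m_e^2 \alpha_e\Bigr) R^\Psi_{\gamma,G}.
\]
Each summand on the right has degree $> h_\gamma$ in the $\alpha_e$, $e\in E_\gamma$: the first by hypothesis; the second because $\Psi_\gamma$ contributes $h_\gamma$ and the mass sum contributes $1$; the third because $R^\Psi_{\gamma,G}$ already has degree $> h_\gamma$ and multiplying by polynomials in any variables cannot lower this.

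For the IR factorisation the argument is symmetric but uses the IR identity $\Phi_G = \Phi_\gamma \Psi_{G/\gamma} + R^{\Phi,\IR}_{\gamma,G}$ (valid because $\gamma$ is momentum-spanning). The mass-spanning hypothesis $M_G\subset E_\gamma$ is then exactly what makes the remainder estimate tighten by one degree: it forces $\sum_{e\in E_G} m_e^2 \alpha_e = \sum_{e\in E_\gamma} m_e^2 \alpha_e$, hence $\Xi_\gamma(q,m)=\Phi_\gamma(q)+\bigl(\sum_{e\in E_G}m_e^2\alpha_e\bigr)\Psi_\gamma$, so that
\[
\Xi_G - \Xi_\gamma \Psi_{G/\gamma} \;=\; R^{\Phi,\IR}_{\gamma,G}(q) \;+\; \Bigl(\sum_{e\in E_\gamma} m_e^2 \alpha_e\Bigr) R^\Psi_{\gamma,G}.
\]
The first term has degree $> h_\gamma+1$ by the IR factorisation of $\Phi_G$; the second has degree $> h_\gamma + 1$ because the mass sum now contributes exactly one $E_\gamma$-variable and $R^\Psi_{\gamma,G}$ contributes $> h_\gamma$.

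The only genuine subtlety — and the one place where carelessness would derail the proof — is the IR degree bound: if $\gamma$ were merely momentum-spanning but not mass-spanning, then a massive edge outside $\gamma$ would produce a term $m_e^2 \alpha_e R^\Psi_{\gamma,G}$ of degree only $> h_\gamma$ in the $E_\gamma$-variables, and the sharper bound $> h_\gamma+1$ would fail. So the whole content of the IR case is to combine the $\Phi$-level IR factorisation with the observation that mass-spanning is precisely what confines the mass terms to $E_\gamma$.
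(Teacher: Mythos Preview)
Your proof is correct and essentially identical to the paper's: both substitute the $\Psi$- and $\Phi$-factorisations into the definition of $\Xi_G$, obtaining exactly the same remainder expressions, and both invoke mass-spanning in the IR case to confine the mass terms to $E_\gamma$. Your write-up is in fact slightly more detailed than the paper's in justifying the degree bounds and in isolating why the mass-spanning hypothesis is needed.
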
 
\begin{proof} For the proof of $(\ref{eqn: XiUVfact})$ combine $(\ref{eqn: UVfactorizations})$ with the definition $(\ref{eqn:  Xidefn})$ and set
$$R^{\Xi,\UV}_{\gamma, G}(q,m) = R^{\Phi,\UV}_{\gamma, G}(q) + (\sum_{e \in E_{\gamma}} m_{e}^2 \alpha_e) \Psi_{\gamma}\Psi_{G/\gamma} + (\sum_{e \in E_G} m_e^2 \alpha_e) R_{\gamma,G}^{\Psi}$$
For $(\ref{eqn: XiIRfact})$, combine the factorization formula for $\Psi$ with the $\IR$-factorization formula $(\ref{eqn: IRfactPhi})$,  use the condition  $m_e\neq 0 \Rightarrow e\in E_{\gamma}$, and set 
$$R^{\Xi, \IR}_{\gamma, G}(q,m) = R^{\Phi, \IR}_{\gamma, G}(q)  + (\sum_{e \in E_G} m_e^2 \alpha_e)   R_{\gamma,G}^{\Psi}\ .$$
The degree of $R^{\Xi,\IR}_{\gamma, G}$ is indeed of degree $>h_{\gamma}+1$ in the variables $\alpha_e$, for $e\in E_{\gamma}$. 
\end{proof}

Note that the factorisation formula for $\Psi_G$, which is symmetric with respect to $\gamma$ and $G/\gamma$, occurs in both the UV and IR-factorizations of $\Xi_G$. 

\section{\Motic subgraphs} \label{sect: Section3}

For want of a better adjective, the invented word $\motic$\!\! will be used to define a class of  subgraphs of a Feynman graph. It pertains to the word mote, meaning a speck or particle (leading to notions of indivisibility), and the fact that its letters stand for `members of the inner circle', leading to the idea of connectedness. The \emph{motive} of a graph will be constructed out of its \motic\!\! subquotients.

\subsection{Definition of \motic subgraphs}
Let $G$ be a  Feynman graph. Recall that at most one component of $G$ carries non-trivial kinematics.
\begin{defn} An edge-subgraph $\Gamma \subset G$ is \motic if, for every edge-subgraph $\gamma \subsetneq \Gamma$ which is mass-momentum spanning in $\Gamma$, one has 
$h_{\gamma} < h_{\Gamma}.$
\end{defn}

Recall from \S\ref{sect: FeynmanGraphs} that any edge subgraph $\Gamma \subset G$  which is mass-momentum spanning inherits all masses and external momenta from $G$. When it is not mass-momentum spanning then it is considered to be a Feynman graph with zero internal masses and no external momenta. Every  subgraph of such a graph is  trivially mass-momentum spanning.  In particular, if $G$ has no  kinematics
$$ \Gamma \subset G \hbox{ \motic } \quad \Longleftrightarrow \quad \Gamma \hbox{ is } 1  \hbox{-particle irreducible} \ .$$
Another example of a motic subgraph is a minimal mass-momentum spanning subgraph $\Gamma \subset G$ (related to the notion of `infra-red' graph in \cite{Speer}).

\begin{rem} \label{rem: moticbyedgecutting}
An edge subgraph $\Gamma \subset G$ is \motic if (and only if)  every mass-momentum spanning edge subgraph  of $\Gamma$ of the form $\Gamma \backslash e$, where  $e\in E_{\Gamma}$,  satisfies $h_{\Gamma \backslash e} < h_{\Gamma}$. To see that a subgraph with this property is indeed \motic\!\!, let $\gamma \subset \Gamma$ be any edge subgraph which is mass-momentum spanning in $\Gamma$, and   choose $e \in E_{\Gamma} \backslash E_{\gamma}$. Then $ \Gamma \backslash e$ is also mass-momentum spanning in $\Gamma$ and contains $\gamma$, so we have $h_{\gamma} \leq h_{\Gamma \backslash e} < h_{\Gamma}$.  Thus a graph is \motic when cutting an edge either causes the loop number to drop, or breaks the property of being mass-momentum spanning.

\end{rem} 
It follows from the definition that a subgraph $\Gamma \subset G$ is \motic if and only if every connected component of $\Gamma$ is  a \motic subgraph of  $G$.

\begin{example}
Below are the six \motic subgraphs of example of $\ref{example: Dunce}$. All subgraphs are mass-momentum spanning and give rise to an infra-red (and possibly also ultra-violet) sub-divergence, except for the final subgraph given by the edges $3,4$,
which corresponds to a purely ultra-violet sub-divergence. 
\begin{center} 
\fcolorbox{white}{white}{
  \begin{picture}(292,105) (-30,-5)
    \SetWidth{1.0}
    \SetColor{Black}
    \Line[arrow,arrowpos=0.5,arrowlength=5,arrowwidth=2,arrowinset=0.2](-42,51)(-20,51)
    \Vertex(-20,51){3}
    \Line[double,sep=2](-20,51)(28,83)
    \Line(-20,51)(28,19)
    \Arc(97.643,51)(76.643,155.322,204.678)
    \Arc[clock](-32,51)(68,28.072,-28.072)
    \Vertex(28,19){3}
    \Vertex(28,83){3}
    \Line[arrow,arrowpos=0.5,arrowlength=5,arrowwidth=2,arrowinset=0.2](48,98)(28,83)
    \Text(0,71)[lb]{{\Black{$1$}}}
    \Text(0,26)[lb]{{\Black{$2$}}}
    \Text(14,48)[lb]{{\Black{$3$}}}
    \Text(39,48)[lb]{{\Black{$4$}}}
       \Line[arrow,arrowpos=0.5,arrowlength=5,arrowwidth=2,arrowinset=0.2](58,51)(80,51)
    \Vertex(80,51){3}
    \Line[double,sep=2](80,51)(128,83)
    \Line(80,51)(128,19)
    \Arc(197.643,51)(76.643,155.322,204.678)
    \Vertex(128,19){3}
    \Vertex(128,83){3}
    \Line[arrow,arrowpos=0.5,arrowlength=5,arrowwidth=2,arrowinset=0.2](148,98)(128,83)
    \Text(100,71)[lb]{{\Black{$1$}}}
    \Text(100,26)[lb]{{\Black{$2$}}}
    \Text(126,48)[lb]{{\Black{$3$}}}
       \Line[arrow,arrowpos=0.5,arrowlength=5,arrowwidth=2,arrowinset=0.2](158,51)(180,51)
    \Vertex(180,51){3}
    \Line[double,sep=2](180,51)(228,83)
    \Line(180,51)(228,19)
    \Arc[clock](168,51)(68,28.072,-28.072)
    \Vertex(228,19){3}
    \Vertex(228,83){3}
    \Line[arrow,arrowpos=0.5,arrowlength=5,arrowwidth=2,arrowinset=0.2](248,98)(228,83)
    \Text(200,71)[lb]{{\Black{$1$}}}
    \Text(200,26)[lb]{{\Black{$2$}}}
    \Text(228,48)[lb]{{\Black{$4$}}}
  \end{picture}
}

\end{center}  
 
\begin{center} 
\fcolorbox{white}{white}{
  \begin{picture}(292,105) (-60,-5)
    \SetWidth{1.0}
    \SetColor{Black}
    \Line[arrow,arrowpos=0.5,arrowlength=5,arrowwidth=2,arrowinset=0.2](-42,51)(-20,51)
    \Vertex(-20,51){3}
    \Line[double,sep=2](-20,51)(28,83)
    \Arc(97.643,51)(76.643,155.322,204.678)
    \Arc[clock](-32,51)(68,28.072,-28.072)
    \Vertex(28,19){3}
    \Vertex(28,83){3}
    \Line[arrow,arrowpos=0.5,arrowlength=5,arrowwidth=2,arrowinset=0.2](48,98)(28,83)
    \Text(0,71)[lb]{{\Black{$1$}}}
    \Text(14,48)[lb]{{\Black{$3$}}}
    \Text(39,48)[lb]{{\Black{$4$}}}
       \Line[arrow,arrowpos=0.5,arrowlength=5,arrowwidth=2,arrowinset=0.2](58,51)(80,51)
    \Vertex(80,51){3}
    \Line[double,sep=2](80,51)(128,83)
    \Vertex(128,83){3}
   \Line[arrow,arrowpos=0.5,arrowlength=5,arrowwidth=2,arrowinset=0.2](148,98)(128,83)
    \Text(100,71)[lb]{{\Black{$1$}}}
    \Arc(247.643,51)(76.643,155.322,204.678)
   \Arc[clock](118,51)(68,28.072,-28.072)
    \Vertex(178,19){3}
    \Vertex(178,83){3}
    \Text(164,48)[lb]{{\Black{$3$}}}
    \Text(189,48)[lb]{{\Black{$4$}}}
  \end{picture}
}

\end{center}  
 For motivation, the reader may like to check that the graph polynomial $\Xi_G(q,m)$ vanishes on setting $\alpha_e=0$ for all $e\in E_{\gamma}$,
 for these \motic subgraphs $\gamma\subset G$.

 \end{example}

\subsection{Properties of \motic subgraphs}
Throughout this section, we use the abbreviation  $m.m.$  to stand for mass-momentum spanning. 
\begin{lem}  \label{lem: propertiesofmm} Let $\alpha \subset \beta \subset G$ be edge subgraphs.

$(i)$. $\alpha$ is m.m. in $\beta$, and $\beta$ is m.m. in $G  \Longleftrightarrow \alpha$ is m.m. in $G$.

$(ii)$.  $\beta$ is m.m. in $G \Longleftrightarrow \beta/\alpha$ is m.m. in $G/\alpha$. 

\end{lem}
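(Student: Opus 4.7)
My plan is to unpack mass-momentum spanning as the conjunction of two independent conditions (mass-spanning: containing all massive edges; momentum-spanning: containing $\partial E^{\ext}$ in a single connected component) and check each separately, exploiting carefully the convention from \S\ref{sect: FeynmanGraphs} that a subgraph inherits either all external legs and all masses, or none at all.

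For (i), the key is that when $\beta$ is m.m. in $G$ one has $E^{\ext}_\beta = E^{\ext}_G$ and $M_\beta = M_G$ by the convention. Under this identification, $\alpha$ m.m. in $\beta$ says exactly $\partial E^{\ext}_G \subset V_\alpha$ (with these vertices in one component of $\alpha$) and $M_G \subset E_\alpha$, which is $\alpha$ m.m. in $G$. Conversely, if $\alpha$ is m.m. in $G$, then $\beta \supset \alpha$ automatically contains $\partial E^{\ext}_G$ in a single connected component (any path in $\alpha$ is a path in $\beta$) and contains all of $M_G$, so $\beta$ is m.m. in $G$, and we are back in the previous situation.

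For (ii), first I check that $\beta/\alpha$ makes sense as a subgraph of $G/\alpha$: because $\alpha \subset \beta$, the two equivalence relations on $V_\beta$ (induced from $\alpha$ as a subgraph of $\beta$ or of $G$) coincide, so $V_{\beta/\alpha} \hookrightarrow V_{G/\alpha}$ canonically; the quotient preserves all external half-edges. The m.m. condition for $\beta/\alpha$ in $G/\alpha$ then decomposes into three sub-conditions that I translate one by one. Mass-spanning: the massive edges of $G/\alpha$ are by definition the massive edges of $G$ not in $\alpha$, and these lie in $\beta/\alpha = E_\beta \setminus \alpha$ iff every massive edge of $G$ lies in $\beta$ (using $\alpha \subset \beta$). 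Vertex condition: for $w \in \partial E^{\ext}_G$ one has $[w] \in V_{\beta/\alpha}$ iff $w \in V_\beta$ (either $w \notin V_\alpha$ so $[w] = \{w\}$, or $w \in V_\alpha \subset V_\beta$ automatically). Connected-component condition: I argue that external-vertex images $[w_1], [w_2]$ lie in a single component of $\beta/\alpha$ iff $w_1, w_2$ do in $\beta$. The direct implication is trivial (apply the quotient map to a path); for the converse I lift a path $v_0, v_1, \ldots, v_k$ in $\beta/\alpha$ edge by edge, connecting consecutive lifts within each $v_i$ by a path inside the corresponding component of $\alpha$ (when $v_i$ collapses a component of $\alpha$) or taking the trivial step (when $v_i$ is a single vertex), and similarly at the endpoints $w_1, w_2$. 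Combining the three translations yields $\beta/\alpha$ m.m. in $G/\alpha$ $\Longleftrightarrow$ $\beta$ m.m. in $G$.

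The only technical step is the path-lifting argument in (ii), but it is essentially a direct unwinding of the equivalence relation on $V_G$ defining $G/\alpha$; everything else is a matter of matching definitions.
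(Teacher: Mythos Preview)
Your proof is correct but takes a genuinely different route from the paper's. For part (ii), the paper invokes the algebraic characterisation $(\ref{eqn:  mmequivalentXi})$: under generic kinematics, $\gamma$ is m.m.\ in $G$ if and only if $\Xi_{G/\gamma}(q,m)=0$. Since $(G/\alpha)/(\beta/\alpha) \cong G/\beta$ (the ``third isomorphism theorem'' for graphs), one has $\Xi_{(G/\alpha)/(\beta/\alpha)} = \Xi_{G/\beta}$, and the equivalence follows in one line. You instead argue purely combinatorially, decomposing m.m.\ into its constituent conditions and checking each under the quotient map; your path-lifting argument for connectedness is the substantive step.

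The trade-off: the paper's argument is short and conceptual, but relies on Lemma~\ref{lem:  Xivanishing} (non-vanishing of $\Xi$ under generic kinematics) as a proxy for a combinatorial fact. Your argument is longer but self-contained and does not appeal to graph polynomials at all, making the lemma independent of the genericity assumptions. One small point worth tightening: your vertex condition ``$[w]\in V_{\beta/\alpha}$ iff $w\in V_\beta$'' tacitly uses the quotient-graph interpretation of $V_{\beta/\alpha}$, whereas the edge-subgraph convention of \S\ref{sect: FeynmanGraphs} takes $V_{\beta/\alpha}$ to be only the endpoints of edges in $E_\beta\setminus\alpha$. The discrepancy arises only when the component of $\alpha$ containing $w$ is an entire connected component of $\beta$; but then all external vertices lie in that component, so $\alpha$ is itself momentum-spanning, $G/\alpha$ has trivial kinematics by the equivalence of \S\ref{sect: FeynmanGraphs}, and the momentum-spanning condition on $\beta/\alpha$ is vacuous. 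This edge case deserves a sentence.
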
 

\begin{proof}
Part $(i)$ is clear.
For $(ii)$, observe that $(G/\alpha)/(\beta/\alpha) \cong G/\beta$ and hence $\Xi_{G/\beta} = \Xi_{(G/\alpha)/(\beta/\alpha)}$. Now apply $(\ref{eqn:  mmequivalentXi})$. 
\end{proof}
Note that the intersection of two $m.m.$ subgraphs is not necessarily $m.m.$ (in example \ref{ex:  Circle}, consider the edge subgraphs spanned by edges $1,2,3$ and $2,3,4$.)

\begin{rem}  \label{rem: moticintrinsic} The definition of a \motic subgraph is  intrinsic in the following sense.
If $ H \subset G$ is an edge subgraph, and  $\gamma \subset H$, then $\gamma$ is \motic in $G$ if and only if it is \motic in $H$.
This follows immediately from the definition if $H$ is $m.m.$ in $G$, by  lemma $\ref{lem: propertiesofmm}$ $(i)$, 
since $\alpha \subset \gamma$ is $m.m.$  in $H$ if and only if it is $m.m.$ in $G$. 
In the case when $H$ is not $m.m.$ in $G$, then neither is $\gamma \subset H$ by the same lemma. 
It is  \motic if and only if it is 1-particle irreducible, which is an intrinsic property.
\end{rem} 

The main properties of \motic subgraphs are summarised below.
\begin{thm} \label{thm: moticproperties} \Motic graphs have the following properties.  Let $G$ be a  Feynman graph and let 
$\alpha, \beta \subset G$ be edge subgraphs.

\vspace{0.02in}

 $(i)$. (Quotients) If $\beta$ is \motic in $G$,   then  $(\beta \cup \alpha)/\alpha  $ is  \motic in $G/\alpha$.

  $(ii).$ (Extensions) Let $\alpha \subset \beta$. If $\alpha$ is \motic in $G$ and $\beta/\alpha$  is \motic    in $G/\alpha$,  then $\beta$ is \motic in $G$.

\vspace{0.02in}

$(iii).$ (Unions) If $\alpha,\beta \subset G$ are  \motic subgraphs then $\alpha \cup \beta \subset G$ is \motic\!\!.

\vspace{0.02in}

$(iv).$ (Contraction of edges)  Let $e\in E_G$.  If $(\alpha \cup e)/e$ is   \motic in $G/e$,  then at least one of $\alpha$ or $\alpha \cup e $ is \motic in $G$.   Thus there is a surjective map 
$$\alpha \mapsto (\alpha \cup e)/e: \{\motic \hbox{subgraphs of } G\} \To \{\motic \hbox{subgraphs  of } G/e\}\ .$$
It is not injective:  it can happen that both $\alpha$ and $\alpha \cup e$  are \motic\!\!. 
\end{thm}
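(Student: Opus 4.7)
The plan is to prove the four parts in the order (i), (ii), (iii), (iv), using throughout: additivity of loop number $h_G = h_\gamma + h_{G/\gamma}$ from $(\ref{eqn:  additivityofh})$, the transitivity and quotient compatibility of the mass-momentum (m.m.) spanning property (lemma \ref{lem: propertiesofmm}), the intrinsic nature of being motic (remark \ref{rem: moticintrinsic}), and the edge-cutting characterization (remark \ref{rem: moticbyedgecutting}). A recurring move is, given $\delta \subset \Delta$, to compare $h_\delta$ and $h_\Delta$ by contracting a common subgraph $\gamma' \subset \delta$, which shifts both loop numbers by $h_{\gamma'}$, or by a Mayer--Vietoris identity when $\delta$ and $\Delta$ differ by edges coming from a second subgraph.

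For part (i), I observe that $(\beta \cup \alpha)/\alpha$ and $\beta/(\beta \cap \alpha)$ agree up to isolated vertices, both having edge set $E_\beta \setminus E_\alpha$, so it suffices to show $\beta/\gamma'$ is motic whenever $\beta$ is motic and $\gamma' \subset \beta$. Given a strict m.m. subgraph $\bar\gamma \subsetneq \beta/\gamma'$, lift it to $\gamma \subset \beta$ with $\gamma' \subset \gamma$ and $\gamma/\gamma' = \bar\gamma$: lemma \ref{lem: propertiesofmm}(ii) gives $\gamma$ m.m. in $\beta$; motic-ness of $\beta$ gives $h_\gamma < h_\beta$; and additivity subtracts $h_{\gamma'}$ from both sides to yield $h_{\bar\gamma} < h_{\beta/\gamma'}$.

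For part (ii), given $\gamma \subsetneq \beta$ m.m. in $\beta$, the supergraph $\gamma \cup \alpha \subset \beta$ is also m.m. in $\beta$ (adding edges preserves mass-spanning and momentum-spanning). If $\gamma \cup \alpha \subsetneq \beta$, then $(\gamma \cup \alpha)/\alpha \subsetneq \beta/\alpha$ is m.m., so motic-ness of $\beta/\alpha$ together with additivity and $h_\gamma \leq h_{\gamma \cup \alpha}$ gives $h_\gamma < h_\beta$. If $\gamma \cup \alpha = \beta$, then $\gamma \cap \alpha \subsetneq \alpha$, and a Mayer--Vietoris identity for the graphs $\gamma,\alpha,\beta=\gamma\cup\alpha$ (accounting for shared vertices not incident to shared edges) gives $h_\beta - h_\gamma = (h_\alpha - h_{\gamma \cap \alpha}) + M$ with $M \geq 0$. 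Motic-ness of $\alpha$ applied to $\gamma \cap \alpha$ yields $h_{\gamma \cap \alpha} < h_\alpha$ provided $\gamma \cap \alpha$ is m.m. in $\alpha$, which follows from inheritance when $\alpha$ itself is m.m. in $\beta$ and is trivial otherwise. Part (iii) follows by the same dichotomy applied to $\gamma \cup \beta \subset \alpha \cup \beta$: WLOG $\beta \not\subset \gamma$, so $\gamma \cap \beta \subsetneq \beta$, and either $\gamma \cup \beta \subsetneq \alpha \cup \beta$ (reduce via $(\gamma \cup \beta)/\beta \subsetneq (\alpha \cup \beta)/\beta$ and motic-ness of $\alpha/(\alpha \cap \beta)$, itself motic by (i)) or $\gamma \cup \beta = \alpha \cup \beta$ (reduce via Mayer--Vietoris and motic-ness of $\beta$ applied to $\gamma \cap \beta$).

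For part (iv), surjectivity is a direct consequence of (i) applied to the one-edge subgraph $\{e\}$. For existence of a motic preimage of a motic $\bar\alpha \subset G/e$: both candidates $\alpha_0$ (not containing $e$) and $\alpha_1 = \alpha_0 \cup \{e\}$ contract to $\bar\alpha$; if $\alpha_0$ is not motic, one obtains a strict m.m. subgraph $\gamma \subsetneq \alpha_0$ with $h_\gamma = h_{\alpha_0}$, and the plan is to verify motic-ness of $\alpha_1$ by showing that either $h_{\alpha_1} > h_\gamma$ (when the endpoints of $e$ lie in the same component of $\gamma$) or any obstruction in $\alpha_1$ would descend to an obstruction of motic-ness of $\bar\alpha$ in $G/e$, contradicting the hypothesis. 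The main obstacle throughout is the delicate sub-case of (ii), where $\gamma \cup \alpha = \beta$ and $\gamma \cap \alpha$ fails to be m.m. in $\alpha$: establishing that the Mayer--Vietoris correction $M$ is strictly positive in precisely those configurations where motic-ness of $\alpha$ does not directly apply requires careful tracking of how missing external vertices or extra components in $\gamma \cap \alpha$ force $1$-cycles in $\beta$ to contribute to $M$ via the connecting homomorphism.
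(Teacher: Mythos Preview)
Your reduction in part (i) contains a genuine error. You claim that $(\beta \cup \alpha)/\alpha$ and $\beta/(\beta \cap \alpha)$ agree up to isolated vertices, but this is false: while the edge sets agree, the vertex identifications can differ. If an edge $e \in E_\alpha \setminus E_\beta$ has both endpoints $v_1,v_2 \in V_\beta$, then passing to $(\beta \cup e)/e$ identifies $v_1$ and $v_2$ in $\beta$, whereas $\beta/(\beta \cap e) = \beta$ does not. This identification can change the loop number (if $v_1,v_2$ lie in the same component of $\beta$, $h$ goes up by one) and can change which subgraphs are momentum-spanning (if the external vertices were split between components containing $v_1$ and $v_2$). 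The paper handles precisely this case by contracting one edge of $\alpha$ at a time and treating separately the situation where $e \notin E_\beta$ meets $\beta$ in two vertices; there one must argue directly that the vertex-identified $\overline{\beta}$ is motic, with sub-cases according to whether $v_1,v_2$ lie in the same or different components of $\beta$. Your argument for the special case $\alpha \subset \beta$ is correct and matches the paper's first step, but the general case does not reduce to it.

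For (ii), your Mayer--Vietoris approach is genuinely different from the paper's (which bounds $h_\gamma$ via $h_{\gamma \cap \alpha} + h_{\gamma/(\gamma \cap \alpha)} \leq h_\alpha + h_{\overline{\gamma}}$ and analyses the equality case), but you have correctly identified and not resolved the hard sub-case: when $\gamma \cup \alpha = \beta$ and $\gamma \cap \alpha$ fails to be momentum-spanning in $\alpha$. Your claim that this ``follows from inheritance'' is exactly where the work lies; the paper's argument here counts the number $k$ of components of $\alpha \cap \gamma$ meeting the external vertices and shows that $k > 1$ forces a strict inequality elsewhere. For (iii), note that the paper gives a two-line proof: $(\alpha \cup \beta)/\alpha$ is motic in $G/\alpha$ by (i), so $\alpha \cup \beta$ is motic in $G$ by (ii). This is much cleaner than your direct dichotomy and avoids re-doing the delicate m.m. analysis.
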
 

\begin{proof}  $(i)$. First consider the case when $\alpha \subset \beta$, and let $\beta$ be \motic in $G$. There is a one-to-one correspondence 
\begin{eqnarray} 
\{\hbox{Edge subgraphs }\gamma \hbox{ s.t. }  \alpha\subset \gamma \subset \beta\}  & \leftrightarrow & \{\hbox{Edge subgraphs of } \beta/\alpha\} \nonumber \\
 \gamma & \mapsto & \gamma/ \alpha \nonumber
 \end{eqnarray}
 By lemma $\ref{lem: propertiesofmm}$ $(ii)$, this bijection preserves the subset of $m.m.$ subgraphs. Now by $(\ref{eqn:  additivityofh})$, 
 we have $h_{\beta}= h_{\alpha}+ h_{\beta/\alpha}$ and $h_{\gamma} = h_{\alpha} + h_{\gamma/\alpha}$, whence
 $$h_{\beta} - h_{\gamma} =  h_{\beta/\alpha} - h_{\gamma/\alpha}\ .$$
 Thus $\gamma/\alpha \subset \beta/\alpha$ is strict and $m.m.$ if and only if $\gamma \subset \beta$ is. In this case the left-hand side is 
 strictly positive and $ h_{\beta/\alpha} > h_{\gamma/\alpha}$. This proves that $\beta/\alpha$ is \motic in $G/\alpha$.

Now consider the general case when $\beta\subset G$ is \motic but  does not necessarily contain $\alpha$.
The quotient $G/\alpha$ is obtained by successively contracting edges in $e\in E_{\alpha}$. For every such edge which is an edge of $\beta$,
we can invoke the case proved above.  If $e$ has no common vertices with $\beta$, or a single vertex in common with $\beta$, then the image
$\overline{\beta}= (\beta \cup e)/ e$ of $\beta $ in $G/e$ is isomorphic to $\beta$, and the proof is straightforward. 
It remains to consider the  case when $e$ meets $\beta$ in two vertices $v_1,v_2 \in V_{\beta}$, and hence 
  $\overline{\beta}$  is obtained from $\beta$ by identifying $v_1$ and $v_2$.  We wish to show that $\overline{\beta}$ is \motic\!\!. 
Since $E_{\beta} = E_{\overline{\beta}}$ there is a one-to-one correspondence  $\gamma \leftrightarrow \overline{\gamma}$ between edge subgraphs of $\beta$ and those of $\overline{\beta}$.
First of all, suppose that  $v_1, v_2$ lie in two distinct connected components of $\beta$. Euler's formula $(\ref{eqn: Eulerformula})$ implies that $h_{\overline{\beta}}= h_{\beta}$.  Since
$\beta$ is \motic\!\!, all momentum-bearing vertices lie in a single connected component of $\beta$. Thus 
a strict  subgraph $\overline{\gamma} \subsetneq \overline{\beta}$ is $m.m.$ if and only if $\gamma \subsetneq \beta$ is $m.m.$ and we have
$$h_{\overline{\gamma}} = h_{\gamma} < h_{\beta} = h_{\overline{\beta}}$$
using the fact that $\beta$ is \motic\!\!. This proves that $\overline{\beta}$ is \motic also.  Now suppose that $v_1,v_2$ lie in the same connected component of $\beta$, and hence $h_{\overline{\beta}}= h_{\beta}+1$ by $(\ref{eqn: Eulerformula})$. 
Let 
 $\overline{\gamma} \subsetneq \overline{\beta}$  be an  $m.m.$ edge-subgraph. If  $\gamma$ is $m.m.$ in $\beta$,  we deduce that
 $$h_{\overline{\gamma}} \overset{(\ref{eqn: Eulerformula})}{\leq} h_{\gamma}+1 < h_{\beta} + 1 =  h_{\overline{\beta}}\ ,$$
 since the middle inequality follows from the fact that 
 $\beta$ is \motic\!\!. 
Now suppose that $\gamma$ is not $m.m.$ in $\beta$. Since $\gamma$ is  mass-spanning, 
the set of momentum bearing vertices of $\beta$ do not all lie in the same connected component of $\gamma$. Since  $\overline{\gamma}$ is $m.m.$,  their images  in $\overline{\gamma}$  lie in the same connected component, and we have
$\kappa_{\gamma } > \kappa_{\overline{\gamma} } $ 
and hence $h_{\overline{\gamma}} =h_{\gamma} $ by $(\ref{eqn: Eulerformula})$.  But then $h_{\overline{\gamma}} =h_{\gamma }\leq h_{\beta} < h_{\overline{\beta}}$ since $h_{\overline{\beta}}= h_{\beta}+1$.  In both cases we have shown that 
$h_{\overline{\gamma}}< h_{\overline{\beta}}$, which proves that $\overline{\beta}$ is \motic\!\!.

$(ii)$.  Let $\alpha \subset \beta$ be \motic  and $\beta/\alpha \subset G/\alpha$ be \motic also. Let $\gamma \subsetneq \beta$ be a strict $m.m.$ edge subgraph. Denote its image  in $\beta /\alpha $ by 
$$\overline{\gamma}= (\gamma\cup \alpha)/\alpha\ . $$
By lemma $\ref{lem: propertiesofmm}$ $(i)$,  $\gamma$ and $\gamma \cup \alpha$ are $m.m.$ in $\beta$ and hence by 
lemma $\ref{lem: propertiesofmm}$ $(ii)$, $\overline{\gamma}$ is $m.m.$ in $\beta/\alpha$.   By $(\ref{eqn:  additivityofh})$, we have
$$h_{\gamma} = h_{\gamma \cap \alpha} + h_{\gamma / (\gamma\cap \alpha)}  \ \leq\  h_{\gamma \cap \alpha} + h_{\overline{\gamma}} \  \leq  \ h_{\alpha} + h_{\beta/\alpha} = h_{\beta}$$
The first inequality holds because $\overline{\gamma}$ is obtained from $\gamma / (\gamma\cap \alpha)$ by identifying vertices and therefore 
$h_{\gamma / (\gamma\cap \alpha)}\leq h_{\overline{\gamma}}$. Suppose by contradiction that $h_{\gamma}= h_{\beta}$.  Then 
$$h_{\overline{\gamma}}= h_{\beta/\alpha} \quad , \quad h_{\gamma\cap \alpha} = h_{\alpha} \quad \ , \quad  h_{\gamma / (\gamma\cap \alpha)}=h_{\overline{\gamma}}\ .$$
Since $\beta/\alpha$ is \motic and $\overline{\gamma} \subset \beta /\alpha$ is $m.m.$,  the first equality implies that $\overline{\gamma}=\beta/\alpha$.  
Suppose that  $\alpha \cap \gamma$ is $m.m.$ in $\alpha$. Then the second equality would imply that 
$\alpha \cap \gamma = \alpha$, since $\alpha$ is \motic\!\!, and this, together with $\overline{\gamma}=\beta/\alpha$, 
contradicts the fact that $\gamma$ is a strict subgraph of $\beta$. 

Now consider the case when  $\alpha \cap \gamma$ is $m.m.$ in $\alpha$. 
 If  $\alpha$ is not $m.m.$ in $G$, then  every subgraph of $\alpha$ is automatically $m.m.$ in $\alpha$ and there is nothing to prove. 
 Therefore suppose that $\alpha$ is $m.m.$ in $G$, and hence in $\beta$.  
   Consider the set $Q$ of momentum-bearing vertices in $\beta$, and suppose that they lie in $k \geq 1$ different connected components of $\alpha \cap \gamma$. It suffices to show that $k=1$, since in that case $\alpha\cap \gamma$ is $m.m.$ from the definition (as both $\gamma$ and $\alpha$ are $m.m.$ in $\beta$). Since $\gamma$ is momentum-spanning, the  image of  $Q$ in $\gamma/ (\gamma \cap \alpha)$  consists of exactly $k$ vertices lying  in a single connected component. On the other hand,  since $\alpha$ is momentum-spanning, the image of $Q$ in $\overline{\gamma}$ reduces to a single vertex.  So $\overline{\gamma}$ is obtained from $\gamma /(\gamma \cap \alpha)$ by identifying these $k$ connected vertices, and possibly identifying further vertices. If $k>1$ we would have   $h_{\gamma / (\gamma\cap \alpha)}< h_{\overline{\gamma}}$ by $(\ref{eqn: Eulerformula})$ contradicting the third equality  above. Hence $k=1$, as required. 
  
     $(iii)$.  By $(i)$, the graph   $ \overline{\beta}=(\alpha \cup \beta) /\alpha $ is \motic in $G/\alpha$.  Since  $\alpha \subset \alpha \cup \beta$,  
     it follows from the extension property $(ii)$ that $\alpha \cup \beta$ is \motic in $G$.

 $(iv)$. Let   $\alpha \subset G$ be an edge subgraph, and  $e \in E_G$. Let   $\Gamma =\alpha \cup e$
 and suppose that $\Gamma/e$ is \motic in $G/e$.  Suppose that $e$ is not a tadpole.
   By the proof of $(i)$ above,  there is a one-to-one correspondence between $m.m.$ subgraphs $\gamma\subset \Gamma$ which 
 contain the edge $e$ and $m.m.$ subgraphs $\gamma/e$ of $\Gamma/e$.  If $\gamma\subsetneq \Gamma$ is strict and contains the edge $e$, then 
 $h_{\gamma} = h_{\gamma/e}< h_{\Gamma/e} = h_{\Gamma}$ by $(\ref{eqn:  additivityofh})$. The strict inequality in the middle follows since $\Gamma/e$
 is \motic\!\!.  By remark \ref{rem: moticbyedgecutting}, $\Gamma$ will be \motic if 
  $\Gamma \backslash e$ is not $m.m.$, or if it is $m.m.$ and $h_{\Gamma \backslash e} < h_{\Gamma}$.

   Therefore let us suppose that $\Gamma \backslash e$ is $m.m.$  and $h_{\Gamma \backslash e}= h_{\Gamma}$.  The latter equality implies that $e$ is a bridge in $\Gamma$, so we can write $\Gamma \backslash e = \Gamma_1 \cup \Gamma_2$, where  $\Gamma_1,\Gamma_2$ are disjoint.
   The quotient $\Gamma /e$ is the one-vertex join $\Gamma_1 . \Gamma_2$. Now there is a bijection between the subgraphs  $\gamma_1 \cup \gamma_2$
   of $\Gamma_1 \cup \Gamma_2$ and $\gamma_1 . \gamma_2$ of $\Gamma_1. \Gamma_2$. This bijection preserves the number of loops and the  property of being $m.m.$.  The momentum spanning property  follows from the fact that  $\Gamma\backslash e$ is momentum-spanning by assumption, so all momenta
   flow into only one of the parts $\Gamma_i$. Thus if $\Gamma_1 . \Gamma_2$ is \motic if and only if $\Gamma_1 \cup \Gamma_2$ is. Since $\Gamma/e$ is \motic
   we deduce that $\Gamma \backslash e$ is. It remains to consider the case when $e$ is a tadpole. Then $\Gamma\backslash e = \Gamma/e$ is
   \motic (so too is $\Gamma$).
              \end{proof}

\section{The \motic Hopf algebra of graphs} \label{sect:  moticHopfalgebra}
The notion of \motic subgraph gives rise to a Hopf algebra which encapsulates both infra-red and ultra-violet divergences of Feynman graphs. 

\begin{defn} Let $\FF$  denote the free  $\Z$-module generated by disjoint unions of \motic\!\!\footnote{$G$ is \motic if it is \motic as a subgraph of itself.} Feynman graphs, modulo the equivalence relation $G\sim G \cup \{v\}$ where $v$ is an isolated vertex. It is bigraded by 
$$(\hbox{number of edges}, \hbox{number of loops})\ .$$ \end{defn}

The disjoint union of graphs defines a 
  commutative ring structure on $\FF$, whose unit is the empty graph $1$. 
Define  a coproduct on $\FF$ by the formula:
\begin{eqnarray}
 \Delta : \FF &  \To&  \FF  \otimes_{\Z} \FF   \\
   G & \mapsto & \sum_{\gamma\subset E_G } \gamma \otimes G/\gamma \nonumber
 \end{eqnarray} 
 where $G$ is a connected graph and   the sum is over all \motic subgraphs $\gamma$ of $G$. This is a generalisation of the  Connes-Kreimer coproduct for scalar graphs. The map $\Delta$ extends to a unique
 homomorphism on  $\FF$.

\begin{thm} $\FF$   is a connected bigraded Hopf algebra.
\end{thm}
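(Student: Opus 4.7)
The plan is to check in turn: (a) $\FF$ is a commutative bigraded algebra; (b) the coproduct $\Delta$ is well defined, i.e.\ lands in $\FF \otimes \FF$ and respects the bigrading; (c) $(\FF, \Delta)$ is a coassociative coalgebra with the obvious counit; (d) $\Delta$ is an algebra homomorphism; (e) $\FF$ is connected in the grading, so that the antipode exists automatically.

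For (a), disjoint union is clearly commutative, associative, unital (empty graph), and additive in both the edge number and the loop number, using $h_{G \sqcup H} = h_G + h_H$. For (b), the key point is that $G/\gamma$ is itself motic whenever $G$ is motic and $\gamma \subset G$ is motic; this is Theorem~\ref{thm: moticproperties}(i) applied with $\beta = G$, so that $(G \cup \gamma)/\gamma = G/\gamma$ is motic in $G/\gamma$. Hence each summand $\gamma \otimes G/\gamma$ lies in $\FF \otimes \FF$. Additivity of the bigrading is $|E_G| = |E_\gamma| + |E_{G/\gamma}|$ together with $h_G = h_\gamma + h_{G/\gamma}$ from $(\ref{eqn: additivityofh})$.

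The main step is coassociativity (d). For $G$ connected and motic one computes
\[
(\Delta \otimes \id)\Delta G = \sum_{\gamma \,\text{motic}\, \subset G}\,\sum_{\alpha\, \text{motic}\, \subset \gamma} \alpha \otimes \gamma/\alpha \otimes G/\gamma,
\]
while
\[
(\id \otimes \Delta)\Delta G = \sum_{\alpha\, \text{motic}\, \subset G}\,\sum_{\beta\, \text{motic}\,\subset G/\alpha} \alpha \otimes \beta \otimes (G/\alpha)/\beta,
\]
and after the substitution $\beta = \gamma/\alpha$ (i.e.\ $\gamma$ is the inverse image in $G$ of $\beta$, which satisfies $(G/\alpha)/(\gamma/\alpha) = G/\gamma$) these two sums are indexed by inclusions $\alpha \subset \gamma \subset G$ satisfying respectively
\[
(\ast)\ \gamma\, \text{motic in}\, G,\ \alpha\, \text{motic in}\, \gamma
\quad\text{and}\quad
(\ast\ast)\ \alpha\, \text{motic in}\, G,\ \gamma/\alpha\, \text{motic in}\, G/\alpha.
\]
The equivalence $(\ast)\Leftrightarrow(\ast\ast)$ follows from the three results already at our disposal: Remark \ref{rem: moticintrinsic} (intrinsicness) gives $\alpha$ motic in $G$ iff $\alpha$ motic in $\gamma$, and Theorem \ref{thm: moticproperties}(i) (quotients) together with (ii) (extensions) exchange the motic property of $\gamma$ in $G$ with that of $\gamma/\alpha$ in $G/\alpha$ once $\alpha$ is motic. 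The coproduct on a disjoint union is then taken to be the tensor product coproduct, making $\Delta$ multiplicative by construction. The counit is $\epsilon(1) = 1$ and $\epsilon(G) = 0$ for any non-empty connected motic $G$; the counit axioms hold because $\gamma = \emptyset$ is trivially motic (no proper m.m.\ subgraphs to test) and $\gamma = G$ is motic by assumption, giving the terms $1 \otimes G$ and $G \otimes 1$ in $\Delta G$.

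For (e), connectedness in the bigrading is immediate: by the relation $G \sim G \cup \{v\}$, the only class of edge-number zero is the empty graph, so $\FF_{(0,0)} = \Z$. A standard lemma (Milnor--Moore / Quillen) then produces the antipode on the connected graded bialgebra $\FF$ by recursion on the grading. The only real obstacle is the compatibility $(\ast) \Leftrightarrow (\ast\ast)$ above, which is precisely where the non-trivial content of Theorem \ref{thm: moticproperties} and Remark \ref{rem: moticintrinsic} is used; everything else is formal once moticity is shown to be preserved under quotient and extension.
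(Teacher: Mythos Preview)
Your proposal is correct and follows essentially the same route as the paper's own proof: coassociativity is reduced to the equivalence $(\ast)\Leftrightarrow(\ast\ast)$, which is exactly what the paper means when it says coassociativity ``is a consequence of properties $(i)$ and $(ii)$ of theorem~\ref{thm: moticproperties} by a standard argument''; you have simply spelled out that standard argument, and correctly identified that Remark~\ref{rem: moticintrinsic} is the extra ingredient needed to pass between ``$\alpha$ motic in $\gamma$'' and ``$\alpha$ motic in $G$''. The remaining points (bigrading via $N_\gamma+N_{G/\gamma}=N_G$ and $(\ref{eqn: additivityofh})$, counit from the primitive terms $1\otimes G+G\otimes 1$, antipode by recursion on the connected grading) match the paper verbatim. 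One trivial slip: you label the coassociativity paragraph ``(d)'' but listed it as (c) in your outline.
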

\begin{proof} 
The coassociativity of $\Delta$ is a consequence of   properties $(i)$ and $(ii)$ of  theorem  $\ref{thm: moticproperties}$ by a standard argument (see, for example, \cite{BlochKreimer}).  The augmentation map $\varepsilon$ is the projection $\FF \rightarrow \FF_0 \cong \Z$. Denote its kernel by $I$. A generator  $G$ is \motic\!\!, so 
$\Delta (G) = 1\otimes G + G \otimes 1 \pmod {I \otimes \FF + \FF \otimes I}$. From this follows the 
equations $(\varepsilon \otimes \id)\Delta =  \id $ and $(\id \otimes \varepsilon )\Delta = \id $.
The antipode $S$ is constructed by the usual recursive formula for the antipode in a connected graded commutative Hopf algebra, and is defined over $\Z$.
The fact that the loop number and edge number are gradings follows from $(\ref{eqn:  additivityofh})$ and $N_{\gamma} + N_{G/\gamma} = N_G$. 
\end{proof} 
The grading by loop number is of importance when considering  the geometry of graphs in momentum space and the perturbative expansion, but  the edge number grading will be more relevant in these notes,  since we shall focus on the geometry of graphs in parametric space and their corresponding motives. 

Let us denote by $\FF_{Q,M}$ the free submodule of $\FF$ generated by \motic  Feynman graphs of type $(Q,M)$, i.e., with $Q$ external 
momenta and $M$ possible non-zero  masses. Then $\FF_{Q,M}$ is a module over $\FF_{0,0}$, where multiplication is that of  $\FF$:
\begin{equation}
\label{equationff00mult} 
\FF_{0,0} \otimes_{\Z} \FF_{Q,M} \To \FF_{Q,M}\ .
\end{equation}
 In particular, $\FF_{0,0}$ is a commutative ring. 
An important feature of the \motic  coproduct is the `all or nothing' property of masses and momenta:
\begin{equation}\label{Deltaallornothing} 
\Delta  \FF_{Q,M} \subset \big(\FF_{Q,M} \otimes_{\Z} \FF_{0,0}\big) \  \oplus  \ \big(\FF_{0,0} \otimes_{\Z} \FF_{Q,M}\big) \ .
\end{equation} 
The terms landing in the first factor correspond to subgraphs which are $m.m.$, and the terms in the second factor are those which are not $m.m.$.
In particular, $\FF_{0,0} \subset \FF$ is a Hopf subalgebra. It was  defined in  \cite{BlochKreimer} and 
 called the \emph{core} Hopf algebra.

There is a variant of this construction in which one considers graphs whose edges have distinct labels. We leave the details to the reader. 

\begin{rem} A more fundamental structure underlying the space of graphs should be the structure of an operad. For graphs of type $(0,0)$ it is clear how to define this by insertion into vertices of graphs, but is more delicate for graphs with masses and momenta. See \cite{FeynCat} for some related categorical notions. 
\end{rem} 

\subsection{Coradical filtration} The coradical filtration $C_i \FF$ is defined as follows.
Let $$\Delta'= \Delta - 1\otimes \id - \id \otimes 1$$ denote  the reduced coproduct.  Then $C_0 \FF=\Z$ and  
  $x\in C_n \FF$, for $n\geq 1$, if and only if $(\Delta')^{n} x=0$. The space $C_1 \FF$ consists of  primitive  linear combinations of (unions of) graphs. 

\begin{lem} \label{lem: coradicaldegree} The coradical filtration of a \motic  Feynman graph $G$ is 
$\leq h_G$ if $G$ is of type $(0,0)$ and  $ \leq h_G+1$ otherwise.
\end{lem}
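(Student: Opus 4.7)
First I would unwind the iterated reduced coproduct: by coassociativity, $(\Delta')^{n}G$ is a $\Z$-linear combination indexed by strict chains of motic subgraphs
$$ 1 \subsetneq \gamma_{1} \subsetneq \gamma_{2} \subsetneq \cdots \subsetneq \gamma_{n} \subsetneq G,$$
each contributing a tensor $\gamma_{1} \otimes (\gamma_{2}/\gamma_{1}) \otimes \cdots \otimes (G/\gamma_{n})$. By theorem \ref{thm: moticproperties} $(i)$ every successive quotient is again motic, and by iterated application of the loop additivity $(\ref{eqn:  additivityofh})$ the loop numbers of the $n+1$ factors telescope,
$$ h_{\gamma_{1}} + h_{\gamma_{2}/\gamma_{1}} + \cdots + h_{G/\gamma_{n}} \;=\; h_{G}.$$
So to prove $(\Delta')^{n}G = 0$ it is enough to rule out such chains once $n$ is large, via a lower bound on the loop number of each factor.

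The decisive input is that a motic Feynman graph of type $(0,0)$ with at least one edge is $1$-particle irreducible (as noted immediately after the definition of motic in \S\ref{sect: Section3}), and therefore contains a cycle and satisfies $h \geq 1$. I would then invoke the all-or-nothing property $(\ref{Deltaallornothing})$ to show that at most one factor in a chain can be of type $(Q,M)$. Concretely, let $k$ be the smallest index for which $\gamma_{k}$ is mass-momentum spanning in $G$ (and $k=n+1$ if none is). Lemma \ref{lem: propertiesofmm} $(i)$ forces $\gamma_{i}$ for $i<k$ not to be $m.m.$ in $G$, hence of type $(0,0)$, and a second application of lemma \ref{lem: propertiesofmm} (combined with the fact that containment of an $m.m.$ subgraph is preserved under enlargement) shows that $\gamma_{i+1}/\gamma_{i}$ for $i \geq k$ is of type $(0,0)$ as well. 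Thus $\gamma_{k}/\gamma_{k-1}$ is the unique factor that can be of type $(Q,M)$, and (since minimal $m.m.$-spanning subgraphs are motic and can have $h=0$) it is the only factor where the loop-number bound may fail.

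Assembling the pieces: if $G$ is of type $(0,0)$, all $n+1$ factors have $h \geq 1$, forcing $n+1 \leq h_{G}$; hence $(\Delta')^{h_{G}}G = 0$ and $G \in C_{h_{G}}\FF$. If $G$ is of type $(Q,M) \neq (0,0)$, at least $n$ of the $n+1$ factors have $h \geq 1$, forcing $n \leq h_{G}$; hence $(\Delta')^{h_{G}+1}G = 0$ and $G \in C_{h_{G}+1}\FF$. The only non-routine step, and the main obstacle, is the careful bookkeeping via lemma \ref{lem: propertiesofmm} that pinpoints the unique factor in a chain carrying $(Q,M)$ data; the rest reduces to additivity of $h$ under edge-subgraph quotients.
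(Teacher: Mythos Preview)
Your proof is correct and follows essentially the same approach as the paper: both use that loop number is additive (a grading on $\FF$), that each non-trivial factor of type $(0,0)$ is 1PI and hence has $h\geq 1$, and the all-or-nothing property $(\ref{Deltaallornothing})$ to conclude that at most one tensor factor carries the $(Q,M)$ data. You spell out the chain decomposition and the bookkeeping via lemma~\ref{lem: propertiesofmm} more explicitly than the paper, which simply cites the grading and $(\ref{Deltaallornothing})$, but the argument is the same.
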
 

\begin{proof}
A \motic  graph of type $(0,0)$ is 1PI and hence satisfies $h_{\gamma}\geq 1$ (for every edge $e$ of $\gamma$ we have  $0\leq h_{\gamma \backslash e}<h_{\gamma}$). 
  Let $G$ be a generator of  $\FF_{0,0}$.   Then $(\Delta')^n G\subset \FF_{0,0}^{\otimes n+1}$ and each component  is \motic so has loop number $\geq 1$.
  Since the loop number is a grading, this forces $(\Delta')^n$  to vanish if $n\geq h_G$.
  
  In the general case, let $G$ be of type $(Q,M)$. Then  by $(\ref{Deltaallornothing})$
$$(\Delta' G)^n \subset \FF_{0,0}^{\otimes n} \otimes_{\Z} \FF_{P,Q}$$
and the same  argument shows that 
 $(\Delta' G)^n$  vanishes if $n \geq h_G+1$, the only difference being that the component  in $\FF_{P,Q}$ can satisfy  $h_{\gamma}=0$. 
\end{proof}

\subsection{\Motic descendents of graphs}  \label{sect:Moticdescendants}
  If $G$ is  a \motic graph,  define 
\begin{eqnarray}  \label{twooperators}
d_e (G ) & = &  G/e \qquad \qquad \hbox{ if } e\in E_{G}   \\ 
d_{\gamma} ( G)  & =  &  \gamma \otimes G/\gamma  \qquad \hbox{ if } \gamma \subsetneq E_{G}  \hbox{ is  \motic\!\!} \ .  \nonumber
\end{eqnarray} 
Iterating these operations generates a cascade of tensor products of \motic graphs. 
\begin{defn} \label{defnDescendants} Let $T(\FF) = \bigoplus_{m \geq 0 } \FF^{\otimes m}$ denote the tensor algebra on $\FF$.  The grading by edge numbers induces a  grading 
we shall call the \emph{degree}:
\begin{equation} \label{degreedefn} \deg (\gamma_1 \otimes \ldots \otimes \gamma_n) = \sum_{i=1}^n (N_{\gamma_i} - 1) = N_{\cup \gamma_i} - n\ .
\end{equation}
Given a \motic graph $G$, define  the set of  \emph{\motic descendants} $D(G)$ of  $G$ to be the set of tensor products of graphs (generators in $T(\FF)$) obtained by  repeatedly
applying operators of the form  $\id^{\otimes r} \otimes d_{\bullet} \otimes \id^{\otimes s}$   to $G$, where $d_{\bullet}$ is one of the two operations $(\ref{twooperators})$ above.
Since these strictly decrease the degree, the set $D(G)$ is finite.
\end{defn}

The degree $(\ref{degreedefn})$ is the usual grading in the literature on  the bar construction.  It will correspond to the dimension of facets in the Feynman polytope and 
also to the cohomological degree of  our graph motives (to be defined below).

\begin{rem} The operator $d_e$  is not to be confused with   the contraction of $e$, defined  by 
$c_e G= G\q e.$ Then  the  equation 
\begin{equation} \label{dediffeqn} \Delta c_e = ( c_e\otimes \id + \id \otimes c_e)  \Delta 
\end{equation} 
is not always satisfied, so $\FF$ is not a differential Hopf algebra in general. The reason is the failure of the 
map in theorem  $\ref{thm: moticproperties}$ $(iv)$ to be injective. More precisely, if $G$ is a Feynman graph, and
$\Gamma$ an edge subgraph such that $\Gamma$ and $\Gamma\backslash e$ are both \motic in $G$ with $h_{\Gamma} = h_{\Gamma \backslash e}$,
then $c_e$ is not necessarily zero on $(\Gamma\backslash e)/\Gamma$, and $(\ref{dediffeqn})$ fails. This can only occur if $\Gamma$ is $m.m.$ and $\Gamma \backslash e$ is not (for example, let $\Gamma$ be the subgraph spanned by edges $1,3,4$ in  example \ref{example: Dunce}, and $e=1$). 
 For $G$ of type $(0,0)$
equation $(\ref{dediffeqn})$ is  satisfied and $\FF_{0,0}$ can be made into a differential graded Hopf algebra \cite{BlochKreimer}. 
\end{rem}

\subsection{Uniqueness of graph polynomials} \label{sectUnique}
This section  is not essential and can be skipped. 
It shows that the graph polynomials $\Psi_G$ and $\Xi_G$ are nearly uniquely determined by the factorization and edge-contraction formulae.

\begin{lem} \label{lem: existenceofmotic} Let $\Gamma \subset G$ be an edge-subgraph. Then there exists a 1PI and hence \motic subgraph $\gamma \subseteq \Gamma$ such that
$h_{\gamma} = h_{\Gamma}$. 
\end{lem}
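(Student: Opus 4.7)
The plan is to take $\gamma \subseteq \Gamma$ to be a minimal (with respect to inclusion of edge sets) edge-subgraph satisfying $h_\gamma = h_\Gamma$, and then check that any such $\gamma$ is automatically 1PI. Existence is immediate since $\Gamma$ itself trivially satisfies the loop-number condition, and there are only finitely many edge-subgraphs.

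The key step is to show that such a minimal $\gamma$ has no bridges. Suppose for contradiction that $e \in E_\gamma$ is a bridge of some connected component of $\gamma$. Removing $e$ then increases $\kappa$ by exactly one while decreasing $N$ by one, so Euler's formula $(\ref{eqn: Eulerformula})$ forces $h_{\gamma\setminus e} = h_\gamma = h_\Gamma$, contradicting the minimality of $\gamma$. Hence every connected component of $\gamma$ is 2-edge connected, which is exactly the definition of 1PI.

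It then remains to observe that any 1PI subgraph is automatically motic: for any strict subgraph $\alpha \subsetneq \gamma$ (in particular any strict $m.m.$ one), pick $e \in E_\gamma \setminus E_\alpha$; then $\alpha \subseteq \gamma \setminus e$, and the 1PI property of $\gamma$ gives $h_\alpha \leq h_{\gamma\setminus e} < h_\gamma$. This verifies the motic condition in a form stronger than required, since one does not even need the $m.m.$ hypothesis on $\alpha$.

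I expect no substantive obstacle: the argument reduces to Euler's formula $(\ref{eqn: Eulerformula})$ together with the standard characterisation of 2-edge-connectedness by absence of bridges. The only mild care is in the degenerate case where $\Gamma$ is a forest, i.e.\ $h_\Gamma = 0$; then the empty edge-subgraph (with no edges, hence no vertices, by the convention of \S\ref{sect: Section1}) realises the minimum, and is vacuously 1PI and motic.
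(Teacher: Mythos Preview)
Your proof is correct and essentially the same as the paper's: both select a minimal edge-subgraph of $\Gamma$ with the same loop number (the paper phrases this procedurally by repeatedly deleting edges $e$ with $h_{\Gamma\backslash e}=h_\Gamma$), and both observe that such minimality forces the absence of bridges via Euler's formula, hence 1PI. Your explicit verification that 1PI implies \motic and your remark on the forest case are not in the paper's proof but are harmless additions.
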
 
\begin{proof} Let  $e$  be an edge of  $\Gamma$. If $h_{\Gamma\backslash e} = h_{\Gamma}$ then replace $\Gamma$ with $\Gamma\backslash e$. Repeat
until we obtain a graph $\gamma$ whose loop number drops whenever any edge is cut.
\end{proof}

\begin{prop} For every labelled  \motic Feynman graph $G$, let 
$$P_G, C_G \in \Q[\alpha_e, e \in E_G]$$
be homogeneous polynomials of degrees $h_G, h_G+1$ respectively, which respect the equivalence 
relation of \S\ref{sect: FeynmanGraphs} and take the same values on $G$ and $G \cup \{v\}$ where $v$ is an isolated vertex, and 
 satisfy the following properties:
\begin{enumerate}
\item Partial factorisations:
\begin{eqnarray}
P_{G} & \equiv_{{\gamma}} &  P_{\gamma} P_{G/\gamma}  \nonumber \\
C_{G} & \equiv_{{\gamma}} &  P_{\gamma} C_{G/\gamma}  \quad  \hbox{ if } \gamma \hbox{ not } m.m.  \nonumber \\
C_{G} & \equiv_{{\gamma}} &  C_{\gamma} P_{G/\gamma}  \quad \hbox{ if } \gamma \hbox{ is } m.m.  \nonumber 
\end{eqnarray} 
where  $A_G \equiv_{\gamma} B_{\gamma} C_{G/\gamma}$ for homogeneous polynomials $A,B,C$  means that  $A_G - B_{\gamma} C_{G/\gamma}$ is of degree $> \deg B_{\gamma}$
in the variables $\alpha_e$, for $e\in E_{\gamma}$. 
\item Edge contraction:
\begin{eqnarray}
P_{G}\big|_{\alpha_e=0}  & =  &   P_{G\q e}  \nonumber \\
C_{G}\big|_{\alpha_e=0}  &= &  C_{G\q e}    \nonumber 
\end{eqnarray}

\item  Initial conditions: if $G$ has a single edge then $P_G=\Psi_G$ and $C_G = \Xi_G(q,m)$. 
If $G$ is a  banana graph (a connected graph with 2 vertices) with all edges  massive, then the coefficient of $\prod_{e\in E_G} \alpha_e$ in $C_G$ is 
$$ q^2 + \sum_{e\in E_G} m_e^2\ .$$

\end{enumerate}
With these assumptions,  $P_G = \Psi_G$ and $C_G = \Xi_G(q,m)$. 
\end{prop}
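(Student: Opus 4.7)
The plan is induction on the number of edges $N_G$, using axiom (2) to determine the bulk of the coefficients and then axiom (1) together with the initial conditions to handle the residual ones. I would assume $G$ is connected throughout (the disjoint-union case follows by multiplicativity, with the convention $P_\emptyset = C_\emptyset = 0$ forced by applying axiom (2) to a tadpole). The base cases are the motic single-edge graphs, covered directly by initial condition (3).

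For the inductive step with $N_G \geq 2$, first I would observe that $G/e$ is motic for every $e \in E_G$ by theorem~\ref{thm: moticproperties}(i) applied with $\beta = G$ and $\alpha = \{e\}$, so the inductive hypothesis yields $P_{G\q e} = \Psi_{G\q e}$ and $C_{G\q e} = \Xi_{G\q e}$. Combining axiom (2) with corollary~\ref{cor: Restrictiontoalpha=0} then gives $(P_G - \Psi_G)|_{\alpha_e=0} = 0$ for every $e$, so every monomial of $P_G - \Psi_G$ is divisible by each $\alpha_e$, hence by $\prod_e \alpha_e$, and therefore has total degree $\geq N_G$. Since $\deg(P_G - \Psi_G) \leq h_G = N_G - V_G + 1$ by Euler, this is impossible unless $V_G \leq 1$; the same argument applied to $C_G - \Xi_G$ of degree $\leq h_G + 1 = N_G - V_G + 2$ forces $V_G \leq 2$. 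Consequently $P_G = \Psi_G$ as soon as $V_G \geq 2$, and $C_G = \Xi_G$ as soon as $V_G \geq 3$.

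It remains to treat single-vertex $G$ (a bouquet of self-loops, relevant for both $P$ and $C$) and two-vertex banana graphs (relevant only for $C$). In the single-vertex case, homogeneity together with the previous paragraph constrains $P_G = c \prod_e \alpha_e$ and $C_G = \sum_e c_e \alpha_e^2 \prod_{e' \neq e} \alpha_{e'}$; I would apply axiom (1) with $\gamma = \{e_i\}$ a single self-loop, which is intrinsically motic by remark~\ref{rem: moticintrinsic} and is $m.m.$ in $G$ since $G$ has no kinematics. The factorisations $P_G \equiv_{\{e_i\}} \alpha_{e_i} P_{G/e_i}$ and $C_G \equiv_{\{e_i\}} C_{\{e_i\}} P_{G/e_i}$, combined with the single-edge initial condition and induction on $G/e_i$, pin down $c = 1$ and $c_{e_i} = m_{e_i}^2$ by comparing the lowest-$\alpha_{e_i}$-degree monomial on both sides. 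For banana graphs, only the coefficient of $\prod_e \alpha_e$ in $C_G$ is undetermined by edge contraction: the all-massive case is supplied by initial condition (3), and if some edge $e_1$ is massless, applying the IR factorisation with $\gamma = G \setminus \{e_1\}$ (which is motic and $m.m.$) together with edge contraction along $e_1$ yields $C_G = \alpha_{e_1} \Xi_\gamma + \Xi_{G/e_1}$, which agrees with $\Xi_G$ via the ordinary contraction-deletion relation.

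The hard part will be the low-vertex bookkeeping: verifying that each $\gamma$ chosen in the partial factorisations is genuinely motic and, where required, $m.m.$; tracking the remainder-degree inequalities against homogeneity precisely enough that they pin the scalars uniquely; and confirming that the lone initial datum on all-massive banana graphs, combined with the IR factorisation, truly suffices to propagate through all mass distributions by peeling off massless edges one at a time. Once the $V_G \leq 2$ regime is settled, everything above two vertices follows automatically from edge contraction.
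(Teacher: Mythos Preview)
Your approach is different from the paper's but essentially correct, and in some ways more direct. The paper also inducts on edge number, but uses axiom~(1) as the primary reductive tool: for $h_G \geq 2$ it invokes lemma~\ref{lem: existenceofmotic} to produce a \motic subgraph $\gamma$ with $h_\gamma = h_G - 1$ avoiding a chosen edge $e_0$, uses the factorisation to pin down all coefficients of $\alpha_{e_0}^k$ for $k \neq 1$ (via induction and edge contraction for $k=0$), and then isolates the monomial $\prod_e \alpha_e$ as the only thing left undetermined. This reduces to the base cases $h_G \leq 1$, handled directly. You invert the roles: edge contraction (axiom~(2)) does almost all the work, forcing $P_G - \Psi_G$ and $C_G - \Xi_G$ to be divisible by $\prod_e \alpha_e$, and the degree bound from Euler's formula then reduces to $V_G \leq 1$ (for $P$) or $V_G \leq 2$ (for $C$). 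This is cleaner for the generic case and makes the role of the edge-contraction axiom more transparent.

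There is, however, a genuine imprecision in your low-vertex analysis. For the single-vertex case you write that the self-loop $\{e_i\}$ is $m.m.$ ``since $G$ has no kinematics'', but a single-vertex graph can carry massive self-loops; $\{e_i\}$ is always momentum-spanning there, but is mass-spanning only if $e_i$ is the unique massive edge. When $\{e_i\}$ is not $m.m.$ you must instead use $C_G \equiv_{\{e_i\}} P_{\{e_i\}} C_{G/e_i}$, and the remainder degree condition becomes $> h_{\{e_i\}} = 1$ rather than $> 2$; this still forces the relevant coefficients to vanish, but the argument splits into cases. Similarly, a connected two-vertex \motic graph may have self-loops in addition to the banana edges, and your sketch does not address that configuration (one can use the self-loop itself as the \motic $\gamma$, but again the $m.m.$ status depends on the location of external legs and masses). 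Finally, your assertion that $\gamma = G \setminus \{e_1\}$ is \motic is true for pure bananas by direct inspection, but is not automatic and should be justified. None of these is fatal---each is repaired by a short case analysis---but the ``hard part'' you flag is genuinely harder than your sketch suggests; the paper's reduction via loop number sidesteps some, though not all, of this bookkeeping.
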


\begin{proof} 
The proof proceeds by induction on the number of edges. Assume for now that the theorem is true for all \motic graphs $G$ such that $h_G\leq 1$. For the induction step,  suppose that $G$ is \motic and satisfies $h_G \geq 2$. 
 For any edge $e_0\in E_G$, there exists a   non-trivial \motic subgraph $\gamma \subset G$ such that $e_0\notin E_{\gamma}$ and $h_{\gamma} = h_G-1$, by lemma \ref{lem: existenceofmotic}.  It may or may not be $m.m.$. We obtain by induction hypothesis
 $$P_{G} = \Psi_{\gamma} \Psi_{G/\gamma} +R_{\gamma, G/\gamma}^P $$
on application of $(1)$, 
where the degree of $R_{\gamma, G/\gamma}^P$ in the variables $\alpha_e$, for $e\in \gamma$ is $\geq h_{\gamma}+1$ and hence equal to $h_G$. 
Thus $R_{\gamma, G/\gamma}^P$ does not depend on the variables $\alpha_e$, for $e\in G/\gamma$. We deduce that  $P_G$ is of degree $\leq 1$ in $\alpha_{e_0}$
and the coefficient of $\alpha_{e_0}$ is $\Psi_{\gamma} \Psi^{e_0}_{G/\gamma}$. The constant term in $\alpha_{e_0}$ is uniquely determined from $(2)$ and induction hypothesis
since $G\q e_{0}$ has fewer edges. This proves that $P_G=\Psi_G$ for all $G$.  
For the polynomial $C_G$, an application of $(1)$ and the induction hypothesis gives either
\begin{eqnarray} C_{G}& =&  \Xi_{\gamma}(q,m) \, \Psi_{G/\gamma} +R^{C,\IR}_{\gamma, G/\gamma} 
 \nonumber \\ 
\hbox{or }\quad C_{G}   &= & \Psi_{\gamma}\,  \Xi_{G/\gamma}(q,m) +R_{\gamma, G/\gamma}^{C,\UV}  \quad   \nonumber
\end{eqnarray} 
depending on whether $\gamma$ is $m.m.$ or not.  The former case proceeds as for $P_G$. In the latter case, the term 
$R_{\gamma, G/\gamma}^{C,\UV}$ is of degree at most one in $\alpha_{e_0}$, and hence the coefficients of $(\alpha_{e_0})^k$ for $k\geq 2$
are uniquely determined by induction. The coefficients of $(\alpha_{e_0})^0$ are determined by contracting the edge $e_0$ via $(2)$. Thus the only
undetermined term in $C_G$ is the unique   monomial  $\prod_e \alpha_e$ of degree exactly one in every $\alpha_e$ for $e\in E_G$. It can only occur in $C_G$  if $h_G+1=\deg C_G = N_G$. If $G$ has more than one component, it   necessarily contains a self-edge, say $e'$, which is a \motic subgraph and not $m.m.$. Applying $(1)$ to this subgraph gives
$$C_G = \alpha_{e'} C_{G/e'} (q, m) + O(\alpha_{e'}^2)$$
and so the term linear in $\alpha_{e'}$ is again determined by induction.  The only remaining case is when $G$ is connected. Then $N_G = h_G+1$ implies that it has two vertices by Euler's formula $(\ref{eqn: Eulerformula})$, so it is a banana graph. Furthermore, every edge is massive otherwise we could construct a non-trivial \motic $m.m.$ subgraph $\gamma$, and determine $C_G$ using the third formula of $(1)$. Thus we are reduced to the case of $(3)$.

It remains to check the cases when $G$ does not have a non-trivial \motic subgraph, i.e., $h_G\leq 1$.
First suppose that $G$ is a forest $(h_G=0)$.  Then $P_G$ is  constant (of degree $0$), and by  contracting edges we deduce that $P_G= P_{\{v\}}=1$, where $v$ is an isolated vertex. The polynomial $C_G$ is homogeneous of degree one, i.e.,  $C_G = \sum_e \lambda_e \alpha_e$. The coefficient  $\lambda_e$ is uniquely
determined by contracting all edges except $e$, and we are reduced to the case of a graph with a single edge $(3)$.
Now suppose that $h_G=1$. The same argument proves
the statement for $P_G$. The polynomial $C_G$ is homogeneous of degree two and hence of the form 
$C_G = \sum_{e,f} \lambda_{e,f} \alpha_e\alpha_f$.  By contracting all edges except $e,f$ we reduce to a two-edge graph.  By edge contraction, $\lambda_{e,e}$
and $\lambda_{f,f}$ are determined via $(3)$. If $e, f$ form a massive $2$-edge banana, the coefficient $\lambda_{e,f}$ is determined by the second part of $(3)$. In all other cases, $G$ has  a non-trivial \motic $m.m.$  subgraph and we can reduce to a graph with fewer edges by the partial factorization formulae as above.  \end{proof}

\begin{rem}  
Properties $(1)$ and $(2)$ are essential requirements for the product-structure on graph hypersurfaces and hence for our results on the 
action of the cosmic Galois group. Thinking  of the data of $P$ and $C$ as Feynman rules,  the proposition tells us  how restrictive these requirements are.
The polynomial $P_G$ is essentially uniquely determined,  but there is nonetheless a small amount of  freedom to modify the polynomial $C_G$
by adding a term
 $$ s_n \prod_{e\in E_G} \alpha_e$$
 to every massive banana graph $G$ with $n$ loops, where $s_n$ is a new parameter. By properties $(1)$ and $(2)$, the coefficients $s_n$  will infiltrate the $C_G$ for all other Feynman graphs. In this way one could, suprisingly,  modify the Feynman integrands by essentially a single quantity  $s_n$  (which  could depend on the labelling of the edges $E_G$)
at every loop order without affecting the mathematical structures studied in this paper.\footnote{This provides Feynman integrals with a new and natural parametrisation, distinct from the kinematic parameters,  which may be useful for setting up differential equations.}
 Observe also that the concept of mass and momenta only enter via the initial conditions $(3)$. One could  allow for more diverse families of polynomials $P_G,C_G$ either  by restricting the set of graphs under consideration, or by   allowing the polynomials $P_G$, $C_G$ to have higher degrees. \end{rem} 

  An interesting question would be  to study  similar partial  factorisation properties of Feynman  integrands for gauge theories
 and see what restrictions this imposes on the set of possible Feynman rules.

\section{Linear blow-ups in projective space} \label{sect: LinearBlowups}
We study  blow-ups of coordinate linear subspaces of projective space.
The role of local coordinates is emphasised owing to their close  relation to sector decompositions in the physics literature.

\subsection{Iterated blow-ups}
Let $S$ be a finite set, and let $\Pro^S=\Pro(\Q^S)$ denote projective space over $\Q$ of dimension $|S|-1$ with projective coordinates $\alpha_s$, $s\in S$.
Every subset $I\subset S$ defines a linear subspace  
$$L_I \cong  \Pro^{I^c}  \subset \Pro^S $$
defined by the vanishing of coordinates $\alpha_i$, $i\in I$. The notation $I^c$ denotes the complement  $S\backslash I$ when the set $S$ is unambiguous. We have
$$L_{I_1 } \cap L_{I_2} = L_{I_1 \cup I_2}\ .$$
Now let $B\subset 2^S$   be a set of subsets of $S$ with the property that 
\begin{equation} \label{eqn:  Bclosedunderunions}
I_1,  I_2  \in B  \quad \Longrightarrow \quad  I_1 \cup I_2  \in B \ 
\end{equation}
and satisfying $S\in B$. Define the iterated blow-up $P^B$  of $\Pro^S$ along $B$
$$\pi_B: P^B \To \Pro^S$$
by the following standard procedure:

\vspace{0.05in}
$(0)$  First blow up all subspaces $L_I$, for $I\in B$, such that  $\dim(L_I)=0$, in any order,  to obtain a space $P_0\rightarrow \Pro^S$.

\vspace{0.02in}
$(1)$  Blow up all strict transforms of $L_I$ in $P_0$, for $I\in B$, such that  $\dim(L_I)=1$, in any order, to obtain a space $P_1 \rightarrow P_0$.

\vspace{0.02in}
$(k)$ At the $k^\mathrm{th}$ stage, blow up the strict transforms of $L_I$ in $P_{k-1}$, for $I\in B$  such that $\dim(L_I)=k$, to obtain $P_k \rightarrow P_{k-1}$.
\vspace{0.05in}

\noindent 
 Finally, define $P^B= P_{|S|-1}$.
The key point is that since $B$ is closed under unions  $(\ref{eqn:  Bclosedunderunions})$, the strict transforms of $L_I$ in $P_{k-1}$ for $I\in B$ are disjoint and can be blown up in any order, and thus $P^B$ is well-defined. 
 The scheme $P^B$ has a  divisor
\begin{equation}\label{eqn: Ddivisordefn}
 D = \pi^{-1} (\bigcup_{i\in S} L_i)
 \end{equation}
given by the total transform of the coordinate hyperplanes $L_i$.  If $U_I\subset L_I$ denotes the open  where $\alpha_j\neq 0$ for all $j\notin I$, the irreducible 
components of $D$ are 
\begin{eqnarray} \label{DiDI}
D_i& =& \overline{\pi^{-1}(U_i)} \quad , \hbox{ for all } i \in S\ .   \\ 
D_I&=&  \overline{\pi^{-1}(U_I)}  \quad , \hbox{ for all } I \in B, \hbox{ where }2\leq  |I| \leq |S|-1\ , \nonumber
\end{eqnarray} 
where the closure is with respect to the Zariski topology.  By taking repeated intersections, $D$ defines a stratification on $P^B$.   For every $I\in B$ define
\begin{eqnarray}
B^I & = &  \{ J \in B \hbox{ such that }  J\subseteq I\} \nonumber \\
B_I & =  & \{ J \backslash I  \hbox{ where }  J \in B \hbox{ and }  J\supseteq I\} \ , \nonumber
\end{eqnarray} 
and for every $i\in S$, set 
\begin{eqnarray}
B_i & = & \{ J\backslash (J\cap \{i\}) \hbox{ for } J \in B\}\ .
\end{eqnarray}

\begin{thm} \label{prop: PBstructure} The space $P^B$ is a smooth scheme over $\Z$ and is well-defined (it does not depend on the order of blow-ups at  each stage of the above procedure). 

The divisor $D$ is strict normal crossing, and there are canonical isomorphisms
\begin{eqnarray}\label{eqn: DIprodstructures}
D_I  &= & P^{B^I} \times P^{B_I}   \qquad \hbox{ where } I \in B\\
D_i & = & \Spec\, \Z \times  P^{B_{i}} \ .\nonumber 
\end{eqnarray}
Consider any two components $D_I,D_J$ (of either type $(\ref{DiDI})$), where    $I,J \in B$ or  a singleton in $S$.
Then $D_I \cap D_J$ is non-empty if and only if either
$$I\subset J \quad \hbox{or} \quad J\subset I $$
or 
$$I \cap J = \emptyset \quad\hbox{and}\quad  I\cup J \notin B\ .$$ 
The latter case only arises when at least one of $I,J$ is a singleton in $S$, by $(\ref{eqn:  Bclosedunderunions})$.\end{thm}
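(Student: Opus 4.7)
My plan is to prove the four claims in sequence, using induction on the blow-up stage and local coordinate analysis; the essential combinatorial input throughout is the union-closure hypothesis $(\ref{eqn:  Bclosedunderunions})$, which controls how the centers meet at each stage.

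First I would establish well-definedness, smoothness, and the strict normal crossing property simultaneously, by induction on the blow-up stage $k$. The key claim is that the strict transforms in $P_{k-1}$ of the subspaces $L_I$ for $I \in B$ with $\dim L_I = k$ are pairwise disjoint. Indeed, for $I \neq J$ with $\dim L_I = \dim L_J = k$, the intersection $L_I \cap L_J = L_{I \cup J}$ satisfies $I \cup J \in B$ with $\dim L_{I \cup J} < k$, so $L_{I \cup J}$ was blown up earlier; a direct local coordinate computation shows that blowing up a clean intersection of two smooth subvarieties separates their strict transforms. Blowing up disjoint smooth centers commutes and preserves smoothness, yielding well-definedness. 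The strict normal crossing property is preserved under blow-up along a smooth center transverse to the existing boundary divisor, which follows by the same disjointness argument at each stage.

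For the product structure of $D_I$ with $I \in B$, I would track the strict transform of $L_I$ through the procedure. By the first paragraph, blow-ups of centers $L_J$ with $J \in B$ incomparable to $I$ have their effect on $L_I$ neutralised by the prior blow-up of $L_{I \cup J} \in B$ (which lies in $B$ by closure under unions). The earlier blow-ups that genuinely modify $L_I$ correspond to $J \supsetneq I$; inside $L_I \cong \Pro^{I^c}$ these are the coordinate subspaces indexed by $J \setminus I \in B_I$, so after verifying that $B_I$ is itself union-closed with $I^c \in B_I$, the strict transform of $L_I$ at its own moment of blow-up is $P^{B_I}$. Since the normal bundle of $L_I \subset \Pro^S$ is $\mathcal{O}(1)^{\oplus |I|}$, whose projectivization is canonically isomorphic to the trivial bundle $L_I \times \Pro^I$, the new exceptional divisor takes the form $P^{B_I} \times \Pro^I$. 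The subsequent blow-ups of $L_J$ for $J \in B$ with $J \subsetneq I$ meet this divisor in $P^{B_I} \times L_J^I$ where $L_J^I = \{\alpha_j = 0 : j \in J\} \subset \Pro^I$; since the blow-up of $X \times Y$ along $X \times Z$ is $X \times \mathrm{Bl}_Z(Y)$, the second factor is converted into $P^{B^I}$, giving the claimed product structure. The case $D_i$ is the degenerate instance $|I| = 1$: the exceptional direction $\Pro^{\{i\}} = \Spec\,\Z$ is a point, and $L_i \cong \Pro^{S \setminus \{i\}}$ becomes $P^{B_i}$ under the blow-ups of $L_J$ with $J \in B$, which meet $L_i$ in the coordinate subspaces indexed by $B_i$.

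For the intersection criterion, $D_I \cap D_J$ is non-empty precisely when the strict transforms of $L_I$ and $L_J$ have not been separated by any earlier blow-up. If $I \subset J$ or $J \subset I$, then $D_I$ and $D_J$ are nested exceptional divisors meeting in the standard nested-blow-up pattern. If $I \cap J = \emptyset$ and $I \cup J \in B$, then $L_{I \cup J}$ is blown up prior to both, so $D_I \cap D_J = \emptyset$. If $I \cap J = \emptyset$ and $I \cup J \notin B$, no such separating blow-up exists, the strict transforms still meet along the strict transform of $L_{I \cup J}$, and union-closure forces at least one of $I, J$ to be a singleton. The remaining configuration (overlapping but non-nested $I, J$) with both $I, J \in B$ is excluded because then $I \cup J \in B$ with $L_{I \cup J}$ strictly smaller than both $L_I$ and $L_J$, forcing early separation; and when one of $I, J$ is a singleton this reduces to the nested case. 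The main obstacle I anticipate is in the product-structure step: verifying that each subsequent $L_J$ with $J \subsetneq I$ meets the previously constructed exceptional divisor in a \emph{genuine} product sub-scheme $P^{B_I} \times L_J^I$, rather than merely in a sub-scheme abstractly isomorphic to such a product. This requires a local coordinate calculation near a general point of the intersection, exploiting the splitting of the normal bundle of $L_I \subset \Pro^S$ (with its natural basis given by the coordinates $\alpha_i$, $i \in I$) to align the product decomposition with the subsequent blow-ups, and then checking inductively that the product structure descends compatibly through the remainder of the procedure.
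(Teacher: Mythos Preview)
Your proposal is essentially correct, but it takes a different route from the paper.

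The paper does not argue abstractly with normal bundles and blow-up functoriality. Instead, it \emph{constructs} $P^B$ directly as a scheme glued from explicit affine charts $\A^{\FF,c} \cong \A^{|S|-1}$, one for each maximal flag $\FF: \emptyset \subsetneq I_1 \subsetneq \cdots \subsetneq I_k \subsetneq S$ of elements of $B$ together with a choice $c$ of a distinguished element $j_r \in I_r \setminus I_{r-1}$ at each level. The chart coordinates $\beta^{\FF,c}_i$ are written down explicitly as ratios of the $\alpha_i$, and in these coordinates each $D_{I_r}$ is simply $\{\beta^{\FF,c}_{j_r}=0\}$, so normal crossing is immediate. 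The product decomposition $D_I \cong P^{B^I} \times P^{B_I}$ is then read off from the obvious splitting of a flag through $I$ into a flag in $I$ and a flag in $S\setminus I$; no normal-bundle computation is needed. Finally, an induction (via a short lemma on which charts the strict transform of $L_I$ can meet) identifies this glued scheme with the iterated blow-up. The intersection criterion falls out of the chart description: $D_I$ and $D_J$ meet on a chart only if $I$ and $J$ both fit into a single flag.

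What your approach buys is conceptual clarity and generality: the separation-after-blowing-up-the-intersection argument and the normal-bundle identification $\mathbb{P}(\mathcal{O}(1)^{|I|}) \cong L_I \times \Pro^I$ are standard and reusable. What the paper's approach buys is an explicit atlas that is used repeatedly in the sequel (for the affine model $A^B$, for pulling back graph polynomials, for computing orders of poles of $\omega_G$ along exceptional divisors, and for the sector decompositions in the appendix). So while your argument would establish the theorem, you would still need to write down these local coordinates separately for the applications, and the verification you flag as the ``main obstacle'' --- that the product structure on $D_I$ aligns correctly with all subsequent blow-up centers --- is exactly what the flag-chart description makes transparent.
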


A proof  is outlined  below. 

\subsection{$B$-polytope} \label{sect: Bpolytope} Define  a compact real manifold with corners
$$\widetilde{\sigma}_B \subset P^B(\R)$$
to be the closure, in the analytic topology, of $\pi_B^{-1}(\overset{\circ}{\sigma})$, where $\overset{\circ}{\sigma} \subset \Pro^S(\R)$ is the open coordinate simplex defined by 
$\alpha_i >0$, for all $i\in S$.  Its facets inherit the following product structure from the isomorphisms  $(\ref{eqn: DIprodstructures})$ of theorem \ref{prop: PBstructure}. 
\begin{cor} \label{cor: facetstructure} The facets of $\widetilde{\sigma}_B$ satisfy
\begin{eqnarray}
\widetilde{\sigma}_B \cap D_i(\C)  & =& \{pt\} \times \widetilde{\sigma}_{B_{i}}  \quad\qquad \hbox{ for } i \in S\nonumber \\
\widetilde{\sigma}_B \cap D_I(\C)   & =  &  \widetilde{\sigma}_{B^{I}}  \times \widetilde{\sigma}_{B_{I}}  \quad \qquad \hbox{ for } I \in B, 2 \leq |I| \leq |S|-1\ . \nonumber 
\end{eqnarray}
\end{cor}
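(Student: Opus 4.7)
The plan is to deduce the product decomposition of the facets of $\widetilde{\sigma}_B$ from the scheme-theoretic product decomposition $D_I = P^{B^I}\times P^{B_I}$ of Theorem \ref{prop: PBstructure}, by passing to real points in suitable local charts and taking closures of the positive locus on each factor.

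First I would work with the standard affine blow-up chart near a point of $U_I\subset L_I$. Normalising $\alpha_{j_0}=1$ for some fixed $j_0\in I^c$, one obtains affine coordinates $(\alpha_s)_{s\ne j_0}$ on $\Pro^S$. Near $L_I$ the blow-up along $L_I$ (which is the first blow-up in $B$ that involves $L_I$, performed at stage $|I^c|-1$) is modelled by $\alpha_i=t\,\beta_i$ for $i\in I$ with some $\beta_{i_0}=1$. In the resulting local coordinates $(t,(\beta_i)_{i\in I\setminus\{i_0\}},(\alpha_j)_{j\in I^c\setminus\{j_0\}})$, the exceptional divisor $D_I$ is cut out by $t=0$, the preimage of $\overset{\circ}{\sigma}$ is $\{t>0,\ \beta_i>0,\ \alpha_j>0\}$, and hence its closure intersected with $D_I$ is the product of a closed simplex in the $\beta$-variables and a closed simplex in the $\alpha$-variables. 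Away from the further centres to be blown up, this immediately gives the desired product structure on the open dense locus of $\widetilde{\sigma}_B\cap D_I$.

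The next step is to track the remaining blow-ups and identify the two factors. For $J\in B$, the strict transform of $L_J$ meets $D_I$ precisely when $J$ is comparable to $I$: if $J\subsetneq I$ the centre $L_J$ lies inside $L_I$ and, in the chart above, its exceptional locus appears as a coordinate linear subspace $L_J^{(\beta)}$ in the $\beta$-projective space $\Pro^I$; if $J\supsetneq I$ the centre $L_J$ lies inside $L_I$ and appears as the coordinate linear subspace $L_{J\setminus I}$ in the $\alpha$-projective space $\Pro^{I^c}$; if $J$ is incomparable with $I$ then by $(\ref{eqn:  Bclosedunderunions})$ the union $I\cup J\in B$ has already been blown up, separating $L_J$ from $D_I$ and so its strict transform does not meet $D_I$ (this is also exactly the incidence statement at the end of Theorem \ref{prop: PBstructure}). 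Performing the blow-ups in the stated order of increasing centre dimension, the blow-ups of the first kind assemble into $P^{B^I}\to\Pro^I$ and those of the second kind into $P^{B_I}\to\Pro^{I^c}$, compatibly with the product structure of $D_I$ in Theorem \ref{prop: PBstructure}. Under this identification, the positive chamber in each $\beta$- and $\alpha$-coordinate system is unchanged by the further blow-ups of coordinate linear subspaces, so its closure in $P^{B^I}$ (resp.\ $P^{B_I}$) is by definition $\widetilde{\sigma}_{B^I}$ (resp.\ $\widetilde{\sigma}_{B_I}$). Since closures commute with finite products of real-analytic manifolds with corners, I conclude
\[
\widetilde{\sigma}_B\cap D_I(\C)\ =\ \widetilde{\sigma}_{B^I}\times\widetilde{\sigma}_{B_I}.
\]

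For the singleton case $D_i$ with $i\in S$, the subspace $L_i$ is already a hyperplane, so there is no exceptional direction to parametrise: formally $B^{\{i\}}$ contains only $\{i\}$ (and the trivial $\emptyset$), giving $P^{B^{\{i\}}}=\Spec\Z$ and $\widetilde{\sigma}_{B^{\{i\}}}=\{\mathrm{pt}\}$. The same local chart argument, now with $t=\alpha_i$ playing the role of the single normal coordinate, produces the required identification $\widetilde{\sigma}_B\cap D_i(\C)=\{\mathrm{pt}\}\times\widetilde{\sigma}_{B_i}$. The main obstacle I expect is Step~2, namely the bookkeeping that the further centres restrict cleanly to blow-ups of $\Pro^I$ along $B^I$ and of $\Pro^{I^c}$ along $B_I$ without any `mixed' centre appearing on $D_I$; this is a direct consequence of the union-closedness $(\ref{eqn:  Bclosedunderunions})$ of $B$ together with the order in which blow-ups are performed, and is essentially the same input that makes Theorem \ref{prop: PBstructure} work. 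Everything else is a routine matter of taking closures of the positive locus in a product of charts.
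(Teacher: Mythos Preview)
Your approach is correct and is essentially what the paper does: the corollary is stated without proof, as an immediate consequence of the product isomorphisms $D_I\cong P^{B^I}\times P^{B_I}$ from Theorem~\ref{prop: PBstructure}, which in turn are established via the explicit local coordinates $\beta^{\FF,c}$ of \S\ref{sect:  localcoords} (see in particular $(\ref{eqn: isomDIr})$); your chart-by-chart argument spells out exactly this passage to real points and closures of positive loci. One small slip: when $J\subsetneq I$ you write ``the centre $L_J$ lies inside $L_I$'', but in fact $L_I\subset L_J$ (fewer vanishing conditions means a larger subspace); your subsequent conclusion that the strict transform of $L_J$ meets $D_I$ in the $\beta$-locus $\{\beta_j=0:j\in J\}\subset\Pro^I$ is nonetheless correct.
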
 
\noindent The poset structure on the faces of $\widetilde{\sigma}_B$, with respect to inclusion,  is identical to the poset structure on the stratification of $P^B$ generated by the divisor $D$.

\subsection{Local coordinates and  theorem $\ref{prop: PBstructure}$} \label{sect:  localcoords}
The space $P^B$ will be covered by explicit coordinate charts of the form
$$\A^n = \Spec \Z[\beta_1,\ldots, \beta_n]\ .$$
These charts are obtained by iterating the following basic example.

\begin{example} \label{example: oneblowup} Consider a single blow-up in affine space. If $J\subset \{1,\ldots, n\}$ let $L_J $ denote the zero locus of $\alpha_j$, $j\in J$. 
 Let $ I=\{1,\ldots, m\}  \subset \{1,\ldots, n\}$.  The blow-up $A \rightarrow \A^n$ of $\A^n$ along $L_I$ has local coordinates
$$\beta_1= {\alpha_1 \over \alpha_m},\  \beta_2 = {\alpha_2 \over \alpha_m}, \ldots,\  \beta_{m-1} = {\alpha_{m-1} \over \alpha_m}, \  \beta_m = \alpha_m, \ \beta_{m+1} = \alpha_{m+1}, \ 
\ldots , \ \beta_n = \alpha_n \  . $$
This means that there is an affine chart $\A^n \subset A$ with a morphism
$$\pi : \Spec \Z[\beta_1,\ldots, \beta_n] \To \Spec \Z[\alpha_1,\ldots, \alpha_n]$$
defined by $\pi^*(\alpha_i ) = \beta_i \beta_m$ for $1\leq i< m$ and $\pi^*(\alpha_i) = \beta_i$ for $i\geq m$.  It is  an isomorphism on the opens defined by $\beta_m\neq 0$ and $\alpha_m\neq 0$ respectively. 
  The exceptional divisor in these coordinates is given by the equation $\beta_m=0$, and the strict transform of $L_i$ is given by $\beta_i=0$ for all $i\neq m$. 
The strict transform of $L_m$ in $A$ does not meet this coordinate chart,  and hence neither does the strict transform of any $L_J$,  for $m\in J \subsetneq I$.

More generally,  for any choice of element $j\in I$, we have local coordinates
$$\beta_i = {\alpha_i \over \alpha_j}  \hbox{ for } i \in I\backslash \{j\}  , \quad   \beta_j=\alpha_j  ,\quad \beta_i= \alpha_i 
\hbox{ for } i \notin I$$
and hence a local chart $\Spec \Z[\beta_1,\ldots, \beta_n]$ on $A$ as above. These form  an affine covering of $A$. Note that if $J_1, J_2 \subset I$ and $J_1 \cup J_2= I$, then the strict transforms
of $L_{J_1} $ and $L_{J_2}$ do not intersect in $A$, since this is true in every coordinate chart.
\end{example}

We now define a scheme, denoted $P^B$, explicitly using such  affine charts. It will turn out to be isomorphic to the space defined in the previous section. 

For every nested sequence (or flag)
\begin{equation} \label{eqn: Flagdefn} 
\FF : \qquad \quad \emptyset =I_0 \subsetneq  I_1 \subsetneq I_2 \subsetneq \ldots \subsetneq I_{k} \subsetneq  I_{k+1}= S
\end{equation}
where each $I_r \in B$, and every choice of elements
\begin{equation} \label{eqn:  jnchoices} 
c: \qquad \qquad j_n \in I_{n} \backslash I_{n-1} \quad \hbox{ for } \quad 1\leq n \leq k+1\ ,
\end{equation} 
which are  maximal in the sense that  $\FF, c$ cannot be made larger  (i.e., there exists no $I\in B$ and $0\leq i \leq k$ such that $I_i \subsetneq I \subsetneq I_{i+1}$
and $j_{i+1} \notin I$),
define an affine
$$ \A^{\FF,c} = \Spec \Z[\beta^{\FF,c}_i, i\neq j_{k+1}]$$
and   a morphism 
$\pi: \A^{\FF,c} \rightarrow \Spec \Z[\alpha_i, i\neq j_{k+1}]$
where the right-hand side is the open subset  $\alpha_{j_{k+1}}=1$ in $\Pro^S$. The morphism $\pi$ is  defined as follows.  Its inverse $(\pi^{-1})^*$ on the open 
 $\alpha_{j_n} \neq 0$ for  all $1\leq n \leq k$ is given by 
\begin{eqnarray} \label{eqn: localcoords}
\beta^{\FF,c}_i &= & {\alpha_{i} \over \alpha_{j_n} } \quad \hbox{ for }\quad   i \in I_n \backslash  ( \{j_{n}\} \cup I_{n-1} ) \ , \  \quad 1\leq n \leq k+1 \ ,  \\
\hbox{ and } \qquad \beta^{\FF,c}_{j_n} &= & {\alpha_{j_n} \over \alpha_{j_{n+1}} } \quad \hbox{ for }   \quad 1\leq n \leq k+1 \ , \nonumber 
\end{eqnarray}
where we set $\alpha_{j_{k+2}} = 1$;  the map $\pi^*$  is obtained by writing the  $\alpha$'s in terms of the $\beta^{\FF,c}$'s in  the previous equations, i.e.,
\begin{eqnarray}  \label{newpistarequations}
\pi^*(\alpha_{j_n}) & = & \beta^{\FF,c}_{j_n} \beta^{\FF,c}_{j_{n+1}} \ldots  \beta^{\FF,c}_{j_{k+1}}   \\ 
\pi^*(\alpha_i) & =  &  \beta^{\FF,c}_i\beta^{\FF,c}_{j_n} \beta^{\FF,c}_{j_{n+1}} \ldots  \beta^{\FF,c}_{j_{k+1}} \qquad \hbox{ for }   i \in I_n \backslash  ( \{j_{n}\} \cup I_{n-1} ) \nonumber \ . 
\end{eqnarray} 
  The morphism $\pi$ restricts to  an isomorphism between the  open subsets defined by  $\beta^{\FF,c}_{j_n} \neq 0$ and  $\alpha_{j_n} \neq 0$ for  all $1\leq n \leq k$.
  A simple example is given in \S\ref{sectAffinecovering}.

  By $( \ref{eqn: localcoords})$, the coordinate rings of the $\A^{\FF,c}$ are contained in the fraction field of $\Pro^S$, and glue together to form a
  scheme $P^B$ with a morphism $\pi : P^B \rightarrow \Pro^S$  over $\Spec \Z$. We claim that $P^B$ is indeed the space defined by blow-ups in the first paragraph, and that the 
  $\beta^{\FF,c}_j$ are local coordinates in the neighbourhood of a `corner' 
  $$ (\FF,c) : \qquad   \bigcap_{i \in S\backslash \{j_1,\ldots, j_k\}}  D_i \cap \bigcap_{1\leq n\leq k } D_{I_n}$$
   where  the trace of $D_{I_n}$ on the chart $\A^{\FF,c}$ is given  by $\beta^{\FF,c}_{j_n}=0$ and $D_i$ by $\beta^{\FF,c}_i=0$.  These divisors are clearly  normal crossing,
   and meet according to the rules described in the second half of theorem $\ref{prop: PBstructure}$. For example, 
  divisors of the form $D_{I}$ and $D_J$ with $I,J\in B$  only meet on such a chart  if the sets $I,J$ fit into a flag $\FF$,  in which case $I\subset J$ or $J\subset I$; the 
  remaining cases are left to the reader. 
   
    Next observe that the product structure of these divisors is clear from the structure of  the  coordinates $(\ref{eqn: localcoords})$ which give a canonical isomorphism:
  \begin{equation} \label{eqn: isomDIr}
  D_{I_r}  \cap \A^{\FF,c} = V(\beta^{\FF,c}_{j_r}) \overset{\sim}{\To} \A^{\FF^{I_{r}},c^{I_{r}}} \times \A^{\FF_{I_r},c_{I_r}}
  \end{equation}
 where $\FF^{I_r},  \FF_{I_r}$ are the flags in $I_r$ and $S\backslash I_r$ defined by 
 \begin{eqnarray}
 \FF^{I_r} & :& \qquad   \emptyset =I_0 \subsetneq  I_1 \subsetneq I_2  \ldots \subsetneq I_r \nonumber \\
 \FF_{I_r} & : &\qquad      \emptyset  \subsetneq  I_{r+1}\backslash I_r   \subsetneq  \ldots \subsetneq I_{k} \backslash  I_r \subsetneq S \backslash  I_r    \nonumber 
    \end{eqnarray}
  and $c^{I_{r}}, c_{I_{r}}$ are the obvious restrictions of $c$ to $\FF^{I_k}, \FF_{I_k}$ respectively.
  Gluing the  morphisms $(\ref{eqn: isomDIr})$  together for all $\FF, c$  proves the first  equation of $(\ref{eqn: DIprodstructures})$ since every  pair of flags in $B^{I_r}$, $B_{I_r}$
 is obtained from a flag in $B$ in this way. Similarly,
  $$\A^{\FF,c} \supset V(\beta^{\FF,c}_{i} ) \overset{\sim}{\To}  \A^{\FF_{i},c_{i}}$$
 where $i \in I_r \backslash (I_{r-1}\cup \{j_r\}) $ and $\FF_i$ is the flag in $S\backslash \{i\}$ defined by 
 $$ \FF_i  : \qquad   \quad \emptyset =I_0 \subsetneq  I_1 \subsetneq \ldots \subsetneq  I_{r-1} \subsetneq I_r  \backslash \{i\} \subsetneq \ldots \subsetneq I_{k}\backslash \{i\} \subsetneq  I_{k+1}= S\backslash \{i\}\ ,$$
 which proves the second  equation of $(\ref{eqn: DIprodstructures})$. 
    It remains to show that the spaces $P^B$ as defined above are indeed given by the blow-up procedure. This can be proved by induction.    First of all, if $B$ is empty, then $P^B$ is simply $\Pro^S$ and the coordinate charts $\A^{\FF,c}$ are the usual affine covering of $\Pro^S$. 
    Let   $P_{-1} = \Pro^S$, and   suppose by induction that $P_{n-1}$ is isomorphic to $P^{B(n-1)}$ where $B(n-1)$ is the subset of $B$  consisting  of all $I \in B$ such that $|I| \geq |S|-n$, where $n\geq 0$.  It is stable under unions. Let $I \in B$ of cardinality $|I| = |S|-n-1$. 
 
 \begin{lem} \label{lem: Flagavoidance} Let $\FF, c$ be  a maximal  flag in $B(n-1)$ given by $(\ref{eqn: Flagdefn})$ and $(\ref{eqn:  jnchoices})$ where $I_1,\ldots, I_k \in B(n-1)$. Let $I\in B(n) $ as above.   The strict transform of $L_I$  in    $P^{B(n-1)}$    meets the chart $\A^{\FF,c}$ if and only if 
 $I \subset I_1 \backslash \{j_1\}$.  \end{lem}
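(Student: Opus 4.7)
The plan is to use the explicit local coordinates $(\ref{newpistarequations})$ to reduce the geometric statement to a set-theoretic condition on $I$ and the flag, and then invoke the maximality of $(\FF, c)$ together with the closure of $B$ under unions $(\ref{eqn:  Bclosedunderunions})$.

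Let $U' \subset \Pro^S$ be the affine open defined by $\alpha_{j_m} \neq 0$ for every $1 \leq m \leq k+1$, and let $W \subset \A^{\FF,c}$ be the complement of $\bigcup_{m=1}^{k} V(\beta^{\FF,c}_{j_m})$. By $(\ref{newpistarequations})$ the morphism $\pi$ restricts to an isomorphism $W \overset{\sim}{\to} U'$. I first claim that, by maximality of $(\FF, c)$, the open $U'$ is disjoint from every $L_J$ with $J \in B(n-1)$. Indeed, picking the smallest $m$ with $J \subset I_m$, either $J = I_m$ (so $j_m \in J$), or $I_{m-1} \subsetneq J \cup I_{m-1} \subseteq I_m$; in the latter case $J \cup I_{m-1} \in B(n-1)$ by $(\ref{eqn:  Bclosedunderunions})$ and the cardinality estimate $|J \cup I_{m-1}| \geq |I_{m-1}| \geq |S|-n$, and if additionally $j_m \notin J$ then $J \cup I_{m-1}$ could be strictly inserted between $I_{m-1}$ and $I_m$ in $\FF$ (since $j_m \notin J \cup I_{m-1}$), contradicting maximality. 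Hence $W$ sits inside the locus on which $\pi$ is an isomorphism.

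Next, since $|I| = |S|-n-1 < |S|-n \leq |J|$ for every $J \in B(n-1)$, one has $L_I \not\subset L_J$, so $L_I$ is not contained in any blow-up center, and its strict transform $\widetilde{L}_I$ is a well-defined irreducible subvariety of $P^{B(n-1)}$, birational to $L_I$ via $\pi$ and not contained in any exceptional divisor. Consequently $\widetilde{L}_I \cap \A^{\FF,c}$ is nonempty if and only if $\widetilde{L}_I \cap W$ is nonempty (the difference being contained in a union of proper closed subsets of the irreducible $\widetilde{L}_I$), and via the isomorphism $W \cong U'$ the latter is equivalent to $L_I \cap U' \neq \emptyset$, i.e.\ to the condition
$$
I \cap \{j_1, \ldots, j_{k+1}\} = \emptyset. \qquad (\star)
$$

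It remains to prove $(\star) \iff I \subset I_1 \setminus \{j_1\}$. The implication $(\Leftarrow)$ follows from nesting: if $I \subset I_1 \setminus \{j_1\}$, then for $m \geq 2$ one has $j_m \in I_m \setminus I_{m-1} \subset S \setminus I_1 \subset S \setminus I$, while $j_1 \notin I$ by hypothesis. For $(\Rightarrow)$, assume $(\star)$ but $I \not\subset I_1$ and set $M = \max_{i \in I} n(i) \geq 2$, where $n(i)$ is the smallest index with $i \in I_{n(i)}$. Then $J := I \cup I_{M-1}$ (or $J := I \cup I_k$ if $M = k+1$) belongs to $B(n-1)$ by $(\ref{eqn:  Bclosedunderunions})$, properly contains $I_{M-1}$ (resp.\ $I_k$), and is strictly contained in $I_M$ (resp.\ $S = I_{k+1}$): equality would force $j_M$ (resp.\ $j_{k+1}$) to lie in $J \setminus I_{M-1} \subset I$, contradicting $(\star)$. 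Moreover $j_M$ (resp.\ $j_{k+1}$) is not in $J$, by $(\star)$ and the definition $j_M \in I_M \setminus I_{M-1}$, so $J$ can be inserted into $\FF$, contradicting maximality of $(\FF, c)$.

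The main obstacle is the combinatorial insertion argument, used twice -- once to ensure that $U'$ avoids all the blown-up centers, and once to complete the equivalence $(\star) \iff I \subset I_1 \setminus \{j_1\}$ -- and relying crucially on the interplay between the union-closure of $B$ and the maximality of $(\FF, c)$.
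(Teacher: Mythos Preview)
Your proof is correct and follows essentially the same approach as the paper: both hinge on the insertion argument exploiting maximality of $(\FF,c)$ together with closure of $B$ under unions. The only organizational difference is that you first reduce the geometric statement to the clean set-theoretic criterion $(\star)$ via the isomorphism $W\cong U'$ (which forces you to prove the auxiliary claim that $U'$ misses every $L_J$ with $J\in B(n-1)$, itself an instance of the same insertion trick), whereas the paper argues the two directions more directly in local coordinates without isolating $(\star)$ as an intermediate step.
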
 
 \begin{proof} If $j_r \in I$ for any $1\leq r\leq k+1$ then the strict transform of $L_I$ is contained in the strict transform of $L_{j_r}$, which does not meet $\A^{\FF,c}$ by inspection of the first equation of $(\ref{newpistarequations})$ (its total transform is contained in the union of the vanishing locus of the $\beta^{\FF,c}_{j_{k}}$, which are all exceptional divisors). 
 On the other hand, suppose that $I$ does not contain any $j_r$, and  is not contained in $I_1$. Let $m$ be the smallest integer such that 
 $I \subset I_{m+1}$. Then $I \subsetneq I_m$ so 
 $$I_{m} \subsetneq I \cup I_{m} \subsetneq I_{m+1}\ .$$
Since $B$ is closed under unions,  $I \cup I_{m}$ is an element of $B(n-1)$ which contradicts the maximality of $\FF, c$ (as $j_{m+1} \notin I$).   Thus for the strict transform of $L_I$ to meet $\A^{\FF,c}$,
 we must have $I \subset I_1$ and $j_1 \notin I$.  Conversely, when  this holds, the  intersection of the strict transform of $L_I$ with $\A^{\FF,c}$ is given by the equations $\beta^{\FF,c}_i=0$ for $i \in I$. 
  \end{proof} 
 
 Now, in the situation of the previous lemma, we can blow up the strict transform $L_I$ explicitly in the affine chart $\A^{\FF,c}$ 
 using example $\ref{example: oneblowup}$. It  is  
 covered by the affine charts $\A^{\FF', c'}$ where $\FF',c'$  extends $\FF, c$ to the left: 
 $$\FF': \qquad \qquad     \emptyset \subsetneq I \subsetneq I_1 \subseteq \ldots \subsetneq I_k \subsetneq S$$
and the  restriction of $c'$ to $\FF$ is  $c$. Proceeding in this way, we can blow up the strict transforms of $L_I$, for $I\in B$ of cardinality $|I| = |S|-n-1$ in any order, to give exactly the space $P^{B(n)}$ together with its affine covering $\A^{\FF',c}$. 
This proves by induction that the two descriptions of $P^B$, by explicit coordinates and also by iterated blow-ups, are one and the same.

\subsection{An affine model} \label{sectAffineModel} Given $B$ satisfying $(\ref{eqn:  Bclosedunderunions})$ with $S\in B$, we shall construct an affine subspace $A^B \subset P^B$ by removing strict transforms of certain linear hyperplanes, in such a way that the product-structure of $P^B$ is also satisfied by  $A^B$.

We can assume that for every $I \in B$, $|I| \geq 2$.  For any subset $J\subset S$, denote by 
$$\alpha_J = \sum_{j \in J} \alpha_j\ .$$
Let $H_J = V(\alpha_J)\subset \Pro^S$ denote the corresponding hyperplane.
Write   $\A^S = \Pro^S \backslash H_S$, and consider the hyperplane complement   $\A^S \backslash \cup_{I \in B} H_I$.  We shall directly 
define a partial compactification of this space as follows. Write $B^+ = B \cup \{ \{i\}, i\in S\}$. 

\begin{defn} Let $R_B = \Z[ b_{I/J} : I \subset S ,  J \in B  \hbox{ such that } \emptyset \neq I \subseteq J ] /\mathcal{I}$, 
where $\mathcal{I}$ is the ideal generated by the following relations:
\begin{eqnarray} \label{bIJrels}
b_{I/J}  &=&  \sum_{i \in I}  b_{\{i\}/J} \\
b_{J/J}  &= &  1  \nonumber \\
b_{I/J} b_{J/K} & = & b_{I/K}  \qquad \hbox{ if  } J, K \in B\nonumber 
\end{eqnarray}
For every $I \in B^+\backslash \{S\}$, let $D_I = V( b_{I/J} \hbox{ for all } I \subsetneq J \in B ) \subset \Spec(R_B)$. 
\end{defn}

There is a homomorphism
\begin{eqnarray}
R_B  & \To&  \Z [\alpha_i  \hbox{ for } i \in S,  \alpha_J^{-1} \hbox{ for } J \in B, |J|\geq 2] \\
b_{I/J} & \mapsto &  { \alpha_I  \over \alpha_J} \nonumber 
\end{eqnarray} 
 which respects the relations $(\ref{bIJrels})$, and hence a morphism 
 \begin{equation}  \label{AtoSpecRmorphism} \A^S\backslash \cup_{I \in B} H_I \To \Spec (R_B)\ .
\end{equation} 
 Let $ \pi : P^B\rightarrow \Pro^S$ denote the blow-up defined in the previous paragraphs. Let $\widetilde{H}_I$ denote the strict transform of $H_I$, for $I \in B^+$. 
 
 \begin{defn} Let $A^B = P^B \backslash  \cup_{I \in B} \widetilde{H}_I$. For every $I \in B^+\backslash \{S\}$, let  $D_I\subset P^B$ denote the divisor defined in \S \ref{sect:  localcoords}, and, by abuse of notation, let us also denote by  $D_I$  its intersection with  $A^B$. Let $D$ denote their union. 
 \end{defn} 
 
 Observe that  $\A^S \backslash  \cup_{I \in B} H_I$ is an open subspace of $ A^B$, since the image in $\Pro^S$ of every exceptional divisor in $P^B$
 is contained in some such hyperplane $H_I$.  More precisely, $\A^S \backslash  \cup_{I \in B} H_I\cong P^B \backslash 
( D \cup   \cup_{I \in B} \widetilde{H}_I )$. 
 
 \begin{thm} \label{thmaffinespace}  The morphism $(\ref{AtoSpecRmorphism})$   extends to a canonical  isomorphism
 $$A^B \overset{\sim}{\To} \Spec R_B \ .$$
 In particular,  $(\ref{AtoSpecRmorphism})$ has Zariski-dense image, and furthermore, $A^B$ is affine and $\Spec R_B$ is smooth over $\Z$.  The divisors  denoted $D_I$ in $A^B$ and $\Spec R_B$ are mapped isomorphically to each other, and there are canonical isomorphisms
 $$ D_i  = \mathrm{Spec}\, \Z \times A^{B_i}  \qquad \hbox{ and } \qquad D_I = A^{B^I} \times A^{B_I}\ .$$
 \end{thm}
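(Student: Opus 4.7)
The plan is to work locally on the explicit affine cover $\{\A^{\FF,c}\}$ of $P^B$ constructed in \S\ref{sect:  localcoords}, identify the trace of $A^B$ on each chart with a principal open in $\Spec R_B$, and glue. Since the morphism $(\ref{AtoSpecRmorphism})$ is already defined on the Zariski-dense open $\A^S\setminus\bigcup_{I\in B}H_I\subset A^B$, the issue is to produce a regular extension to each chart $\A^{\FF,c}\cap A^B$ and to check it is an isomorphism of schemes there. The product structure of the boundary divisors will then be read off the presentation of $R_B$ by the relations $(\ref{bIJrels})$.

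\textbf{Local factorisation of $\alpha_I$.} Fix a maximal flag $\FF:\emptyset=I_0\subsetneq\ldots\subsetneq I_{k+1}=S$ with choices $c=(j_n)$. For each nonempty $I\subseteq S$ let $n(I)$ be the smallest index with $I\subseteq I_{n(I)}$. Using the formulas $(\ref{newpistarequations})$ for $\pi^*(\alpha_i)$ and grouping terms according to the level $m\leq n(I)$ of the flag, I would derive a factorisation
\[\pi^*(\alpha_I)=\beta^{\FF,c}_{j_{n(I)}}\beta^{\FF,c}_{j_{n(I)+1}}\cdots\beta^{\FF,c}_{j_{k+1}}\cdot\widetilde{\alpha}_I,\]
with $\widetilde{\alpha}_I\in\Z[\beta^{\FF,c}]$ and such that $\widetilde{\alpha}_I$ reduces modulo the ideal $(\beta^{\FF,c}_{j_{n(I)}})$ to the corresponding expression on the smaller chart in $D_{I_{n(I)}}$. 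The monomial prefactor is precisely the local equation of the total transform of $H_I$ along the exceptional divisors above $L_{I_1},\ldots,L_{I_{n(I)}}$, so $\widetilde{\alpha}_I=0$ defines the strict transform $\widetilde{H}_I$ in this chart, and removing $\widetilde{H}_I$ amounts to inverting $\widetilde{\alpha}_I$.

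\textbf{Local isomorphism and gluing.} Using this factorisation I would assign to each generator $b_{I/J}$ of $R_B$ (with $I\subseteq J\in B$) the function $\widetilde{\alpha}_I/\widetilde{\alpha}_J$ corrected by the appropriate $\beta^{\FF,c}_{j_r}$-monomial so that the three relations $(\ref{bIJrels})$ are satisfied identically; this gives a homomorphism $\varphi:R_B\to \Or_{A^B}(\A^{\FF,c}\cap A^B)$ extending $(\ref{AtoSpecRmorphism})$. Surjectivity comes from expressing every local coordinate in terms of the $b$'s, e.g.\ $\beta^{\FF,c}_i=b_{\{i\}/I_n}/b_{\{j_n\}/I_n}$ for $i\in I_n\setminus(\{j_n\}\cup I_{n-1})$, and recovering the exceptional coordinates $\beta^{\FF,c}_{j_n}$ from the ratios $b_{I_n/I_{n+1}}$ (which are units off $D_{I_n}$). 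Injectivity follows because $\varphi$ is a birational morphism between two integral $\Z$-schemes of the same dimension $|S|-1$: both sides agree after inverting $\alpha_S$, where they equal the coordinate ring of $\A^S\setminus\bigcup H_I$. Since the local extensions all restrict to $(\ref{AtoSpecRmorphism})$ on the generic open, they glue to a global isomorphism $A^B\overset{\sim}{\to}\Spec R_B$; in particular $A^B$ is affine and $\Spec R_B$ is smooth over $\Z$ (because $P^B$ is).

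\textbf{Product structure and main obstacle.} For the last assertion, I would observe that the ideal of $D_I$ in $R_B$ is generated by $\{b_{I'/K}: I'\subseteq I\subsetneq K\in B\}$; modulo this ideal the relations $(\ref{bIJrels})$ split into two independent families of generators and relations, one indexed by $B^I$ (involving only the $b_{I'/J}$ with $I'\subseteq J\subseteq I$) and one indexed by $B_I$ (involving the images of $b_{(I'\cup I)/(J\cup I)}$ for $J\in B$ containing $I$). This gives the presentation $R_B/(D_I)\cong R_{B^I}\otimes_\Z R_{B_I}$, which is exactly $A^{B^I}\times A^{B_I}$; when $|I|=1$ the set $B^I$ is empty and the first factor collapses to $\Spec\Z$. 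The main technical obstacle is the bookkeeping in Step~2: verifying that the monomial corrections needed to make $\widetilde{\alpha}_I/\widetilde{\alpha}_J$ satisfy the multiplicative relation $b_{I/J}b_{J/K}=b_{I/K}$ really are consistent across all chains $I\subseteq J\subseteq K$ in $B$, and that these corrected ratios generate the local ring after inverting the $\widetilde{\alpha}_I$. Both rest on the maximality of the flag $(\FF,c)$ and the closure property $(\ref{eqn:  Bclosedunderunions})$, which together prevent strict transforms of $H_I$ for $I\notin B$ from meeting the chart, and I expect this to be the most delicate point in the argument.
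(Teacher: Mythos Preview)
Your overall strategy matches the paper's exactly: extend the morphism $(\ref{AtoSpecRmorphism})$ to each chart $U^{\FF,c}=\A^{\FF,c}\cap A^B$, identify its image with an explicit open in $\Spec R_B$, and glue. Your local factorisation $\pi^*(\alpha_I)=\widetilde{\alpha}_I\cdot\prod_{r\geq n(I)}\beta^{\FF,c}_{j_r}$ is precisely the paper's formula $\pi^*\alpha_J=P_J\cdot\prod_{r\geq\ell}\beta^{\FF,c}_{j_r}$, and your formula $\beta^{\FF,c}_i=b_{\{i\}/I_n}/b_{\{j_n\}/I_n}$ is exactly the paper's inverse map.

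There is, however, a genuine gap in the execution. You assert that $\varphi:R_B\to\Or(U^{\FF,c})$ is surjective, but your argument for this requires \emph{inverting} $b_{\{j_n\}/I_n}$, and these elements are not units in $R_B$ (indeed $b_{\{j_n\}/I_n}$ vanishes along the divisor $D_{j_n}\subset\Spec R_B$). What is actually true, and what the paper proves, is that $\varphi$ factors through the localisation $R_B[b_{j_1/I_1}^{-1},\ldots,b_{j_{k+1}/I_{k+1}}^{-1}]$ and that \emph{this localisation} is isomorphic to $\Or(U^{\FF,c})$; your formulas then give the explicit inverse. This is consistent with your stated strategy (``identify the trace of $A^B$ on each chart with a principal open in $\Spec R_B$'') but not with what you wrote in the body. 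Relatedly, your injectivity argument presupposes that $R_B$ is an integral domain, which is not obvious from the presentation and in fact is one of the things being proved.

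Once the local picture is corrected, you are still missing the key global step: one must check that the principal opens $V^{\FF,c}=\Spec R_B[b_{j_r/I_r}^{-1}]$ actually \emph{cover} $\Spec R_B$. The paper does this via the partition-of-unity relation $\sum_{i\in I_1}b_{i/I_1}=1$ in $R_B$, which lets one eliminate the $b_{j_r/I_r}^{-1}$ one by one by varying $c$. Without this covering argument you cannot conclude that the glued morphism $A^B\to\Spec R_B$ is surjective, nor that $\Spec R_B$ is smooth. Finally, for the product structure of the $D_I$ the paper does not read it off the relations $(\ref{bIJrels})$ directly as you propose, but instead transports the argument of theorem~\ref{thm: recursive} using that the polynomials $\phi_B(\underline n)=\prod_{J\in B}\alpha_J^{n_J}$ satisfy the same partial factorisation identities as graph polynomials; your approach via the presentation is plausible but would need a separate verification that the quotient ring really splits as $R_{B^I}\otimes_{\Z}R_{B_I}$.
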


If the set $S=\{1,\ldots, n\}$ is ordered, and $B$ consists of all consecutive sets $\{k,k+1,\ldots, k+\ell\}$, for $1 \leq k \leq n-\ell$ and $\ell \geq 2$, then $A^B$ is isomorphic to the affine  space $\mathfrak{M}_{0,n+2}^{\delta}$
defined in the author's thesis\footnote{To write down the isomorphism,  it suffices to observe  that the simplicial coordinates $t_i$ for $1\leq i \leq n-3$ correspond to 
 $b_{I/S}$ where $I= \{1,2, \ldots, i\}$ and $S= \{1,2,\ldots, n-2\}$. }. The strategy of \S\ref{sectOutlineaffine}  gives a new proof of its main properties. 
 On the other hand, if we take $B$ to be all subsets of $S$ of cardinality $\geq 2$, we obtain  an affine algebraic model of a permutohedron. 

\begin{example} Let $S = \{1,2\}$, and let $B=S$. Then there is nothing to  blow up  and $P^B= \Pro^1$.  Its affine version is $A^B = \Pro^1 \backslash \{\alpha_1+ \alpha_2= 0 \} \cong \A^1$.  In the explicit coordinates above,  $R_B= \Z[b_{1/12}, b_{2/12}]/\mathcal{I}$ where $\mathcal{I}$ is the ideal
generated by the relation $b_{1/12}+b_{2/12}=1$, and the composition with inclusion into $\A^2$
$$A^B  \overset{\sim}{\To} \Spec R_B \subset \A^2$$
is given in projective coordinates by  $(\alpha_1 : \alpha_2) \mapsto (\alpha_1/(\alpha_1+ \alpha_2) , \alpha_2 /(\alpha_1 + \alpha_2))$. 
\end{example}

\subsection{Outline of proof of theorem  $\ref{thmaffinespace}$} \label{sectOutlineaffine} Let $(\FF, c)$ denote a maximal pair $(\ref{eqn: Flagdefn})$ and $\A^{\FF,c}$ the corresponding chart of $P^B$. 
Let us denote by $U^{\FF,c}$ the open $\A^{\FF,c} \cap A^B$.  We must first show that the morphism $(\ref{AtoSpecRmorphism})$ canonically extends to a morphism
$$ U^{\FF, c} \To \Spec R_B\ .$$
To see this, compute the strict transform of $H_J$, for $J\in B$ in the coordinates $\beta^{\FF,c}_i$. It follows from the definition of the coordinates $\beta^{\FF,c}$ in $(\ref{eqn: localcoords})$  that 
$$\pi^* \alpha_J = P_J \times \prod_{r \geq \ell}^k \beta^{\FF,c}_{j_r}$$
where $P_J$ is an irreducible polynomial in the $\beta^{\FF,c}_i$, and $\ell$ is the smallest integer such that $J \subseteq I_{\ell}$.  
We have $P_{\{j_n\}} = 1$. 
 The strict transform $\widetilde{H}_J$  of $H_J$ is given locally on $\A^{\FF,c}$ by the zero locus of $P_J$.  It follows from this calculation that 
$$\Or( U^{\FF,c}) = \Spec \big(\Z [ \beta^{\FF,c}] [ P_J^{-1}, \hbox{ for } J \in B, |J|\geq 2]\big)  $$
and that we have explicitly 
\begin{eqnarray} \label{RBtoOU}
R_B & \To &  \Or( U^{\FF,c})\\
b_{I/J} &\mapsto&  {P_I \over P_J} \times \prod_{r=\ell_1}^{\ell_2-1} \beta^{\FF,c}_{j_r} \nonumber
\end{eqnarray}
where  $\ell_1 \leq \ell_2$ are minimal such that $I\subset I_{\ell_1}$, $J \subset I_{\ell_2}$.   The point is that there are no terms $\beta_{j_r}^{\FF,c}$ in the denominator. One can check that $(\ref{RBtoOU})$ is well-defined (respects the relations in the defining ideal of $R_B$), since it is compatible with $\pi^*$ and $(\ref{AtoSpecRmorphism})$. Gluing the resulting maps $U^{\FF,c} \rightarrow \Spec R_B$ together, we deduce  that 
$(\ref{AtoSpecRmorphism})$ extends to a morphism
$$A^B  = \bigcup_{\FF,c}  U^{\FF,c} \To \Spec R_B\ .$$
For the next step, observe that if $j_r \in J \subset I_r$ then by the maximality of $(\FF,c)$ there is no smaller $\ell <r$ such that $J \subset I_{\ell}$ and therefore under $(\ref{RBtoOU})$ 
$$b_{J/ I_r }  \mapsto  {P_J \over P_{I_r}}\ .$$
Setting  $J= \{j_r\}$, we have $b_{j_r/I_r} \mapsto P_{I_r}^{-1}$, which is invertible in $\Or(U^{\FF,c})$.   It follows that  $(\ref{RBtoOU})$ factorizes through the ring
\begin{equation} \label{RBtoOrUFF}  R_B [ b_{j_1/I_1}^{-1}, \ldots, b_{j_{k+1} / I_{k+1}}^{-1} ] \To \Or(U^{\FF,c})\ ,
\end{equation} 
where $I_{k+1}=S$.
Note that the defining relation $b_{j_r/I_r} = b_{j_r/J} b_{J/I_r}$ implies that inverting $b_{j_r/I_r}$  also inverts  $b_{J/I_r}$ and $b_{j_r/J}$ for $J\in B$. 
We claim that $(\ref{RBtoOrUFF})$ is an isomorphism. To see this, we can write down  a  map  in the opposite direction
\begin{eqnarray} \beta^{\FF,c}_{i} & \mapsto&    b_{i/I_n}  b^{-1} _{j_n/I_n} \quad  \hbox{ if  }   i \in I_n \backslash  ( \{j_{n}\} \cup I_{n-1} ) \ , \  \quad 1\leq n \leq k+1 \ ,   \nonumber \\
\hbox{ and } \qquad \beta^{\FF,c}_{j_n} &\mapsto &  b_{j_n/I_{n+1}}  b^{-1}_{j_{n+1}/I_{n+1}}  \quad \hbox{ for }   \quad 1\leq n \leq k+1 \ , \nonumber 
\end{eqnarray}
consistent with $(\ref{eqn: localcoords})$. It clearly lands in $R_B [ b_{j_1/I_1}^{-1}, \ldots, b_{j_{k+1} / I_{k+1}}^{-1} ] $, and one checks that 
it is  indeed the inverse of $(\ref{RBtoOU})$. We omit the details.

For the final step, define   open affine subspaces $ V^{\FF,c} \subset \Spec R_B$ by
$$V^{\FF,c} = \Spec R_B [ b_{j_1/I_1}^{-1}, \ldots, b_{j_{k+1} / I_{k+1}}^{-1} ]\ .$$
We have shown that $U^{\FF, c} \cong V^{\FF,c}$ are canonically isomorphic. It suffices to check that the opens $V^{\FF,c}$ form a covering of $\Spec R_B$. 
To see this, observe that 
$$\bigcup_{i \in I_1}  \Spec R_B [ b_{i/I_1}^{-1} ,  b^{-1}_{j_2/I_2}, \ldots, b^{-1}_{j_{k+1}/ I_{k+1}}] \subseteq \Spec R_B[b^{-1}_{j_2/I_2}, \ldots,b^{-1}_{j_{k+1}/ I_{k+1}}] $$ is an equality, 
which follows from the partition of unity relation  
$\sum_{i\in I_1} b_{i/I_1}=1\ $
in $R_B$. By varying $c$ in $(\FF,c)$, and eliminating the $b_{j_r/I_r}^{-1}$ in turn,  we deduce that 
$$\bigcup_{(\FF,c)} V^{\FF,c} = \Spec R_B \ .$$
This proves that $A^B \overset{\sim}{\rightarrow} \Spec R_B$ is an isomorphism. The next statements follow by transferring information along this isomorphism: $\Spec R_B$ is clearly affine, and $A^B$, defined by the complement of divisors in a blow-up, is  smooth over $\Z$. 

For the last statement concerning the product structure of the divisors $D_I$,  we refer to theorem \ref{thm: recursive}.  Note that $A^B = P^B \backslash Y$, where $Y$ is the strict transform of the zero locus of 
 the polynomials $\phi_B(\underline{n}) = \prod_{J\in B} \alpha^{n_J}_J$, where $\underline{n}= (n_J)_{J\in B}$ are integers $n_J \geq 0$.  They satisfy the same factorisation properties as graph polynomials which are used in the proof
 of theorem $\ref{thm: recursive}$, namely 
 $$\phi_B(\underline{n}) = \phi_{B^I}(\underline{n}') \phi_{B_I} (\underline{n}'') +R$$
 for every $I \in B$,   where $R$ is of higher degree in the variables $\alpha_i, i\in I$ than $\phi_{B^I}$. 

\section{Motives of graphs  with kinematics} \label{sect: SectionGraphMotives}

We define the  motive (or rather, its image in a category of realisations) of a Feynman graph  by  applying the linear blow-up construction of the previous section to the set of \motic subgraphs of a Feynman graph. In the  case where the graph has no kinematic dependence, and is primitive log-divergent, this retrieves the definition of graph motive due to Bloch-Esnault-Kreimer \cite{BEK}.

\subsection{Orders of vanishing} Let $G$ be a Feynman graph, and consider the projective space $\Pro^{E_G}$.
There is a bijection between coordinate linear subspaces
$$\Pro^{E_G} \supset L_I   \hbox{ for } I \subsetneq E_G \quad \longleftrightarrow\quad  \hbox{edge subgraphs } I \subsetneq G\ .$$
Given a homogenous polynomial $P$ in $\Z[\alpha_e, e\in E_G]$, let 
$$v_I (P) =\hbox{order of vanishing of } P \hbox{ along } L_I\ .$$

\begin{lem} Assume generic kinematics $(\ref{eqn: genericmomenta})$ and $(\ref{eqn: genericmassmomenta})$. Let  $\gamma\subset E_G$. Then 
\begin{equation} \label{eqn: valuationforPsi}  v_{\gamma} (\Psi_G) = h_{\gamma} \ , \qquad \qquad \qquad
\end{equation} 
\begin{equation} \label{eqn: valuationformula}
\qquad \qquad \qquad v_{\gamma} (\Xi_G) = \begin{cases} h_{\gamma} + 1  \qquad  \hbox{ if } \gamma \hbox{ is } m.m.  \\ h_{\gamma} \,\,\,\,\quad  \qquad  \hbox{ if } \gamma \hbox{ is  not } m.m. \end{cases} 
\end{equation}
\end{lem}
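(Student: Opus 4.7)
The strategy is to extract the leading coefficient of $\Psi_G$ and $\Xi_G$ in the variables $\{\alpha_e : e\in E_\gamma\}$ from the partial factorizations established in Section \ref{sect: Section2}, and then to check that this leading coefficient is non-zero using the non-vanishing Lemmas \ref{lem: PsiGvanishing} and \ref{lem:  Xivanishing}. Recall that $v_\gamma(P)$ is the minimum, over monomials of $P$, of the total degree in the $\alpha_e$ with $e\in E_\gamma$.

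For $(\ref{eqn: valuationforPsi})$, apply the UV factorization $\Psi_G = \Psi_\gamma \Psi_{G/\gamma} + R^\Psi_{\gamma, G}$. The leading term $\Psi_\gamma \Psi_{G/\gamma}$ has $E_\gamma$-degree exactly $h_\gamma$ (since $\Psi_{G/\gamma}$ involves none of the $\alpha_e$ for $e\in E_\gamma$, while $\Psi_\gamma$ is homogeneous of degree $h_\gamma$ in exactly those variables), and the remainder has strictly larger $E_\gamma$-degree. The two factors $\Psi_\gamma$ and $\Psi_{G/\gamma}$ are each non-zero by Lemma \ref{lem: PsiGvanishing} (applied component by component, using $(\ref{eqn: psiproddefn})$), and they involve disjoint sets of variables, so their product does not vanish. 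Hence $v_\gamma(\Psi_G) = h_\gamma$.

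For $(\ref{eqn: valuationformula})$ I distinguish two cases. If $\gamma$ is m.m., the IR factorization $(\ref{eqn: XiIRfact})$ reads $\Xi_G = \Xi_\gamma \Psi_{G/\gamma} + R^{\Xi,\IR}_{\gamma, G}$, with leading term of $E_\gamma$-degree exactly $h_\gamma + 1$ and remainder of strictly larger $E_\gamma$-degree. Since $\gamma$ is m.m., it inherits all the mass and momentum data of $G$, so (in the only interesting situation where $\Xi_G \neq 0$) Lemma \ref{lem:  Xivanishing} forces $\Xi_\gamma \neq 0$, and $\Psi_{G/\gamma} \neq 0$ by Lemma \ref{lem: PsiGvanishing}. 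If instead $\gamma$ is not m.m., the UV factorization $(\ref{eqn: XiUVfact})$ reads $\Xi_G = \Psi_\gamma \Xi_{G/\gamma} + R^{\Xi,\UV}_{\gamma, G}$, with leading term of $E_\gamma$-degree $h_\gamma$. Here the crucial point is that $\Xi_{G/\gamma} \neq 0$: if $\gamma$ fails to be mass-spanning, then some massive edge of $G$ survives as a massive edge of $G/\gamma$; if $\gamma$ fails to be momentum-spanning, then the external half-edges of $G$ do not all concentrate at a single vertex of $G/\gamma$, so $G/\gamma$ still carries a vertex with non-trivial total incoming momentum. In either case Lemma \ref{lem:  Xivanishing} yields $\Xi_{G/\gamma} \neq 0$, and $\Psi_\gamma \neq 0$ as before.

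The only step requiring real attention is the non-vanishing of $\Xi_{G/\gamma}$ in the non-m.m.\ case; once the conventions of \S\ref{sect: FeynmanGraphs} on how masses and external half-edges descend to the quotient are unpacked, it reduces to a direct application of Lemma \ref{lem:  Xivanishing}.
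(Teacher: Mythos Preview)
Your proof is correct and follows essentially the same approach as the paper: both use the UV and IR factorizations $(\ref{eqn: UVfactorizations})$, $(\ref{eqn: XiUVfact})$, $(\ref{eqn: XiIRfact})$ together with the degree formulae $(\ref{eqn: degreesofPsiandPhi})$ and the non-vanishing Lemmas \ref{lem: PsiGvanishing} and \ref{lem:  Xivanishing}. Your final paragraph spells out by hand the argument that $\Xi_{G/\gamma}\neq 0$ when $\gamma$ is not m.m., whereas the paper simply invokes the equivalence $(\ref{eqn:  mmequivalentXi})$; these are the same argument phrased differently.
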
 
\begin{proof}
This follows from  the factorisations $(\ref{eqn: XiUVfact})$ and $(\ref{eqn: XiIRfact})$,  the degree formulae for graph polynomials $(\ref{eqn: degreesofPsiandPhi})$, and lemmas  \ref{lem: PsiGvanishing} and \ref{lem:  Xivanishing} which assert that $\Psi_{G/\gamma}$ and $\Xi_{G/\gamma}(m,q)$ are non-zero, via equation $(\ref{eqn:  mmequivalentXi})$.
\end{proof}

The principle motivation for the definition of \motic subgraphs comes from  the following proposition.  We again assume generic kinematics  $(\ref{eqn: genericmomenta})$, $(\ref{eqn: genericmassmomenta})$.

\begin{prop}  \label{prop: moticasvaluation} An edge subgraph $\Gamma \subset E_G$ is \motic if and only if 
$$v_{\gamma} (\Xi_G) < v_{\Gamma}(\Xi_G)$$
for all strict edge subgraphs $\gamma \subsetneq \Gamma$. In particular,   $v_{\Gamma}(\Xi_G) >0$ if $\Gamma$ is \motic\!\!. 
\end{prop}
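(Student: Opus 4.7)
The key tool is the valuation formula $(\ref{eqn: valuationformula})$ together with the elementary fact that for any edge subgraph $\gamma \subseteq \Gamma$ one has $h_\gamma \leq h_\Gamma$ (this follows from $(\ref{eqn:  additivityofh})$ applied to $\gamma \subset \Gamma$, which gives $h_\Gamma = h_\gamma + h_{\Gamma/\gamma}$ with $h_{\Gamma/\gamma} \geq 0$). I would split the argument according to whether $\Gamma$ is mass-momentum spanning in $G$ or not, then combine these with the intrinsic nature of the motic property (Remark \ref{rem: moticintrinsic}) and with Lemma \ref{lem: propertiesofmm}(i).

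\textbf{Case A: $\Gamma$ is $m.m.$ in $G$.} Then $v_\Gamma(\Xi_G) = h_\Gamma + 1$. By Lemma \ref{lem: propertiesofmm}(i), for any $\gamma \subseteq \Gamma$, $\gamma$ is $m.m.$ in $G$ if and only if it is $m.m.$ in $\Gamma$. So $(\ref{eqn: valuationformula})$ gives $v_\gamma(\Xi_G) = h_\gamma + 1$ when $\gamma$ is $m.m.$ in $\Gamma$, and $v_\gamma(\Xi_G) = h_\gamma$ otherwise. The condition $v_\gamma(\Xi_G) < v_\Gamma(\Xi_G)$ for every $\gamma \subsetneq \Gamma$ therefore amounts to: $h_\gamma < h_\Gamma$ whenever $\gamma$ is $m.m.$ in $\Gamma$, and $h_\gamma \leq h_\Gamma$ otherwise. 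The second requirement is automatic from $h_\gamma \leq h_\Gamma$, so the condition reduces precisely to the defining property of being motic.

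\textbf{Case B: $\Gamma$ is not $m.m.$ in $G$.} Then $v_\Gamma(\Xi_G) = h_\Gamma$. Any $\gamma \subseteq \Gamma$ is also not $m.m.$ in $G$: if it were, then $\Gamma \supseteq \gamma$ would already contain all massive edges and all external legs in a single connected component, contradicting the hypothesis. Hence $v_\gamma(\Xi_G) = h_\gamma$ for every $\gamma \subseteq \Gamma$. On the other hand, by the convention in \S\ref{sect: FeynmanGraphs}, $\Gamma$ is treated as massless with no external momenta, so every $\gamma \subsetneq \Gamma$ is $m.m.$ in $\Gamma$ trivially. Thus both the motic condition and the valuation condition collapse to the same inequality $h_\gamma < h_\Gamma$ for every strict subgraph $\gamma \subsetneq \Gamma$.

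\textbf{The ``in particular'' clause.} In Case A, $v_\Gamma(\Xi_G) = h_\Gamma + 1 \geq 1 > 0$ with no further input. In Case B, $v_\Gamma(\Xi_G) = h_\Gamma$, so I need $h_\Gamma \geq 1$. But $\Gamma$ motic and not $m.m.$ means (as noted just after Definition \ref{defn: Symanzik}\ldots i.e.\ after the definition of \emph{motic}) that $\Gamma$ is $1$PI, so by Remark \ref{rem: moticbyedgecutting} cutting any edge strictly drops the loop number; for a non-empty $\Gamma$ this forces $h_\Gamma \geq 1$.

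\textbf{Expected difficulty.} The argument is essentially a bookkeeping exercise; the only subtle point is the interplay between the two notions of mass-momentum spanning (in $G$ versus in $\Gamma$) and the convention that a non-$m.m.$ subgraph is regarded as kinematics-free. Once that is sorted, everything reduces to the valuation formula and the monotonicity $h_\gamma \leq h_\Gamma$.
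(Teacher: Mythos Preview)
Your proof is correct and follows essentially the same route as the paper's. The only cosmetic difference is that the paper packages your two cases into a single formula by writing $v_{\Gamma}(\Xi_G) - v_{\gamma}(\Xi_G) = (h_{\Gamma}-h_{\gamma}) + (\mathbb{I}_{\Gamma,G} - \mathbb{I}_{\gamma,G})$ with $\mathbb{I}_{\cdot,G}$ the indicator of being $m.m.$ in $G$, and then observes that both summands are non-negative; your explicit case split on whether $\Gamma$ is $m.m.$ in $G$ unpacks exactly the same reasoning.
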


\begin{proof}  Let $\gamma \subsetneq  \Gamma$ be a strict edge subgraph.
Let $\mathbb{I}_{\gamma,G}$ equal $1$ if $\gamma$ is mass-momentum spanning in $G$ and $0$ otherwise. 
We have $h_{\Gamma} \geq h_{\gamma}$.

 Equation $(\ref{eqn: valuationformula})$ implies that
\begin{equation} \label{eqn:  vdifference}
v_{\Gamma}(\Xi_G) - v_{\gamma}(\Xi_G) = (h_{\Gamma}- h_{\gamma}) + (\mathbb{I}_{\Gamma, G} - \mathbb{I}_{\gamma, G})\ .
\end{equation}
Furthermore,  $\mathbb{I}_{\Gamma, G} \geq \mathbb{I}_{\gamma, G}$, because if $\gamma$ is mass-momentum spanning in $G$, then so too is $\Gamma$ by lemma $\ref{lem: propertiesofmm}$ $(i)$. 
Thus  $(\ref{eqn:  vdifference})$ is strictly positive for all strict edge subgraphs $\gamma \subset \Gamma$
  if and only if $h_{\gamma} < h_{\Gamma}$ whenever  $ \mathbb{I}_{\gamma, G}=\mathbb{I}_{\Gamma, G}.$ By lemma $\ref{lem: propertiesofmm}$ $(i)$, this is precisely the set of mass-momentum spanning  subgraphs $\gamma\subsetneq \Gamma$. 
\end{proof}

\subsection{Graph hypersurfaces and the motive}
Let $G$ be a graph of type $(Q,M)$. We shall construct various families of schemes over the space of kinematics $K_{Q,M}$, defined in  $(\ref{eqn:  KQMdefn})$. 
In order to lighten the notation, we shall abusively write $\Pro^{E_G}$,  $L_I$, and so on,   to denote the base change of the schemes defined in  the previous section from $\Spec \, \Z$ to $K_{Q,M}$.  
Define  graph hypersurfaces 
$$X_{\Psi_G} = V(\Psi_G) \subset \Pro^{E_G} 
\qquad \hbox{ and } \qquad  X_{\Xi_G(q,m)} = V(\Xi_{G}(q,m)) \subset \Pro^{E_G}\ . $$
These are to be viewed as families of hypersurfaces over $K_{Q,M}$. The former were considered in \cite{BEK} when $Q=M=0$.   Note that the intersection $X_{\Psi_G} \cap X_{\Xi_G(q,m)}$ is given by the zero locus $V(\Phi_G(q)) $ of the second Symanzik polynomial, by $(\ref{eqn:  Xidefn}).$

 If $\Gamma\subset G$ is \motic\!\!,  then by  proposition $\ref{prop: moticasvaluation}$, $v_{\Gamma}(\Xi_G(q,m))>0$ and so the  linear subspace $L_{\Gamma}$ is contained in the graph hypersurface on each fiber, i.e., 
$$L_{\Gamma}  \subset X_{\Xi_G(q,m)}\ .$$
If $\Gamma$ is not $m.m.$ in $G$, then we also have $L_{\Gamma}  \subset X_{\Psi_{G}}$ by $(\ref{eqn: valuationforPsi})$ since $h_{\Gamma} >0$.  Thus the loci $L_{\Gamma}$
meet both the boundary of  the chain of integration $\sigma$,  and the singularities of the Feynman integrand $(\ref{eqn:  omegaGdef})$, causing potential divergences\footnote{As suggested by T. Damour, such loci could be termed \emph{problemotic}.}.

\begin{defn} Let $G$ be a  \motic Feynman graph of type $(Q,M)$.  Define
$$\pi_G: P^G = P^{B_G}\To \Pro^{E_G}$$
where $B_G=\{\Gamma \subsetneq G \hbox{ \motic\!\!}\}$, which  is stable under unions by theorem $\ref{thm: moticproperties}$ $(iii)$. 
\end{defn}

Let us define  $X_G \subset \Pro^{E_G}$ (viewed as a family over $K_{Q,M}$) to be 
\begin{equation}\label{eqn:  XGdefn}
X_G = \begin{cases}  X_{\Psi_G} \cup X_{\Xi_G(q,m)} \quad \hbox{ if } G \hbox{ has non-trivial kinematics } \\
  X_{\Psi_G}  \qquad \qquad \qquad \hbox{ if } G \hbox{ has no masses or momenta } 
\end{cases}  \ .
\end{equation}
Because of exceptional cases when the Feynman integrand $(\ref{eqn:  omegaGdef})$
has no term $\Psi_G$ in the denominator (i.e., when $N_G \geq  (h_G+1)d/2)$), we can also define
\begin{equation}\label{eqn:  X'Gdefn}
X'_G = X_{\Xi_G(q,m)}\ .
\end{equation}
Denote the strict transforms of $X_{\Psi_G}, X_{\Xi_G(q,m)}, X_G, X'_G$  in $P^G $ by 
$$Y_{\Psi_G} \quad  , \quad  Y_{\Xi_G(q,m)} \quad , \quad Y_G \quad , \quad Y'_G $$
respectively. Note that if $G$ has connected components $G_1,\ldots, G_n$, then 
$Y_{\Psi_G} = \bigcup_{i=1}^n Y_{\Psi_{G_i}}$. 
Recall that $D\subset P^G$ is the divisor  defined by $(\ref{eqn: Ddivisordefn})$, base-changed from $\Spec \Z$ to $K_{Q,M}$.
Denote the canonical projection by  $\pi_G : \fP^G \rightarrow \Pro^{E_G} $. 

\begin{defn} \label{defnmotG} Let $G$ be a \motic Feynman graph of type $(Q,M)$. Let
$$\mot_G = H^{N_G-1} ( P^G \backslash Y_G , D  \backslash( D \cap Y_G))_{/S} \ .$$
It is a triple $\mot_G = ((\mot_G)_B, (\mot_G)_{dR},  c)$ in a category $\HH(S)$,
for some Zariski-open $S \subset K_{Q,M}$, defined immediately below,
via the construction of \cite{NotesMot} \S 10.2.  There is also a variant
$$\mot'_G = H^{N_G-1} ( P^G \backslash Y'_G , D  \backslash( D \cap Y'_G))_{/S} \ .$$
\end{defn} 

\subsection{Reminders from \cite{NotesMot}} \label{sectRemindersHH}
Let $S$ be a smooth geometrically connected scheme over $\Q$. Then $\HH(S)$ is the category of triples
$(\V_{B}, \V_{dR}, c)$ where
$\V_B$ is a local system of $\Q$-vector spaces on $S(\C)$, and $\V_{dR}$ is an algebraic vector bundle on $S$
equipped with an integrable connection $\nabla$ with regular singularities at infinity. These are equipped with  a weight  (and for $\V_{dR}$, a Hodge) filtration satisfying the conditions of 
\S7.2 of \cite{NotesMot}, and $c: \V^{\mathrm{an}}_{dR} \overset{\sim}{\rightarrow} \V_B \otimes_{\Q} \Or_{S^{\mathrm{an}}}$ is an isomorphism of analytic vector bundles with connection, respecting the weight filtrations.

\subsection{Recursive structure} Let $D\subset P^G$  denote the divisor  defined by $(\ref{eqn: Ddivisordefn})$. Its irreducible components are
 $D_e$, for $e\in E_G$ an edge, and $D_{\gamma}$ for $\gamma \subsetneq E_G$ a strict \motic subgraph of $G$.  All schemes are viewed over $K_{Q,M}$ where $G$ is of type  $(Q,M)$. 

\begin{thm}\label{thm: recursive} For every edge $e\in E_G$ and $\gamma \subsetneq E_G$ \motic\!\!, we have
\begin{eqnarray}  \label{eqn: Drecursive}
 D_{\gamma}& =&  P^{\gamma} \times P^{G/\gamma}  \ , \\
D_e  &= &  \{pt\} \times P^{G/e}  \nonumber
\end{eqnarray}
and the  strict transforms of the graph hypersurfaces satisfy:
\begin{eqnarray} \label{eqn: Yrecursive}
Y_G \cap D_{\gamma} &  = &  \big(Y_{\gamma} \times P^{G/\gamma} \big)  \cup   \big(P^{\gamma} \times Y_{G/\gamma} \big)\ ,    \\ 
Y_G \cap D_e   &= &  \{pt\} \times Y_{G/ e}  \ .   \nonumber \end{eqnarray}
Note that at most one of $\gamma$ and  $G/\gamma$ has non-trivial momenta and masses. Here, $\{pt\}$ means the family of  points $\Spec \Z\times_{\Spec \Z }\Spec K_{Q,M}$ over $\Spec K_{Q,M}$. 
\end{thm}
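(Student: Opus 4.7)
The product decompositions in (\ref{eqn: Drecursive}) are a direct consequence of theorem \ref{prop: PBstructure} applied to $B = B_G$. Indeed, for motic $\gamma \subsetneq G$, remark \ref{rem: moticintrinsic} identifies $B^\gamma = \{J\in B_G : J\subseteq \gamma\}$ with the set of strict motic subgraphs of $\gamma$, hence $P^{B^\gamma}= P^{\gamma}$, while the quotient and extension properties theorem \ref{thm: moticproperties}(i)--(ii) give a bijection $J\mapsto J/\gamma$ between motic subgraphs of $G$ containing $\gamma$ and motic subgraphs of $G/\gamma$, hence $P^{B_\gamma} = P^{G/\gamma}$. For an edge $e$, theorem \ref{thm: moticproperties}(iv) combined with the observation that $(\alpha\cup e)/e$ has edge set $\alpha\setminus\{e\}$ identifies $B_e$ with the set of strict motic subgraphs of $G/e$, yielding $D_e = \Spec\, \Z \times P^{G/e}$.

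For the equations (\ref{eqn: Yrecursive}) on strict transforms, I would work in the local coordinates $\beta^{\FF,c}$ of \S\ref{sect:  localcoords} around a point of $D_\gamma$, choosing the flag $\FF$ so that $\gamma$ appears as a step $I_r$ and $D_\gamma$ has local equation $\beta^{\FF,c}_{j_r}=0$. The proper transform of $\Psi_G$ (resp.\ $\Xi_G$) has local equation obtained by dividing its pullback by $\beta_{j_r}^{v_\gamma(\Psi_G)}$ (resp.\ $\beta_{j_r}^{v_\gamma(\Xi_G)}$), with valuations supplied by (\ref{eqn: valuationforPsi}) and (\ref{eqn: valuationformula}); its restriction to $D_\gamma$ is precisely the leading term of $\Psi_G$ (resp.\ $\Xi_G$) as a polynomial in the $\alpha_e$, $e\in E_\gamma$. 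The key point is that the factorisation identities of \S\ref{sect: Section2} express this leading term as a product of a polynomial on the factor $P^\gamma$ (depending only on $\alpha_e$ for $e\in E_\gamma$) with one on the factor $P^{G/\gamma}$ (depending only on $\alpha_e$ for $e\notin E_\gamma$), so that after restriction it descends to a global equation on $D_\gamma= P^\gamma\times P^{G/\gamma}$ via the product coordinates from \S\ref{sect:  localcoords}.

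I would then split into two cases. If $\gamma$ is not $m.m.$, both UV factorisations give leading terms $\Psi_\gamma\Psi_{G/\gamma}$ and $\Psi_\gamma\Xi_{G/\gamma}$ at the shared valuation $h_\gamma$; since $\gamma$ by convention inherits no kinematics we have $Y_\gamma= Y_{\Psi_\gamma}$, while $G/\gamma$ inherits the kinematics of $G$ so $Y_{G/\gamma}= Y_{\Psi_{G/\gamma}}\cup Y_{\Xi_{G/\gamma}}$, and taking the union of the two strict transforms on $D_\gamma$ gives the stated formula. If $\gamma$ is $m.m.$, the UV factorisation still governs $\Psi_G$, but the IR factorisation (\ref{eqn: XiIRfact}) gives the leading term $\Xi_\gamma\Psi_{G/\gamma}$ for $\Xi_G$ at the sharp valuation $h_\gamma+1$; now $G/\gamma$ carries no kinematics so $Y_{G/\gamma}= Y_{\Psi_{G/\gamma}}$, whereas $\gamma$ inherits those of $G$ so $Y_\gamma= Y_{\Psi_\gamma}\cup Y_{\Xi_\gamma}$, and the union again assembles as claimed. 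For $D_e$ with $e$ not a tadpole, corollary \ref{cor: Restrictiontoalpha=0} directly identifies $X_G\cap L_e$ with $X_{G/e}$ inside $L_e\cong \Pro^{E_{G/e}}$, so taking strict transforms in the induced further blow-up $D_e\cong P^{G/e}$ yields $Y_G\cap D_e = \{pt\}\times Y_{G/e}$.

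The main obstacle will be the coordinate calculation underlying the second paragraph: verifying that after dividing by the correct power of $\beta_{j_r}$, the leading term of the factorisation descends to a global equation on the product $P^\gamma\times P^{G/\gamma}$ of the expected form. This reduces to checking that the factor depending on the $\alpha_e$ for $e\in E_\gamma$ and the factor depending on the $\alpha_e$ for $e\notin E_\gamma$ transform respectively under the iterated blow-ups associated to motic subgraphs of $\gamma$ and of $G/\gamma$---a decoupling that is the same principle underlying the proof of theorem \ref{prop: PBstructure}.
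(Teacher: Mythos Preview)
Your proposal is correct and follows essentially the same approach as the paper: the same combinatorial identifications via theorem \ref{prop: PBstructure}, remark \ref{rem: moticintrinsic}, and theorem \ref{thm: moticproperties}(i)(ii)(iv) for the product structure of $D_\gamma$ and $D_e$, and the same local-coordinate argument using the UV/IR factorisations to identify the leading term of the pullback on $D_\gamma$. The only omission is the case where $e$ is a tadpole, which the paper dispatches in one line by noting that then $e$ is itself \motic with $P^e=\{pt\}$, reducing to the $D_\gamma$ case already treated.
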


\begin{proof}  Recall that $P^G = P^{B_G}$, where $B_G$ consists of the set of strict \motic subgraphs of $G$. 
Let $\gamma \subsetneq  E_G$ be \motic\!\!. By theorem $\ref{prop: PBstructure}$, and the notation preceding it, 
 $$D_{\gamma} = P^{(B_G)^{\gamma}} \times  P^{(B_G)_{\gamma}}\ .$$
By remark $\ref{rem: moticintrinsic}$,  ${(B_G)^{\gamma}}= B_{\gamma}$ since a subgraph of $\gamma$ is \motic if and only if it is \motic in $G$. 
Finally,  $(B_G)_{\gamma}= B_{G/\gamma}$ by theorem $\ref{thm: moticproperties}$  $(i)$ and $(ii)$.  This proves the first line of $(\ref{eqn: Drecursive})$. 
Now consider an affine chart $\A^{\FF,c}$ defined in \S\ref{sect:  localcoords}, where the flag $\FF$ contains $\gamma$ (otherwise $D_{\gamma}$ does not meet $\A^{\FF,c}$, by lemma \ref{lem: Flagavoidance}). In the coordinates  $(\ref{eqn: localcoords})$,  let  $\beta=0$ denote the equation of $D_{\gamma} \cap \A^{\FF,c}$ in $ \A^{\FF,c}$. Write $\pi_G^* \Psi_G$ in these coordinates  and apply $(\ref{eqn: IRfactPhi})$ to obtain 
$$\pi_G^* \Psi_G = \beta^{h_{\gamma}} (\pi_{\gamma}^* \Psi_{\gamma})  (\pi_{G/\gamma}^* \Psi_{G/\gamma} ) + O(\beta^{h_{\gamma}+1})\ ,$$
where the right-hand side is written using  $\A^{\FF,c} \cong  \A^1 \times \A^{\FF^{\gamma}, c^{\gamma}}  \times \A^{\FF_{\gamma}, c_{\gamma}}$
and where the coordinate on the  component $\A^1$ is $\beta$ (see the discussion following   $(\ref{eqn: isomDIr})$). This proves that 
$Y_{\Psi_G} \cap \fD_{\gamma} =( Y_{\Psi_{\gamma}} \times \fP^{G/\gamma} )\cup  (\fP^{\gamma} \times  Y_{\Psi_{G/\gamma}})$. 
A similar argument using  formulae $(\ref{eqn: XiUVfact})$ and $(\ref{eqn: XiIRfact})$ proves that 
\begin{eqnarray} 
Y_{\Xi_G(q,m)} \cap \fD_{\gamma} & = &( Y_{\Xi_{\gamma}(q,m)} \times \fP^{G/\gamma} )\cup  (\fP^{\gamma} \times  Y_{\Psi_{G/\gamma}}) \qquad \hbox{  when 
}  \gamma   \hbox{ is } m.m.  \nonumber \\
Y_{\Xi_G(q,m)} \cap \fD_{\gamma} & =& ( Y_{\Psi_{\gamma}} \times \fP^{G/\gamma} )\cup  (\fP^{\gamma} \times  Y_{\Xi_{G/\gamma}(q,m)})  \qquad \gamma \hbox{  not } m.m.  \nonumber 
\end{eqnarray} 
Since $Y_G = Y_{\Psi_G} \cup Y_{\Xi_G(q,m)}$, this proves the first equation of  $(\ref{eqn: Yrecursive}).$

Now, by theorem  $\ref{prop: PBstructure}$, we have 
$$D_e = \{pt\} \times P^{(B_G)_e}\ .$$
We must show that $(B_G)_e = B_{G/e}$.  The left-hand side is given by the   sets  of edges
 $\gamma \backslash (\gamma \cap \{ e\})$  for  \motic 
 $\gamma \subsetneq E_G$, and this is the set of edges of the subgraph 
    $( \gamma \cup e)/e\subset E_{G/e}$. By theorem \ref{thm: moticproperties} $(i)$ and $(iv)$,   such a graph is \motic in $G/e$,  and
 every \motic subgraph of $G/e$ arises in this way. (Note that the failure of injectivity stated in the last line of theorem \ref{thm: moticproperties} $(iv)$ poses no problem, since we are only concerned with  $\gamma \cup e$). This proves the second line of $(\ref{eqn: Drecursive})$. If $e$ is not a tadpole, it follows from corollary \ref{cor: Restrictiontoalpha=0} that   $$X_{\Psi_G} \cap L_e = X_{\Psi_{G/ e}}\quad  \hbox{ and } \quad X_{\Xi_G(q,m)} \cap L_e = X_{\Xi_{G/ e}(q,m)}\ ,$$
 and we deduce the corresponding statements for their strict transforms.
  If $e$ is a tadpole it is \motic\!\!, and we are reduced to the previous case, since $P^e = \{pt\}$.
   \end{proof}

\begin{rem} 
The analogous statement of theorem \ref{thm: recursive} for $Y'_G = Y_{\Xi_G(q,m)}$ is 
\begin{eqnarray}\label{eqn: Y'recursive}
Y'_G \cap D_e   &= &  \{pt\} \times Y'_{G/ e}   \\
Y'_G \cap D_{\gamma} &  = &  \big(Y'_{\gamma} \times P^{G/\gamma} \big)  \cup   \big(P^{\gamma} \times Y_{G/\gamma} \big) \nonumber \quad \hbox{if } \gamma  \hbox{ is } m.m. \nonumber \\
&  = &  \big(Y_{\gamma} \times P^{G/\gamma} \big)  \cup   \big(P^{\gamma} \times Y'_{G/\gamma} \big) \nonumber \quad \hbox{if } \gamma \hbox{ is not } m.m.
\end{eqnarray}
Note that every time a $Y_{H}$ (with no prime) occurs in these formulae, it is because the graph $H$ (either $\gamma$ or $G/\gamma$) has no masses or momentum dependence.
\end{rem}
By induction, the theorem implies  that the intersection of the strict transform $Y_G$ of the graph hypersurface with
a facet of $D$ of codimension $n$ is of the form 
$$\bigcup_i P^{\gamma_1} \times \ldots \times P^{\gamma_{i-1}} \times Y_{\gamma_i} \times P^{\gamma_{i+1}} \times \ldots \times P^{\gamma_n}$$
where the $\gamma_i$ are quotients of \motic subgraphs of $G$. In fact, $\gamma_1\otimes \ldots \otimes \gamma_n$ is a descendant of $G$ according to  definition \ref{defnDescendants}. 

\subsection{Feynman polytope and Betti class}
Let $G$ be a  Feynman graph of type $(Q,M)$. Define the Feynman polytope, following  \S\ref{sect: Bpolytope},  to be 
$$\widetilde{\sigma}_G := \widetilde{\sigma}_{B_G}  \times U^{\mathrm{gen}}_{Q,M} \subset P^G(\R)\ .$$
It is a constant family of compact manifolds with corners over the locus $U^{\gen}_{Q,M} \subset K^{gen}_{Q,M}(\C)$ where masses and momenta have positive real parts (\S\ref{sect:  GenKin}).

\begin{thm}  \label{thmsigmaGavoidsY} We have 
$\widetilde{\sigma}_G \cap Y_G(\C) = \emptyset$.
A fortiori,  $\widetilde{\sigma}_G \cap Y'_G (\C)= \emptyset$.
\end{thm}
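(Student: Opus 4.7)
The strategy is to proceed by induction on the number of edges $N_G$, combining a direct positivity argument on the interior of the Feynman polytope with the recursive product-structure of both $\widetilde{\sigma}_G$ and $Y_G$ on the boundary divisor $D$.

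First, decompose $\widetilde{\sigma}_G$ into its open part $\pi_G^{-1}(\overset{\circ}{\sigma})$ and its boundary $\widetilde{\sigma}_G \cap D(\C)$. For the open part, the key point is positivity of the graph polynomials on the positive simplex for kinematics in the region $U^{\gen}_{Q,M}$. Since every monomial coefficient of $\Psi_G$ equals $+1$ and $\Psi_G$ has at least one spanning-tree monomial by Lemma \ref{lem: PsiGvanishing}, we have $\Psi_G(\alpha) > 0$ whenever all $\alpha_e > 0$. For $\Xi_G(q,m) = \Phi_G(q) + (\sum m_e^2 \alpha_e)\Psi_G$, the coefficients of the monomials in $\alpha$ are sums of quantities of the form $s_I$ and $m_e^2$, each of which has strictly positive real part by definition of $U^{\gen}_{Q,M}$; hence $\mathrm{Re}\,\Xi_G(q,m)(\alpha) > 0$ on the open simplex. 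When $G$ has no kinematics, $Y_G = Y_{\Psi_G}$ and only the first positivity is needed. In either case, $\pi_G^{-1}(\overset{\circ}{\sigma}) \cap X_G(\C) = \emptyset$, and since $\pi_G$ is an isomorphism off the exceptional locus, $\pi_G^{-1}(\overset{\circ}{\sigma}) \cap Y_G(\C) = \emptyset$.

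Next, handle the intersection with each irreducible component of $D$ via the recursive product structure. By Corollary \ref{cor: facetstructure}, Remark \ref{rem: moticintrinsic} and the computation $(B_G)_\gamma = B_{G/\gamma}$ used in Theorem \ref{thm: recursive}, for a strict \motic subgraph $\gamma \subsetneq G$ we have
\[
\widetilde{\sigma}_G \cap D_\gamma(\C) \;=\; \widetilde{\sigma}_\gamma \times \widetilde{\sigma}_{G/\gamma},
\]
and similarly $\widetilde{\sigma}_G \cap D_e(\C) = \{\mathrm{pt}\} \times \widetilde{\sigma}_{G/e}$ for $e \in E_G$. Theorem \ref{thm: recursive} gives
\[
Y_G \cap D_\gamma \;=\; (Y_\gamma \times P^{G/\gamma}) \cup (P^\gamma \times Y_{G/\gamma}),
\qquad Y_G \cap D_e \;=\; \{\mathrm{pt}\} \times Y_{G/e}.
\]
Since $N_\gamma, N_{G/\gamma}, N_{G/e} < N_G$, the inductive hypothesis applied to $\gamma$ and to $G/\gamma$ (respectively $G/e$) yields $\widetilde{\sigma}_\gamma \cap Y_\gamma = \emptyset$ and $\widetilde{\sigma}_{G/\gamma} \cap Y_{G/\gamma} = \emptyset$, so both summands of the product miss the corresponding factor of the polytope. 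Summing over all components $D_\gamma$ and $D_e$ of $D$ gives $\widetilde{\sigma}_G \cap D \cap Y_G(\C) = \emptyset$, which together with the open part concludes the induction.

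The main subtlety to keep track of is the bookkeeping of kinematic data under the splitting $\gamma, G/\gamma$: exactly one of the two inherits the external momenta and masses of $G$, so the inductive statement must be applied either to a kinematic graph or to a graph with vanishing masses and momenta (where the hypothesis $(q,m) \in U^{\gen}_{Q,M}$ is vacuous and only the positivity of $\Psi$ is used). Once this is unwound using Lemma \ref{lem: propertiesofmm} and Theorem \ref{thm: moticproperties}, the induction runs uniformly. The second claim, $\widetilde{\sigma}_G \cap Y'_G(\C) = \emptyset$, is immediate from $Y'_G \subseteq Y_G$ whenever $G$ carries kinematics, and otherwise is empty by convention.
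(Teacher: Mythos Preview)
Your proposal is correct and follows essentially the same approach as the paper's proof: both combine the positivity argument on the open simplex (using $\Psi_G>0$ and $\mathrm{Re}\,\Xi_G(q,m)>0$ for $(q,m)\in U^{\gen}_{Q,M}$) with an induction exploiting the recursive product structure of the facets from Theorem~\ref{thm: recursive} and Corollary~\ref{cor: facetstructure}. The only cosmetic difference is that the paper phrases the induction as running over the open strata of $\widetilde{\sigma}_G$ by decreasing dimension, whereas you induct on the edge number $N_G$; these amount to the same thing since each facet is a product of Feynman polytopes of graphs with strictly fewer edges.
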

\begin{proof} Since the polytope $\widetilde{\sigma}_G$ is stratified, it suffices to show that $Y_G(\C)$ does not meet any open stratum. We shall do this by induction.
We have an isomorphism
$$   \big(P^G \backslash D \big)  \overset{\sim}{\To} \big( \Pro^{E_G} \backslash \bigcup_{e\in E_G} L_e\big) $$
which induces  a homeomorphism from the big open stratum $\overset{\circ}{\sigma}_G\subset \widetilde{\sigma}_G$ to $\overset{\circ}{\sigma} \times U^{\gen}_{Q,M}$, the open coordinate simplex.  It also sends $Y_G  \backslash (Y_G \cap D)$ to $X_G$. Thus
$$Y_G(\C) \cap \overset{\circ}{\sigma}_G \cong X_G (\C) \cap  ( \overset{\circ}{\sigma} \times U^{\gen}_{Q,M})\ ,$$
and it suffices to show that  $X_{\Psi_G}(\C) \cap (  \overset{\circ}{\sigma} \times U^{\gen}_{Q,M})$ is  empty, and likewise that  $X_{\Xi_G(q,m)}(\C) \cap (\overset{\circ}{\sigma} \times U^{\gen}_{Q,M})$ is non-empty, 
in the case when $G$ has non-trivial kinematics.  Since $\Psi_G$
is non-zero by lemma \ref{lem: PsiGvanishing} and only has positive coefficients, it is clear that
$\Psi_G>0$  when all  $\alpha_i >0.$
Similarly, if $G$ has non-trivial kinematics then 
$\Xi_G(q,m)$ is non-zero  by lemma $\ref{lem:  Xivanishing}$ since $(\ref{eqn: genericmomenta})$ and $(\ref{eqn: genericmassmomenta})$ is automatically satisfied if $(\ref{Ugendef})$   holds. Using the explicit expression   $(\ref{eqn:  Xidefn})$, we have
$ \Real (\Xi_G(q,m))>0$  when all $\alpha_i>0$  and $(q,m) \in U^{\mathrm{gen}}_{Q,M}   $.
This proves  that  $ X_G (\C) \cap (\overset{\circ}{\sigma} \times U^{\gen}_{Q,M}) =\emptyset$.  
Now  theorem $\ref{thm: recursive}$  implies that the facets of $\widetilde{\sigma}_G$ satisfy
$$\widetilde{\sigma}_G \cap D_e (\C) = \{pt\}\times \widetilde{\sigma}_{G/e}  \qquad \hbox{ and } \qquad
\widetilde{\sigma}_G \cap D_{\gamma}(\C) = \widetilde{\sigma}_{\gamma}  \times \widetilde{\sigma}_{G/\gamma} \ .  $$
Using the recursive structure  $(\ref{eqn: Yrecursive})$, we are  reduced to proving an identical statement
for quotients of \motic subgraphs of $G$, and proceed by induction by decreasing dimension of the strata. 
\end{proof} 

\begin{defn}  \label{defnsigmaGclass} Let us write $\sigma_G\subset (P^G \backslash Y_G)(\R)$ (with no tilde) for the intersection
$$\sigma_G = \widetilde{\sigma}_G \cap (P^G \backslash Y_G)(\C) \ .$$
It follows from the previous theorem that  $\sigma_G$ is homeomorphic to $\widetilde{\sigma}_G$, and   its boundary is  contained in $D\cap Y_G(\C)$. It therefore   defines a canonical Betti class
\begin{equation}
[\sigma_G] \ \in\   \Gamma \big(U^{\mathrm{gen}}_{Q,M}, H_{N_G-1} (P^G \backslash Y_G, D \backslash (D \cap Y_G))\big) \ .
\end{equation} 
which we view as a local section of the dual Betti local system 
$$[\sigma_G ] \in \Gamma \big(U^{\mathrm{gen}}_{Q,M}, (\mot_G)_B^{\vee}\big)\ .$$
Likewise  we can replace  $Y_G$  by $Y'_G$ in the case when $N_G\geq( h_G+1)d/2$.
\end{defn}
One reason why we emphasise the action of the de Rham Galois group (as opposed to the Betti Galois group) on motivic Feynman integrals is because  the Betti class is uniformly defined for all graphs and we wish to keep it fixed. Furthermore, we prefer to compute with differential forms rather than homology cycles. 

\subsection{De Rham class and power-counting} Let $G$ be as above.
  The pull-back  of the  form $(\ref{eqn:  omegaGdef})$   along the map
  $\pi_G: P^G \backslash D \backslash Y_G \To \Pro^{E_G}\backslash X_G$  
 $$\pi_G^* (\omega_G(q,m)) \in \Omega^{N_G-1}(P^G \backslash D \backslash Y_G;k_{Q,M})$$
 may have poles along the exceptional divisors $D_{\gamma}\subset P^G$ for $\gamma$ \motic\!\!. 
 
 \begin{defn}The \emph{superficial degree of divergence} of a graph $G$ is
 \begin{equation} 
\sd_G =d \,h_G /2-  N_G \ .
 \end{equation}
 This is an integer, since $d$, the dimension of spacetime,  was assumed to be even.
 \end{defn} 
 The following lemma gives necessary and sufficient conditions for the convergence of the Feynman integral in the enlarged Euclidean region in terms of the superficial degrees of divergences of \motic subgraphs. 
  \begin{lem}\label{lem: valuationsconvergence}
 Let $\gamma \subsetneq E_G$ be \motic\!\!. Then  $\pi_G^* (\omega_G(q,m))$ has a pole along $D_{\gamma}$ of order given by the following formula:
 \begin{equation}\label{eqn: vgamma}
 -v_{\gamma}(\pi_G^* (\omega_G(q,m)))=       \begin{cases}1+ \sd_{\gamma} \qquad \qquad \qquad\hbox{ if } \gamma \hbox{ is not } m.m.  \\ 
1+ \sd_{\gamma}-\sd_G  \qquad  \,\quad \hbox{ if } \gamma \hbox{ is } m.m. 
 \end{cases} \ .
 \end{equation}
 It has no poles along any  divisors of the form $D_e$, where $e\in E_G$ is an edge which is not \motic\!\!. It follows that  the Feynman integral $I_G(q,m)$ $(\ref{eqn: projectiveFeynmanint})$  is absolutely  convergent  in the region $U^{\mathrm{gen}}_{Q,M}$   if and only if 
 \begin{eqnarray} \label{eqn: convineq}
 \sd_{\gamma} &< &0 \qquad \hbox{ for all } \gamma \subsetneq E_G \hbox{ \motic and not } m.m.  \\
 \sd_{\gamma} &< &\sd_G \quad \hbox{ for all } \gamma \subsetneq E_G  \hbox{ \motic and  } m.m. \nonumber   \ .
 \end{eqnarray} 
  \end{lem}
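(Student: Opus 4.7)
The plan has three parts: (i) a local--coordinate computation near each exceptional divisor $D_\gamma$, (ii) a separate check along the strict transforms $D_e$ of coordinate hyperplanes when $e$ is a non-motic edge, and (iii) a compactness argument converting pointwise pole orders into the integrability criterion on $\widetilde{\sigma}_G$.

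For (i), I would abbreviate the exponents of $\Psi_G$ and $\Xi_G$ in $\omega_G$ as $a_\Psi$ and $a_\Xi$; from $(\ref{eqn:  omegaGdef})$ one reads off $a_\Psi + a_\Xi = -d/2$ and $a_\Xi = \sd_G$. Working on an affine chart $\A^{\FF,c}$ of \S\ref{sect:  localcoords} in which the flag $\FF$ contains $\gamma$, the transverse coordinate $\beta$ to $D_\gamma$ rescales precisely the edge variables $\alpha_e$ for $e\in E_\gamma$, and the remaining coordinates split between $\gamma$ and $G/\gamma$ in accordance with the product structure $(\ref{eqn: Drecursive})$. The partial factorisations $(\ref{eqn: UVfactorizations})$, $(\ref{eqn: XiUVfact})$, $(\ref{eqn: XiIRfact})$ give
$$\pi_G^{*}\Psi_G \;=\; \beta^{h_\gamma}\bigl(\Psi_\gamma \Psi_{G/\gamma} + \beta\cdot(\cdots)\bigr),$$
and an analogous formula for $\pi_G^{*}\Xi_G$ with valuation $h_\gamma$ (non-m.m.) or $h_\gamma+1$ (m.m.); the leading coefficients are nonzero by lemmas \ref{lem: PsiGvanishing} and \ref{lem:  Xivanishing}. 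The Jacobian of the $|E_\gamma|$-variable substitution $\alpha_e\mapsto \beta\alpha_e'$ yields a factor $\beta^{N_\gamma-1}$ in $\pi_G^{*}\Omega_G$. Assembling
$$v_\gamma(\pi_G^{*}\omega_G) \;=\; a_\Psi\, h_\gamma \;+\; a_\Xi\, v_\gamma(\Xi_G) \;+\; (N_\gamma-1),$$
substituting the two cases and using $a_\Psi+a_\Xi=-d/2$, $a_\Xi=\sd_G$ gives $-v_\gamma(\pi_G^{*}\omega_G)=1+\sd_\gamma$ when $\gamma$ is not m.m. and $1+\sd_\gamma-\sd_G$ when $\gamma$ is m.m., as claimed.

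For (ii), a non-motic edge $e$ is necessarily not a tadpole and the singleton $\{e\}$ is mass-momentum spanning as a subgraph of itself (no mass on $e$, no momentum at its endpoints), so $\{e\}\notin B_G$ and $D_e$ is the strict transform of the coordinate hyperplane $L_e$, not an exceptional divisor. On any chart meeting $D_e$ the local coordinate is essentially $\alpha_e$, and corollary \ref{cor: Restrictiontoalpha=0} gives $\Psi_G|_{\alpha_e=0}=\Psi_{G\q e}$ and $\Xi_G|_{\alpha_e=0}=\Xi_{G\q e}$, both nonzero by lemmas \ref{lem: PsiGvanishing} and \ref{lem:  Xivanishing}. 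Hence $\pi_G^{*}\omega_G$ is regular along $D_e$.

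For (iii), $\widetilde{\sigma}_G$ is compact and avoids $Y_G$ by theorem \ref{thmsigmaGavoidsY}, so $\pi_G^{*}\omega_G$ is smooth on its interior and absolute integrability reduces to the behaviour near each facet $\widetilde{\sigma}_G\cap D_\gamma$ for $\gamma$ motic. In a real local coordinate $\beta\geq 0$ transverse to such a facet, the integrand behaves like $\beta^{v_\gamma(\pi_G^{*}\omega_G)}$ times a smooth form whose coefficient is strictly positive on $\widetilde{\sigma}_G$ (the positivity argument of theorem \ref{thmsigmaGavoidsY} applied to the factorised leading terms). Since $v_\gamma\in\Z$, absolute integrability is equivalent to $v_\gamma(\pi_G^{*}\omega_G)\geq 0$, which, via step (i), is exactly the pair of strict inequalities $(\ref{eqn: convineq})$. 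The main technical care is in step (i): choosing coordinates that simultaneously implement the (iterated) blow-up transverse to $D_\gamma$ and tracking the exponents across the three factors of $\omega_G$, with the partial factorisation theorems of \S \ref{sect: Section2} ensuring that the leading coefficient is genuinely nonzero.
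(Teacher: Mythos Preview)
Your proof is correct and follows essentially the same route as the paper: the valuation computation via the partial factorisations and the Jacobian factor $\beta^{N_\gamma-1}$ in local coordinates, the observation that $\Psi_G$ and $\Xi_G$ do not vanish along $L_e$ when $e$ is non-\motic (the paper phrases this directly via $(\ref{eqn: valuationforPsi})$ and $(\ref{eqn: valuationformula})$ rather than via corollary~\ref{cor: Restrictiontoalpha=0}, but the content is the same), and the compactness/positivity argument for the convergence equivalence. Your characterisation in (ii) of a non-\motic edge is slightly garbled---the relevant point is simply that $h_{\{e\}}=0$ and $\{e\}$ is not $m.m.$ in $G$, so both valuations vanish---but the conclusion you draw from corollary~\ref{cor: Restrictiontoalpha=0} is correct.
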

 \begin{proof}
Recall $(\ref{eqn:  omegaGdef})$. From equations $(\ref{eqn: valuationformula})$ and $(\ref{eqn: valuationforPsi})$, we have
  $$-v_{\gamma}\, \Big(  {1\over \Psi_G^{d/2}} \Big( { \Psi_G \over \Xi_G(m,q)} \Big)^{N_G - h_G d/2} \Big)=h_{\gamma} d/2 - \mathbb{I}_{\gamma} \,\sd_G   $$
where $\mathbb{I}_{\gamma}$ is $1$ if $\gamma$ is $m.m.$, and $0$ otherwise. Now consider an affine chart $\A^{\FF,c}$ where 
$$\FF: \qquad \emptyset = I_0 \subsetneq I_1 \subsetneq I_2 \subsetneq \ldots \subsetneq I_k \subsetneq I_{k+1} = E_G$$ is a flag containing $\gamma=I_r$   and $D_{\gamma}$ is given  by the equation $\beta_{j_r}=0$. The chart $\A^{\FF,c}$ lies over the affine $\alpha_{j_{k+1}}=1$.
In  the local coordinates $(\ref{eqn: localcoords})$ we find that 
$$\pi_G^* (\prod_{i\neq j_{k+1}}  d\alpha_{i}) = \beta_{j_1}^{|I_1|- 1}\beta_{j_2}^{|I_2|- 1|} \ldots \beta_{j_k}^{|I_k|-1}\prod_{i\neq j_{k+1}}  d\beta_{i} $$
 vanishes along $\beta_{j_r}=0$ to order $1- |I_r|=1- N_{\gamma}$. Thus 
$$-v_{\gamma}( \pi^*(\omega_G(q,m))) =  1 - N_{\gamma} + h_{\gamma} d/2 - \mathbb{I}_{\gamma} \,\sd_G  $$
  which proves $(\ref{eqn: vgamma})$. For the second part, observe  that $\Psi_G$ and $\Xi_G(q,m)$ vanish along  a coordinate hyperplane $L_e$
  if and only if $e$ is \motic\!\!. Therefore $\omega_G(q,m)$ has
  no pole along $D_e$  if $e$ is not \motic\!\!.
    
  For the last part, suppose that $(\ref{eqn: convineq})$ holds. Then $\omega_G(q,m)$ is continuous and has no poles on the  domain $\sigma_G$, which has compact fibers over $U^{\gen}_{Q,M}$. Its integral   $I_G(q,m)$ is therefore absolutely convergent. Conversely, suppose that  $I_G(q,m)$ is absolutely convergent for some $(q,m)$ in the Euclidean region $K^{\mathrm{gen}}_{Q,M}(\R)$ defined in \S\ref{sect:  GenKin}, where  all momenta $q_i$ and masses $m_e$ are real.   Then $\omega_G(q,m)$ is strictly positive  on $\sigma_G$, and so for any  subset  $U\subset \sigma_G$ we have
  $$  \int_U \omega_G(q,m) \leq I_G(q,m) <\infty\ .$$
   If $\omega_G(q,m)$ had a pole 
  along a  boundary divisor $D_{\gamma}$,  it follows from positivity  by taking $U$ to be a neighbourhood of $D_{\gamma} \cap \sigma_G$   that the left-hand side is infinite. Therefore $\omega_G(q,m)$  has no poles along any $D_{\gamma}$ and hence $(\ref{eqn: convineq})$ holds.
  \end{proof} 
 Note that one can interpret the quantity $\sd_{\gamma} - \sd_G$ as  $-\sd_{G/\gamma}$.
 We say that a Feyman integral is \emph{convergent} if the  conditions  $(\ref{eqn: convineq})$ are satisfied.
In this case, 
  $$\pi_G^* \, \omega_G(q,m)  \in \Omega^{N_G-1}( (P^G \backslash Y_G)_*;k_{Q,M})$$
  where $(P^G \backslash Y_G)_*$ denotes the generic fiber of $P^G \backslash Y_G$ over $K_{Q,M}$. This  in turn defines a relative de Rham cohomology class
  at the generic point 
  $$[\pi_G^*\, \omega_G(q,m) ] \in \Gamma(\Spec k_{Q,M},  (\mot_G)_{dR}) \ . $$
     In the case when $N_G \geq (h_G+1)d/2$, we can replace $Y_G$ by $Y'_G$ in the above.

\begin{rem}
The formulae for the degrees of divergence of a Feynman integral with respect to a subgraph are due to Weinberg \cite{Wein}, and are  known as power-counting. Finding a minimal class of subgraphs which give  necessary conditions for convergence is more subtle  since the  inequalities $(\ref{eqn: convineq})$ are not  independent.   This  problem was studied by  Speer \cite{Speer, SpeerWestwater} for generic
Feynman integrals.
\end{rem}  

\subsection{Convergence of `global periods':  integrals with numerators}
Gauge theories  can  produce Feynman integrals 
with numerators \cite{Corolla}:
  \begin{equation} \label{eqn:  ExFeynmInt}
  \int_{\sigma} \omega(q,m) \quad \hbox{ where }  \quad  \omega(q,m)= { P \,   \Omega_G \over \Psi_G^A\,  \Xi_G(q,m)^B}\ ,
  \end{equation} 
  where $A,B$ are integers, and $P\in \Q[\alpha_e, e\in E_G]$ is homogeneous of degree
  \begin{equation} \label{eqn: degreeforP}
  \deg\, P = A h_G + B(h_G+1) - N_G
  \end{equation} 
to ensure that $\omega(q,m)$ is homogeneous of degree zero. We take $B=0$ if $G$ is of type $(0,0)$, in which case $\Xi_G(q,m)$ vanishes. 
\begin{lem}
Suppose that \begin{equation} \label{eqn: convforP}
v_{\gamma}(P) \geq  Ah_{\gamma} + B(h_{\gamma} + \mathbb{I}_{\gamma})  - N_{\gamma}+1
\end{equation}
for all \motic subgraphs $\gamma\subsetneq E_G$, where $\mathbb{I}_{\gamma}$ is $0$ if $\gamma$ is not $m.m.$, and $1$ if it is $m.m.$.
 Then    $\pi^*_G(\omega(q,m))$ has no poles along $D$ for $(q, m) \in U^{\mathrm{gen}}_{Q,M}$. Therefore 
 $$\pi^*_G(\omega(q,m)) \in \Omega^{N_G-1}(P_G \backslash Y_G ; k_{Q,M})$$
  and the integral $(\ref{eqn:  ExFeynmInt})$ is convergent for all $(q, m) \in U^{\mathrm{gen}}_{Q,M}$.
\end{lem}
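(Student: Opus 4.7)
The plan is to extend the local-coordinate valuation computation from Lemma~\ref{lem: valuationsconvergence} so as to incorporate the numerator polynomial $P$, thereby showing that the pullback $\pi_G^*\omega(q,m)$ has no poles along any irreducible component of the boundary divisor $D\subset P^G$, and then to deduce absolute convergence from Theorem~\ref{thmsigmaGavoidsY}.

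First, I would fix a local affine chart $\A^{\FF,c}$ associated to a maximal flag $\FF: \emptyset = I_0 \subsetneq I_1 \subsetneq \cdots \subsetneq I_{k+1} = E_G$ containing a given strict \motic subgraph $\gamma = I_r$, so that $D_\gamma$ is cut out locally by $\beta_{j_r} = 0$ in the coordinates $(\ref{eqn: localcoords})$. Combining the valuation formulas $v_\gamma(\Psi_G) = h_\gamma$ and $v_\gamma(\Xi_G(q,m)) = h_\gamma + \mathbb{I}_\gamma$ of $(\ref{eqn: valuationforPsi})$--$(\ref{eqn: valuationformula})$ with the Jacobian identity
$$\pi_G^*\Bigl(\prod_{i\neq j_{k+1}}d\alpha_i\Bigr) = \beta_{j_1}^{|I_1|-1}\cdots \beta_{j_k}^{|I_k|-1}\prod_{i\neq j_{k+1}}d\beta_i$$
already derived in the proof of Lemma~\ref{lem: valuationsconvergence}, the order of the pole of $\pi_G^*\omega(q,m)$ along $D_\gamma$ comes out to be
$$-v_\gamma\bigl(\pi_G^*\omega(q,m)\bigr) = Ah_\gamma + B(h_\gamma + \mathbb{I}_\gamma) - v_\gamma(P) - (N_\gamma - 1),$$
which by the hypothesis $(\ref{eqn: convforP})$ is $\leq 0$. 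Hence $\pi_G^*\omega(q,m)$ is regular along every such $D_\gamma$.

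Second, for an irreducible component $D_e$ of $D$ associated to an edge $e\in E_G$ which is not \motic\!\!, the same argument used at the end of the proof of Lemma~\ref{lem: valuationsconvergence} shows that neither $\Psi_G$ nor $\Xi_G(q,m)$ produces a pole along $D_e$, and that the pullback of $\Omega_G$ contributes no compensating zero, since $D_e$ does not appear as an exceptional divisor in the relevant local coordinates. Combined with the obvious regularity of $P$ itself, this yields $\pi_G^*\omega(q,m) \in \Omega^{N_G-1}(P^G \setminus Y_G; k_{Q,M})$.

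Finally, for convergence: by Theorem~\ref{thmsigmaGavoidsY}, the Feynman polytope $\widetilde{\sigma}_G$ has compact fibers over $U^{\gen}_{Q,M}$ which avoid $Y_G(\mathbb{C})$. The two preceding steps show that $\pi_G^*\omega(q,m)$ is regular everywhere on $\widetilde{\sigma}_G$, hence continuous and bounded on each compact fiber. The change of variables $\int_\sigma \omega(q,m) = \int_{\widetilde{\sigma}_G}\pi_G^*\omega(q,m)$ then immediately yields absolute convergence for every $(q,m) \in U^{\gen}_{Q,M}$. The only genuinely new bookkeeping is aligning the Jacobian contribution $N_\gamma - 1$ with the bound $(\ref{eqn: convforP})$ and tracking the extra term $v_\gamma(P)$; since the underlying calculation is already present in Lemma~\ref{lem: valuationsconvergence}, no substantive new obstacle arises.
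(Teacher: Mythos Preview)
Your proposal is correct and follows exactly the approach the paper intends: the paper's own proof is simply ``Similar to lemma~\ref{lem: valuationsconvergence}'', and you have faithfully unpacked that argument by tracking the extra term $v_\gamma(P)$ alongside the denominator and Jacobian contributions already computed there. The bookkeeping and the convergence deduction via compactness of $\widetilde{\sigma}_G$ are both right.
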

\begin{proof}
Similar  to lemma $\ref{lem: valuationsconvergence}$.
\end{proof} 
Thus in this case also
  $[\pi_G^*\, \omega_G(q,m) ] \in  \Gamma(\Spec k_{Q,M}, (\mot_G)_{dR})$.
  The integral $(\ref{eqn:  ExFeynmInt})$ is an example of what we shall call a \emph{Feynman period}.

\begin{example}  Consider the banana graph $G$ with three edges,  no external momenta and no masses. 
  Its graph polynomial is $\Psi_G = \alpha_1 \alpha_2 + \alpha_1 \alpha_3 + \alpha_2 \alpha_3$, and its Feynman amplitude is divergent.
However, the previous lemma provides examples of  Feynman periods such as
$$\int_{\sigma}   { \alpha_1\alpha_2\alpha_3 \over \Psi^3_G} \Omega_G=  \int_{\alpha_1,\alpha_2\geq0} {\alpha_1 \alpha_2 \over (\alpha_1 \alpha_2 + \alpha_1 + \alpha_2)^3} = {1 \over 2} \ . $$
One easily shows (see section $\S\ref{sect: ConstantCosmic}$) that all periods of this graph are rational. 
\end{example}

 \section{Motivic Feynman amplitudes} \label{sect: MotAmp}

Armed with the definition of the motive of a Feynman graph, we can give the definition of the motivic Feynman amplitude and draw some first consequences.

\subsection{Reminders on motivic periods from \cite{NotesMot}}  
Let $S\subset K_{Q,M}$ be Zariski open as in \S\ref{sectRemindersHH},  let $s \in S(\C)$, and denote by   $\Spec k_{Q,M}$ the generic point of $S$.

Then $\HH(S)$ has two fiber functors  (\cite{NotesMot} \S7.2.1)
\begin{eqnarray}
\omega^{\gen}_{dR}  : \HH(S)  & \To &  \mathrm{Vec}_{k_{Q,M}} \nonumber  \\
\omega^s_{B}: \HH(S) & \To&  \mathrm{Vec}_{\Q} \nonumber 
\end{eqnarray} 
where  $\omega^{\gen}_{dR}$  is the fiber of $\V_{dR}$ at the generic point $\Spec k_{Q,M}$ of $K_{Q,M}$,
and $\omega_{B,s}(\V_{B}, \V_{dR}, c) = (\V_B)_s$ is the fiber at $s$. The ring of $\HH(S)$-periods
$$\Pe^{\mm,s, \gen}_{\HH(S)} $$
is  the space  generated by the matrix coefficients (\cite{NotesMot}, \S2.2, \S8.2) of the form  $[V, \sigma, \omega]^{\mm}$ where $V=(V_{B}, V_{dR}, c)$ is an object
of $\HH(S)$,  $\sigma \in \omega_s( V_B)^{\vee}$,  and 
$\omega \in  \omega^{\gen}_{dR}(V)$.  Now suppose that $s_1,s_2 \in U^{\gen}_{Q,M} \cap S(\C)$ are two points  in the region of generic kinematics. 
A path $\gamma\subset U^{\gen}_{Q,M} \cap S(\C)$ from $s_1 $ to $s_2$ yields an isomorphism
$$\Pe^{\mm,s_1,\gen}_{\HH(S)} \cong \Pe^{\mm,s_2,\gen}_{\HH(S)}$$ 
by continuation along paths \cite{NotesMot}, (7.10)
and hence a canonical isomorphism
$$(\Pe^{\mm,s_1,\gen}_{\HH(S)})^{\pi_1(U^{\gen}_{Q,M} \cap S(\C),s_1)}  =  (\Pe^{\mm,s_2,\gen}_{\HH(S)})^{\pi_1(U^{\gen}_{Q,M} \cap S(\C),s_2)}  $$
where the action of the topological fundamental group is on the right,
and commutes with the action of the de Rham Tannaka group (resp. the de Rham coaction).
  If $S' \rightarrow S$ is a smooth morphism and $s'\in S'(\C)$ is in the pre-image of $s$, 
then the  pullback defines  a functor $\HH(S) \rightarrow \HH(S')$ and  hence a homomorphism $\Pe^{\mm,s,\gen}_{\HH(S)} \rightarrow \Pe^{\mm,s',\gen}_{\HH(S')}.$ By taking $\pi_1$-invariants, we can  move the complex point $s\in S(\C)$ to ensure that it lies in the image of $S'(\C)$, and hence take the limit. 

\begin{defn} For  $Q,M\geq 0 $,  following \cite{NotesMot} \S8.2, let   
$$\Pe^{\mm}_{Q,M} =  \varinjlim_S \, (\Pe^{\mm,s,\gen}_{\HH(S)})^{\pi_1(U^{\gen}_{Q,M} \cap S(\C),s)}$$
where the limit ranges over Zariski open  $S\subset K^{\gen}_{Q,M}$ which are defined over $\Q$ and $s\in
U^{\gen}_{Q,M} \cap S(\C)$, ordered with respect to inclusion. 
 \end{defn}

The point of this construction is that the Feynman motivic periods will have poles on the complement of  some unspecified Zariski-open set $S$, but will always be single-valued on the region $U^{\gen}_{Q,M}$ (extended `Euclidean sheet'). The ring $\Pe^{\mm}_{Q,M}$ captures exactly this property.

There is a corresponding ring of de Rham periods $\Pe^{\dR}_{Q,M} =  \varinjlim_S \, \Pe^{\dR,\gen}_{\HH(S)}$.  It  is generated by  equivalence classes of triples  $[V, f, \omega]^{\dR}$ where $f \in \omega^{\gen}_{dR}(V)^{\vee}$ and 
$\omega \in  \omega^{\gen}_{dR}(V)$ and $V$ is an object of $\HH(S)$ for some $S\subset K_{Q,M}$ as above. 

Recall that for $(Q,M)=(0,0)$ the ring $k_{0,0} =\Q$ and $\HH_{0,0} = \HH$ is the category of realisations over $\Q$ considered in \cite{NotesMot} \S2. Its ring of periods $\Pe^{\mm}_{0,0} = \Pe^{\mm}_{\HH}$ is simply the ring of  periods over $\Q$ considered in \cite{NotesMot} \S3.

\subsection{Motivic Feynman amplitudes and motivic Feynman periods}
We make frequent use of the definitions and constructions  from \cite{NotesMot}. 

\begin{defn} Let $G$ be a Feynman graph of type $(Q,M)$ and 
 $I_G(q,m)$  a convergent Feynman amplitude $(\ref{eqn: projectiveFeynmanint})$. Define the \emph{motivic Feynman amplitude} to be 
  $$I^{\mm}_G(q,m) = [ \mot_G, [\pi_G^* \,\omega_G(q,m)], [\sigma_G]]^{\mm}\ \qquad \in \qquad \Pe^{\mm}_{Q,M} \ . $$
  To check that this makes sense, note that the general theory implies that $\mot_G$ is an object of $\HH(S)$ for $S$ some Zariski-open set in $K^{\gen}_{Q,M}$. Since $[\sigma_G]$ is in a fact constant section of $\Gamma(U_{Q,M}^{\gen}\cap S(\C), (\mot_G)_B^{\vee})$,  the fundamental group of 
  $U_{Q,M}^{\gen}\cap S(\C)$ acts trivially upon $I^{\mm}_{G}(q,m)$. Pick any $s \in U_{Q,M}^{\gen}\cap S(\C)$. The element  $I^{\mm}_{G}(q,m)$  therefore can be viewed as an element of $\Pe^{\mm, s, \gen}_{\HH(S)}$, invariant
  under $\pi^{\mathrm{top}}_1(U_{Q,M}^{\gen}\cap S(\C))$. It can therefore be viewed as  an element of $\Pe^{\mm}_{Q,M}$.

  For all $(q,m) \in U^{\gen}_{Q,M}$, its period is given by the convergent Feynman integral
  $$\per (I^{\mm}_G(q,m) ) = I_G(q,m)\ .$$
  \end{defn} 
  \noindent  We can also define (for instance in the case $N_G\geq (h_G+1)d/2$)  a variant:
   $$I^{\mm}_G(q,m)'= [ \mot'_G, [\pi_G^*\, \omega_G(q,m)], [\sigma_G]]^{\mm}   \qquad \in \qquad  \Pe^{\mm}_{Q,M}\ .$$
  
  Since the action of the de Rham Galois group on  motivic Feynman integrals  will generate  new integrands, and since quantum field theory (e.g., gauge theories)
  naturally produce integrands with numerators, we make the following definition.
  
  \begin{defn} \label{defnmotFeynper}
   A \emph{motivic Feynman period} of type $(Q,M)$ is 
\begin{eqnarray}
I^{\mm}_G(\omega) = [ \mot_G, [\omega], [\sigma_G]]^{\mm}  \ \in  \  \Pe^{\mm}_{Q,M}
\end{eqnarray} 
where    $[\omega] \in   \omega_{dR}^{\gen}(\mot_G )$ is  any relative de Rham cohomology class at the generic point. Write
$$I_G(\omega)=\per(I^{\mm}_G(\omega)) \ ,$$
which we call a \emph{Feynman period}. It is a multivalued  meromorphic function on $K^{\gen}_{Q,M}(\C)$. 
Note that graphs for which the Feynman amplitude $\omega_G(q,m)$ does not converge
may still have non-trivial Feynman periods.  The notion of Feynman \emph{period} is therefore more general, but contains,  the notion of Feynman \emph{amplitude}.   Feynman periods may not necessarily always have a physical interpretation. 
  \end{defn}

\noindent One can consider a  variant for $N_G\geq  (h+1)d/2$, by replacing $\mot_G$ by $\mot'_G$.
 
 \begin{rem} Many, but not all,  cohomology classes in $  \omega_{dR}^{\gen} (\mot_G)$ are of the form  $[\pi^*_G(\omega(q,m))]$, where $\omega(q,m)$ is a Feynman integrand
 with numerator $(\ref{eqn:  ExFeynmInt})$. 
 \end{rem}

\subsection{Coaction formula and cosmic Galois group} By \cite{NotesMot} \S8.2, the rings of  periods considered above  admit a coaction 
$$\Delta : \Pe^{\mm}_{Q,M} \To \Pe^{\mm}_{Q,M}  \otimes_{k_{Q,M}}  \Pe^{\dR}_{Q,M}   \  .$$
Applied to a motivic Feynman period we have the formula \cite{NotesMot} $(2.3)$
\begin{equation} \label{eqn: CoactionImG}
\Delta I^{\mm}_G(\omega)   =  \sum_{e_i}   [ \mot_G, [\sigma_G] ,e_i  ]^{\mm}   \otimes [ \mot_G,  e^{\vee}_i ,[\omega] ]^{\dR} \\
\end{equation} 
and the corresponding variant in which $I$ and $\mot_G$ are replaced by $I'$ and $\mot_G'$.  In this formula,
 $e_i$ is a basis of $\omega^{\gen}_{dR} (\mot_G)$ and $e_i^{\vee}$ the dual basis.  The elements on the right-hand side of the tensor product  in $(\ref{eqn: CoactionImG})$ are again motivic Feynman periods
of $G$.

It is customary in the physics literature to restrict  to one-particle irreducible  graphs. 
Likewise, we shall restrict to  the larger class of  \motic graphs.  
\begin{defn}
The comodule of \emph{de Rham Feynman periods} of type $(Q,M)$ is  the  subspace
$\HF^{\dR}_{Q,M}  \subset \Pe^{\dR}_{Q,M}$
spanned by de Rham Feynman periods 
$$[ \mot_G, f,v ]^{\dR}\ ,  \qquad \hbox{ where } v\in  \omega^{\gen}_{dR}(\mot_G)  \hbox{ and } f \in  \omega^{\gen}_{dR}(\mot_G)^{\vee}\ ,$$
and $G$ is \motic of type $(Q,M)$. 
The coproduct is given by the formula 
$$\Delta [ \mot_G,  f,v ]    =  \sum_{e_i}   [ \mot_G,  f,e_i  ]^{\dR}\otimes [ \mot_G, e^{\vee}_i,v  ]^{\dR} \ , $$
where $e_i$ is a  basis of $\omega^{\gen}_{dR} (\mot_G)$ and $e_i^{\vee}$ the dual basis  as above. It is not an algebra since it is not closed under products when $(Q,M)\neq (0,0)$. 

The space of  \emph{motivic  Feynman periods} of type $(Q,M)$ is the subspace
$$\HF^{\mm}_{Q,M}  \subset \Pe^{\mm}_{Q,M}$$
spanned  by motivic Feynman periods 
$I^{\mm}_G(\omega),$ where $G$ is \motic of type $(Q,M)$.  It is a right comodule over $\HF^{\dR}_{Q,M}$. The coaction is given by equation $(\ref{eqn: CoactionImG})$:
\begin{equation} \label{eqn: DeltaonHF} 
\Delta: \HF^{\mm}_{Q,M} \To \HF^{\mm}_{Q,M}     \otimes_{k_{Q,M}} \HF^{\dR}_{Q,M} \ .
\end{equation} 
\end{defn}

Define the \emph{cosmic Galois group}  $C_{Q,M}$ for \emph{convergent} Feynman periods to be
\begin{equation} \label{defnCQM} 
C_{Q,M} = \Spec\, \HF^{\dR}_{Q,M} \ .
\end{equation}
It is an affine group scheme over $k_{Q,M}$, and  acts on $\omega^{\gen}_{dR}(\mot_G)$ for all $G$ \motic of type $(Q,M)$.  The coaction  $(\ref{eqn: DeltaonHF})$ is equivalent to 
a group action 
$$C_{Q,M} \times \HF^{\bullet}_{Q,M} \To \HF^{\bullet}_{Q,M} \qquad \hbox{ with } \bullet = \mm, \dR \ .$$

The  cosmic Galois group itself should be taken with a pinch of salt (in the same way as the Galois group of all algebraic numbers): it is an enormous pro-algebraic group scheme over $k_{Q,M}$, and in practice one is  interested in its finite-dimensional  representations.

\subsection{First applications}  \label{sect: Firstapplications} The notion of motivic Feynman amplitude leads to  a number of immediate consequences. We briefly mention just a few.

\begin{defn} The \emph{representation} associated to a motivic Feynman period  $I^{\mm}_G(\omega)$ is the representation of $C_{Q,M}$ which it generates:
$$V_G(\omega)\subset  \HF^{\mm}_{Q,M}\ . $$
The \emph{Galois conjugates} of a motivic Feynman period are the elements of $V_G(\omega)$, or equivalently, the elements of the comodule 
it generates under the coaction $(\ref{eqn: CoactionImG})$. 
\end{defn} 

\begin{enumerate}
  \setlength\itemsep{1em}
\item (Weight filtration). We can  speak of the weights of Feynman periods. Say that a motivic Feynman period $I^{\mm}_G(\omega)$ is \emph{of weight at most $n$}  if 
$$I^{\mm}_G(\omega) \in W_n \HF_{Q,M}^{\mm}$$
This holds in particular if $\omega \in W_n \omega^{\gen}_{dR} (\mot_G)$. Note that the weight is a filtration, not a grading, except in very special circumstances. We can say that a Feynman period is of weight $\leq n$ if it is the period $\per \,I^{\mm}_G(\omega) $ of a motivic Feynman period  $I^{\mm}_G(\omega)$ of weight $\leq n$.\footnote{The Hodge-theoretic weights here are double those commonly used in physics and the field of multiple zeta values and polylogarithms. For example, $\zeta(3)$ has weight  6 and not 3. If one wishes to divide the weight by two,  one will encounter  half-integral weights. The simplest example of this phenomenon occurs already for 
amplitudes which are given by  elliptic integrals.}

\item (Picard-Fuchs equations). There is an integrable   connection 
$$\nabla: \HF^{\mm}_{Q,M} \To  \HF^{\mm}_{Q,M} \otimes_{k_{Q,M}} \Omega^1_{k_{Q,M}} \ .$$
See \cite{NotesMot} \S7.4 for its compatibilities with the period homomorphism and coaction.
It  follows from general theory and the construction of $\mot_G$ that Feynman periods are solutions to differential equations of Fuchsian type.

\item (Invariants attached to $V_G(\omega)$). The representation $V_G(\omega)$ of a  Feynman period carries  a barrage of new information. This is discussed in \cite{NotesMot}. For example, a motivic Feynman period has a \emph{rank} (the dimension of $V_G(\omega)$); a  \emph{Hodge polynomial}; and a filtration by the \emph{unipotency degree} (this is the coradical filtration on $\HF_{Q,M}^{\mm}$ induced by the coaction $(\ref{eqn: DeltaonHF})$.).

\item (Mixed Tate amplitudes).     Call a motivic Feynman period   \emph{mixed Artin-Tate} if it is equivalent to a motivic period of a
variation of Hodge-Tate type (all Hodge numbers $h_{p,q}=0$ unless $p=q$). Call it \emph{mixed Tate} if it is equivalent to a motivic period of $H$ where  $\gr_{n}^W H$ is zero if $n$ is odd, and  a direct sum of constant Tate objects
$\Q(-k)$ if $k=2n$ is even. 
 The weight filtration is a grading on such motivic periods. 
This case is discussed  in \S\ref{sectsvandsymbols}. 

\end{enumerate}
 \begin{rem} The Betti realisation of $\mot_G$ is a local system on the complex points of some open   $S=K^{\gen}_{Q,M}\backslash L_G$,  where $L_G$ is a closed subscheme we call  the Landau variety\footnote{This is more complex \cite{BrFeyn} than the Landau variety as commonly understood in the 
 physics literature. The Landau equations in the classical  sense describe  a certain subset, but by no means all,  of the  components of  $L_G$.}
   of $G$. The fiber of this local system at a point $s$ carries an action of the fundamental group $\pi_1^{\mathrm{top}}( K^{\mathrm{gen}}_{Q,M}(\C) \backslash L_G (\C), s),$ corresponding to  monodromy of amplitudes, and  commutes with the action of $C_{Q,M}$.  In these notes, it plays a subordinate role  because we do not know how to control $L_G$ in general. 
   \end{rem} 

\subsection{Face relations} The recursive structure of  graph motives gives rise to relations between 
the periods of different Feynman graphs.

Let $G$ be a \motic Feynman graph. From theorem $\ref{thm: recursive}$, the irreducible components of the divisor $D\subset P^G$ define morphisms `inclusion of facets':
\begin{eqnarray}  
i_{e}\quad : \quad \qquad \qquad \qquad  P^{G/e} \backslash Y_{G/e} & \hookrightarrow&  P^G \backslash Y_G  \\ 
i_{\gamma} \quad: \quad (P^{\gamma}\backslash Y_{\gamma})  \times (P^{G/\gamma} \backslash Y_{G/\gamma}) &  \hookrightarrow &  P^G \backslash Y_G 
\nonumber
\end{eqnarray}
where $e\in E_G$ or $\gamma \subsetneq E_G$ is a \motic subgraph. 
They induce morphisms of relative cohomology as described in \cite{NotesMot}, \S10.3.

\begin{thm}  \label{thmFaceRel} The  maps $i_e, i_{\gamma}$ induce morphisms 
\begin{eqnarray} i_e: \mot_{G/e}  &\To &   \mot_{G}  \label{eqn: facemorphismsforG} \\ 
 i_{\gamma} : \mot_{\gamma} \otimes \mot_{G/\gamma} & \To&  \mot_G \nonumber 
\end{eqnarray} 
in the category $\HH(S)$, where $S$ is a Zariski open subset of $K^{\gen}_{Q,M}$ on which the above objects are defined.  On the Betti realisation, this map 
gives
$$\omega_B (i_e)^{\vee}  \,  [\sigma_G]  = [\sigma_{G/e}] \qquad \hbox{ and } \qquad  \omega_B (i_{\gamma})^{\vee}  \, [\sigma_G]  = [\sigma_{\gamma}] \otimes [\sigma_{G/\gamma}] \ .$$
This implies equalities of motivic periods  in $\HF^{\mm}_{Q,M}$
\begin{eqnarray} \label{eqn: Faceequations}
  [\mot_{G/e}, \omega, \sigma_{G/e}]^{\mm }  & = &   [\mot_{G},  \omega_{dR}^{\gen} (i_e)(\omega), \sigma_{G}]^{\mm }   \\ 
 \qquad  {[}\mot_{\gamma}, \omega_1, \sigma_{\gamma}]^{\mm } 
\times  [\mot_{G/\gamma}, \omega_2, \sigma_{G/\gamma}]^{\mm } &  =  &  [\mot_{G},  \omega_{dR}^{\gen}  (i_{\gamma})(\omega_1\otimes \omega_2), \sigma_{G}]^{\mm }  \ . \nonumber
\end{eqnarray} 
\end{thm}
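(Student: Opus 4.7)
The plan is to construct both morphisms $i_e$ and $i_\gamma$ uniformly by invoking the face-map construction of \cite{NotesMot} \S10.3, which associates, to any smooth component $D_{\bullet}$ of the canonical NCD $D \subset P^G$ meeting $Y_G$ compatibly, a Gysin/residue morphism in $\HH(S)$ from the relative cohomology in degree $N_G-2$ of the pair $(D_{\bullet}\backslash(D_{\bullet}\cap Y_G),(D\cap D_{\bullet})\backslash(\ldots))$ into $\mot_G$. This is the classical connecting homomorphism in the long exact sequence of a pair, lifted to the category of realisations. Once this is granted, the task reduces to (a) identifying the source pair in each case, and (b) computing how the relevant Betti cycle pulls back.

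For the source identification, I apply theorem \ref{thm: recursive} together with the intersection rules of theorem \ref{prop: PBstructure}. For $D_e$, theorem \ref{thm: recursive} gives $D_e \backslash (D_e \cap Y_G) \cong \{\mathrm{pt}\} \times (P^{G/e}\backslash Y_{G/e})$, and by remark \ref{rem: moticintrinsic} together with theorem \ref{thm: moticproperties} $(iv)$ the remaining components of $D$ trace out on $D_e$ precisely the canonical NCD $D^{G/e} \subset P^{G/e}$; hence the source pair computes $\mot_{G/e}$ in degree $N_{G/e}-1 = N_G-2$ and the face map yields $i_e : \mot_{G/e}\to\mot_G$. For $D_\gamma$, theorem \ref{thm: recursive} gives $D_\gamma\backslash(D_\gamma\cap Y_G) = (P^\gamma\backslash Y_\gamma)\times(P^{G/\gamma}\backslash Y_{G/\gamma})$ and, by theorem \ref{thm: moticproperties} $(i)$, $(ii)$, the other components of $D$ restrict to the product NCD $(D^\gamma\times P^{G/\gamma})\cup(P^\gamma\times D^{G/\gamma})$. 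Künneth then identifies the cohomology of this pair of pairs in degree $(N_\gamma-1)+(N_{G/\gamma}-1) = N_G-2$ with $\mot_\gamma\otimes\mot_{G/\gamma}$, and the face map produces $i_\gamma : \mot_\gamma\otimes\mot_{G/\gamma}\to\mot_G$.

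For the Betti statement I use that the dual of the Gysin morphism on homology is precisely the boundary map of the long exact sequence of a pair, which sends a relative cycle to its trace on the chosen facet. Corollary \ref{cor: facetstructure} and definition \ref{defnsigmaGclass} give $\sigma_G \cap D_e(\C) = \{\mathrm{pt}\}\times\sigma_{G/e}$ and $\sigma_G \cap D_\gamma(\C) = \sigma_\gamma \times \sigma_{G/\gamma}$; the first reads directly as $\omega_B(i_e)^\vee[\sigma_G] = [\sigma_{G/e}]$, and the second, via Künneth, as $\omega_B(i_\gamma)^\vee[\sigma_G] = [\sigma_\gamma]\otimes[\sigma_{G/\gamma}]$. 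The period equalities \eqref{eqn: Faceequations} then follow formally: for any morphism $f : M \to N$ in $\HH(S)$, the compatibility $f_B \circ c_M = c_N \circ f_{dR}$ yields the adjunction
$$[N, f_{dR}(\omega), \tau]^\mm = [M, \omega, f_B^\vee(\tau)]^\mm \ .$$
Applying this with $f = i_e$ and $\tau = [\sigma_G]$ gives the first identity; applying it with $f = i_\gamma$ and $\tau = [\sigma_G]$, and then splitting $[\mot_\gamma\otimes\mot_{G/\gamma},\omega_1\otimes\omega_2,\sigma_\gamma\otimes\sigma_{G/\gamma}]^\mm$ as a product by the multiplicativity of matrix coefficients on tensor products, gives the second.

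The functorial and Hodge-theoretic content is entirely supplied by \cite{NotesMot} \S10.3 (compatibility of the Gysin map with $c$ and with the weight filtration). The only paper-specific work, and the main obstacle, is combinatorial: one must verify that under the isomorphisms of theorem \ref{thm: recursive} the ambient NCD $D$ cuts out on each facet \emph{precisely} the canonical NCD of the relevant blow-up (or product of blow-ups), so that Künneth gives $\mot_\gamma\otimes\mot_{G/\gamma}$ on the nose. This is exactly what is guaranteed by theorem \ref{thm: moticproperties} $(i),(ii),(iv)$, which sets up the required bijection between \motic subgraphs of $G$ meeting a given facet and \motic subgraphs of the corresponding factor or quotient.
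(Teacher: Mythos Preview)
Your proposal is correct and follows essentially the same route as the paper: identify each facet $D_e$ or $D_\gamma$ as an isomorphism of stratified spaces via theorem~\ref{thm: recursive} (your explicit verification that the ambient NCD restricts correctly is exactly what the paper packages into the phrase ``isomorphism of stratified spaces''), invoke the face-map construction from \cite{NotesMot} \S10.3--10.4, and read off the Betti statement from corollary~\ref{cor: facetstructure}. The paper's proof is terser on the combinatorial bookkeeping and on the derivation of the period equalities, but the logical structure is identical.
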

\begin{proof}
Theorem  $\ref{thm: recursive}$ implies that the facet map
$$ i_e: P^{G/e} \backslash Y_{G/e} \overset{\sim}{\rightarrow}  D_e \backslash (Y_{G}\cap D_e) $$
is an isomorphism of stratified spaces, where  the stratification is induced by the divisor $D_e \cap (D \backslash D_e)$  on both sides of the equation.  The face morphisms  defined in \cite{NotesMot} \S10.4,  therefore  define the required morphism
$\mot_{G/e} \rightarrow \mot_G  $ 
in the category $\HH(S)$. In the Betti realisation, this map corresponds to taking the boundary component of a relative homology cycle which is contained in $D_e(\C)$.
By corollary $\ref{cor: facetstructure}$ this gives exactly $\sigma_G \cap D_e(\C)  \cong \widetilde{\sigma}_G \cap D_e(\C) \cong  \widetilde{\sigma}_{G/e}\cong \sigma_{G/e}$.

The corresponding equation in the case of the face map $i_{\gamma}$ follows from a  similar argument, using  the isomorphism of stratified spaces  $$ i_{\gamma}: ( P^{\gamma} \backslash Y_{\gamma})   \times ( P^{G/\gamma} \backslash Y_{G/\gamma}) \overset{\sim}{\rightarrow}  D_{\gamma} \backslash (Y_{G}\cap D_{\gamma})  \ ,$$
 where the stratification on the left is the product of the stratifications induced by the divisors $(\ref{eqn: Ddivisordefn})$ on $P^{\gamma}$ and $P^{G/\gamma}$,
 which follows from theorem $\ref{thm: recursive}$.
\end{proof}

This theorem implies an analogous statement corresponding to the inclusions of faces of higher codimension. We shall call $(\ref{eqn: Faceequations})$
the face relations.\footnote{Although the face relations are stated here as identities between periods of a category of Betti and de Rham realisations, it is obvious that they should be true `motivically' for any reasonable definition of the word, since they come from the morphisms $i_e,i_{\gamma}$ of algebraic varieties.} 
They are preserved by the action of the cosmic Galois group  $C_{Q,M}$. 

\subsection{Multiplicative structure} 
 It is important to observe that in the second equation of $(\ref{eqn: Faceequations})$, one  of $\gamma$ or $G/\gamma$ has no dependence on 
external kinematics $(\ref{Deltaallornothing})$, and hence defines a constant motivic  period in the sense of \cite{NotesMot} \S7.3.1.
Therefore even if one is only interested in processes with $Q$ external momenta and $M$ non-zero particle masses, one is inexorably led to consider the 
case of Feynman integrals with no kinematics (these   only depend on the  graph polynomial $\Psi_G$ and not $\Xi_G(q,m)$).  The  following proposition gives another example of the special role played by  periods of Feynman graphs
of type $(0,0)$.

\begin{prop}
The vector space  $\HF^{\mm}_{Q,M}$ is closed under multiplication by elements of $\HF^{\mm}_{0,0}$. In other words, multiplication defines  a map
$$\HF^{\mm}_{0,0} \otimes_{\Q} \HF^{\mm}_{Q,M} \To \HF^{\mm}_{Q,M}\ .$$
Thus the space of Feynman periods of type $(0,0)$ is a commutative algebra, and the space
of Feynman periods of type $(Q,M)$ is a module over it.
\end{prop}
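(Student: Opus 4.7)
The plan is to reduce the statement to the face relation of Theorem \ref{thmFaceRel} applied to a disjoint union. Since $\HF^{\mm}_{0,0}$ and $\HF^{\mm}_{Q,M}$ are spanned by motivic Feynman periods $I^{\mm}_H(\eta)$ (with $H$ \motic of type $(0,0)$) and $I^{\mm}_G(\omega)$ (with $G$ \motic of type $(Q,M)$) respectively, it suffices to show that such a product is itself a motivic Feynman period of a \motic graph of type $(Q,M)$.

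First I would form the disjoint union $\widetilde{G} = G \sqcup H$, equipped with the kinematic data inherited from $G$ (since $H$ carries none). This is a Feynman graph of type $(Q,M)$. Next I would verify that $\widetilde{G}$ is \motic: by the remark following the definition in \S\ref{sect: Section3}, a graph is \motic iff each of its connected components is, and both $G$ and $H$ are \motic by hypothesis. Similarly, $H\subsetneq \widetilde{G}$ is an edge subgraph which, since it contains none of the external kinematic vertices, is not mass-momentum spanning; then every edge subgraph of $H$ is trivially m.m.\ in $H$, and the \motic condition for $H$ inside $\widetilde{G}$ reduces to $H$ being $1$PI, which holds because $H$ is \motic of type $(0,0)$ (see Remark \ref{rem: moticintrinsic}). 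Finally, the quotient $\widetilde{G}/H$ is obtained by collapsing each connected component of $H$ to a single vertex disjoint from $G$; modulo the equivalence relation $G \sim G\cup\{v\}$ used throughout \S\ref{sect:  moticHopfalgebra}, we have $\widetilde{G}/H \sim G$.

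At this point I would invoke the second face relation of Theorem \ref{thmFaceRel} applied to the \motic subgraph $H \subset \widetilde{G}$, which gives
\[
  [\mot_H,\eta,\sigma_H]^{\mm}\cdot[\mot_G,\omega,\sigma_G]^{\mm}
  \;=\; [\mot_{\widetilde G},\, \omega^{\gen}_{dR}(i_H)(\eta\otimes\omega),\, \sigma_{\widetilde G}]^{\mm},
\]
inside $\Pe^{\mm}_{Q,M}$. Since $\widetilde{G}$ is \motic of type $(Q,M)$, the right-hand side lies in $\HF^{\mm}_{Q,M}$ by definition, which is exactly what we want. The case $(Q,M)=(0,0)$ specialises to the fact that $\HF^{\mm}_{0,0}$ is an algebra.

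The only real point requiring care, and the one I would treat most carefully, is ensuring that the equivalence $\widetilde{G}/H \sim G$ (via removal of isolated vertices) really identifies the two motives and Betti chains on the nose, so that $[\mot_{\widetilde{G}/H},\omega,\sigma_{\widetilde{G}/H}]^{\mm} = [\mot_G,\omega,\sigma_G]^{\mm}$; this is essentially built into the conventions of \S\ref{sect: Section1} (graph polynomials depend only on equivalence classes) and \S\ref{sect: SectionGraphMotives}, since adding an isolated vertex changes neither $\Psi$, $\Xi$, nor the chain $\sigma_G$. Apart from this bookkeeping, the proof is a direct application of Theorem \ref{thmFaceRel}.
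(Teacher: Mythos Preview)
Your argument is correct, but it proceeds by a different construction than the paper's. The paper does not form the disjoint union $\widetilde G=G\sqcup H$; instead it chooses a vertex $v$ of the type-$(Q,M)$ graph, \emph{inserts} the type-$(0,0)$ graph $\gamma$ at $v$ (redistributing the edges of $G$ incident to $v$ among the vertices of $\gamma$), and observes that $\gamma$ is a \motic subgraph of the resulting graph with quotient isomorphic to the original type-$(Q,M)$ graph. Both proofs then conclude by the second face relation of Theorem~\ref{thmFaceRel}.

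Your disjoint-union route is arguably cleaner: you avoid any choice of vertex or attachment, and the identification $\widetilde G/H\sim G$ is transparent (collapsing $H$ leaves only isolated vertices, which are invisible to $\Psi$, $\Xi$, $B_G$ and hence to $\mot_G$). The paper's insertion route has the feature that the product is realised inside a graph with no new connected components; this fits the operadic viewpoint alluded to in Remark following $(\ref{Deltaallornothing})$, and shows in particular that one never needs to enlarge the class of graphs beyond those with the same number of components as $G$. The one bookkeeping point you flagged (that $\mot_{\widetilde G/H}=\mot_G$ via the equivalence $G\sim G\cup\{v\}$) is indeed harmless for exactly the reason you give: adding an isolated vertex changes neither $E_G$, nor the graph polynomials, nor the set $B_G$ of \motic subgraphs, so $P^G$, $Y_G$, $D$ and $\sigma_G$ are literally unchanged.
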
 
\begin{proof} Let $\gamma$ be a \motic graph of type $(0,0)$, and let  $H$ be a \motic Feynman graph of type $(Q,M)$.  Choose any vertex $v$ of $H$. By inserting $\gamma$ into $v$ and attaching the edges of $H$ (both internal and external) which are incident to $v$ to  vertices of $\gamma$ in any way, 
defines a new graph $G$ of type $(Q,M)$, such that $\gamma \subset G$ is \motic and $G/\gamma \cong H$ is also \motic\!\!. By theorem $\ref{thm: moticproperties}$ $(ii)$, 
$G$ is \motic\!\!.  Now apply the  face equation  in the second line of $(\ref{eqn: Faceequations})$. This proves that a product of  motivic periods of $\gamma$
and $H$ are motivic periods of $G$. 
\end{proof}

The same statement is evidently true also for de Rham periods: multiplication yields a map
$\HF^{\dR}_{0,0} \otimes_{\Q} \HF^{\dR}_{Q,M} \To \HF^{\dR}_{Q,M},$ and $\HF^{\dR}_{0,0}$ is an algebra.

\subsection{Single-valued amplitudes and symbols} \label{sectsvandsymbols}
The right-hand side of the coaction involves de Rham Feynman periods. 
It  is an  important problem to determine properties of the right-hand side  of the coaction and try to interpret these quantities physically. 
Note that they  do not have periods.

As a substitute, we have a notion of \emph{single-valued period} \cite{NotesMot} \S 8.3. Restricting to the space of de Rham Feynman amplitudes, it is a homomorphism  $$\s^{\mm}: \HF^{\dR}_{Q,M} \To \Pe^{\mm}_{Q,M} \otimes_{\Q} \overline{\Pe}^{\mm}_{Q,M} \ .$$
Composing with the period homomorphism defines a real analytic single-valued function  on $K^{\mathrm{gen}}_{Q,M}(\C)$, with possible poles, which we call the \emph{single-valued period}.

In the case when a de Rham Feynman period $\xi$ is \emph{unipotent} \cite{NotesMot} \S9.2, we can associate various notions of symbol to it. 
This holds in particular when $\xi$  is mixed Artin-Tate, and therefore can be applied to large classes of Feynman periods. 
The \emph{symbol} is a class in the reduced bar construction 
$$\mathrm{smb}(\xi) \in H^0 (\mathbb{B}(\Omega^1_{k_{Q,M}}))$$
and has an invariant called its \emph{length}. The cohomological symbol $\mathrm{cmb}_{\ell}(\xi)$  is the length-$\ell$ part of the symbol
of length $\leq \ell$ and can be viewed as  an integrable word in logarithmic one-forms  on $k_{Q,M}$. See \cite{NotesMot} \S9.4.
This notion of symbol is a generalisation of the notion of symbol  used extensively in the physics literature.

\subsection{Total Feynman motive}
Using the  face maps we can take the limit over all graphs of type $(Q,M)$  and assemble them into a single object. 

\begin{defn} The \emph{total (convergent) Feynman motive} of type $(Q,M)$ is 
\begin{equation} \label{TotalMotive} \mot_{Q,M} =\varinjlim_G   \mot_G
\end{equation} 
where the limit is over all \motic graphs of type $(Q,M)$ and the morphisms are given by the face maps $i_e$  of $(\ref{eqn: facemorphismsforG})$. It is an ind object
of $\mathrm{Rep} \, (C_{Q,M})$.  
\end{defn}

The periods of $(\ref{TotalMotive})$ do not \emph{a priori}
contain the renormalised amplitudes of graphs. It would be interesting to write down the object which does capture the  amplitudes of all graphs of type $(Q,M)$
after renormalisation.  Note also that $(\ref{TotalMotive})$ has a number of variants: for instance
one can take the limit over all planar graphs, or indeed any family of graphs closed  under the operation of contracting edges.

\section{Weights and stability} \label{sect: Weightsstability}
We now apply results on weight filtrations from \cite{NotesMot} \S 9.

\subsection{Motives of descendants} We  attach a motive
to the  \motic descendants of graphs (defined in  \S\ref{sect:Moticdescendants}),    in the following way.
To a \motic descendent $\gamma_1 \otimes \ldots \otimes \gamma_n$ of $G$ of type $(Q,M)$,  assign the object 
$$\mot(\gamma_1 \otimes \ldots \otimes \gamma_n) = \mot(\gamma_1) \otimes \ldots \otimes \mot(\gamma_n) \qquad \in \qquad \HH(S) \ , $$
where $S$ is some Zariski-open in the space of kinematics $K^{\gen}_{Q,M}$. Define 
a Betti class in  $(\omega_{B}\mot(\gamma_1 \otimes \ldots \otimes \gamma_n) )^{\vee}$ by 
 $$[\sigma_{\gamma_1 \otimes \ldots \otimes \gamma_n}] = [\sigma_{\gamma_1} ] \times \ldots \times [\sigma_{\gamma_n}]\ .$$
Exactly one of the graphs $\gamma_i$ is of type $(Q,M)$; the others are of type $(0,0)$.
There is a canonical  face map $\mot(\gamma_1) \otimes \ldots \otimes \mot(\gamma_n) \rightarrow \mot(G)$ that sends 
$\sigma_{\gamma_1 \otimes \ldots \otimes \gamma_n}$ to $\sigma_G$, defined by inclusion of the corresponding 
face of $\sigma_G$, or by iterating $(\ref{eqn: facemorphismsforG})$. By the face equations $(\ref{eqn: Faceequations})$  the motivic periods of 
descendants of $G$ (i.e.,  the motivic periods $[\mot(\gamma_1\otimes \ldots \otimes \gamma_n), [\sigma_{\gamma_1 \otimes \ldots \otimes \gamma_n}], [\omega]]^{\mm} $) are  also motivic periods of $G$.

The degree $(\ref{degreedefn})$ coincides with the cohomological degree of the corresponding motive, and the dimension of the corresponding
facet in the Feynman polytope.

\begin{lem}  Let $G$ be a Feynman graph of type $(Q,M)$. Then $W_0 \,\mot(G) = \Q(0)$. 
\end{lem}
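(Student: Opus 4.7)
Setting $U = P^G \setminus Y_G$ and $D_U = D \setminus (D\cap Y_G)$, we have $\mot_G = H^{N_G-1}(U, D_U)$. The plan is to identify $W_0$ of this relative cohomology with the top reduced cohomology of the nerve $\mathcal{N}$ of $D_U$, and then to recognise $\mathcal{N}$ as the boundary of the Feynman polytope $\widetilde\sigma_G$, which is topologically an $(N_G-2)$-sphere. The case $N_G=1$ is trivial since then $P^G$ is a point.

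First, I would apply the long exact sequence of the pair $(U,D_U)$ and use strict compatibility with the weight filtration. Since $U$ and every component of $D$ is smooth (theorem \ref{prop: PBstructure}), $W_0 H^k(V)=0$ for any smooth $V$ and $k\geq 1$. The sequence therefore collapses to give
\[
W_0\mot_G \;\cong\; \widetilde H^{N_G-2}\bigl(\mathcal{N};\Q\bigr)\otimes\Q(0),
\]
which is a standard consequence of the Mayer--Vietoris spectral sequence for $D_U$: only the row $q=0$ survives at weight zero, and it computes \v{C}ech cohomology of $\mathcal{N}$ once one knows that each non-empty $D_I\cap U$ is connected. By construction the answer carries trivial Tate structure, so it automatically lies in $\Q(0)^{\oplus d}$ for some $d$.

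The identification of $\mathcal{N}$ is then purely combinatorial. By corollary \ref{cor: facetstructure} and theorem \ref{thm: recursive}, the facets of $\widetilde\sigma_G \subset P^G(\R)$ correspond bijectively to the irreducible components of $D$, and a collection of components has non-empty intersection in $P^G$ precisely when the corresponding facets of $\widetilde\sigma_G$ share a common face. Thus the nerve of $D$ in $P^G$ is the order complex of the face poset of $\partial\widetilde\sigma_G$, whose geometric realisation is homeomorphic to $S^{N_G-2}$ because $\widetilde\sigma_G$ is a compact convex manifold with corners of dimension $N_G-1$. To pass from $D$ to $D_U$ I would use theorem \ref{thmsigmaGavoidsY}, giving $\widetilde\sigma_G\cap Y_G=\emptyset$, so each face of $\widetilde\sigma_G$ lies in $U$ and hence every stratum $D_I$ meets $U$; irreducibility of $D_I = P^{B^I}\times P^{B_I}$ (theorem \ref{thm: recursive}) then gives connectedness of $D_I\cap U$. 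Consequently the nerves of $D_U$ and of $D$ in $P^G$ agree, and
\[
W_0\mot_G = \widetilde H^{N_G-2}\bigl(S^{N_G-2};\Q\bigr)\otimes\Q(0) = \Q(0).
\]

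The main obstacle is the mixed Hodge theoretic input: the fact that $W_0$ of the relative cohomology of $(U,D_U)$ equals the top reduced cohomology of the nerve must be either referenced from \cite{NotesMot} \S9 or established via the weight spectral sequence, with some care about small cases (e.g.\ $N_G=2$) where the long exact sequence argument degenerates and the reduction to $\widetilde H^0$ appears from the non-zero diagonal map $W_0 H^0(U)\to W_0 H^0(D_U)$.
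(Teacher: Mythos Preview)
Your proposal is correct and follows essentially the same route as the paper: the paper invokes corollary 10.6 of \cite{NotesMot} to identify $W_0\mot(G)$ with the top cohomology of the incidence complex of $D$, and then observes that this complex is the boundary of the Feynman polytope $\widetilde{\sigma}_G$, hence an $(N_G-2)$-sphere. You have unpacked the Hodge-theoretic input (smoothness forces $W_0H^k=0$ for $k\geq1$, so only the nerve contributes) and added the check that removing $Y_G$ does not alter the nerve, which the paper leaves implicit; the one cosmetic slip is that irreducibility of $D_I$ comes from theorem \ref{prop: PBstructure} rather than theorem \ref{thm: recursive}, and ``convex'' is not quite the right word for $\widetilde{\sigma}_G$ inside $P^G(\R)$, though contractibility (hence $\partial\widetilde{\sigma}_G\simeq S^{N_G-2}$) is all you use.
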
 
\begin{proof} Let $N_G=|E_G| \geq 2$. Apply the results from \cite{NotesMot}, corollary 10.6.  The irreducible components of $D\subset P^G$, defined in $(\ref{eqn: Ddivisordefn})$,
are in one-to-one correspondence with the facets of the Feynman polytope $\widetilde{\sigma}_G$, whose  boundary is homotopic to an $n-1$-sphere, where $n=N_G-1$.  Its cohomology in degree $n-1$ is one-dimensional.  The case when $N_G\leq 1$ is trivial, since $P^G$ is a just a point.  
\end{proof}

\begin{rem} \label{rem:  mot(triv)isQ(0)} If $\deg(G)=0$ then $\mot(G) = H^0(\Spec\, \Q)= \Q(0)$.  It follows that there are only finitely many motives
attached to the set of all possible tensors $\gamma_1 \otimes \ldots \otimes \gamma_n$ of bounded degree. 
\end{rem} 

\subsection{Weight  relations} 
Let $G$ be a Feynman graph of type $(Q,M)$. Recall  that $\mot(G) = H^{N_G-1} (P^G \backslash Y_G, D \backslash (D \cap Y_G))$.
Let $D^{(k)}$ denote the union of the $k$-dimensional facets of $D$. 
Then there is a morphism 
$$W_k H^{k} ( D^{(k)} \backslash D^{(k)} \cap Y_G) \To W_k \mot(G)$$
which is surjective (proposition 10.4 in \cite{NotesMot}).  The dual map on Betti homology
sends the class of the Feynman polytope $[\sigma_G]$ to the  class of the union $[\sigma^{(k)}_G]$ of its $k$-dimensional facets.
This implies the following theorem. 
\begin{thm}  \label{thmkskeleton} Every motivic period of $\mot(G)$ of weight $\leq k$ is equivalent to a motivic period of the form 
$[H^{k} ( D^{(k)} \backslash D^{(k)} \cap Y_G), [\sigma^{(k)}_G],  [\omega]]^{\mm}$.
\end{thm}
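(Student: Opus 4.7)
The plan is to read off the theorem as a formal consequence of the surjection
\[ W_k\, H^{k}(D^{(k)}\setminus D^{(k)}\cap Y_G) \twoheadrightarrow W_k\,\mot(G) \]
(cited from Proposition 10.4 of \cite{NotesMot}) combined with the naturality of motivic periods with respect to morphisms in $\HH(S)$. Essentially all the geometric content is packaged in that surjection; only the matrix-coefficient formalism remains.

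Concretely, first I would pick an arbitrary motivic period of $\mot(G)$ of weight $\leq k$. By definition of the weight filtration on $\HF^{\mm}_{Q,M}$ applied to the presentation $\xi=[\mot(G),[\sigma_G],[\omega]]^{\mm}$, one may choose $[\omega]\in W_k\,\omega_{dR}^{\gen}(\mot(G))$. Write $f\colon H^{k}(D^{(k)}\setminus D^{(k)}\cap Y_G)\to \mot(G)$ for the morphism of the cited proposition. By the surjectivity of $W_k\,\omega_{dR}^{\gen}(f)$, I can lift $[\omega]$ to a class $[\omega']\in W_k\,\omega_{dR}^{\gen}(H^{k}(D^{(k)}\setminus D^{(k)}\cap Y_G))$ with $f_{dR}([\omega'])=[\omega]$.

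The second step is to apply the standard functoriality identity for matrix coefficients (see \cite{NotesMot} \S2.2, \S8.2): for any morphism $g\colon V\to W$ in $\HH(S)$, any Betti cycle $\sigma\in\omega_B(W)^{\vee}$ and de Rham class $\tau\in\omega_{dR}^{\gen}(V)$, one has $[W,\sigma,g_{dR}(\tau)]^{\mm}=[V,g_B^{\vee}(\sigma),\tau]^{\mm}$. Applied to $f$ with $\sigma=[\sigma_G]$ and $\tau=[\omega']$, and using the already-established identification $f_B^{\vee}[\sigma_G]=[\sigma_G^{(k)}]$, this yields
\[ \xi = [\mot(G),[\sigma_G],[\omega]]^{\mm} = \bigl[H^{k}(D^{(k)}\setminus D^{(k)}\cap Y_G),\,[\sigma_G^{(k)}],\,[\omega']\bigr]^{\mm}, \]
which is the claimed form.

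The main obstacle in this argument is not in the manipulations above, which are purely formal, but rather in the surjectivity statement $W_k\,H^{k}(D^{(k)}\setminus D^{(k)}\cap Y_G)\twoheadrightarrow W_k\,\mot(G)$. This rests on the Deligne weight spectral sequence for the open pair $(P^G\setminus Y_G,\,D\setminus D\cap Y_G)$ filtered by the skeleton dimension of $D$; the statement expresses the fact that $W_k$ of the relative cohomology is built entirely from the cohomology of strata of dimension $\leq k$. Once that general input from \cite{NotesMot} is granted, the rest of the argument is essentially bookkeeping, and in particular requires no further geometric information about $Y_G$ or $D$ beyond what is already recorded in theorem \ref{thm: recursive}.
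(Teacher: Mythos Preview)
Your proposal is correct and is exactly the argument the paper has in mind: the paper states the surjection $W_k H^{k}(D^{(k)}\setminus D^{(k)}\cap Y_G)\twoheadrightarrow W_k\,\mot(G)$ (citing Proposition 10.4 of \cite{NotesMot}) together with the Betti compatibility $f_B^{\vee}[\sigma_G]=[\sigma_G^{(k)}]$, and then simply writes ``This implies the following theorem'' without further detail. You have correctly spelled out the implied matrix-coefficient manipulation.
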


Its period is a $k$-dimensional integral 
$$\int_{\sigma_G^{(k)}} \omega$$
where $[\omega] \in  H_{dR}^{k} ( D^{(k)} \backslash D^{(k)} \cap Y_G)$.
By triangulating the domain of integration into affine regions, as discussed in Appendix 1, and taking limits,  this integral
can be written as a sum of regularised limits of integrals over each facet of $D^{(k)}$. Since each facet is isomorphic to a  product of graph hypersurface
complements of motic descendants of $G$ of total degree $\leq k$, this gives a heuristic justification, via the argument of  \S\ref{limitsandreg},
for  conjecture \ref{conjIntro}.  The missing ingredient is to define a notion of regularisation on the level of motivic periods, which is beyond the scope of the notes. This would also have applications to the theory of renormalisation.

\begin{rem} It is not the case that $P^G \backslash Y_G$
are affine,  since there exist  cohomology classes in $P^G \backslash Y_G$ of degree  greater than its dimension.
Therefore proposition 10.7 in \cite{NotesMot} cannot be applied, and we cannot deduce that   the face maps surject onto $W_k \mot_G$, 
which would have implied that all motivic periods of $\mot_G$ of weight $\leq k$ are in the image of the face maps. 
We do not know, therefore,  whether  the motivic periods of weight $\leq k $ of $\mot(G)$ relative to $\sigma_G$ are generated by 
the motivic periods of weight $\leq k$ of its \motic  descendents of degree $\leq k$, as one might hope.
\end{rem}

\subsection{Stability}
A first application  of this  theory  is to show   that the weight-graded parts of the total motive
$\gr^W_k \mot_{Q,M}$ stabilize. 
\begin{thm} \label{thmstabgrad} Let $G$ be a Feynman graph of type $(Q,M)$. Then $\gr^W_k \mot(G)$ is a sub-quotient of 
$$\bigoplus_{\gamma=\gamma_1 \times \ldots \times \gamma_r} \,  \bigoplus_{i_1 + \ldots + i_r\leq k}  \gr^W_k  H^{i_1}(P^{\gamma_1} \backslash Y_{\gamma_1}) \otimes \ldots \otimes 
H^{i_r} (P^{\gamma_r} \backslash Y_{\gamma_r}) \ $$
 where the  direct sum is over  \motic descendents of $G$ of degree $i_1+ \ldots + i_r \leq k$. 
\end{thm}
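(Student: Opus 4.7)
My proof proposal starts from Theorem \ref{thmkskeleton}: the canonical map $W_k H^k(D^{(k)} \setminus D^{(k)} \cap Y_G) \twoheadrightarrow W_k \mot(G)$ is surjective. Applying the exact functor $\gr^W_k$ (strictness of morphisms of mixed Hodge-type objects in $\HH(S)$ with respect to the weight filtration), we deduce that $\gr^W_k \mot(G)$ is a quotient of $\gr^W_k H^k(D^{(k)} \setminus D^{(k)} \cap Y_G)$.  It therefore suffices to exhibit the latter as a subquotient of the direct sum in the statement.

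The plan for this is to run the Mayer--Vietoris (\v Cech--simplicial) spectral sequence associated to the covering of $D^{(k)}$ by its $k$-dimensional irreducible facets:
$$ E_1^{p,q} \;=\; \bigoplus_{|S| = p+1} H^q\bigl( D_S \setminus (Y_G \cap D_S) \bigr) \;\Longrightarrow\; H^{p+q}\bigl( D^{(k)} \setminus Y_G \cap D^{(k)} \bigr), $$
where $S$ ranges over collections of $k$-dimensional facets and $D_S = \bigcap_{F \in S} F$.  By Theorem~\ref{prop: PBstructure} each nonempty $D_S$ is itself a face of $D$, hence by iterated application of Theorem~\ref{thm: recursive} is canonically isomorphic to a product $P^{\gamma_1} \times \ldots \times P^{\gamma_r}$ in which $\gamma_1 \otimes \ldots \otimes \gamma_r$ is a \motic descendant of $G$ of degree $\dim D_S \le k$; moreover the same recursion gives
$$ D_S \setminus (Y_G \cap D_S) \;\cong\; (P^{\gamma_1} \setminus Y_{\gamma_1}) \times \ldots \times (P^{\gamma_r} \setminus Y_{\gamma_r}). $$

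By the K\"unneth formula in $\HH(S)$ (applied componentwise to the Betti and de Rham realisations, whose comparison isomorphisms are multiplicative), each entry $E_1^{p,q}$ is a direct sum of tensor products $H^{i_1}(P^{\gamma_1} \setminus Y_{\gamma_1}) \otimes \ldots \otimes H^{i_r}(P^{\gamma_r} \setminus Y_{\gamma_r})$ with $i_1 + \ldots + i_r = q$ and the underlying descendant of degree $\le k$.  Contributions to $H^k$ come from $p + q = k$, whence $q \le k$ so that $i_1 + \ldots + i_r \le k$ as required.  Since the spectral sequence is one of objects of $\HH(S)$ and $\gr^W_k$ is exact, it follows that $\gr^W_k H^k(D^{(k)} \setminus Y_G \cap D^{(k)})$ is a subquotient of $\bigoplus_{p+q=k} \gr^W_k E_1^{p,q}$, which is in turn a subquotient of the direct sum appearing in the theorem.

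The main obstacle I expect is setting up the Mayer--Vietoris spectral sequence cleanly in the category $\HH(S)$ of relative realisations over a base and verifying that it is strictly compatible with the weight filtration.  For an isolated smooth variety this is classical, but here we need it in a relative setting (families over an open in $K_{Q,M}$) and compatibly with the face product structure; however the covering of $D^{(k)}$ is by smooth subvarieties meeting in a strict normal crossing fashion (Theorem~\ref{prop: PBstructure}), which reduces the question to the standard simplicial resolution, and the entries appearing at $E_1$ are all cohomologies of smooth varieties whose weight behaviour is controlled in the usual way.  Once this is granted, the remaining steps (the product identification of faces, K\"unneth, and the numerical bookkeeping on degrees) are all direct consequences of results already established.
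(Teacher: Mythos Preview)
Your argument is correct and reaches the same conclusion by essentially the same mechanism: decompose $\gr^W_k \mot(G)$ into contributions from low-dimensional faces, then identify each face via Theorem~\ref{thm: recursive} as a product of graph-hypersurface complements and apply K\"unneth. The difference is only in how you justify the first step. The paper invokes Corollary~10.5 of \cite{NotesMot} directly, which already gives that $\gr^W_k \mot(G)$ is a subquotient of $\bigoplus_{|I|\ge n-k} \gr^W_k H^{n-|I|}(D_I\setminus D_I\cap Y_G)$; this is the standard weight estimate for the relative cohomology spectral sequence of a normal crossing divisor. You instead go through Theorem~\ref{thmkskeleton} (which cites Proposition~10.4 of \cite{NotesMot}) to reduce to the $k$-skeleton, and then run a closed-cover Mayer--Vietoris to break $D^{(k)}$ into its faces. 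This is a legitimate alternative: the simplicial resolution of $D^{(k)}$ by its top-dimensional components exists precisely because $D$ is strict normal crossing (Theorem~\ref{prop: PBstructure}), and strictness of weight filtrations in $\HH(S)$ handles the passage to $\gr^W_k$. The paper's route is shorter (one external citation rather than a spectral-sequence argument you must set up yourself), while yours is more self-contained and makes the combinatorics of the face decomposition explicit. Note that in the paper's version the cohomological degree $i_1+\cdots+i_r$ actually equals the dimension of the face (and hence the degree of the descendant), so both are simultaneously $\le k$; your Mayer--Vietoris bounds them separately, which is also fine since the statement only asks for both inequalities.
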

\begin{proof} Apply corollary of  10.5 of \cite{NotesMot}  to $\mot(G)$. We  deduce that 
$\gr^W_k \mot(G)$ is a sub-quotient of $\bigoplus_{|I|\geq n-k} \gr^W_k H^{n-|I|} (D_I \backslash D_I \cap Y_G)_{/S}$, where $D_I$ are the codimension $|I|$, and hence dimension $n - |I| \leq k$ facets
of $D$. By theorem \ref{thm: recursive},  
$$D_I \backslash ( D_I \cap Y_G)  \cong P^{\gamma_1} \backslash Y_{\gamma_1}  \times \ldots \times 
 P^{\gamma_r} \backslash Y_{\gamma_r} \ ,$$
 where $\gamma_1 \otimes \ldots  \otimes \gamma_r$ is a descendant of $G$ of degree $n - |I|$. 
The statement follows from the Kunneth formula.  (Note that the degree of $\gamma_{i_j}$ is unrelated to $i_{j}$.) \end{proof} 
The theorem gives a constraint on the Hodge polynomials  of the motivic periods of $G$ of weight $\leq k$.
Combined with the Galois coaction, this gives a constraint on motivic periods to  all orders.
For example,  we can deduce the following corollary. 
\begin{cor} Suppose that all 1PI  graphs of type $(0,0)$ up to $N+1$ edges
have mixed Tate cohomology in all degrees. Then the Galois conjugates of any motivic Feynman amplitude of type $(0,0)$
which is  of weight $\leq N$ is mixed Tate. 
\end{cor}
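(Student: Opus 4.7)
The plan is to reduce the corollary to Theorem \ref{thmstabgrad} via the observation that the coaction preserves the weight filtration. Since $I^{\mm}_G(\omega)$ has weight $\leq N$, the class $[\omega]$ can be represented by a relative de Rham class in $W_N\,\omega^{\gen}_{dR}(\mot_G)$, so $I^{\mm}_G(\omega)$ is a motivic period of the sub-object $W_N \mot(G)$. Because the coaction $(\ref{eqn: CoactionImG})$ is strictly compatible with the weight filtration, every Galois conjugate of $I^{\mm}_G(\omega)$ is also a motivic period of $W_N \mot(G)$. It therefore suffices to show that $W_N \mot(G)$ is mixed Tate, i.e.\ that $\gr^W_{2k+1}\mot(G) = 0$ and $\gr^W_{2k}\mot(G)$ is a direct sum of copies of $\Q(-k)$ for all $2k \leq N$.

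For each $k \leq N$, Theorem \ref{thmstabgrad} identifies $\gr^W_k \mot(G)$ as a sub-quotient of a finite direct sum of factors of the form
\begin{equation*}
\gr^W_k \bigl( H^{i_1}(P^{\gamma_1} \backslash Y_{\gamma_1}) \otimes \cdots \otimes H^{i_r}(P^{\gamma_r} \backslash Y_{\gamma_r}) \bigr),
\end{equation*}
indexed by \motic descendants $\gamma_1 \otimes \cdots \otimes \gamma_r$ of $G$ of degree $i_1 + \cdots + i_r \leq k$. Because $G$ is of type $(0,0)$, the all-or-nothing property $(\ref{Deltaallornothing})$ forces every $\gamma_j$ to be of type $(0,0)$, so being \motic here is equivalent to being 1PI. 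From $\sum_j (N_{\gamma_j}-1) \leq k \leq N$ and $N_{\gamma_j}\geq 1$, we deduce $N_{\gamma_j} \leq N+1$ for each $j$. Thus every $\gamma_j$ is a 1PI graph of type $(0,0)$ with at most $N+1$ edges, and the hypothesis yields that each $H^{i_j}(P^{\gamma_j}\backslash Y_{\gamma_j})$ is mixed Tate.

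The class of mixed Tate objects in $\HH$ is stable under tensor products (K\"unneth) and under sub-quotients, so each $\gr^W_k \mot(G)$ with $k \leq N$ is pure Tate of weight $k$, i.e.\ it is zero when $k$ is odd and a sum of copies of $\Q(-k/2)$ when $k$ is even. This proves that $W_N \mot(G)$ is mixed Tate and hence that all Galois conjugates of $I^{\mm}_G(\omega)$ are mixed Tate motivic periods. The main (and essentially only) point requiring care is the first paragraph: one must check that a motivic period of weight $\leq N$ is genuinely representable by a class factoring through $W_N \mot(G)$, and that this factorisation is preserved under the coaction. This is a formal consequence of the strict compatibility of morphisms in $\HH(S)$ with the weight filtration set up in \S\ref{sectRemindersHH}, together with the definition of the weight of a motivic period recalled in \S\ref{sect: Firstapplications}.
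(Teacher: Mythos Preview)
Your proof is correct and follows precisely the route the paper intends: the corollary is stated immediately after Theorem~\ref{thmstabgrad} as a direct consequence of that theorem ``combined with the Galois coaction'', and you have accurately supplied the missing details---compatibility of the coaction with weights, the edge bound $N_{\gamma_j}\leq N+1$ coming from the degree constraint $\sum(N_{\gamma_j}-1)\leq k\leq N$, and closure of mixed Tate objects under tensor products and sub-quotients. The caveat you flag in the final paragraph (that a motivic period of weight $\leq N$ is representable by a class in $W_N$) is exactly the right point to check, and your justification via strict compatibility of weights in $\HH(S)$ is the standard one.
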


By computation, one knows that the assumption of the corollary is true for N up to about 10.  This corollary therefore already provides a very strong constraint on the 
possible periods which can occur to all orders in perturbation theory.

We can be more precise and try to bound not only the weight-graded parts, but also the extensions between them. 
This is the spirit of conjecture $\ref{conjIntro}$.  In this direction we can prove the following weaker version of the conjecture. 

\begin{thm} \label{thm: finiteness} The space $W_k \HF^{\mm}_{Q,M}$ is finite-dimensional.
\end{thm}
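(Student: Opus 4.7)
The plan is to reduce motivic periods of bounded weight to motivic periods of a finite, explicit family of ``small'' graphs, by combining Theorem \ref{thmkskeleton} with the product structure of Theorem \ref{thm: recursive}.

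First, I would apply Theorem \ref{thmkskeleton}: any class in $W_k\HF^{\mm}_{Q,M}$ is realised as a motivic period of $\mot_G$ for some \motic $G$ of type $(Q,M)$ and, by that theorem, is equivalent to one of the form
\begin{equation*}
[H^{k}(D^{(k)}\setminus D^{(k)}\cap Y_G),\,[\sigma_G^{(k)}],\,[\omega]]^{\mm},
\end{equation*}
so it suffices to bound the dimension of the space of such ``skeletal'' motivic periods as $G$ varies over all \motic Feynman graphs of type $(Q,M)$.

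Next, I would use Theorem \ref{thm: recursive} iteratively: every $k$-dimensional stratum of $D^{(k)}\setminus D^{(k)}\cap Y_G$ is canonically a product
\begin{equation*}
(P^{\gamma_1}\setminus Y_{\gamma_1})\times\cdots\times(P^{\gamma_r}\setminus Y_{\gamma_r}),
\end{equation*}
indexed by a \motic descendant $\gamma_1\otimes\cdots\otimes\gamma_r$ of $G$ in the sense of Definition \ref{defnDescendants}, of total degree $\sum(N_{\gamma_i}-1)=k$. Corollary \ref{cor: facetstructure} decomposes $[\sigma_G^{(k)}]$ correspondingly as a sum of products $[\sigma_{\gamma_1}]\times\cdots\times[\sigma_{\gamma_r}]$, and the face relations of Theorem \ref{thmFaceRel} together with the K\"unneth formula identify the associated motivic period with a product
\begin{equation*}
\prod_{i=1}^{r}[\mot_{\gamma_i},[\omega_i],[\sigma_{\gamma_i}]]^{\mm}.
\end{equation*}
Since at most one factor inherits the full kinematics of $G$ while the remaining factors are of type $(0,0)$, the module structure $(\ref{equationff00mult})$ guarantees that each such product lies in $\HF^{\mm}_{Q,M}$ rather than in a larger ring of periods.

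Finally, I would invoke a finiteness argument on descendants: the bound $\sum(N_{\gamma_i}-1)\leq k$ forces $N_{\gamma_i}\leq k+1$ for every $i$; by Remark \ref{rem:  mot(triv)isQ(0)}, factors with $N_{\gamma_i}=1$ contribute only $\Q(0)$, so only the factors with $2\leq N_{\gamma_i}\leq k+1$ matter, and there are at most $k$ of them. Up to the equivalence identifying isolated vertices, the set of \motic Feynman graphs of type refining $(Q,M)$ with at most $k+1$ edges is finite, and each such graph $\gamma$ contributes a finite-dimensional space of motivic periods $[\mot_\gamma,[\omega],[\sigma_\gamma]]^{\mm}$ (one for each element of a basis of $\omega^{\gen}_{dR}\mot_\gamma$). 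Taking products of finitely many finite-dimensional factor spaces then produces a finite-dimensional subspace of $\HF^{\mm}_{Q,M}$ which contains $W_k\HF^{\mm}_{Q,M}$. The main obstacle will be the bookkeeping tying weight to edge-count: one must ensure that Theorem \ref{thmkskeleton} genuinely expresses \emph{every} weight-$\leq k$ motivic period through cohomology of dimension $\leq k$, so that no contribution from graphs with arbitrarily large $N_G$ escapes, and that the face relations keep the resulting product of motivic periods inside $\HF^{\mm}_{Q,M}$. Once these compatibilities are verified, finiteness reduces to the combinatorial fact that graphs of bounded edge number form a finite set.
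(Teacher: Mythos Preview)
There is a genuine gap in the passage from Theorem \ref{thmkskeleton} to products over facets. Theorem \ref{thmkskeleton} hands you a motivic period of $H^{k}(D^{(k)}\setminus D^{(k)}\cap Y_G)$, where $D^{(k)}$ is the \emph{union} of all $k$-dimensional facets. Each facet is indeed a product $(P^{\gamma_1}\setminus Y_{\gamma_1})\times\cdots\times(P^{\gamma_r}\setminus Y_{\gamma_r})$ by Theorem \ref{thm: recursive}, but a de Rham class on the union is not simply a tuple of classes on the individual facets: the facets are glued along lower-dimensional strata, and the cohomology of the union involves Mayer--Vietoris terms. Your step ``the face relations together with K\"unneth identify the associated motivic period with a product'' would require the face maps to surject onto $W_k\mot_G$, and the paper explicitly flags (in the remark immediately after Theorem \ref{thmkskeleton}) that this is \emph{not} known: $P^G\setminus Y_G$ is not affine, so Proposition 10.7 of \cite{NotesMot} cannot be invoked, and one cannot conclude that every weight-$\leq k$ period is in the image of the face maps.

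The paper's route around this obstruction is to replace $P^G$ by the affine open $A^G$ of \S\ref{sectAffineModel}. One defines an affine motive $\mot^a_G$ and checks that the inclusion $A^G\subset P^G$ gives a morphism $\mot_G\to\mot^a_G$ respecting $[\sigma_G]$, so every Feynman period of $G$ is also an affine motivic period. Because $A^G\setminus Y_G$ \emph{is} affine, Artin vanishing applies and Proposition 10.7 of \cite{NotesMot} now does give surjectivity of the face maps onto $W_k\mot^a_G$ (Theorem \ref{thmMainAffine}). From there your final finiteness argument via Remark \ref{rem:  mot(triv)isQ(0)} goes through. So the overall shape of your reduction is right, but the missing ingredient is precisely the passage to the affine model to make the facet decomposition work on the de Rham side.
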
 

\begin{rem} A proof of this theorem is given in \S\ref{sectpffinite} using affine models.  A different  way to find an upper bound for    the vector space of periods 
$\per \, W_k \HF^{\mm}_{Q,M}$ is as follows.  Apply theorem $\ref{thmkskeleton}$ to write any motivic period of weight $\leq k$ as a motivic period of a union  $D^{(k)}$ of $k$-dimensional
facets. Using our canonical affine covering of $\Pro^G$, and  triangulating as in Appendix 1, its  period can be written as a sum of  periods of 
affine pieces of each facet. Since the number of graphs with at most $k+1$ edges is finite, and each facet is a product of such graphs,  there are only finitely many such affine pieces. 
This argument gives a crude but effective upper bound for the periods of weight $\leq k$ in terms of  periods of the relative cohomology of (blow-ups of ) graph hypersurfaces of graphs with $\leq k+1$ edges where we now integrate over a cube $[0,1]^n$ (see comments after corollary \ref{correlperiod}). These can in principle be computed for small $k$.
\end{rem}

\subsection{A principle of small graphs}
Combined with the action of the cosmic Galois group, 
$C_{Q,M} \times \HF^{\mm}_{Q,M} \To \HF^{\mm}_{Q,M}$
or rather, the fact that the motivic  periods of any   graph $G$ are closed under the action of $C_{Q,M}$,
 theorem  $\ref{thm: finiteness}$ gives  constraints on Feynman amplitudes to all orders. 
 The space  $W_k \HF^{\mm}_{Q,M}$ is determined from  the finitely many  `small' graphs with at most $k+1$  edges.  This forces constraints on the  Galois conjugates of Feynman periods \emph{of all graphs}. For an example, see \S\ref{sect: smallgraphsillustration}.

\subsection{Affine motive and proof of finiteness theorem $ \ref{thm: finiteness}$} \label{sectpffinite}
Let $G$ be a Feynman graph of type $(Q,M)$. Define the \emph{affine motive} $\mot^{a}_G$ of $G$ in an identical way to definition \ref{defnmotG} except that 
we replace $P^G$ with its affine open $A^G$ of \S\ref{sectAffineModel}:
$$\mot^{a}_G = H^{N_G-1} ( A^G \backslash Y_G , D  \backslash( D \cap Y_G))_{/S}\ ,$$
as an object of $\HH(S)$, for some  Zariski-open $S$ in the space of kinematics $K^{\gen}_{Q,M}$. 
It follows from the construction of $A^G$, which is obtained by removing from $P^G$ hyperplanes with strictly positive coefficients, that 
$$\widetilde{\sigma}_G \subset A^G (\C)\ .$$
See the proof of theorem \ref{thmsigmaGavoidsY}. 
Thus $\mot^{a}_G$ has a canonical Betti element defined by $\sigma_G$ in an identical manner to definition \ref{defnsigmaGclass}:
$$[\sigma_G ] \in \Gamma \big(U^{\mathrm{gen}}_{Q,M}, (\mot^a_G)_B^{\vee})\big)$$
Furthermore, the inclusion 
$A^G \backslash Y_G \subset P^G \backslash Y_G$ defines a morphism of objects 
$$i: \mot_G \To \mot^{a}_G$$
in $\HH(S)$, which respects the Betti-clases $[\sigma_G]$ on both sides. This morphism defines an equivalence on the level of motivic periods \cite{NotesMot} \S2. 
\begin{cor} There is an equality of motivic periods $$I^{\mm}_G(\omega) = [\mot^a_G,  i_{dR} [\omega], [\sigma_G]]^{\mm}\ ,$$
where the left-hand side was defined in  definition \ref{defnmotFeynper}.
In particular, every  Feynman period of $G$ is a period of the affine motive $\mot^a_G$. We could thus use the affine motives $\mot^a_G$ to study
the weights, representations and so on of Feynman periods.
\end{cor}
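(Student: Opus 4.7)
The plan is to derive this corollary directly from the functoriality of matrix coefficients of motivic periods, once two geometric inputs have been verified: (i) the inclusion $A^G\backslash Y_G\hookrightarrow P^G\backslash Y_G$ induces a well-defined morphism $i:\mot_G\to \mot^{a}_G$ in $\HH(S)$, and (ii) this morphism is compatible with the Betti classes denoted $[\sigma_G]$ on either side. Both inputs are essentially asserted in the paragraph preceding the corollary, so the corollary is a formal consequence, and my proposal is simply to spell out why.

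First I would verify (i). By construction $A^G=P^G\backslash\bigcup_{I\in B}\widetilde H_I$ is an open subscheme of $P^G$, and $D\cap A^G$ remains strict normal crossing. The open immersion of pairs
\[
(A^G\backslash Y_G,\ D\backslash (D\cap Y_G)\cap A^G)\ \hookrightarrow\ (P^G\backslash Y_G,\ D\backslash (D\cap Y_G))
\]
induces restriction maps on relative Betti and de Rham cohomology in degree $N_G-1$, compatibly with the weight (and Hodge) filtrations and with the comparison isomorphism $c$. By the discussion of functoriality in \cite{NotesMot} \S7, this defines a morphism $i:\mot_G\to\mot^{a}_G$ in $\HH(S)$.

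Next I would verify (ii). The key point is that $\widetilde\sigma_G\subset A^G(\R)$: indeed the hyperplanes removed to pass from $P^G$ to $A^G$ are strict transforms of $H_I=V(\sum_{i\in I}\alpha_i)$ for $I\in B$, and these have only strictly positive coefficients, so they do not meet the closure of the positive projective simplex. Combined with theorem \ref{thmsigmaGavoidsY}, $\sigma_G=\widetilde\sigma_G\cap(A^G\backslash Y_G)(\C)$ carries the same cycle on both sides, and the dual de Rham map $i_{B}^{\vee}:(\mot^{a}_G)_B^{\vee}\to (\mot_G)_B^{\vee}$ satisfies $i_B^{\vee}[\sigma_G]_{\mot^{a}_G}=[\sigma_G]_{\mot_G}$ by the naturality of the relative homology class of a manifold with corners under the inclusion of pairs.

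Finally I would invoke the fundamental functoriality of matrix coefficients recalled in \cite{NotesMot} \S2.2: for any morphism $f:V\to W$ in $\HH(S)$, any $\alpha\in \omega^{\gen}_{dR}(V)$, and any $\beta\in \omega_B(W)^{\vee}$, one has the identity
\[
[W,\ f_{dR}(\alpha),\ \beta]^{\mm}\ =\ [V,\ \alpha,\ f_B^{\vee}(\beta)]^{\mm}
\]
in $\Pe^{\mm,s,\gen}_{\HH(S)}$. Applied to $f=i$, $\alpha=[\omega]$, $\beta=[\sigma_G]_{\mot^{a}_G}$, and using (ii), this gives
\[
I^{\mm}_G(\omega)=[\mot_G,[\omega],[\sigma_G]]^{\mm}=[\mot^{a}_G,i_{dR}[\omega],[\sigma_G]]^{\mm},
\]
which is the stated equality; passing to $\pi_1^{\mathrm{top}}(U^{\gen}_{Q,M}\cap S(\C),s)$-invariants and then taking the direct limit over $S$ places both sides in $\Pe^{\mm}_{Q,M}$. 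The second assertion (every Feynman period is a period of $\mot^{a}_G$) is immediate since every element of $\omega^{\gen}_{dR}(\mot^{a}_G)$ of the form $i_{dR}[\omega]$ arises this way, and more generally the equivalence of matrix coefficients shows that any Feynman period is computed as a matrix coefficient of $\mot^{a}_G$ against $[\sigma_G]$.

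There is no real obstacle: the only point requiring care is (ii), i.e.\ checking that $\sigma_G$ is unaffected by the passage from $P^G$ to $A^G$. This is a geometric verification about the avoidance by $\widetilde\sigma_G$ of the hyperplanes $\widetilde H_I$, and it reduces, by the stratified-product structure of $\widetilde\sigma_G$ in corollary \ref{cor: facetstructure}, to the corresponding statement for the open simplex $\overset{\circ}\sigma$ in $\Pro^{E_G}$, where the positivity of the coefficients of $\sum_{i\in I}\alpha_i$ makes it transparent.
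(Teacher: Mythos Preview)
Your proposal is correct and follows essentially the same approach as the paper: the corollary is stated without a separate proof, as an immediate consequence of the preceding paragraph, which asserts precisely your inputs (i) and (ii) and then cites \cite{NotesMot} \S2 for the equivalence of matrix coefficients under a morphism. You have simply unpacked these assertions in more detail than the paper does, including the positivity argument for $\widetilde{\sigma}_G\subset A^G(\C)$ (which the paper also notes, referring back to theorem~\ref{thmsigmaGavoidsY}) and the explicit form of the functoriality identity for matrix coefficients.
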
 
Note that the affine motive is excessively large: it has many periods which are unrelated to Feynman graphs. One advantage of the previous corollary, however,
is that it enables us to express every Feynman period as an integral of a globally-defined algebraic differential form, since, by a theorem due to Grothendieck, 
the de Rham cohomology of an affine variety is the cohomology of the complex of global regular differential forms. The price to pay is that the integrand may involve linear 
denominators of the form $\sum_{e \in \gamma} \alpha_e$, where $\gamma$ is a \motic subgraph of $G$. This remark may or may not be of practical use in computing
Feynman periods. 
\begin{defn}  Let us call the \emph{affine motivic periods} of a graph $G$ to be the space of motivic periods of $\mot^a_G$ with respect to  $\sigma_G$.
By the above remarks, it contains the space of motivic periods of $G$. 
\end{defn}

Now, it follows from the product structure on the spaces $A^G$ that the analogue of theorem $\ref{thm: recursive}$ holds on replacing $P^G$ by $A^G$, and hence the
face relations (theorem $\ref{thmFaceRel}$) hold for $\mot^a_G$.  Now apply proposition 10.7 in \cite{NotesMot}, which exploits Artin vanishing for the cohomology of affine schemes,  to deduce the following theorem.

\begin{thm} \label{thmMainAffine}
The affine motivic periods of $G$ of weight $\leq k$ are $k_{Q,M}$-linear combinations of  the affine motivic periods  of \motic descendents of $G$
of degree $\leq k$.
  \end{thm}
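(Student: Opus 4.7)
\medskip

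\noindent\textbf{Proof proposal.} The plan is to follow the hint and reduce the statement to a direct application of proposition 10.7 of \cite{NotesMot}, after first establishing the affine analogue of the recursive product structure of theorem \ref{thm: recursive}. First I would verify that the divisors $D_I \subset A^G$ satisfy $D_I = A^{B^I} \times A^{B_I}$ and that, under the identifications $B^\gamma = B_\gamma$ and $B_\gamma = B_{G/\gamma}$ supplied by remark \ref{rem: moticintrinsic} and theorem \ref{thm: moticproperties}(i),(ii), these become canonical isomorphisms
\[
D_\gamma \cap A^G \;\cong\; A^\gamma \times A^{G/\gamma}\quad\text{and}\quad D_e \cap A^G \;\cong\; \{\mathrm{pt}\}\times A^{G/e}.
\]
The factorisation identities of \S\ref{sect: Section2}, applied exactly as in the proof of theorem \ref{thm: recursive}, then show that $Y_G \cap D_\gamma$ splits as $(Y_\gamma \times A^{G/\gamma})\cup (A^\gamma \times Y_{G/\gamma})$, and similarly for $D_e$. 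Since the inclusion $A^G \subset P^G$ preserves $\widetilde{\sigma}_G$, the Betti class $[\sigma_G]$ is compatible with this recursive structure, and one obtains face morphisms
\[
i_e : \mot^a_{G/e} \To \mot^a_G,\qquad i_\gamma : \mot^a_\gamma \otimes \mot^a_{G/\gamma} \To \mot^a_G
\]
in $\HH(S)$, together with the Betti compatibilities $\omega_B(i_e)^\vee[\sigma_G]=[\sigma_{G/e}]$ and $\omega_B(i_\gamma)^\vee[\sigma_G]=[\sigma_\gamma]\otimes[\sigma_{G/\gamma}]$, exactly as in theorem \ref{thmFaceRel}. Iterating these face maps produces, for every motic descendant $\gamma_1\otimes\cdots\otimes \gamma_r$ of degree $d$, a map
\[
\mot^a_{\gamma_1}\otimes \cdots \otimes \mot^a_{\gamma_r} \To \mot^a_G
\]
corresponding to the inclusion of a facet of $D$ of codimension $N_G-1-d$, and sending the product Betti class to $[\sigma_G]$.

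Next, I would apply proposition 10.7 of \cite{NotesMot}. The key point is that $A^G$ is affine by theorem \ref{thmaffinespace}, so $A^G\backslash Y_G$ is affine as well (the complement of a hypersurface in an affine scheme), and hence Artin vanishing applies: the cohomology of $A^G \backslash Y_G$ is concentrated in degrees $\leq N_G-1$. This is precisely the hypothesis of proposition 10.7, which then asserts that the weight-$\leq k$ part of $H^{N_G-1}(A^G\backslash Y_G,\,D\backslash(D\cap Y_G))$ is in the image of the restriction maps from the unions of facets of $D$ of dimension $\leq k$. Dually, every Betti class for $\mot^a_G$ of weight $\leq k$ is a combination of images of Betti classes supported on facets of dimension $\leq k$. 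Since the Betti class of $[\sigma_G]$ itself splits under these face maps into products of Betti classes $[\sigma_{\gamma_1}]\otimes\cdots\otimes[\sigma_{\gamma_r}]$, and since by the product structure just established these facets are exactly products of affine models of motic descendants of $G$ of total degree $\leq k$, every affine motivic period of $\mot^a_G$ of weight $\leq k$ can be rewritten, via the face relations $(\ref{eqn: Faceequations})$ applied in the affine setting, as a $k_{Q,M}$-linear combination of affine motivic periods of motic descendants of $G$ of degree $\leq k$.

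The main obstacle is ensuring that proposition 10.7 applies cleanly in this relative setting over the base $S \subset K_{Q,M}^{\gen}$ with varying fibres, so that affineness of $A^G \backslash Y_G$ over the generic point gives the Artin-vanishing-type surjection after passing to $W_k$. This is a genericity issue: one must restrict to an open $S$ where the fibres of $A^G \backslash Y_G$ remain affine of the expected dimension and where the stratification by the $D_I \backslash (D_I \cap Y_G)$ remains flat of the correct codimension; but both conditions hold on a Zariski-open set, and the general set-up of \cite{NotesMot} \S10 is designed to handle precisely this situation. Once this is in place, the weight reduction follows formally, and iterating on $k$ yields the bound in terms of descendants of degree $\leq k$.
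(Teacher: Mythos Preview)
Your proposal is correct and follows essentially the same approach as the paper: establish that the affine model $A^G$ inherits the recursive product structure of theorem~\ref{thm: recursive} (and hence the face relations of theorem~\ref{thmFaceRel}), then invoke proposition~10.7 of \cite{NotesMot} via Artin vanishing, which applies because $A^G\backslash Y_G$ is affine. The paper's own proof is only a two-line sketch of exactly this argument, so your expanded version with the explicit check of affineness and the care about the base $S$ is a faithful elaboration rather than a different route.
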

Since there are only finitely many descendents of bounded degree (remark \ref{rem:  mot(triv)isQ(0)}), we immediately deduce theorem $\ref{thm: finiteness}$.

\section{The constant cosmic Galois group} \label{sect: ConstantCosmic}
The motivic periods of graphs  of type $(Q,M) = (0,0)$, which have no dependence on external kinematics, 
play a special role in the Galois theory of all Feynman amplitudes and are of particular number-theoretic interest. 

\subsection{A Galois theory of graph periods}
We recall the main definitions  in this   case. We shall make no restrictions on the `physicality'  of the graphs under consideration, i.e., our graphs can have
arbitrary vertex-degrees for the time being.

\begin{defn} For any graph $G$ of type $(0,0)$, recall that 
$$\mot(G) = H^{N_G-1} (P^G \backslash Y_G , D \backslash (D\cap Y_G))\ ,$$
is an effective object in the category $\HH$ defined in \cite{NotesMot} \S2. It consists of triples $(V_B, V_{dR}, c)$ where $V_B, V_{dR}$ are finite-dimensional $\Q$ vector spaces, and $c$ is an isomorphism $c: V_{dR} \otimes \C \overset{\sim}{\rightarrow} V_B \otimes \C$. These spaces are equipped with filtrations which define a mixed Hodge structure. In this context, the word effective means that the Hodge numbers of $\mot(G)$
satisfy $h_{p,q} = 0$ if $p$ or $q$ is negative. 
  Denote the ring of \emph{motivic graph periods} to be the $\Q$-vector space spanned by the motivic periods:
$$\HF_{0,0}^{\mm} = \langle [\mot(G), [\sigma_G], [\omega]]^{\mm} \rangle_{\Q} \quad \subset \quad \Pe^{\mm, +}_{\HH} $$
 where $G$ is \motic (i.e., 1PI) and $[\sigma_G] \in \mot(G)_{B}^{\vee}$ is the  canonical Betti framing. 
 It actually lands in the subspace $\Pe^{\mm,+} \subset \Pe^{\mm, +}_{\HH}$ defined in \cite{NotesMot}, Definition 3.4.
  Define the \emph{space of  motivic  periods} of a fixed graph $G$ to be the vector space
 \begin{equation}
 \HF^{\mm}(G) = \langle [\mot(G), [ \sigma_G], [\omega]]^{\mm} \quad \hbox{ for } \quad [\omega] \in \mot(G)_{dR}\rangle_{\Q}
 \end{equation}
  spanned by its motivic graph periods. We say that a graph $G$ has \emph{weight  at most $n$} and write $w(G) \leq n$ if 
  $W_n \HF^{\mm}(G)  = \HF^{\mm}(G)$, in accordance with \S\ref{sect: Firstapplications}.\footnote{This is a more subtle notion than the naive weight of a graph defined by the weight of  the object $\mot_G$ in $\HH$.  If $W_n \mot(G)=\mot(G)$ then it  is certainly true that the periods of $G$ have weight  $\leq n$, but the  examples given below show that the converse is false.}
 \end{defn}
 
 In the special case when $G$  is overall log-divergent ($N_G= 2h_G$) and  primitive ($N_{\gamma} > 2 h_{\gamma}$ for all $\gamma \subsetneq E_G$), 
the mixed Hodge structure underlying $\mot(G)$ coincides with the graph motive
 of \cite{BEK}. In this case the  amplitude is given by the integral 
\begin{equation}\label{eqn: IGper}
 I_G = \per ([\mot(G), [\omega_G], [\sigma_G]]^{\mm}) =  \int_{\sigma_G} {\Omega_G \over \Psi^2_G}\ 
 \end{equation}
which converges by lemma $\ref{lem: valuationsconvergence}.$
Thanks to \cite{BK,Census} we know hundreds of examples  of periods $(\ref{eqn: IGper})$, and  many  identities between them.  These identities, when taken alone, do not give much control on the  possible integrals  $(\ref{eqn: IGper})$, since every new
algebra generator is arbitrary. However, if these identities hold on the level of the motivic periods $I_G^{\mm}$, as we expect, then when combined with the action of the cosmic Galois group  and  stability, we obtain a very rigid structure, since the Galois conjugates are constrained by the periods of smaller graphs.

\subsection{Invariants and classification} Motivic periods over $\Q$ are studied in some detail in \cite{NotesMot}, \S3. Two constructions worth mentioning are the 
unipotency degree (or coradical filtration) $C_i \Pe^{\mm,+}_{\HH}$ and the decomposition into primitives. Say that a motivic Feynman period is of \emph{unipotency degree} at most $n$ if it lies in  $C_n \HF^{\mm}_{0,0}$. 
In the case of motivic multiple zeta values, the unipotency degree is bounded above by the depth.
The \emph{decomposition into primitives} is a homomorphism \cite{NotesMot}, \S5
$$\Phi: \gr^C  \HF_{0,0}^{\mm} \To  \gr^C_0 \HF_{0,0}^{\mm} \otimes_{\Q} T^c (\gr^C_1 \Or(U^{dR}_{\HH}))$$
where $T^c$ denotes the tensor coalgebra, or shuffle algebra, and $\gr^C_1 \Or(U^{dR}_{\HH})$ is a vector space which can be made explicit.  The unipotency grading
on the left-hand side coincides with the length grading of tensors on the right. 
This map generalises the (highest-length part of) the  decomposition of motivic multiple zeta values into an alphabet of letters $f_{2n+1}$ to all motivic periods.

 We have constructed a map from graphs to representations
 \begin{eqnarray} \label{graphstoreps} \{\hbox{Graphs of type } (0,0)\}  &\To & \mathrm{Rep}_{\Q} (C_{0,0})   \\
 G & \mapsto &   \HF^{\mm}(G)  \nonumber
 \end{eqnarray} 
 where $C_{0,0}$ is the constant cosmic Galois group. This is more subtle than the naive map which sends $G$ to the object $\mot(G)_{dR}$, since it
 takes into account the Betti framing $\sigma_G$. In the notation of \cite{NotesMot} \S2.4 we have 
 $$ \HF^{\mm}(G) \cong  ({}_{\sigma}\mot(G))_{dR} $$
 where ${}_{\sigma}\mot(G)$ denotes the smallest quotient of $\mot(G)$ in the category $\HH$ such that $\sigma_G \in  ({}_{\sigma}\mot(G))_B^{\vee}$. 
  We can apply any of the invariants of motivic periods defined in  \cite{NotesMot} to
    $\HF^{\mm}(G)$.   The challenge, then,  is to relate  invariants of motivic graph periods to topological invariants of  their graphs, and find relations between 
 graphs through which the map $(\ref{graphstoreps})$ factorizes (see \S \ref{sect: programme}).

For example, we called  $G$  mixed  Artin-Tate if  all elements of $\HF^{\mm}(G)$  have Hodge numbers $h_{p,q}=0$ if $p\neq q$.  Graphs of vertex-width $\leq 3$ are of this type \cite{BrFeyn}.

Furthermore, let us call $G$ \emph{separated}\footnote{Or, better, the weaker condition 
${}_{\sigma}(\mot_G)_{dR} = W_0 \, ({}_{\sigma}(\mot_G))_{dR} \oplus F^1 ({}_{\sigma}(\mot_G))_{dR}$}  if it satisfies
$$ (\mot_G)_{dR} = W_0  (\mot_G)_{dR} \oplus F^1 (\mot_G)_{dR}\ .$$
In this case,  \cite{NotesMot} \S4.3 provides a canonical projection
 $$\pi^{\dR,\mm+}: \HF^{\mm}(G) \To \HF^{\dR}(G)$$ from the motivic periods of $G$ to the de Rham periods of $G$, and  gives a handle on the right-hand terms in the motivic coaction.  There is  evidence to suggest, using methods from \cite{Framings}, that a large class of graphs indeed satisfy this property.  
In particular, all graphs of mixed Artin-Tate type are separated.

For any separated Feynman graph $G$ of type $(0,0)$,  apply the projection followed by the single-valued map $\s^{\mm}$ \cite{NotesMot}, \S4.1 to obtain a linear map
$$ \HF^{\mm}(G) \overset{\pi^{\dR,\mm+}}{ \To}  \HF^{\dR}(G) \overset{\s^{\mm}}{ \To}   \Pe^{\mm}_{\HH}\ . $$ 
This defines canonical  single-valued versions of
its motivic periods $ \s^{\mm} \pi^{\dR,\mm+} I^{\mm}_G(\omega)$.  For multiple zeta values, the corresponding single-valued versions  occur in 
 string perturbation theory,  and  have a physical significance since they relate open and closed  superstring amplitudes \cite{ClosedString, Zerbini, Greenetc}.
  Indeed, the formula for the closed string vertex operator for the emission of a closed string state as a product of open string vertex operators precisely mimics
  the definition of the single-valued motivic periods (\cite{NotesMot}, last line of $\S4.1$.)
  
\subsection{Small-graphs principle}  \label{sect: smallgraphsillustration} By way of illustration,
we compute the motivic
periods of graphs with at most three edges and deduce some non-trivial consequences.
We first dispense with two degenerate families of graphs.

\begin{lem} \label{lemtrivgraphmots} If $G$ has  a single vertex, or a single loop, then 
$\mot(G) \cong \Q(0).$
\end{lem}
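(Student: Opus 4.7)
The plan is to handle the two hypotheses separately. In each case $G$ is combinatorially simple enough that $\mot(G)$ can be computed directly.

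For the single-vertex case, every edge of $G$ is a tadpole, $\Psi_G = \prod_{e\in E_G}\alpha_e$ is a monomial, and every singleton $\{e\}\subset E_G$ is \motic. The coordinate hyperplane $L_{\{e\}}$ is therefore a centre of the iterated blow-up, and the strict transform of $\{\alpha_e=0\}$ in $P^G$ coincides with the exceptional divisor $D_{\{e\}}=D_e$. Consequently $Y_G = \bigcup_{e\in E_G}D_e\subset D$, and the relative pair $(P^G\setminus Y_G,\,D\setminus(D\cap Y_G))$ is purely toric. The cases $N_G\leq 1$ give $\mot(G)=H^0(\mathrm{pt})=\Q(0)$ immediately; for $N_G\geq 2$ one proceeds by induction on $N_G$ using the product decomposition $D_e\cong\{\mathrm{pt}\}\times P^{G/e}$ of Theorem \ref{thm: recursive} and the face morphism $i_e:\mot(G/e)\to\mot(G)$ of Theorem \ref{thmFaceRel}, combined with the fact that the boundary of the permutohedron $\widetilde{\sigma}_G$ is homotopy equivalent to an $(N_G-2)$-sphere, to identify the top cohomology as $\Q(0)$ with generator $[\sigma_G]$.

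For the single-loop case ($h_G=1$), $G$ is a cycle with possibly some trees attached. Any non-tadpole edge $e$ of $G$ is not \motic (since $h_{\{e\}}=0$ with no strict drop on cutting), so one may contract such an $e$ without changing $\mot(G)$ via the face relations of Theorem \ref{thmFaceRel}. This reduces us to the case $G=C_n$, a cycle with $n\geq 1$ edges. Here $\Psi_G=\alpha_1+\cdots+\alpha_n$ is linear, and every proper edge-subgraph is a disjoint union of paths, hence a forest of loop number zero which is not $1$PI and so not \motic in $G$. Thus $B_G=\emptyset$ and $P^G=\Pro^{n-1}$. The hypersurface $Y_G=\{\Psi_G=0\}$ is a single hyperplane and $D=\bigcup_i\{\alpha_i=0\}$ is the union of coordinate hyperplanes; under the canonical isomorphism $\Pro^{n-1}\setminus Y_G\cong \A^{n-1}$ carrying $\sigma_G$ to the standard simplex $\Delta^{n-1}$, the divisor $D$ becomes $\partial\Delta^{n-1}$. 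The long exact sequence of the pair (using $H^{>0}(\A^{n-1})=0$) then yields
$$\mot(G)=H^{n-1}(\A^{n-1},\,\partial\Delta^{n-1})\cong\Q(0).$$

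The main obstacle is the inductive step in the single-vertex case: one must establish that $i_e:\mot(G/e)\to\mot(G)$ surjects onto the one-dimensional top cohomology. The cleanest route is toric: relative cohomology of a smooth toric variety modulo a sub-union of invariant divisors is computed by the corresponding cellular (spherical) complex, so the pair has one-dimensional top cohomology of weight $0$, generated by the fundamental class of $[\sigma_G]$. Granted this topological input, both cases conclude with $\mot(G)\cong\Q(0)$.
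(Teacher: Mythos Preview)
Your single-loop argument is essentially correct and close to the paper's: both use that every stratum of the pair is an affine space, so the relative-cohomology spectral sequence collapses to the simplicial cochain complex of the boundary of a simplex, giving $\Q(0)$ in top degree. The preliminary reduction is unnecessary---since $G$ is \motic of type $(0,0)$, the condition $h_G=1$ already forces $G$ to be a cycle---and the claim that contracting a non-\motic edge leaves $\mot(G)$ unchanged is not justified by the face relations (Theorem~\ref{thmFaceRel} only supplies a morphism $i_e$, not an isomorphism).

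The single-vertex case contains a genuine error in the identification of $Y_G$. Each tadpole $\{e\}$ is \motic\!\!, so $\{e\}\in B_G$ and the hyperplane $L_{\{e\}}$ is itself a (trivial) blow-up centre in the construction of $P^G=P^{B_G}$; the strict transform of a centre is \emph{empty}, not equal to the exceptional divisor. Concretely, in any chart $\A^{\FF,c}$ the pullback $\pi_G^*\Psi_G=\pi_G^*\prod_e\alpha_e$ is a monomial whose exponent along each $\beta_{j_r}$ is exactly $h_{I_r}=|I_r|$, so after dividing out the exceptional equations nothing remains and $Y_G=\emptyset$. The paper then finishes in one line by Poincar\'e--Verdier duality: $\mot(G)=H^n(P^G,D)\cong H^n(P^G\setminus D)^\vee(-n)=H^n(\GG_m^n)^\vee(-n)=\Q(0)$. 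With your pair $\bigl(P^G\setminus\bigcup_e D_e,\ \bigcup_{|I|\ge 2}D_I\setminus\bigcup_e D_e\bigr)$ the answer is not $\Q(0)$: already for two edges this is $(\GG_m,\emptyset)$ with $H^1=\Q(-1)$, and for three edges the remaining boundary consists of three disjoint copies of $\GG_m$ inside the hexagonal surface and one computes $H^2\cong\Q(-1)^{\oplus 4}$. So the toric slogan you invoke (``relative cohomology modulo a sub-union of invariant divisors is computed by the cellular complex and has weight $0$'') fails for proper sub-unions of the boundary; it holds only for the full toric boundary, which is precisely the pair arising from $Y_G=\emptyset$. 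Your face-map induction does not rescue this either: the morphisms $i_e:\Q(0)\to\mot(G)$ together with the sphericity of $\partial\widetilde{\sigma}_G$ determine only $W_0\mot(G)$, not the higher-weight graded pieces.
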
 
\begin{proof} Let $n=E_G-1$. 
Suppose that $h_G=1$. Then   $\Psi_G = \sum_{e\in E_G} \alpha_e$ and  $X_G$ is a hyperplane $H$. Since $G$ has no non-trivial \motic subgraphs, $P^G= \Pro^n$ and $P^G \backslash Y_G = \Pro^n \backslash H \cong \A^n$. Similarly,  every stratum $D_I\cap (D_I\backslash Y_G)$ is  an affine space and has the cohomology of a point. 
 Therefore the  relative cohomology spectral sequence $E_1^{p,q} =\bigoplus_{|I|=p} H^q(D_I\cap (D_I\backslash Y_G))  $, which converges to $\mot(G)$,  satisfies $E^{p,q}_1= 0$ if $q>0$ and hence $\mot(G) \cong \Q(0)$.
 
 Now suppose that $G$ has one vertex. Every subgraph of $G$ is \motic\!\!, and the graph polynomial is $\Psi_G = \prod_{e \in E_G} \alpha_e$. Since the graph hypersurface $X_G$ is the union  $L$ of coordinate axes, its strict transform $Y_G$ in $P^G$ is empty. Therefore
$$\mot(G) = H^n (P^G, D) \cong H^n(P^G \backslash D)^{\vee}(-n)\ ,$$
by Poincar\'e-Verdier duality. Since $P^G \backslash D \cong \Pro^n \backslash L \cong \GG_m^n$,  the right-hand side is $H^n(\GG_m^n)^{\vee}(-n) = \Q(-1)^{\otimes(-n)}(-n) = \Q(0)$. 
\end{proof}
 
All  two-edge \motic (i.e. 1PI) graphs are covered by the previous lemma. The four 1PI graphs with   three edges are depicted below. Their  graph polynomials  are underneath.
\begin{figure}[h!]
    \epsfxsize=10.0cm \epsfbox{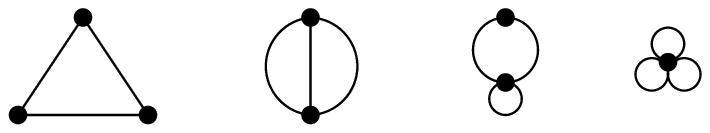}
 $$\quad\alpha_1+\alpha_2+\alpha_3  \qquad \alpha_1\alpha_2\!+\!\alpha_1\alpha_3\!+\!\alpha_2\alpha_3 \qquad  \alpha_1(\alpha_2+\alpha_3) \qquad \alpha_1\alpha_2\alpha_3$$
   \end{figure}
 The two outer graphs are trivial by the previous lemma.  The middle two have non-global periods which are in fact trivial (they satisfy 
 $\mot(G) \cong \Q(0) \oplus \Q(-1)$).  This can be seen in two ways: either by a similar analysis to \S \ref{sect: nonglobalexample}, or by 
 observing that $\mot(G)$  defines a mixed Tate motive which is   unramified at all  primes.  Every Kummer extension of mixed Tate motives which is defined over $\Z$ splits as  a direct sum of Tate motives, so $\mot(G)$ has no non-trivial motivic periods.

\begin{lem}We have 
$W_2 \HF_{0,0}^{\mm}  = W_0 \HF_{0,0}^{\mm}\cong \Q$
\end{lem}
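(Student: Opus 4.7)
The inclusion $W_0\HF^{\mm}_{0,0}\cong\Q\subseteq W_2\HF^{\mm}_{0,0}$ is immediate, so the content is the reverse inclusion. The plan is to combine Theorem~\ref{thmMainAffine} (reducing weight~$\leq 2$ periods to small descendants) with the analysis of motic graphs on at most three edges carried out in Lemma~\ref{lemtrivgraphmots} and the paragraph preceding the statement.

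First, I would apply Theorem~\ref{thmMainAffine}: every affine motivic period of weight at most $2$ of any motic graph $G$ of type $(0,0)$ is a $\Q$-linear combination of affine motivic periods of motic descendants $\gamma_1\otimes\cdots\otimes\gamma_n$ of degree $\sum(N_{\gamma_i}-1)\leq 2$.  By Remark~\ref{rem:  mot(triv)isQ(0)}, factors with $N_{\gamma_i}=1$ contribute a trivial tensor factor $\Q(0)$ and may be dropped, so it suffices to consider descendants whose non-trivial factors are motic (i.e.\ $1$PI) graphs of type $(0,0)$ with $N_{\gamma_i}\leq 3$.  The face relations of Theorem~\ref{thmFaceRel} then express each such motivic period as a product of affine motivic periods of the individual factors, so it is enough to show that, for each motic graph $\gamma$ with $N_\gamma\leq 3$, every motivic period of $\mot(\gamma)$ with respect to the canonical Betti framing $\sigma_\gamma$ is rational.

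Next, I would enumerate the motic graphs on at most three edges: the single-edge self-loop, the two-edge banana, and the four $1$PI graphs on three edges listed in the figure (triangle, three-banana, bubble-with-tadpole, and three self-loops at a vertex).  Lemma~\ref{lemtrivgraphmots} handles those with a single vertex or a single loop, which includes every one of these graphs except the three-banana and the bubble-with-tadpole, and gives $\mot(\gamma)\cong\Q(0)$, whose motivic periods are rational.  For the remaining two graphs the paragraph preceding the lemma establishes that $\mot(\gamma)\cong\Q(0)\oplus\Q(-1)$ splits as a mixed Tate motive, via the unramified-at-all-primes argument applied to the Kummer extension.  It then remains to check that $\sigma_\gamma$ pairs trivially with the $\Q(-1)$ summand; this follows from a reality consideration, since $\sigma_\gamma$ is a real compact chain and a de Rham representative $\omega$ of $\Q(-1)$ is a real algebraic form, so the period $\int_{\sigma_\gamma}\omega$ is simultaneously real and a $\Q$-multiple of $2\pi i$, hence zero. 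The corresponding motivic period is therefore also zero.

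Assembling the pieces, every element of $W_2\HF^{\mm}_{0,0}$ becomes a $\Q$-linear combination of products of rational motivic periods of motic graphs with at most three edges, hence lies in $\Q$, and the asserted equality follows. The main delicate step is the vanishing of the $\Q(-1)$-framed periods for the two middle three-edge graphs; the reality argument above is the quickest route, but one could alternatively appeal to an explicit computation along the lines of the appendix \S\ref{sect: nonglobalexample} to verify it directly.
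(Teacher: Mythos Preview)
Your overall strategy is sound, and the reality argument you give for the $\Q(-1)$ summand is essentially the paper's real Frobenius argument. However, there is a genuine gap in the reduction step. You invoke Theorem~\ref{thmMainAffine}, which reduces weight-$\leq 2$ \emph{affine} motivic periods of $G$ to affine motivic periods of descendants of degree~$\leq 2$. But you then analyze only the projective motives $\mot(\gamma)$ via Lemma~\ref{lemtrivgraphmots} and the paragraph preceding the statement. The affine motive $\mot^a(\gamma)$ is obtained from $\mot(\gamma)$ by further removing the strict transforms of the hyperplanes $H_I$ for motic $I$; as the paper itself warns in \S\ref{sectpffinite}, ``the affine motive is excessively large: it has many periods which are unrelated to Feynman graphs.'' So knowing that the $\sigma_\gamma$-framed periods of $\mot(\gamma)$ are rational does not suffice; one would need the same for $\mot^a(\gamma)$, and for a three-edge graph such as the three-banana (where several hyperplanes $H_{ij}$ are removed) this is not obvious and is not established anywhere in the paper.

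The paper avoids this mismatch by staying entirely in the projective setting. It uses Theorem~\ref{thmstabgrad} to pin down the graded pieces $\gr^W_i$ for $i=0,1,2$, and then Theorem~\ref{thmkskeleton} to identify the weight-$\leq 2$ periods with periods of the two-skeleton $D^{(2)}\subset P^G$. The crucial extra step, which has no analogue in your argument, is to check via the Goncharov--Manin criterion that the relative cohomology of $D^{(2)}$ is a mixed Tate motive unramified over $\Z$; this forces every Kummer extension of $\Q(0)$ by $\Q(-1)$ to split, handling the extensions globally without any face-by-face factorization. Note that the paper explicitly remarks, after Theorem~\ref{thmkskeleton}, that it is \emph{not known} whether weight-$\leq k$ projective motivic periods of $G$ are generated by those of its degree-$\leq k$ descendants, which is precisely the projective version of the reduction your approach would need.
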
 
\begin{proof}
By  theorem   \ref{thmstabgrad} and the  above  calculations, $\gr^W_{i} \HF_{0,0}^{\mm}$ vanishes for $i=1$, and is a direct sum of Tate motives
$\Q(0)$ and $\Q(-1)$ for $i=0, 2$ respectively since it is generated by the cohomology of graphs of degree $\leq 2$. 
 Therefore   $W_2 \HF_{0,0}^{\mm}$ is a  linear combination of Kummer motivic periods (motivic logarithms), which are the motivic periods of extensions of $\Q(-1)$ by $\Q(0)$. By theorem 
 \ref{thmkskeleton}  the motivic periods of $\mot(G)$ of weight $\leq 2$ are equivalent to the motivic periods of its $2$-dimensional skeleton $D^{(2)}$, each of whose
 faces is of one of the above types, or, more trivially, corresponds to a product of two $2$-edge graphs (when $|E_G|=2$, $\Pro^G\backslash Y_G$ is  either  $\Pro^1$  if $G$ has one vertex or  $\A^1$ if it has two, see lemma \ref{lemtrivgraphmots}). 
  From the explicit description of $\Pro^G$ as a blow-up,  one can check that $D^{(2)}$  is  a mixed Tate motive over $\Q$ unramified at all primes by applying the criterion of \cite{Go-Ma}, proposition 4.3. It therefore has no  non-trivial periods. 
  It follows that the only possible periods in weight $\leq 2$ are rational numbers (periods of $\Q(0)$), and rational multiples of the Lefschetz motivic period $\Lef^{\mm}$, whose period is $2\pi i$. The latter is anti-invariant under the action of the real Frobenius involution. On the other hand, motivic periods of graphs are invariant under the real Frobenius, since the Betti class $\sigma_G$ is fixed under its action. This rules out the second case.
    \end{proof}

Since a motivic logarithm $\log^{\mm} (x)$  is determined by its period $\log(x)$,  an alternative approach to proving this lemma would be by direct computation of the periods of the two and three-edge graphs, along the lines of  the method  of  Appendix 2. 

This innocuous-looking statement provides a constraint  to all orders in perturbation theory, \emph{i.e.}, an equation satisfied by all motivic Feynman amplitudes.  

\begin{thm} Let $G$ be a primitive log-divergent graph,   and $I^{\mm}_G$ its motivic amplitude.
Then every Galois conjugate  which is of weight $\leq 2$ is rational.
\end{thm}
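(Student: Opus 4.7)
The plan is to derive the theorem as an essentially direct consequence of the preceding lemma that $W_2\HF^{\mm}_{0,0}=W_0\HF^{\mm}_{0,0}\cong\Q$, together with the formal properties of the cosmic Galois action on $\HF^{\mm}_{0,0}$. Since $G$ is of type $(0,0)$, the motivic amplitude $I^{\mm}_G$ is an element of $\HF^{\mm}_{0,0}$, and the coaction
\[
\Delta:\HF^{\mm}_{0,0}\To \HF^{\mm}_{0,0}\otimes_{\Q}\HF^{\dR}_{0,0}
\]
is equivalent to a right action of $C_{0,0}=\Spec\,\HF^{\dR}_{0,0}$ on $\HF^{\mm}_{0,0}$. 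The Galois conjugates of $I^{\mm}_G$ are by definition the elements of the $C_{0,0}$-orbit (equivalently, of the subcomodule generated by $I^{\mm}_G$), and they all lie in $\HF^{\mm}_{0,0}$.

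Next I would invoke the compatibility of the weight filtration with the coaction: the filtration $W_\bullet\HF^{\mm}_{0,0}$ is stable under $C_{0,0}$, so a Galois conjugate of weight $\leq 2$ lies in $W_2\HF^{\mm}_{0,0}$. This is a formal consequence of the definition of weights in terms of $\mot_G$ being a filtration in $\HH$ preserved by the Tannakian machinery (see the discussion in \cite{NotesMot}, \S9), and is the only non-trivial verification required: one must simply note that the coaction $\Delta$ sends $W_n\HF^{\mm}_{0,0}$ into $\sum_{a+b=n}W_a\HF^{\mm}_{0,0}\otimes W_b\HF^{\dR}_{0,0}$, so the left-hand factor of any Galois conjugate has weight no greater than that of the original.

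Applying the lemma, $W_2\HF^{\mm}_{0,0}=W_0\HF^{\mm}_{0,0}$ and the latter consists exactly of rational multiples of the unit motivic period $[\Q(0),1,1]^{\mm}\cong 1$, hence of $\Q$. Therefore every Galois conjugate of $I^{\mm}_G$ of weight $\leq 2$ is a rational number, which is the claim. The main (and only real) point of substance is the stability of $W_\bullet$ under the coaction, which is standard; the rest is bookkeeping enabled by the previous lemma. Note also that this gives a uniform and topology-free statement: the bound $\leq 2$ is imposed irrespective of the loop order of $G$, illustrating the small-graphs principle in its simplest non-vacuous incarnation.
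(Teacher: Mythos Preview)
Your proof is correct and follows the same route the paper intends: the theorem is stated as an immediate corollary of the preceding lemma $W_2\HF^{\mm}_{0,0}=W_0\HF^{\mm}_{0,0}\cong\Q$, and the paper does not write out a separate argument. One small remark: your invocation of the compatibility of $W_\bullet$ with the coaction is unnecessary here, since by the paper's own convention (\S\ref{sect: Firstapplications}) ``weight $\leq 2$'' \emph{means} membership in $W_2\HF^{\mm}_{0,0}$; once you know (from the coaction formula) that a Galois conjugate lies in $\HF^{\mm}_{0,0}$, the hypothesis already places it in $W_2\HF^{\mm}_{0,0}$, and the lemma finishes.
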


In fact, no element of $\HF_{0,0}^{\mm}$ can  have a Galois conjugate of the form $\log^{\mm}(p)$ for $p$ prime.
This theorem is consistent with the coaction conjecture $\ref{conjcoact}$ below.

\begin{example} In \cite{PanzerSchnetz}, Schnetz and Panzer give  examples of amplitudes of graphs  $P_{9,36}=P_{9,75}$ and 
$P_{9,107}=P_{9,111}$   with 9 loops which are Euler sums and  are constrained in a non-trivial way by
this theorem. One  verifies by replacing them with their motivic versions (assuming the period conjecture
for Euler sums) that they never have a Galois conjugate $\log^{\mm}(2)$. 
\end{example}

In a similar way, one easily checks that the motivic period $\Li_2^{\mm}(\zeta_6)$, where $\zeta_6$ is a primitive $6^\mathrm{th}$
root of unity does not occur in $W_4\HF^{\mm}_{0,0}$. This gives a non-trivial constraint on the amplitudes  $P_{7,11}$, $P_{8,33}$ and $P_{9,136}=P_{9,149}$ at seven, eight and nine loops respectively.  See \cite{PanzerSchnetz} for further details.

\subsection{A programme for a Galois theory of graphs}  \label{sect: programme}
There are several  known  families of algebraic relations  between amplitudes of graphs of type $(0,0)$. 
 We expect that many of these relations also hold between motivic periods. This suggests the following list of relations through which the map $(\ref{graphstoreps})$ might 
 factorize:

\begin{enumerate}
\item (Tadpoles).  $G\sim G/e$ where $e$ is a tadpole (self-edge).
\item (Series-Parallel operations). $G\sim G'$ where $G'$ is obtained from $G$ by subdividing an edge  or duplicating
an edge.
\item (Planar duals).  $G\sim G^{\vee}$ where $G$ is  planar, and $G^{\vee}$ is its dual graph.
\item (One and two-vertex joins).  If $G$  is a one or two-vertex join of   graphs $G_1$ and $G_2$ then we expect a relationship between 
$\HF^{\mm}(G)$ and $\HF^{\mm}(G_1) \HF^{\mm}(G_2)$.

\item (Completion and twist identities).  $G\sim G'$ where $G$ and $G'$ are twist-related  or obtained from a four-regular graph by deleting a  vertex \cite{Census}.
\end{enumerate}

Note that $(3)$ could be extended to non-planar graphs if one considers graph motives of matroids. 
If one replaces 
$\mot(G)$ with a cubical version  $\mot_c(G)$  following \cite{BrFeyn},  then $(3)$ holds automatically.
The series and parallel operations are (planar) dual to each other.  The three-edge examples of \S\ref{sect: smallgraphsillustration}   are equivalent to the empty graph  by $(2)$. The relation $(5)$ is  considerably more speculative than the others.

In addition, we can hope for precise information about the weights of graphs. 
We expect that the highest weight-graded quotient of $\HF^{\mm}(G)$ should be related to the  $c_2$-invariant of a graph,\footnote{to set this up rigorously, 
we could enhance the  category of realizations $\HH$ to include an $\ell$-adic component.  The $c_2$-invariant should be obtained from the action of Frobenius on the highest non-trivial  weight-graded piece of $(\mot_G)_{\ell}$.}  which is known to  satisfy several further combinatorial identities.  This suggests, at the very least for graphs satisfying $N_G= 2h_G$,  that
\begin{enumerate}[resume]
\item $w(G) \leq 2N_G -6$,  and $w(G) \leq 2N_G-8$ if $G$ has weight-drop.
\item    $w(G) \leq 2N_G-8$ if $G$ contains a sub-divergence. 
\item  $w(G) \leq n$ if and only if $w(G') \leq n$ whenever $G, G'$ are equivalent under double-triangle reduction.
\end{enumerate}
It is highly  likely that there are  further relations  between motivic periods of graphs  which have   three-valent vertices or  triangles, which remain to be discovered.\footnote{we know, for example, that graphs of vertex width $\leq 3$ evaluate to multiple zeta values but the numbers of such graphs greatly exceeds the dimension of the space  of multiple zeta values of the appropriate weight, so there must exist many relations between these amplitudes.}

Finally, one would like to have some control on the degree of unipotency of motivic periods. This   has not been investigated.

\subsection{Coaction conjecture and  speculation} Recall that a graph $G$ of type $(0,0)$ is said to be in $\phi^n$ if every vertex  of $G$ has degree at most $n$. 
\begin{defn} Let 
$\HF^{\mm}_{\phi^4} \subset  \HF^{\mm}_{0,0}$ denote the $\Q$-vector space spanned by the motivic 
amplitudes  $(\ref{eqn: IGper})$ of primitive log-divergent graphs in $\phi^4$ theory. \end{defn}
The following extraordinary conjecture  was formulated in \cite{PanzerSchnetz} and called the coaction conjecture. It was a  principal motivation for the present paper. 
\begin{conj} \cite{PanzerSchnetz} \label{conjcoact} $\HF^{\mm}_{\phi^4}$ is stable under the action of  $C_{0,0}$. 
\end{conj}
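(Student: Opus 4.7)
The plan is to attempt the conjecture by combining the small-graphs principle of Conjecture \ref{conjIntro} with a battery of graph-theoretic identities specific to $\phi^4$ theory. The first, and most serious, preliminary step is to establish a theory of \emph{regularised motivic periods}, adequate to attach a canonical motivic period to a divergent parametric integral via limiting mixed Hodge structures. Once this is in place, Conjecture $\ref{conjIntro}$ should be upgradable to a theorem: every motivic period of weight $\leq k$ of a \motic descendant $\gamma_1\otimes\cdots\otimes\gamma_r$ of $G$ is equivalent to a polynomial in regularised motivic amplitudes of the individual $\gamma_i$.

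With this tool in hand, I would unravel the coaction formula $(\ref{eqn: CoactionImG})$ on  $I_G^{\mm}$ for $G \in \phi^4$ primitive log-divergent. The key observation is that the left-hand factors $[\mot_G,[\sigma_G],e_i]^{\mm}$ range over motivic periods of \emph{arbitrary} de Rham classes $e_i\in\omega^{\gen}_{dR}(\mot_G)$, not only of $\omega_G$. By filtering $(\mot_G)_{dR}$ by weight and invoking Theorem $\ref{thmstabgrad}$, each $\gr_k^W(\mot_G)_{dR}$ is a sub-quotient of a direct sum of tensor products $H^{i_1}(P^{\gamma_1}\setminus Y_{\gamma_1})\otimes\cdots\otimes H^{i_r}(P^{\gamma_r}\setminus Y_{\gamma_r})$ with $i_1+\cdots+i_r\leq k$, where the $\gamma_j$ range over motic descendants of $G$. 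Combining this with the face relations $(\ref{eqn: Faceequations})$ and the regularisation step, every motivic period $[\mot_G,[\sigma_G],e_i]^{\mm}$ should reduce to a polynomial in regularised motivic amplitudes of the $\gamma_i$ and of $G/\gamma$ for various motic $\gamma \subsetneq G$.

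The hard part is that a motic subgraph $\gamma$ of a $\phi^4$ graph is itself $\phi^4$, but the quotient $G/\gamma$ almost never is: contracting $\gamma$ fuses vertices and generically produces vertices of valence $>4$. To close the argument one has to show that each such non-$\phi^4$ amplitude is, after regularisation, rewritable as an amplitude of a genuine primitive log-divergent $\phi^4$ graph. This is precisely the content of the graph-theoretic reductions proposed in \S\ref{sect: programme}: series/parallel equivalence, planar (and matroid) duality, one- and two-vertex join factorisations, and above all the completion and twist identities, which typically produce a $\phi^4$ representative in the completion class of any log-divergent graph of suitable valency profile. My plan would be to prove motivic (rather than merely numerical) versions of these identities one by one, using the techniques of \S\ref{sect: Weightsstability} to control weights and \S\ref{sect:  moticHopfalgebra} to track the compatibility with the \motic coproduct. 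The main obstacle is precisely lifting the completion/twist relations to motivic periods; the Panzer--Schnetz verifications should be interpreted as strong empirical evidence that \emph{exactly} this family of identities suffices to force $\Delta \HF^{\mm}_{\phi^4} \subset \HF^{\mm}_{\phi^4}\otimes \HF^{\dR}_{0,0}$.
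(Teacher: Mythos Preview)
The statement is a \emph{conjecture}, not a theorem, and the paper does not prove it. Immediately after stating it, the paper says ``This conjecture goes far beyond what we can presently prove, but has been verified numerically in hundreds of examples.'' So there is no proof in the paper to compare your proposal against.

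Your proposal is a research programme, not a proof, and you are candid about this. Each of its main ingredients is itself an open problem flagged as such in the paper: (i) a theory of regularised motivic periods is explicitly identified as the missing ingredient for Conjecture~\ref{conjIntro} (see the footnote there); (ii) Conjecture~\ref{conjIntro} itself is open; (iii) the identities of \S\ref{sect: programme} are all stated as expectations, with (5) (completion/twist) singled out as ``considerably more speculative than the others.'' The paper's own discussion after Conjecture~\ref{conjcoact} sketches a strategy in the same spirit as yours---assume some of \S\ref{sect: programme}, combine with stability and the small-graphs principle---but only to deduce \emph{partial} results of the form $W_n \HF^{\mm}_{\phi^4} = W_n \HF^{\mm}_{0,0}$ for small $n$, not the full conjecture.

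There is also a genuine gap in your closing step. You assert that the identities of \S\ref{sect: programme} ``typically produce a $\phi^4$ representative in the completion class of any log-divergent graph of suitable valency profile.'' Nothing in the paper supports this, and it is not true in general: the \motic descendants of a primitive log-divergent $\phi^4$ graph include quotients $G/\gamma$ which need be neither log-divergent nor $\phi^4$ nor in the completion class of any $\phi^4$ graph. The paper makes no claim that series/parallel, duality, vertex-join, and completion/twist moves suffice to rewrite every such descendant period as a $\phi^4$ amplitude; indeed, the coaction conjecture is precisely the assertion that such a rewriting exists, and the paper treats this as a deep mystery rather than a consequence of the listed identities.
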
 

This conjecture goes far beyond what we can presently prove, but has been verified numerically in hundreds of examples \cite{PanzerSchnetz}.  However, proving any or all of the  properties  of \S\ref{sect: programme} would lead to    spectacular consequences for graph amplitudes in the direction of this conjecture.
For example, let us assume only $(2)$, and call  two graphs \emph{sp-equivalent} if they are obtained 
from each other by series-parallel operations. 
The smallest graph not equivalent to the trivial graph is the wheel with three spokes $W_3$ with six edges (whose amplitude is $6 \zeta(3)$; we expect that $\zetam(3)$ is 
its unique non-trivial motivic period)
followed by the wheel with four spokes $W_4$ with eight edges.  By stability, 
we would  deduce that 
$$W_7 \HF^{\mm}_{0,0} = \Q \oplus \zetam(3) \Q\ .$$
In particular, no $\zetam(2)$ occurs, which would then imply that no motivic period \emph{at any loop order} can have a Galois conjugate
$\zetam(2)$. Since $W_3$ is primitive log-divergent in  $\phi^4$ this would  also prove conjecture $\ref{conjcoact}$ up to weight seven:
$$W_n \HF^{\mm}_{\phi^4} = W_n \HF^{\mm}_{(0,0)} \quad \hbox{ for } n\leq 7\ , $$
since the right-hand side is stable under the action of $C_{0,0}$ by definition. 
Things become interesting at nine edges. There are three non-trivial sp-equivalence classes  of graphs:
the graph obtained by deleting  an edge from the complete graph $K_5$;  the complete bipartite graph $K_{3,3}$; and a planar graph which is the skeleton of 
triangular prism.  By continuing in this manner, and computing motivic periods of ever-larger graphs, one could deduce infinite families of constraints on motivic periods
of Feynman graphs to all orders. This would go a long  way to  explain the remarkable structure observed in  \cite{PanzerSchnetz}. 

\begin{example} 
 To illustrate how the topology of graphs can impinge upon their periods, consider any class of graphs $\mathfrak{C}$ which is stable under 
 edge contraction and \motic subgraphs. It defines a subspace
 $\HF_{\mathfrak{C}}^{\mm}  \subset  \HF^{\mm}_{(0,0)}$
 which is stable under the action of the constant cosmic Galois group $C_{0,0}$. For example, the motivic periods of planar graphs are Galois-stable.

 Specialising further still, consider
the following thought-experiment for the   wheel with $n$-spokes graphs $W_n$.  Since there are several missing elements we shall be very brief.
The key topological property of the wheel graphs
is that, \emph{modulo sp-equivalence},   the vector space they span  is stable under  \motic  descendants.  Another way to say this is  contracting an edge in a wheel leads to a graph which is either sp-equivalent to another wheel, or the trivial graph. Every motic
subgraph of a wheel has the same property.   

  Bloch-Esnault-Kreimer have proved that  (e.g., as an object of $\HH$), 
\begin{equation} \label{wheelcohomclass} H^{2n-1}(\Pro^{2n-1} \backslash X_{W_n})\cong \Q( 3-2n )\ .
\end{equation} 
Let us assume $(2)$ of \S\ref{sect: programme} and furthermore that 
the wheel motives are mixed Tate over $\Z$ and have no non-trivial non-global motivic periods (i.e., the conclusion, but not necessarily the hypotheses of proposition 10.7 in \cite{NotesMot} hold).  This would imply that  all elements in $({}_{\sigma}\mot(G))_{dR}$ 
are images of classes $(\ref{wheelcohomclass})$ via face maps and hence have weights $\equiv 2 \pmod 4$.
By the following proposition, 
$$\HF^{\mm}(W_n) = \Q \oplus \Q \zetam(3) \oplus \Q\zetam(5) \oplus \ldots \oplus \Q \zetam(2n-3)\ . $$
In particular, this would imply that all periods of the wheels graphs are linear combinations of odd zeta values only. It is known
that     the wheel amplitude $(\ref{eqn: IGper})$ itself is an explicit rational multiple of a single odd zeta value of highest weight. 
In this case, therefore, we expect to see  a direct relationship  emerging between the topological properties of  wheel graphs  in the \motic Hopf algebra
and the Galois-theoretic properties of their periods. 
\end{example} 

\begin{prop} Let $M$ be an effective mixed Tate motive over $\Z$ such that
$$\gr^W_{4n} M = 0 \qquad \hbox{ for all  } n \geq 1 \ .$$
Then the real (i.e., Frobenius-invariant) motivic periods of $M$ are $\Q$-linear combinations of $1$ and $\zetam(2n+1)$, for $n\geq 1$. 
\end{prop}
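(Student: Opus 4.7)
The plan is to reduce the statement to a structural decomposition of $M$ in the category $MT(\Z)$ of mixed Tate motives over $\Z$, then to read off the real motivic periods summand-by-summand. Since $M$ is effective, $\gr^W_{2n}M = \Q(-n)^{a_n}$ with $a_n \geq 0$ for $n\geq 0$, and the hypothesis forces $a_n = 0$ whenever $n\geq 1$ is even. Thus the Tate twists appearing in $\gr^W M$ lie in $S = \{0\}\cup\{2k+1 : k\geq 0\}$.

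I then claim that $M$ decomposes (non-canonically) as a direct sum of copies of $\Q(0)$, of $\Q(-n)$ for $n\in S\setminus\{0\}$, and of Kummer motives $\mathcal{K}_{2k+1}$ for $k\geq 1$, where $\mathcal{K}_{2k+1}$ is the unique non-trivial extension of $\Q(-(2k+1))$ by $\Q(0)$ in $MT(\Z)$. The key input is Borel's computation $\mathrm{Ext}^1_{MT(\Z)}(\Q(0),\Q(m)) = \Q$ for $m\geq 3$ odd and zero otherwise, together with the vanishing $\mathrm{Ext}^2_{MT(\Z)} = 0$. By twisting, $\mathrm{Ext}^1_{MT(\Z)}(\Q(-b),\Q(-a)) \cong \mathrm{Ext}^1_{MT(\Z)}(\Q(0),\Q(b-a))$ for $b>a$; with $a,b\in S$ both odd and positive, the difference $b-a$ is even and the Ext vanishes, so the only surviving extensions among our twists are those with $a=0$ and $b=2k+1\geq 3$. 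An induction on the length of the weight filtration --- splitting off the top graded piece at each step and using the long exact sequence of $\mathrm{Ext}^*(\Q(-(2k+1)),-)$ applied to a Kummer subobject $\mathcal{K}_{2m+1}$, together with $\mathrm{Ext}^2=0$, to see that the extension always factors through $\mathrm{Ext}^1(\Q(-(2k+1)),\Q(0))$ and can therefore be absorbed into a fresh Kummer summand --- yields the decomposition.

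With the decomposition in hand, the $F_\infty$-invariant motivic periods of each summand are immediate. Those of $\Q(0)$ form $\Q\cdot 1$. Those of $\Q(-n)$ for $n\geq 1$ odd are $\Q\cdot \LM^n$, which is $F_\infty$-anti-invariant and thus contributes no nonzero real period. The motivic periods of $\mathcal{K}_{2k+1}$ span the three-dimensional space $\Q\cdot 1 \oplus \Q\cdot \zetam(2k+1) \oplus \Q\cdot \LM^{2k+1}$, on which $\zetam(2k+1)$ is $F_\infty$-invariant (it corresponds to the real number $\zeta(2k+1)$) while $\LM^{2k+1}$ is $F_\infty$-anti-invariant for $2k+1$ odd; the invariant subspace is therefore $\Q\cdot 1 \oplus \Q\cdot \zetam(2k+1)$. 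Summing over the indecomposable summands of $M$ produces exactly $\Q$-linear combinations of $1$ and $\zetam(2k+1)$ for $k\geq 1$, as required.

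The main obstacle is the decomposition step. It rests on the numerical observation that the hypothesis $\gr^W_{4n}M = 0$ forces all non-zero allowed Tate twists to be odd, so all their pairwise differences are even, and hence every pairwise Ext within this set vanishes; only the Ext's with $\Q(0)$ survive. The delicate sub-case arises when the already-built sub-object contains some $\mathcal{K}_{2m+1}$ and one wants to extend by $\Q(-(2k+1))$ with $k\neq m$: here one must argue via the Ext long exact sequence, using both the vanishing of $\mathrm{Ext}^1(\Q(-(2k+1)),\Q(-(2m+1)))$ and of $\mathrm{Ext}^2$, to conclude that the extension class is determined by its image in $\mathrm{Ext}^1(\Q(-(2k+1)),\Q(0))$ and hence can be peeled off as a new Kummer summand disjoint from the old ones.
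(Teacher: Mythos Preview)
Your decomposition claim is false. Take $M$ with $\gr^W_0 M = \Q(0)$, $\gr^W_6 M = \Q(-3)$, $\gr^W_{10} M = \Q(-5)$, and with both the extension class of $W_6 M$ in $\mathrm{Ext}^1(\Q(-3),\Q(0))\cong\Q$ and the class of $M/W_4 M$ in $\mathrm{Ext}^1(\Q(-5),\Q(0))\cong\Q$ nonzero. Concretely, as a graded representation of the motivic Lie algebra this is the three-dimensional space $\Q e_0\oplus\Q e_3\oplus\Q e_5$ with $\sigma_3 e_3 = e_0$ and $\sigma_5 e_5 = e_0$. This $M$ is indecomposable: any direct summand containing $e_0$ must contain $\sigma_3 e_3$ and $\sigma_5 e_5$, hence $e_3$ and $e_5$. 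So $M$ cannot be written as a direct sum of $\Q(0)$'s, $\Q(-n)$'s and Kummer motives, since that would require two copies of $\Q(0)$ in weight zero. Your inductive step ``absorbed into a fresh Kummer summand'' is exactly where this breaks: the isomorphism $\mathrm{Ext}^1(\Q(-5),\mathcal{K}_3)\cong\mathrm{Ext}^1(\Q(-5),\Q(0))$ tells you the extension is the \emph{pushout} $\mathcal{K}_3\cup_{\Q(0)}\mathcal{K}_5$, not a direct sum.

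The paper sidesteps this by never attempting to decompose $M$. Instead it works directly with motivic periods and the action of the generators $\sigma_{2n+1}$ of the graded motivic Lie algebra: any period $\xi$ of odd MZV-degree $2m+1$ satisfies $\sigma_{2n+1}\xi\in\gr^W_{2(2m-2n)}$, which is zero unless $n=m$, in which case it lies in $\Q$. Hence $\xi$ is primitive (unipotency degree $\leq 1$), and one invokes the known classification of primitive motivic periods of $MT(\Z)$ together with Frobenius-invariance. Your approach can be repaired along similar lines: the same Ext vanishing that you identified shows $M/W_0 M$ is semisimple, hence $M$ is an extension of a semisimple object by $\Q(0)^a$, and one checks directly that such an $M$ is a \emph{quotient} of a direct sum of $\Q(0)$'s, $\Q(-n)$'s and Kummer motives (in the counterexample, $\mathcal{K}_3\oplus\mathcal{K}_5\twoheadrightarrow M$). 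Since motivic periods pull back along surjections, this is enough for the period computation.
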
 
\begin{proof}   Choose generators  $\sigma_{2n+1}$  of the de Rham graded Lie algebra of $\MT(\Z)$ in odd degrees $-2n-1$ for $n\geq 1$, where the degree is the `MZV-weight', or  one half of the Hodge-theoretic weight.   The non-rational motivic periods of $M$ have only odd degrees by assumption.   Consider a motivic period  $\xi_{2m+1}$ of $M$ of degree $2m+1>0$. Since $\sigma_{2n+1} \xi_{2m+1} $ has even degree $2(n-m)$, it is zero unless $n=m$ and $\sigma_{2n+1} \xi_{2m+1} \in \Q$. Therefore $\xi_{2n+1}$ has de Rham  Galois conjugates itself and $1$. It is therefore   primitive (unipotency degree $\leq 1$).  
By theorem 3.3 of \cite{BrMTZ} it is in the space  $\zetam(2n+1)\Q\oplus (\Lef^{\mm})^{2n+1}\Q$. Since it has real periods and $\per(\Lef^{\mm})= 2i 
\pi$ is imaginary,  it is a rational multiple of $\zetam(2n+1)$.   
\end{proof}

\begin{rem}
The previous argument fails   for a weight-drop graph such as the bipartite graph  $K_{3,4}$, since its motive is non-trivial in  weight $16 \equiv 0 \pmod 4$ and the previous proposition does not apply.  It has \motic graph sub-quotients $W_3$ and $W_4$, 
so the same argument would allow the motivic amplitude of $K_{3,4}$ to have Galois conjugates $\zetam(3)$ and $\zetam(5)$. 
This is entirely   consistent with the fact that 
$$I_{K_{3,4}}= - {216\over 5} \zeta(5,3)  -81 \zeta(5)\zeta(3) +{522\over 5} \zeta(8)$$
which, assuming the period conjecture for multiple zeta values, indeed has non-trivial Galois conjugates $\zeta(3)$ and $\zeta(5)$. 
\end{rem}
In conclusion, the Galois theory of graphs described here,  with a few extra speculative  ingredients such as those outlined in  \S\ref{sect: programme}, seems to predict quite accurately 
the observed patterns of periods in amplitude computations at low loops. 

\section{Examples with general kinematics and conjectures} \label{sect: Finalsection}

\subsection{General kinematics} It is  possible to undertake a classification of the motivic periods of graphs with few edges
and arbitrary external kinematics as  above. By the small graphs principle  and Galois action, it  leads, in the same way as  \S\ref{sect: smallgraphsillustration}, to all order constraints on Feynman periods of any type $(Q,M)$.  We shall not discuss this here, but only make a few brief comments.

First of all, the cohomology of generic one-loop graphs was studied in \cite{BlochKreimer} and can be re-expressed in the language of motivic periods. Applying
 formula $(\ref{eqn: CoactionImG})$ it provides a computation of the motivic coaction. The case of graphs with subdivergences can be treated using the techniques
described here, and one example is treated in full detail  in an appendix \S\ref{sect: nonglobalexample}. 
The recent preprints \cite{CutsCoproducts} and \cite{CutsCoproducts2} give conjectural formulae for the coaction on some examples of graphs (with the caveat that 
one side of the coaction needs to be expressed in terms of  de Rham periods).  I expect that these formulae can be proved from first principles
using the cohomological techniques described here. A very interesting observation of  \emph{loc. cit.} is that the coaction formulae  apparently continue to  hold on the level of $\varepsilon$-expansions
in dimensional regularisation.

Note also that a  complete analysis of Feynman graphs with up to three edges would include the sunrise graph, which involves  the cohomology of a family of elliptic curves  and has a very extensive literature. The results of \cite{Bloch-Vanhove-Kerr}  likewise  can be used to deduce information about the corresponding motivic periods .

\begin{rem} Other interesting classes of graphs to study in this framework are those of type $(Q,M) = (2,0)$ or $(0,1)$ which depend on a single scale. When the scale
factorizes out of the graph polynomials, the corresponding  amplitudes
 effectively depend upon a single number. The massive banana graphs (see \cite{BroadhurstModular}), for example, would seem to  generate a small family of motivic periods which are stable under the cosmic Galois group, and hence should  have interesting arithmetic properties. 
 We suspect that this could explain why certain combinations of periods related to banana graphs are periods of pure motives, and hence, by Deligne's conjecture, are critical values of the underlying $L$-functions. 
  \end{rem}

\subsection{Graphs with many external legs}
The parametric representation is inefficient for graphs with many edges, and the number of edges does not accurately predict the  expected weight of the amplitude.
The existence of the momentum space representation suggests the weight depends on the loop number. 

\begin{conj} \label{conj: weights} Let $G$ be a Feynman graph with $h$ loops, in $d\in 2 \N$ space-time dimensions,  and let $\omega_G$ be the integrand of the Feynman amplitude. Then 
\begin{equation} \label{eqn: conjweight}
\omega_G \in \omega^{\gen}_{dR} W_{dh} \mot_G \ .
\end{equation}
\end{conj}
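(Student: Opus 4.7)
The natural motivation for the bound $W_{dh}$ comes from the Schwinger parametrisation, which expresses $I_G$ as a $dh$-fold Gaussian integral over loop momenta producing an explicit factor $\pi^{dh/2}$ (whose motivic version $(\LM)^{dh/2}$ has weight exactly $dh$), followed by the parametric integral. Turning this heuristic into a motivic statement would seem to require a theory of motivic Gaussian integrals that is not currently available, so my approach is to proceed instead by a direct Hodge-theoretic argument in parametric space, using induction on the loop number $h$.

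The plan is as follows. First, rewrite the integrand as
$$\omega_G = \Psi_G^{N_G - (h+1)d/2} \, \Xi_G(q,m)^{hd/2 - N_G} \, \Omega_G,$$
exhibiting the pole orders of $\omega_G$ along $Y_{\Psi_G}$ and $Y_{\Xi_G}$ as summing to $d/2$. After an auxiliary resolution $\widetilde{P}^G \to P^G$ making $\widetilde{Y}_G \cup D$ strict normal crossings, I would apply Deligne's description of the weight filtration on $H^{N_G-1}(P^G \setminus Y_G, D\setminus (D\cap Y_G))$ via the pole-order filtration on the logarithmic de Rham complex. This yields an a priori weight bound which is then sharpened to $W_{dh}$ by induction on $h$. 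The base case $h=1$ should follow from the explicit one-loop Hodge-theoretic calculations in \cite{BlochKreimer}, where amplitudes are polylogarithmic of weight $\leq d$. For the induction step, the face relations of Theorem \ref{thmFaceRel} and the recursive description of $\gr^W \mot_G$ in Theorem \ref{thmstabgrad} show that the strata of $D$ correspond to tensor products $\mot_{\gamma_1} \otimes \cdots \otimes \mot_{\gamma_k}$ of \motic subquotients with $h_{\gamma_1} + \cdots + h_{\gamma_k} = h$. The crucial step is to check that on each such stratum, $\omega_G$ restricts, up to correction terms of strictly lower pole order (which contribute strictly lower weight on the $E_1$-page of the weight spectral sequence), to a tensor product of integrands $\omega_{\gamma_1} \otimes \cdots \otimes \omega_{\gamma_k}$ of the smaller graphs. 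By induction each $[\omega_{\gamma_i}]$ lies in weight $\leq d\, h_{\gamma_i}$, so the tensor product lies in weight $\leq d(h_{\gamma_1} + \cdots + h_{\gamma_k}) = dh$.

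The main obstacle is verifying that $\omega_G$ itself --- rather than arbitrary elements of $\mot_G$ --- respects the recursive product structure on the boundary strata. This reduces to showing a \emph{factorisation at the level of differential forms}, of the schematic shape
$$\omega_G \big|_{D_\gamma} = \omega_\gamma \wedge \omega_{G/\gamma} + (\text{strictly higher-order vanishing}),$$
for every \motic subgraph $\gamma \subset G$, compatibly with the UV- and IR-factorisations of $\Psi_G$ and $\Xi_G(q,m)$ established in Section \ref{sect: Section2}. Technically this requires a careful analysis of residues in the local coordinates of \S\ref{sect:  localcoords} and of the behaviour of the homogeneous form $\Omega_G$ under the blow-up $\pi_G$, to ensure that the homogeneity degrees, pole orders, and combinatorial prefactors match exactly on both sides. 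Without this precise factorisation at the form level, one is stuck with the weaker a priori bound from pole-order counting, which in general exceeds $dh$.
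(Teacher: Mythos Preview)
This statement is labelled a \emph{conjecture} in the paper and is not proved there. The paper offers only a heuristic rationale, and that heuristic goes in the opposite direction from your proposal: rather than working in parametric space, the paper suggests constructing a momentum-space motive $\mot^{\mom}_G$, arguing that the Schwinger trick gives an equivalence of motivic periods between momentum and parametric representations, and observing that the weight bound $dh$ is essentially trivial on the momentum side because the integration is over a $dh$-dimensional space of loop momenta. You mention this heuristic and dismiss it; the paper regards it as the natural route, though it does not carry it out.

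Your inductive strategy has a genuine gap at the heart of the argument. The face maps $i_\gamma : \mot_\gamma \otimes \mot_{G/\gamma} \to \mot_G$ go \emph{into} $\mot_G$, not out of it; they describe which classes arise from boundary strata, not how a given class restricts to the boundary. The specific class $[\omega_G]$ is a global form on $P^G \setminus Y_G$, so it lies in the image of the absolute cohomology $H^{N_G-1}(P^G \setminus Y_G) \to \mot_G$, and its weight is controlled there, not by anything happening on $D$. Knowing that $\omega_G$ factorises nicely along each $D_\gamma$ tells you about the images of the face maps, which is orthogonal to bounding the weight of a class coming from the open stratum. Theorem~\ref{thmstabgrad} constrains $\gr^W_k$ of the whole object $\mot_G$ in terms of products of smaller graph hypersurface complements, but it does not give you a handle on where the particular class $[\omega_G]$ sits in the weight filtration. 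So even if the form-level factorisation you describe holds exactly, the induction does not close: you would still need an independent argument bounding the weight of $[\omega_G]$ in the absolute cohomology of $P^G \setminus Y_G$, and that is precisely where the naive pole-order bound overshoots $dh$, as you yourself note.
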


\begin{rem} This conjecture only gives a bound on the weights of amplitudes, and not general Feynman periods, which could potentially have higher weights.
\end{rem} 
The heuristic rationale behind the conjecture is as follows: 
\begin{itemize}
\item There should exist  objets  $\mot^{\mom}_G$ in a category of realisations such that the 
amplitude in momentum space is the period of  $[\mot^{\mom}_G, \omega_G^{\mom}, \sigma^{\mom}_G]^{\mm} $.
\item The Schwinger trick (universal quadric) should give an equivalence 
$$[\mot^{\mom}_G, \omega_G^{\mom}, \sigma^{\mom}_G]^{\mm} =  [\mot_G, \omega_G, \sigma_G]^{\mm}$$
This was partly carried out in \cite{BEK}, equation (10.4), in the  case of no  kinematics or  subdivergences,
and for the absolute (not  relative) cohomology. 
\item   The momentum space integrand should satisfy
$\omega_G^{\mom} \in W_{dh} (\mot^{\mom}_G)_{dR}$.
\end{itemize}

This conjecture, combined with the stability conjecture $\ref{conjIntro}$, would   yield  powerful identities for Feynman amplitudes.  In particular, it suggest the following. 

\begin{conj}  \label{corofconj} Let $G$ be as in conjecture \ref{conj: weights}. The amplitude of $G$ is a (regularised) period of the  \motic descendants
of $G$ of degree $\leq dh$. 
\end{conj}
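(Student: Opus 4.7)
The strategy is to combine Conjecture \ref{conj: weights} with the affine weight bound of Theorem \ref{thmMainAffine}, and then to translate the resulting decomposition back into a statement about actual Feynman periods of descendants of $G$. Throughout, assume Conjecture \ref{conj: weights}, so that $[\pi_G^*\omega_G(q,m)] \in W_{dh}(\mot_G)_{dR}$, and hence the motivic Feynman amplitude $I^{\mm}_G(q,m) = [\mot_G, [\omega_G], [\sigma_G]]^{\mm}$ lies in $W_{dh}\HF^{\mm}_{Q,M}$.

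First, I would replace $\mot_G$ by the affine graph motive $\mot^a_G$ of \S\ref{sectpffinite}. The canonical morphism $i:\mot_G \to \mot^a_G$ is compatible with the Betti class $[\sigma_G]$ (because $\widetilde\sigma_G \subset A^G(\C)$), and by the discussion following Theorem \ref{thmMainAffine} it realises $I^{\mm}_G(\omega_G)$ as an affine motivic period of weight $\leq dh$. Next, I would apply Theorem \ref{thmMainAffine} directly: every affine motivic period of $G$ of weight $\leq dh$ is a $k_{Q,M}$-linear combination of affine motivic periods of \motic descendants $\gamma_1 \otimes \cdots \otimes \gamma_r$ of $G$ of degree $\sum_i(N_{\gamma_i}-1) \leq dh$. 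This is the crucial finite-dimensionality input, and it is the point at which the bound $dh$ enters as an edge-count on the descendants.

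Applying the period map and using the face relations of Theorem \ref{thmFaceRel}, each affine motivic period of a descendant factorises, via the product structure $(\ref{eqn: DIprodstructures})$ of $D_I$, as a product of affine motivic periods of the individual factors $\gamma_i$. On the period side, this means $I_G(q,m)$ becomes a $k_{Q,M}$-linear combination of products of integrals supported on the faces of $\widetilde\sigma_G$ corresponding to these descendants. Exactly one $\gamma_i$ in each term carries the external kinematic data of $G$; the rest are of type $(0,0)$ and contribute constant motivic periods as in \S\ref{sect: MotAmp}. Each such factor is an integral over $\sigma_{\gamma_i}$ of a differential form arising from $(\mot^a_{\gamma_i})_{dR}$, which in general has poles along the exceptional divisors of $P^{\gamma_i}$ and also along the linear hyperplanes $\widetilde H_J$ removed to form $A^{\gamma_i}$, so the resulting integrals need not converge absolutely -- this is where the regularisation enters.

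The main obstacle, flagged in the paper itself in the footnote following Conjecture \ref{conjIntro}, is to give a rigorous notion of regularisation on the level of motivic periods. Concretely, two points need to be addressed. First, one needs a limiting motivic Hodge structure associated to the inclusion of a boundary stratum of the Feynman polytope, compatible with the face morphisms of Theorem \ref{thmFaceRel}, so that the affine motivic periods of the $\gamma_i$ in the above decomposition can be interpreted intrinsically as regularised motivic periods rather than merely as integrals on affine charts. Second, one should absorb the linear denominators $\alpha_J = \sum_{j\in J}\alpha_j$ (for $J$ a \motic subgraph) introduced by the passage to $A^G$, showing that they contribute to the regularisation procedure in a way consistent with the Connes--Kreimer and infra-red renormalisation operations encoded in the motic Hopf algebra of \S\ref{sect:  moticHopfalgebra}. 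Once this regularisation theory is in place, Conjecture \ref{corofconj} follows by combining the three steps above; absent it, the argument gives at least the unconditional corollary that $I^{\mm}_G(q,m)$ is a $k_{Q,M}$-linear combination of affine motivic periods of descendants of $G$ of degree $\leq dh$, which is the motivic shadow of the conjecture.
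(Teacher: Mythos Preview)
The statement you are addressing is a \emph{conjecture}, and the paper does not supply a proof; it merely says that Conjecture~\ref{conj: weights} ``suggests'' Conjecture~\ref{corofconj}, the implicit reasoning being to combine the weight bound $\omega_G \in W_{dh}$ with the (also conjectural) Conjecture~\ref{conjIntro} on regularised periods of descendants. Your proposal is therefore not to be compared against a proof in the paper but against this one-line heuristic, and it matches it well: you correctly identify that the weight bound plus a descendant-decomposition result is what is needed, and you correctly flag regularisation as the missing ingredient.

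The one genuine difference is that where the paper tacitly invokes Conjecture~\ref{conjIntro} (which speaks directly of \emph{regularised} periods), you instead route through the proven Theorem~\ref{thmMainAffine}, obtaining \emph{affine} motivic periods of descendants of degree $\leq dh$, and then propose to upgrade ``affine'' to ``regularised''. This is a reasonable and arguably more honest strategy, since it isolates exactly what is unconditional (the affine decomposition) from what is not (the interpretation of affine periods, with their extra linear denominators $\alpha_J$, as regularised Feynman periods). Your closing unconditional corollary --- that $I^{\mm}_G$ is a $k_{Q,M}$-linear combination of affine motivic periods of descendants of degree $\leq dh$ --- is correct and is precisely the content one can extract rigorously from the paper by combining Conjecture~\ref{conj: weights} with Theorem~\ref{thmMainAffine}.
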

Let $d=4$. Since, for every $h$ there are only finitely many graph topologies  of degree $\leq dh$, this would give a finite set of `master integrals'
for graphs with arbitrarily many external legs, at every loop order.

\begin{ex} Conjecture $\ref{conj: weights}$ is  certainly true for one-loop graphs.  Let $G$ be such a graph. Then its Feynman amplitude is of weight $\leq 4$.
It is expressible as a Feynman period of quotients of $G$ with at most five edges (in this case one can do slightly better 
and replace `five' with `three'). 
This is   a theorem due to Nickel \cite{Nickel}, reproved in \cite{BlochKreimer}, and can be made effective. The analogue of  this theorem for graphs with two loops is not  presently known, it seems, and the programme outlined above suggests a generalisation to all higher loop orders.
\end{ex}

\subsection{Further directions}
Some directions for further research include:
\begin{enumerate}
\item One would like to incorporate ultraviolet divergent graphs and the theory of renormalisation along the lines of \cite{Angles}.  Since
the geometry of the Feynman polytope is very close to the BPHZ forest formula,  the theory of renormalisation fits very naturally in the present framework.
One approach, which is closest to that used in physics, would be to allow integrals with logarithmic terms in the numerators. This can be done by defining  a notion of motivic periods with coefficients  (one needs to interpret an integral of a family of motivic periods\footnote{For example, one can make sense of a formula of the form
$\zetam(2) = -\int_0^1 \log^{\mm}(1-x) {dx \over x}$}
 as a motivic period).\footnote{This also seems to   be a possible way
to study dimensional regularisation: the coefficients of a  Taylor expansion in $\varepsilon$ are  integrals with logarithmic numerators.}
Another approach, which is perhaps less satisfactory,  is to differentiate
with respect to a scale in order to turn all integrands into algebraic differential forms, as in \cite{Angles}. Indeed, one can define the graph motives of UV divergent graphs simply by  using the decomposition into angles and
scales of \cite{Angles} and taking  the renormalised graph motive defined there. 

\item It would be interesting for applications to understand situations with infra-red singularities when the genericity assumptions $(\ref{eqn: genericmomenta})$ are not satisfied.
The graph factorisation theorems partially break down in this case, but by enlarging the class of polynomials considered, one might still retrieve a Galois theory of graphs. 
The QED contributions to the anomalous magnetic moment $g-2$  are a fascinating case study.
\item One would like to rethink the problems of resummation of the perturbative expansion in the context of motivic periods.  Taking a sum of amplitudes viewed merely as complex numbers ignores the fact that they are periods and all the structure that that entails. We expect the perturbative expansion can be lifted canonically to a
series whose coefficients are motivic periods. The invariants of motivic periods defined in \cite{NotesMot}
should enable one to sum this perturbative series in a more organised manner, e.g., according to various types,  which may lead to better convergence properties.
\item In our theory, the domain of integration is trivial and all the content of the physical theory is in the integrand. For this reason the de Rham Galois group plays
a privileged role. 
 The remarks in this chapter also suggest that the graph motive in parametric space is not optimal (at least for graphs with many edges), and one must also consider
momentum space or other integral representations, which will give a different bound on the space of Galois conjugates of amplitudes. It seems to be an important fact that amplitudes have several quite different integral representations, each giving different constraints on their Galois theory.
\item We worked exclusively in Euclidean space. In order to analytically continue to Minkowski space, one would like to know where the singularities
of the Feynman integrals are. A worrying possibility is  that the set of singularities of graphs of a fixed type $(Q,M)$ could become dense in the space $K_{Q,M}$
as the loop number increases. This is why our de Rham fiber functor is at the generic point. It would be  interesting to know if there is an open region in kinematic space
where all Feynman amplitudes are non-singular.

\item  There is good evidence to suggest that superstring amplitudes have a Galois theory of their own  (at least at tree-level \cite{SS}). This seems entirely reasonable
given that the moduli spaces $\mathfrak{M}_{g,n}$ have the same product-structure on their stratification as the one exploited here for amplitudes.
\end{enumerate}

\section{Appendix I: some cohomological tools for  periods} \label{sect: App1}

Not every cohomology class in the de Rham realisation of the graph motive can be represented by a global differential form such as $(\ref{eqn:  ExFeynmInt})$.
A study of non-global periods is  necessary
for understanding the  conjugates of amplitudes under the cosmic Galois group.
Therefore in this section we provide some tools for studying such non-global cohomology classes and their periods. The first is a complex which we use to show that the periods  of graphs are  limits of divergent integrals of globally-defined forms.   The second is a spectral sequence which allows us to import known results about the cohomology of graph hypersurfaces in projective space to the study of graph motives. It is related to the \motic Hopf algebra.

\subsection{A relative algebraic \v{C}ech-de Rham complex} \label{sectionRelAlgCech}
Let $D \subset X$ be a simple normal crossing divisor in a smooth scheme $X$ over $\Q$. 
Let $U_i \subset X$, for $i\in I$,  be a covering of $X$ by a finite collection of smooth affine varieties defined over $\Q$.  Let $D_j$, for $j\in J$, denote
the irreducible components of $D$. Write as usual $U_P= \cap_{i\in P} U_i$ for $\emptyset \subsetneq P\subset I$  and    $D_Q= \cap_{i\in Q} D_i$ for $Q\subset J$
with the convention $D_{\emptyset}=X$. 

 Consider the triple complex
\begin{equation} \label{eqn: reltripcomplex}
\Omega^{n,p,q}(\{U_i\},D)=  \bigoplus_{|P|= p, |Q|=q} \Omega^n(U_P \cap D_Q)
\end{equation} 
where $\Omega^n(U_P\cap D_Q)$ denotes the global sections of the sheaf of Kahler differential forms over $\Q$.  The differentials
$\Omega^{n,p,q}(\{U_i\}, D)\rightarrow \Omega^{n+1,p,q}(\{U_i\}, D)$ are given by the usual differential $d$ in the de Rham complex. The  differentials
$\Omega^{n,p,q}(\{U_i\}, D)\rightarrow \Omega^{n,p+1,q}(\{U_i\}, D)$ are given by the differentials in the usual \v{C}ech complex, 
and  the differentials $\Omega^{n,p,q}(\{U_i\}, D)\rightarrow \Omega^{n,p,q+1}(\{U_i\}, D)$ are given by restriction of differential forms to closed subsets $D_{Q \cup \{q\}} \subset D_Q$ with the  standard sign
convention.
The relative algebraic de Rham cohomology 
$$H_{dR}^n (X, D) = H^n(\mathrm{Tot} ( \Omega^{n,p,q}(\{U_i\},D)) $$
is the cohomology of the total complex associated to the triple complex  $(\ref{eqn: reltripcomplex}).$
A cohomology class  of degree $n$ in the latter can be represented by a collection 
\begin{equation}  \label{eqn:  omegaABforms}
 \omega^P_Q \quad \in \quad \Omega^{n+1-p-q} (U_P \cap D_Q) 
\end{equation}
where $P\subset I$ and $Q\subset J$ with    $|P|=p$,  $|Q|=q$ that are mapped to zero by the total differential. 
Associated to the triple complex $(\ref{eqn: reltripcomplex})$ are a number of spectral sequences, for example
$$E_{1}^{p,q} =  \bigoplus_{ |P|=p+1}  H_{dR}^q (U_P, U_P \cap D)  \quad \Longrightarrow \quad  H_{dR}^{p+q}  (X,D)$$
where the differential is  induced by inclusions as for  the \v{C}ech complex.

\subsection{A relative Stokes' theorem}  Let $C_n(U_P\cap D_Q)$ denote the complex of singular $n$-chains with coefficients in $\Q$  on the topological 
space $U_P\cap D_Q(\C)$. By analogy with $(\ref{eqn: reltripcomplex})$, define a  triple complex of singular chains
\begin{equation} \label{eqn: chainreltripcomplex}
C_{n,p,q}(\{U_i\},D)=  \bigoplus_{|P|= p, |Q|=q} C_n(U_P \cap D_Q)
\end{equation} 
where the differentials are given by the boundary map on chains, and the inclusion maps, with the appropriate signs.
The homology of the total complex is the relative Betti homology
$$H_B^n(X,D)^{\vee} = H_n(X(\C), D(\C)) = H_n (\mathrm{Tot} (  C_{n,p,q}(\{U_i\},D)))\ .$$
A relative homology class can be represented by a collection of chains
\begin{equation} \label{eqn: sigmacollection}
 \sigma^P_Q\quad  \in \quad C_{n+1-p-q}(U_P \cap D_Q)
\end{equation}
 whose total differential is zero.  Denote such a collection by $\sigma=\{\sigma^P_Q\}$.

Given such a chain of degree $n$, and a cohomology class $\omega \in H_{dR}^n(X,Z)$ represented by a collection 
$(\ref{eqn:  omegaABforms})$, define the period  (or integration pairing) by
\begin{equation}  \label{eqn: intpairingrelative}
 \int_{\sigma} \omega := \quad \sum_{P,Q} \int_{\sigma^P_Q} \omega^P_Q \ .
 \end{equation} 
 Note that there can be signs in this formula depending on sign conventions for the differentials in the complexes defined earlier. These are not important for the general discussion which follows.
The  following theorem is a corollary of Grothendieck's theorem \cite{Groth}. I was unable to find a suitable reference in the literature.\footnote{Although,
whilst writing up these notes, Huber and M\"uller-Stach kindly sent me  a preliminary draft of their book project on periods, which contains similar considerations.} 
\begin{thm} The pairing $(\ref{eqn: intpairingrelative})$ is well-defined and computes the isomorphism
$$\mathrm{comp}_{B,dR} : H_{dR}^n(X,D) \otimes_{\Q} \C \overset{\sim}{\To} H_B^n(X,D)^{\vee}\otimes_{\Q} \C\ .$$
\end{thm}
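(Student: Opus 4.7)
The plan is to reduce the statement to Grothendieck's comparison theorem for smooth affine varieties, leveraging the two spectral sequences arising from the Čech filtration on the triple complexes $\Omega^{n,p,q}(\{U_i\},D)$ and $C_{n,p,q}(\{U_i\},D)$. Since Grothendieck's theorem supplies the integration pairing between algebraic de Rham cohomology and singular homology on each smooth affine piece $U_P\cap D_Q$, the problem becomes one of assembling these local pairings into a single pairing on the total complexes that is compatible with both the differential structure and the abutments of the two spectral sequences to $H^n_{dR}(X,D)\otimes\C$ and $H_n(X(\C),D(\C))^\vee\otimes\C$ respectively.

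First I would verify that the pairing $(\ref{eqn: intpairingrelative})$ descends to (co)homology. Suppose $\omega = D\alpha$, where $D$ denotes the total differential on the triple complex of forms and $\alpha=\{\alpha^P_Q\}$ has total degree $n-1$. The component of $D\alpha$ with bidegree $(p,q)$ is a sum of three contributions: the exterior derivative $d\alpha^P_Q$, the Čech coboundary $\check\delta\alpha$, and the restriction $\rho\alpha$ onto deeper divisor strata. Paired against $\sigma^P_Q$ and summed, the $d$-part gives by ordinary Stokes a boundary integral $\int_{\partial\sigma^P_Q}\alpha^P_Q$; the $\check\delta$-part regroups contributions across overlaps $U_{P'}$ with $|P'|=|P|-1$; and the $\rho$-part rewrites contributions on $D_{Q\smallsetminus\{q\}}$. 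The total-cycle condition $D_*\sigma=0$ on $\sigma$ cancels all three with the correct signs. The symmetric argument handles $\sigma=D_*\tau$ against a cocycle $\omega$. This establishes well-definedness.

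Next, filter both total complexes by the Čech index $p$. The first page of the cohomological spectral sequence is
\[
E_1^{p,q}=\bigoplus_{|P|=p+1} H^q_{dR}(U_P,\,U_P\cap D)\ \Longrightarrow\ H^{p+q}_{dR}(X,D),
\]
and there is an analogous homological spectral sequence converging to $H_{p+q}(X(\C),D(\C))$. On each $E_1$-term the pairing reduces to the Grothendieck integration pairing between $H^q_{dR}(U_P,U_P\cap D)\otimes_\Q\C$ and the relative singular homology $H_q(U_P(\C),(U_P\cap D)(\C))$, which is known to implement $\comp_{B,dR}$ on the affine pair $(U_P,U_P\cap D)$ (one can prove this case by combining Grothendieck's theorem \cite{Groth} with the usual mapping-cone presentation of relative cohomology via the divisor stratification $\{D_Q\}$). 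Because the pairing is a morphism of filtered complexes, it induces a morphism between the two spectral sequences, which is an isomorphism on $E_1$ after $\otimes_\Q\C$ by the above, hence an isomorphism on $E_\infty$, and one verifies that the induced isomorphism on abutments is the comparison map $\comp_{B,dR}$ by naturality with respect to the ordinary affine case.

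The main obstacle will be the careful handling of the restriction differential $\omega^P_Q\mapsto \omega^P_Q|_{D_{Q\cup\{q\}}}$ and its compatibility with the topological boundary maps on chains, where sign conventions and the interaction between Stokes' theorem and Poincaré residues must be made coherent. A clean route is to replace each closed stratum $D_Q$ by a deleted tubular neighbourhood $N_Q\smallsetminus D_Q$ with its retraction onto $D_Q$, so that the restriction differential is realised as the standard residue/tube map on forms and dually as the linking map on chains; Stokes and Grothendieck then apply uniformly on each open piece of the resulting open hypercover of $X$, and the sign bookkeeping reduces to that of the associated simplicial/Čech double complex, which is standard.
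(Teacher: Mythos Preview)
Your proposal is correct and follows essentially the same approach as the paper's own (sketch) proof: well-definedness via Stokes' formula applied to the total differential, and reduction to Grothendieck's comparison theorem on the affine pieces via the spectral sequence(s) of the triple complex. The paper's proof is terser---it applies Grothendieck directly to each $U_P\cap D_Q$ and invokes ``standard homological algebra'' rather than spelling out the spectral sequence comparison---but your more detailed version (filtering first by \v{C}ech index, then unpacking the relative case $(U_P,U_P\cap D)$ via the divisor stratification) is an accurate expansion of exactly what that phrase means.
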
 
\begin{proof} (Sketch). The pairing is well-defined  by Stokes' formula, along with the definition of the 
differentials in the complexes $(\ref{eqn: reltripcomplex})$ and $(\ref{eqn: chainreltripcomplex})$. 
By some standard homological algebra, the result follows  from Grothendieck's algebraic de Rham theorem for affine varieties, which 
implies that integration defines a natural isomorphism $H^i_{dR}(U_P \cap D_Q) \otimes_{\Q} \C \overset{\sim}{\To} H^i_{B}(U_P \cap D_Q)\otimes_{\Q} \C$ for all $P,Q$. \end{proof}

\subsection{Sectors and  blow-ups in projective space} \label{sect: sectorsblowups}
We can apply the above to the following situation. With the notation of \S\ref{sect: LinearBlowups}, let $S$ be a finite set and  $B\subset 2^S$ be a set of subsets
of $S$ closed under unions.  Let $P^B$ denote the corresponding blow-up  of $\Pro^S$, and $D^B\subset P^B$ the normal crossing divisor defined in $(\ref{eqn: Ddivisordefn})$.  Let $Y\subset P^B$ be a closed subvariety with the property that
$Y \cap \widetilde{\sigma}_B = \emptyset$. We set
$$X = P^B \backslash Y \qquad \hbox{ and }  \quad D = D^B \backslash (D^B \cap Y)\ . $$
The spaces $P^B$ come with a natural affine covering $\{U_{\FF,c}\} = \{\A^{\FF,c} \backslash (\A^{\FF,c} \cap Y) \}$ where the $\A^{\FF,c}$ are isomorphic to  affine spaces $\A^n$. 
 We have in mind, of course,  the case where $B$ is the set of \motic subgraphs of a 
Feynman graph, and $Y$ the strict transform of graph hypersurfaces.

 The polytope $\widetilde{\sigma}_B$ defined in \S\ref{sect: Bpolytope} can be decomposed into  regions  in the following way.
Choose any point which lies in the interior of $\sigma$:
\begin{equation} \label{eqn: zinteriorpoint}
z = (z_1: \ldots: z_n) \in \Pro^S(\R) \quad \hbox{ where } z_i >0  \hbox{ for all } i\ .
\end{equation}
It  defines a point  on every open  $U_{\FF,c}\subset \A^{\FF,c}$ in our covering. Let $\beta_1,\ldots, \beta_n$ be  the coordinates on $\A^{\FF,c}$ defined by  $(\ref{newpistarequations})$. Then
the inverse image of $z$ is given by equations $\beta_i = z_{a_i}/z_{b_i}$ for some indices $a_i, b_i$. 
The equations $\beta_i=0$ and $\beta_i=z_{a_i}/z_{b_i}$ define a hypercube $H_z$ in $U_{\FF,c}$. Let $\sigma^A_B(z)$ be a  face of this hypercube:
$$\sigma^A_B(z): = \{(\beta_1,\ldots, \beta_n) \in H_z: \beta_i = 0 \hbox{ for }  i \in A, \beta_i = z_{a_i}/z_{b_i} \hbox{ for } i \in B
\} \ ,$$
where $A,B\subset \{1,\ldots, n\}$ are disjoint.  
One can verify that $\sigma$ is tessellated by a  set of  $\sigma^P_Q(z)$, over the different charts $\FF,c$, for certain sets $P,Q$, and that 
$\{ \sigma^P_Q(z)\}$ defines a relative Betti homology class representing $\sigma$.

\begin{figure}[h!]
    \epsfxsize=12.0cm \epsfbox{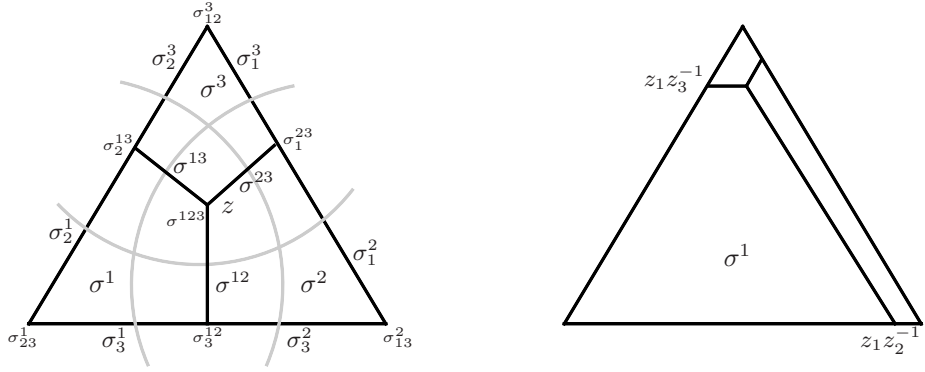}
  {\put(-320,30){{ $\sigma^1$}}}   {\put(-335,50){\small { $\sigma_2^1$}}} {\put(-315,9){\small { $\sigma_3^1$}}} {\put(-350,11){\tiny { $\sigma_{23}^1$}}}
   {\put(-240,30){{ $\sigma^2$}}}   {\put(-220,42){\small { $\sigma_1^2$}}}{\put(-245,9){\small { $\sigma_3^2$}}}{\put(-208,11){\tiny{ $\sigma_{13}^2$}}}
    {\put(-290,55){\tiny{ $\sigma^{123}$}}}    
      {\put(-270,60){{ $z$}}}      
         {\put(-280,11){\tiny { $\sigma_3^{12}$}}}
     {\put(-288,75){\small{ $\sigma^{13}$}}}   {\put(-314,84){\tiny{ $\sigma_2^{13}$}}}
      {\put(-263,68){\small{ $\sigma^{23}$}}}  {\put(-246,86){\tiny{ $\sigma_1^{23}$}}}
    {\put(-272,30){\small{ $\sigma^{12}$}}}
 {\put(-278,104){{ $\sigma^3$}}} {\put(-296,116){\small{ $\sigma_2^3$}}} {\put(-264,116){\small{ $\sigma_1^3$}}} {\put(-280,134){\tiny{ $\sigma_{12}^3$}}}
  {\put(-110,108){\small{ $z_{1}z_3^{-1}$}}}
   {\put(-28,9){\small{ $z_{1}z_2^{-1}$}}}
 {\put(-80,40){{ $\sigma^1$}}}  
 \caption{A decomposition of the  coordinate simplex   in $\Pro^3$ (here $B=\emptyset$),  defined by hyperplanes  $z_i \alpha_i= z_{j} \alpha_j$ for $0<z_{i}<\infty$ and 
 $i=1,2,3$.
 The affine open sets $U_i: \alpha_i \neq 0$ for $i=1,2,3$ are depicted  schematically by  grey arcs. On the right,   $z_2,z_3 \rightarrow 0$. }
 \end{figure}

\begin{cor} \label{correlperiod}
Any relative period of $P^B\backslash Y$ over the domain $\sigma$ is given by 
\begin{equation} \label{eqn: intsigmapq(z)}
\int_{\sigma} \omega = \sum_{P,Q} \int_{\sigma_Q^P(z)} \omega_Q^P\ ,
\end{equation} 
where the $\omega^P_Q$ are differential forms $(\ref{eqn:  omegaABforms})$ on an affine $U^P \cap D_Q$.
\end{cor}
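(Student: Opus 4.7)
The plan is to combine three ingredients already assembled in this section: the affine covering $\{U_{\FF,c}\}$ of $P^B$ from \S\ref{sect:  localcoords}, the relative Stokes theorem of \S\ref{sectionRelAlgCech}, and an explicit tessellation of $\widetilde{\sigma}_B$ compatible with this covering, defined via the interior point $z$ of $(\ref{eqn: zinteriorpoint})$. Once we have realised $[\sigma]$ as an honest relative Betti cycle in the triple complex $C_{\bullet,\bullet,\bullet}(\{U_{\FF,c}\}, D)$ and realised $[\omega]$ as an honest cocycle in $\Omega^{\bullet,\bullet,\bullet}(\{U_{\FF,c}\}, D)$, the pairing formula $(\ref{eqn: intpairingrelative})$ gives $(\ref{eqn: intsigmapq(z)})$ directly.

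First I would construct the Betti cycle representing $\sigma$. For each chart $\A^{\FF,c}$ with local coordinates $\beta_1,\ldots,\beta_n$ as in $(\ref{newpistarequations})$, the preimage of $z$ has coordinates $\beta_i = z_{a_i}/z_{b_i}$ for explicit indices $a_i,b_i$ determined by $(\FF,c)$, and the preimage of the closed coordinate simplex $\sigma$ under $\pi_B$ is exactly the axis-aligned box $H_z^{\FF,c}$ with $0 \leq \beta_i \leq z_{a_i}/z_{b_i}$. The faces $\sigma^P_Q(z) \subset H_z^{\FF,c}$ obtained by fixing $\beta_i=0$ for $i\in Q$ (landing in the divisor $D$) and $\beta_i = z_{a_i}/z_{b_i}$ for $i \in P$ (landing in the interior of the common refinement of adjacent charts) are smooth cells of dimension $n-|P|-|Q|$. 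I would verify that as $(\FF,c)$ varies, the top-dimensional cells $H_z^{\FF,c}$ tile $\widetilde{\sigma}_B$ (their images in $\Pro^S$ tile $\sigma$ because the hyperplanes $z_i\alpha_j = z_j\alpha_i$ cut $\sigma$ into precisely the regions where one of the coordinates $\beta_i$ dominates), while the $P$-faces for $|P|\geq 1$ are shared between adjacent charts with opposite induced orientations. With the sign conventions dictated by the differentials of $(\ref{eqn: chainreltripcomplex})$, this makes the collection $\{\sigma^P_Q(z)\}$ a total cycle representing $[\sigma]$ in $H_{N_G-1}(P^B\setminus Y, D)$, since by theorem \ref{thmsigmaGavoidsY} (or its hypothesis here that $Y\cap \widetilde{\sigma}_B=\emptyset$) every cell avoids $Y$.

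Next I would lift the class $[\omega]$ to a cocycle $\{\omega^P_Q\}$ in the triple complex. Since each $U_{\FF,c}$ is smooth affine over $\Q$ and $D\cap U_{\FF,c}$ is a union of coordinate hyperplanes, Grothendieck's theorem guarantees that $[\omega]$ can be represented by such a cocycle; in practice one produces it by taking an arbitrary collection of forms on the charts representing $[\omega]$ locally, correcting the Čech cocycle condition via primitives on intersections $U_P\cap D_Q$, and descending inductively in the $(p,q)$-filtration. Then the relative Stokes theorem of \S\ref{sectionRelAlgCech} yields the identity $(\ref{eqn: intsigmapq(z)})$ by pairing the cycle and cocycle.

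The main technical obstacle is the combinatorial/orientation bookkeeping in the first step: showing that the $H_z^{\FF,c}$ genuinely tile $\widetilde{\sigma}_B$ without overlap, and that the signs inherited from the Čech differentials and from the orientations of $\sigma^P_Q(z)$ conspire to make $\{\sigma^P_Q(z)\}$ a cycle whose class is $[\sigma]$. This is essentially a barycentric-subdivision argument adapted to our blow-up: within a single chart the statement reduces to the classical fact that a product of intervals is tessellated by its obvious sub-boxes, and across charts the compatibility comes from the fact that the coordinate change on $U_{\FF,c}\cap U_{\FF',c'}$ permutes the $\beta_i$ up to inversion on the locus where $z$ lies. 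Once this combinatorial verification is complete, the corollary is immediate.
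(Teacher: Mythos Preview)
Your approach matches the paper's: represent $[\sigma]$ by the tessellation $\{\sigma^P_Q(z)\}$ as a cycle in the relative chain complex $(\ref{eqn: chainreltripcomplex})$, represent $[\omega]$ by a cocycle $\{\omega^P_Q\}$ in the relative \v{C}ech--de Rham complex, and apply the pairing $(\ref{eqn: intpairingrelative})$; the paper leaves the combinatorial verification of the tessellation to the reader exactly as you do. One small correction: the sentence ``the preimage of the closed coordinate simplex $\sigma$ under $\pi_B$ is exactly the axis-aligned box $H_z^{\FF,c}$'' is false as written --- the preimage of $\sigma$ in a single chart $\A^{\FF,c}$ is the unbounded orthant $\{\beta_i\geq 0\}$, and $H_z^{\FF,c}$ is only the bounded piece cut off by the hyperplanes through $z$ --- but since you state the tessellation correctly immediately afterwards this slip does not affect the argument.
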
 
This corollary gives a means, albeit an inefficient one, to compute  non-global Feynman periods.  It is adapted to the method
of parametric integration \cite{PanzerPhd}.
 For instance, letting all $z_i=1$, 
each term in the  sum in local coordinates is an integral of an algebraic differential form over a cube $[0,1]^m$ for some $m\leq n$. 
\begin{rem}
This procedure is not to be confused with the notion of \emph{sector decomposition} in the physics literature. In that setting, one integrates the (pull-back of) the same globally-defined
form $\omega$  over each sector $\sigma^P$ and sums the contributions. In the above, we are integrating \emph{different forms} $\omega^P_Q$ over each sector. 
\end{rem}

\subsection{Limits and regularisation} \label{limitsandreg}
The integral $(\ref{eqn: intsigmapq(z)})$ does not depend on the point $z$. Since
each open affine $U_i(\C)$ contains the preimage  of $z$  for any point $z$ $(\ref{eqn: zinteriorpoint})$ in the interior of the coordinate
simplex $\sigma$, we can take limits in $(\ref{eqn: intsigmapq(z)})$:
$$
\int_{\sigma} \omega =  \lim_{z\rightarrow \infty} \sum_{P,Q} \int_{\sigma_Q^P(z)} \omega_Q^P\ $$ 
as $z$ tends to any point on the boundary of $\sigma$.   Many of the terms in the sum on the right-hand side
will tend to zero and can be dropped. By repeatedly taking limits, one obtains an expression for the integral on the left-hand side
as limits of possibly divergent  integrals over facets of $\sigma_B$.\footnote{It would be interesting, by studying the asymptotic
behaviour of these integrals as $z$ tends to the boundary, to define a consistent  notion of regularisation of divergent
integrals over faces of the polytope $\sigma_B$ which commutes with these limits.  In this case, we could write
the period of any non-global form as a linear combination of regularised integrals of global forms over faces of $\sigma_B$.
There is   no shortage of 
regularisation techniques for   Feynman integrals in the physics literature.
For instance,   if $P^B= P^G$ is obtained from  a Feynman graph, and $Y_G$ the graph hypersurface, the boundary strata of $P^G$ are related to the \motic Hopf algebra, and  closely resembles the combinatorics of the  BPHZ forest formula (see \cite{Angles}).
 This suggests a possible way to renormalise divergent integrals using the subtraction of counter-terms. }

\subsection{The exceptional locus  spectral sequence} In this section, let $X,D$ denote fibres of $X,D$  as  defined in \S\ref{sectionRelAlgCech}. For reasons which will become apparent
in a moment, let us write $E=D$.  
The divisor $E$ defines a stratification on $X$ by closed subvarieties. Write
$$E_J^o = E_J \backslash (E_J \cap \bigcup_{j\notin J} E_j)\ .$$
For instance, $E_{\emptyset}^o = X \backslash E$.
There is a `Gysin' or residue  spectral sequence
\begin{equation} \label{eqn:  ExcepSS} 
E^{p,q}_1 =  \bigoplus_{|J|=p} H^{q-p} (E^o_J)(-p)  \quad \Longrightarrow\quad  H^{p+q}(X)
\end{equation} 
where the differentials $d_1$ are given by  residues along the irreducible components of $E$ and $p,q\geq 0$.
From now on, let $X=P^G\backslash Y_G$ where $G$ is a Feynman graph, and let $E=D$ be defined by $(\ref{eqn: Ddivisordefn})$.
If  $G$ has no masses or momenta,    we obtain the spectral sequence considered by Bloch in \cite{Bloch}. 
The open strata  $E^o_J$ are complements of graph hypersurfaces in $\GG_m^n$. This spectral sequence is hard to 
control since  the cohomology of the latter is large and there are many cancellations.

Returning now to the case of a general Feynman graph $G$, and 
$X= P^G\backslash Y_G$, we see that 
it is more economical to  take $E \subset D$ to be the  exceptional divisor:
$$E = \cup_{\gamma \subset G} D_{\gamma}\  ,$$
where the union is only over the set of \motic subgraphs of $G$. We shall call the corresponding spectral sequence
$(\ref{eqn:  ExcepSS})$ the \emph{exceptional locus spectral sequence}.

\begin{thm} In this situation, the non-empty strata $E_J$ are indexed by strictly increasing sequences of \motic subgraphs of $G$:
\begin{equation} \label{eqn: Jasnested}
J: \quad \gamma_1 \subsetneq \gamma_2 \subsetneq \cdots \subsetneq \gamma_r \ .
\end{equation}
If we write $\gamma'_i = \gamma_i/\gamma_{i-1}$ for the successive quotients,  where $\gamma_0$  denotes the empty graph, then there is a canonical isomorphism
\begin{equation} \label{eqn: EoJproduct} 
E^o_J \, \cong  \,  \big( \Pro^{N_{\gamma'_1} -1} \backslash X_{\gamma'_1} \big) \times \cdots \times    \big( \Pro^{N_{\gamma'_r} -1} \backslash X_{\gamma'_r} \big) \ .
\end{equation} 
When $J=\emptyset$ is the empty set,  $E^o_{\emptyset} \cong \Pro^{N_G-1} \backslash X_G$. Therefore the $E_1$ terms of the 
spectral sequence $(\ref{eqn:  ExcepSS})$ only involve the cohomology of graph hypersurface complements in projective space of quotients of  \motic subgraphs of $G$.
In particular,
$$E_1^{p,q} = 0$$
if $q\geq N_G$ or if $p\geq h_{G}+1$. If $G$  has no masses or momenta, 
 $E_1^{p,q} =0$ for $p\geq  h_G$. 
\end{thm}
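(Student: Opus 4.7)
The strategy is to identify the stratification combinatorially with motic flags, transfer the recursive product structure from theorem \ref{thm: recursive} to the open strata, and then read off the vanishing from Artin's theorem on affine varieties together with a loop-number counting argument on the chain.

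For the stratification, the irreducible components of $E$ are the exceptional divisors $D_\gamma$ indexed by motic $\gamma \subsetneq G$. Theorem \ref{prop: PBstructure} says that $D_{\gamma_1}\cap D_{\gamma_2}$ is non-empty only when $\gamma_1,\gamma_2$ are nested, or else disjoint with union outside $B_G$. By theorem \ref{thm: moticproperties}(iii) the union of two motic subgraphs is again motic and hence in $B_G$, so only the nested case survives. Iterating, a non-empty stratum $E_J$ corresponds to a strictly increasing chain of motic subgraphs as in $(\ref{eqn: Jasnested})$.

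For the product structure, I would induct on $r$ using theorem \ref{thm: recursive}: the first step gives $D_{\gamma_1}=P^{\gamma_1}\times P^{G/\gamma_1}$, and within $P^{G/\gamma_1}$ the motic subgraph $\gamma_2/\gamma_1$ (theorem \ref{thm: moticproperties}(i), together with remark \ref{rem: moticintrinsic}) plays the role of $\gamma_2$, yielding $D_{\gamma_1}\cap D_{\gamma_2} = P^{\gamma_1}\times P^{\gamma_2/\gamma_1}\times P^{G/\gamma_2}$, and so on. To pass to the open stratum $E^o_J$ one removes $Y_G$ and all remaining exceptional divisors. On each factor $P^{\gamma'_i}$ (and on the final factor $P^{G/\gamma_r}$, which also covers the $J=\emptyset$ case) the projection to $\Pro^{N_{\gamma'_i}-1}$ is an isomorphism off the exceptional locus, and by proposition \ref{prop: moticasvaluation} every $L_\delta$ for $\delta\subset\gamma'_i$ motic is already contained in the graph hypersurface $X_{\gamma'_i}$, so removing $Y_{\gamma'_i}$ alone suffices. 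Each factor thus reduces to $\Pro^{N_{\gamma'_i}-1}\backslash X_{\gamma'_i}$, recovering $(\ref{eqn: EoJproduct})$.

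For the vanishing, each factor of $E^o_J$ is affine (complement of a hypersurface in projective space), so by Artin vanishing and K\"unneth $H^k(E^o_J)=0$ above the total dimension $N_G-r-1$; setting $p=r$ this gives $q<N_G$. For the bound on $p$, I would argue that the chain length is controlled by the all-or-nothing structure $(\ref{Deltaallornothing})$: at most one index $i_0$ can transition from ``not $m.m.$ in $G$'' to ``$m.m.$ in $G$'' (lemma \ref{lem: propertiesofmm}(i)), and at every other step $\gamma_i$ is $m.m.$ in $\gamma_{i+1}$, so moticness of $\gamma_{i+1}$ forces $h_{\gamma_i}<h_{\gamma_{i+1}}$. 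Combined with moticness of $G$ to give $h_{\gamma_p}<h_G$ when $\gamma_p$ is $m.m.$ in $G$, this yields $p\leq h_G$, and hence $E_1^{p,q}=0$ for $p\geq h_G+1$. In the $(0,0)$ case every subgraph is vacuously $m.m.$, there is no transition, and every inclusion forces a strict drop in loop number, saving one more unit and giving $p\leq h_G-1$.

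\textbf{Main obstacle.} The delicate point is the last step: one must chain the strict increases $h_{\gamma_i}<h_{\gamma_{i+1}}$ carefully through the single possible ``jump'' at $i_0$, and then exploit the strict containment $\gamma_p\subsetneq G$ together with moticness of $G$ to gain the final unit of loop number. The product identification in the middle step is essentially routine once one notes that $E$ contains no coordinate hyperplanes $D_e$, but that the containments $L_\gamma\subset X_G$ make removing only the exceptional divisors equivalent to removing the entire coordinate linear arrangement.
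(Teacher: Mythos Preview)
Your proposal is correct and follows essentially the same route as the paper. The paper identifies the nested flags via theorem \ref{prop: PBstructure} and closure of \motic subgraphs under unions, obtains the product structure from theorem \ref{thm: recursive}, deduces the $q$-vanishing from affineness of each factor, and for the $p$-vanishing simply cites lemma \ref{lem: coradicaldegree}; your loop-number chain argument with the single $m.m.$ transition is exactly an unpacking of that lemma's proof. Your observation that the blow-down together with $L_\delta\subset X_{\gamma'_i}$ (proposition \ref{prop: moticasvaluation}) identifies $P^{\gamma'_i}\backslash Y_{\gamma'_i}$ minus its exceptional divisors with $\Pro^{N_{\gamma'_i}-1}\backslash X_{\gamma'_i}$ fills in a step the paper leaves implicit, and your remark about the extra factor $P^{G/\gamma_r}$ is well taken.
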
 

\begin{proof} By theorem  \ref{prop: PBstructure} and the fact that the union of two \motic subgraphs is \motic\!\!,  two irreducible components $E_{\gamma_1}, E_{\gamma_2}$ of $E$ meet if and only if $\gamma_1, \gamma_2$ are nested.
Iterating, we see that every $E_J$ corresponds to a nested sequence   of \motic graphs $(\ref{eqn: Jasnested})$, and furthermore, by applying  theorem  $\ref{thm: recursive}$, that 
$$E_J \cong ( P^{\gamma'_1} \backslash Y_{\gamma'_1})  \times \ldots \times ( P^{\gamma_r'} \backslash Y_{\gamma'_r}) \ .$$
Now every divisor $E_j$ with $j \notin J$ which meets $E_J$ corresponds to a \motic subgraph $\gamma\subset G$
such that  $ \gamma_{i-1} \subsetneq  \gamma \subsetneq \gamma_{i}$ for some $i$. The latter  are in one-to-one correspondence with the \motic
subgraphs of  $\gamma'_i = \gamma_i/\gamma_{i-1}$ by theorem $\ref{thm: moticproperties}$. Therefore $E^0_J$ is obtained from $E_J$ by removing all the exceptional divisors in each factor, which gives $(\ref{eqn: EoJproduct}).$

That $E_1^{p,q}=0$ for $q\geq E_G$  is a consequence of the fact that  $\Pro^{N_{\gamma'_i} -1} \backslash X_{\gamma'_i}$
is affine of dimension $N_{\gamma'_i}-1$,  since $X_{\gamma'_i}$ is a non-empty hypersurface by lemmas \ref{lem: PsiGvanishing} and \ref{lem:  Xivanishing},
and hence $H_{B/dR}^r(\Pro^{N_{\gamma'_i} -1} \backslash X_{\gamma'_i})=0$ for $r\geq N_{\gamma'_i}$. 
Finally $E_1^{p,q}=0$ whenever $p+1$ is strictly larger than the maximal length of any  chain $(\ref{eqn: Jasnested})$. This is $h_G+1$ if $G$ has
kinematics, and $h_G$ otherwise by  lemma $\ref{lem: coradicaldegree}$. 
\end{proof} 
The terms $(\ref{eqn: EoJproduct})$ are in one-to-one correspondence with the terms in the $r$-fold iteration of the 
reduced \motic coproduct.

The previous theorem implies that the graph motives defined here are extensions of the cohomology 
of the complements of graph hypersurfaces in projective space.

\begin{rem}
There are some variants. Firstly, if $G$ has a \motic subgraph $\gamma$ with exactly one edge $e$, then the graph hypersurface has an 
irreducible component $V(\alpha_e)$.  If we remove all such components from $X_G$ then we can consider the smaller spectral sequence using  $E=\cup_{\gamma} D_{\gamma}$, where $\gamma$ are \motic subgraphs  of $G$ with $\geq 2$ edges. Finally, there is an obvious variant on replacing $X_{G}$ but $X'_{G}$. 
\end{rem}

\section{Appendix II: worked example } \label{sect: nonglobalexample} 
For the benefit of physicists who may not be accustomed to the techniques of the previous section, we give a complete worked example in a simple situation. 
The amplitudes computed here can be obtained directly, but we shall use completely general methods  without taking any shortcuts, except in the very final section. 
Consider the graph
\begin{center} 
\fcolorbox{white}{white}{
  \begin{picture}(292,105) (-30,-5)
    \SetWidth{1.0}
    \SetColor{Black}
       \Line[arrow,arrowpos=0.5,arrowlength=5,arrowwidth=2,arrowinset=0.2](58,51)(80,51)
    \Vertex(80,51){3}
    \Line(80,51)(128,83)
    \Line[double,sep=2](80,51)(128,19)
 \Line[double,sep=2](128,19)(128,83)
    \Vertex(128,19){3}
    \Vertex(128,83){3}
    \Line[arrow,arrowpos=0.5,arrowlength=5,arrowwidth=2,arrowinset=0.2](148,98)(128,83)
    \Text(100,71)[lb]{{\Black{$3$}}}
    \Text(100,25)[lb]{{\Black{$1$}}}
     \Text(52,47)[lb]{{\Black{$q$}}}
       \Text(150,95)[lb]{{\Black{$-q$}}}
    \Text(132,48)[lb]{{\Black{$2$}}}
       \Text(62,80)[lb]{{\Black{$G$}}}
  \end{picture}
}
\end{center}  
Its graph polynomial is 
$$\Xi_G = q^2 \alpha_3 (\alpha_1+ \alpha_2) + (m_1^2 \alpha_1+ m_2^2 \alpha_2) (\alpha_1+\alpha_2+\alpha_3)$$
whose zero locus defines a family of quadrics $X_{\Xi_G} \subset \Pro^3$. For generic values of $q,m_1,m_2$, this quadric meets the coordinate axes at a single point $\alpha_1=\alpha_2=0$ which corresponds  to the \motic subgraph of $G$ spanned by the edges $1,2$. Let  $P^G \rightarrow \Pro^2$ be the blow up of $\Pro^2$ at the point $D_1 \cap D_2$, i.e., $\alpha_1=\alpha_2=0$, and let $Y_G\subset P^G$ be the strict transform of $X_{\Xi_G}$ (only).  The situation is depicted below.

\begin{figure}[h!]
    \epsfxsize=10.0cm \epsfbox{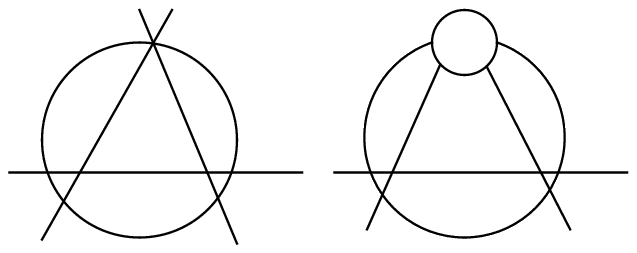}
 {\put(-305,80){{ $X_{\Xi_G}\subset \Pro^2$}}}  
  {\put(-35,80){{ $Y_G\subset P^G$}}}   {\put(-87,86){{ $D_{12}$}}}  
   {\put(-230,25){\small{ $D_{3}$}}}   {\put(-80,25){\small{ $D_{3}$}}}  
    {\put(-252,60){\small{ $D_{2}$}}}   {\put(-113,60){\small{ $D_{2}$}}}  
      {\put(-205,60){\small{ $D_{1}$}}}   {\put(-54,60){\small{ $D_{1}$}}} 
 \end{figure}
The exceptional divisor is called $D_{12}$. 
For simplicity, we  shall  only consider the graph hypersurface $X_{\Xi_G}$, and not $X_{\Psi_G}$, and compute the fibres of 
$$\mot'_G = H^2(P^G \backslash Y_G, D \backslash D \cap Y_G)$$
over $U^{\gen}_{2,2}$, where $D$ is defined in $(\ref{eqn: Ddivisordefn})$.
It satisfies $\gr^W_2 \mot'_G \cong \Q(-1)\oplus \Q(-1)$, and hence has two non-trivial periods.  
   These  were computed in an indirect manner in \cite{BrHyp}, \S5.3.2,  using the fact
 they are necessarily  logarithms of  projective invariants of seven points in $\Pro^2$. 
We give full details of the period computation using the general methods described above.  The calculations are  somewhat tedious but  are more subtle than they may   at first appear.

\subsubsection{Exceptional locus spectral sequence} Let $G$ be the graph above. It has exactly  one non-trivial \motic subgraph $\gamma$, which is  the subgraph spanned by  the
edges $1,2$. Since $D_{12}$ is isomorphic to a copy of $\Pro^1$, 
the  relevant  terms $E^{p,q}_1$  in the exceptional locus spectral sequence, for $(p,q)\in [0,1]\times [1,2]$, are therefore
\begin{eqnarray} \label{excepssinexample}
 H^2(\Pro^2 \backslash X_{\Xi_G}) & \rightarrow & H^1(\Pro^1 \backslash X_{\Xi_\gamma})(-1)   \\
 H^1(\Pro^2 \backslash X_{\Xi_G}) & \rightarrow & H^0(\Pro^1 \backslash X_{\Xi_\gamma})(-1)   \nonumber  
 \end{eqnarray}
 since $G/\gamma$ has only one edge and hence $\Pro^0 \backslash X_{\Xi_{G/\gamma}}$ is a point. Since $X_{\Xi_G}$ is 
 an odd-dimensional  quadric, $H^i(\Pro^2 \backslash X_{\Xi_G}) =0$ for $i=1,2$, and since $\Xi_{\gamma} = (q^2 +m_1^2)\alpha_1+ (q^2+m_2^2)\alpha_2$, 
 we have $\Pro^1 \backslash X_{\Xi_\gamma} = \A^1$. Therefore the above $E_1^{p,q}$ terms in $(\ref{excepssinexample})$ are
 \begin{eqnarray}  
 0  & & 0   \nonumber  \\
 0 & & \Q(-1)    \nonumber 
 \end{eqnarray}
 and hence $H^2(P^G \backslash Y_{G}) = \Q(-1)$.  
  Now consider the relative cohomology spectral sequence $E^{p,q}_1 = \bigoplus_{|I|=p} H^q (D_I \backslash D_I \cap Y_G)$ converging
 to $\mot'_G$. The terms $E^{1,1}_1$ all vanish except for
 $$ H^1(D_3 \backslash Y_{\Xi_{G/3}}) =H^1(\GG_m) \cong  \Q(-1)$$
 since the other faces $D_\bullet  \backslash (D_{\bullet} \cap Y_G) $, for $\bullet \in \{1,2,12\}$, are copies of $\A^1$.
 The face $D_3$ is given by $\alpha_3=0$ and $\Xi_{G/3} = (m_1^2 \alpha_1 + m_2^2 \alpha_2)(\alpha_1+ \alpha_2)$. 
  Since $E^{2,0}_1$ is in weight zero, and by the previous computations $E^{0,2}_1 = H^2(P^G\backslash Y_G) = \Q(-1)$, and $E^{1,2}_1=0$, we deduce that   $\gr^W_2 \mot'_G = \Q(-1)^{\oplus 2}$. 
 We shall compute the period of a class $[\widetilde{\omega}] \in (\mot'_G)_{dR}$ which maps to a generator
$[\omega ] \in H_{dR}^2 (P^G \backslash Y_G) \cong \Q(-1)$. Note that  its image is zero in $H_{dR}^2(\Pro^2 \backslash X_{\Xi_G})=0$.  
 The other period comes from the face $D_3$ via a face map, so is a period of the quotient graph $G/3$.

\subsubsection{Affine covering}  \label{sectAffinecovering} 
The prescription of \S$\ref{sect: LinearBlowups}$ defines the following affine spaces corresponding to maximal flags of subgraphs of $G$, 
where $B = \{ \emptyset, \{1,2\}, \{1,2,3\}\}$, 
\begin{eqnarray} 
\A_{12,1} & = & \A^{\FF,c}   \quad \hbox{ where } \quad (\FF,c) = ( \emptyset \subset \{1,2\} \subset \{1,2,3\} ,  j_1= 2, j_2=3 ) \nonumber \\
\A_{12,2}&  =& \A^{\FF,c} \quad \hbox{ where } \quad (\FF,c) = ( \emptyset \subset \{1,2\} \subset \{1,2,3\} ,  j_1=1, j_2=3 ) \nonumber \\
\A_{1} &=& \A^{\FF,c} \quad \hbox{ where } \quad (\FF,c) = ( \emptyset  \subset \{1,2,3\} ,  j_1=2) \nonumber \\
\A_{2}& =& \A^{\FF,c} \quad \hbox{ where } \quad (\FF,c) = ( \emptyset  \subset \{1,2,3\} ,  j_1=1) \ . \nonumber 
\end{eqnarray} 
Let $(\alpha_1:\alpha_2:\alpha_3)$ be projective coordinates on $\Pro^2$. The affine rings of the above spaces are  $\Or(\A_{12,1}) = \Z[ \beta^{12,1}_1 , \beta^{12,1}_2]$, where, by abuse of notation,
$$\beta^{12,1}_1 = {\alpha_1 \over \alpha_2} \quad , \quad \beta_2^{12,1} = \alpha_2\  , $$
 and  similarly $\Or(\A_{12,2}) = \Z[ \beta^{12,2}_1 , \beta^{12,2}_2]$, with
$$\beta^{12,2}_1 = \alpha_1 \quad , \quad \beta_2^{12,2} = {\alpha_2 \over \alpha_1}\ , $$
and $\alpha_3=1$ in both cases. 
Denote the coordinate rings  of $\A_1$ and $\A_2$  by $\Z[\alpha_1,\alpha_3]$ ($\alpha_2=1$) and   $\Z[\alpha_2,\alpha_3]$ ($\alpha_1=1$) respectively. The charts $\A_{\bullet}$ provide a canonical  affine covering of $P^G$.   The exceptional divisor $D_{12}$ is given by the equations 
$\beta^{12,1}_2=0$ and $\beta^{12,2}_1=0$ in the charts $\A_{12,1}$ and $\A_{12,2}$ respectively.

 Let $q^2, m_1^2, m_2^2$ satisfy the genericity conditions $(\ref{eqn: genericmassmomenta})$, namely
 $$ q^2+m_1^2 \neq 0 \quad , \quad q^2+m_2^2 \neq 0 \quad , \quad m_1^2 \neq 0\quad , \quad m_2^2 \neq 0 \ , $$
 and for a fixed choice of such $q,m_1,m_2$, let us write 
 \begin{eqnarray} \label{eqn:  QandQbar}
 Q  &=& q^2 \alpha_3( \alpha_1+ \alpha_2) + (m_1^2 \alpha_1 + m_2^2 \alpha_2)(\alpha_1+\alpha_2+\alpha_3)\ ,   \\
 \overline{Q} &=&  q^2 ( \alpha_1+ \alpha_2) + (m_1^2 \alpha_1 + m_2^2 \alpha_2)(\alpha_1+\alpha_2+1 )  \ . \nonumber
 \end{eqnarray} 
  Let us denote by $U_{\bullet}\subset \A_{\bullet}$  the open subsets obtained by removing the strict transform of $V(Q)$. Thus
  $\Or(U_{12,i}) =\Z[\beta^{12,i}_1, \beta^{12,i}_2, Q_i^{-1}]$ for $i=1,2$,  where 
  $$Q_1= q^2( \beta^{12,1}_1 +1 ) + (m_1^2 \beta^{12,1}_1 +m_2^2)(\beta^{12,1}_1 \beta^{12,1}_2 + \beta^{12,1}_2 + 1)  $$
  $$Q_2= q^2( \beta^{12,2}_2 +1 ) + (m_1^2+ m_2^2 \beta^{12,2}_2 )(\beta^{12,2}_1 \beta^{12,2}_2 + \beta^{12,2}_1 + 1)  $$
   Likewise
 $\Or(U_1) = \Z[\alpha_1,\alpha_3, Q|_{\alpha_2=1}^{-1}]$ and   $\Or(U_2) = \Z[\alpha_2,\alpha_3, Q|_{\alpha_1=1}^{-1}]$.
  
 \subsubsection{Absolute \v{C}ech-de Rham class} The sets $U_1,U_2, U_{12,1}, U_{12,2}$ form  our 
canonical  open affine covering of $P^G \backslash Y_G$.  Define four closed differential forms:
$$ \omega_{ab}= d\log \overline{Q}\big|_{U_{12,a} \cap U_b} \quad  \in \quad  \Omega^1 (U_{12,a} \cap U_b) \quad \hbox{ where } a, b\in \{1,2\}\ ,$$
where $\overline{Q}$ is defined in $(\ref{eqn:  QandQbar})$.
Consider the element  $\{\omega\}$ of degree $2$ in the  total complex of the absolute \v{C}ech-de Rham double complex\footnote{this is the triple complex $(\ref{eqn: reltripcomplex})$ in the special case when the divisor $D$ is empty.}  $\Omega^{n}(U_P)$ whose only non-zero 
components are the $\omega_{ab}$.  The element $\{\omega\}$ is   closed for the total differential and so defines a class
$$[\omega] \in H_{dR}^2 (P^G \backslash Y_G)\ .$$

\subsubsection{Relative \v{C}ech-de Rham class}
We next wish to extend $\{\omega\}$ to a closed element in the relative \v{C}ech-de Rham triple complex 
$(\ref{eqn: reltripcomplex})$ where $D$ has four irreducible components $D_{12}, D_1, D_2, D_3$.  
This will necessarily  use the fact that $D_i \backslash (D_i \cap Y_G)$ is isomorphic to $\A^1$, for $i=1,2$, and hence has vanishing $H^1$. 

Define four closed 1-forms
$$ \mu_i   \in \Omega^1(D_i \cap U_{12,i}) \quad , \quad \nu_i \in \Omega^1(D_i \cap U_i) $$
where $i\in \{1,2\}$, by 
$$\mu_1 =  {2 m_2^2\, d \beta^{12,1}_2 \over q^2 + m_2^2 (\beta^{12,1}_2+1)} \quad \hbox{ and } \quad \nu_1 =  { (q^2+ m_2^2) d\alpha_3 \over q^2 \alpha_3 + m_2^2 (1+\alpha_3)} $$
and $\mu_2, \nu_2$ are defined by interchanging all subscripts `1' and  `2' (and `${12,1}$' with `$12,2$').
Note that $D_i$ does not meet  $U_{12,j}$ or $U_j$ if $i\neq j$. Consider the element  $\{\widetilde{\omega}\}$ of degree 2
in the total  complex associated to  $(\ref{eqn: reltripcomplex})$ which is zero in every component  of $\Omega^n(U_P \cap D_Q)$ except for 
the eight elements 
$\omega_{ab} \in \Omega^1(U_{12,a} \cap U_b)$ for $a,b\in \{1,2\}$ as above and $\mu_i, \nu_i$, for $i=1,2$.  The element 
$\{\widetilde{\omega}\}$ is closed for the total differential essentially because of the equations 
$$\omega_{ii} |_{D_i}  = \mu_i - \nu_i \quad \hbox{ for } i =1,2\ ,$$
on the open $U_{12,i} \cap U_i\cap D_i \subset D_i$ (recall that the left-hand side is $d\log Q|_{\alpha_i=0}$). 
This defines a class  in relative de Rham cohomology
$$[\widetilde{\omega}] \in H_{dR}^2 ( P^G \backslash Y_G, D \backslash (D \cap Y_G)$$
whose image in absolute de Rham cohomology is $[\omega] \in H_{dR}^2 ( P^G \backslash Y_G)$. 

\subsubsection{The period} We can compute the period 
$$I= \int_{\sigma_G} \{\widetilde{\omega}\}$$
following the prescription of \S\ref{sect: sectorsblowups}. Let $t>0$ and let $z_t\in \Pro^2(\R)$ 
denote the point with projective coordinates $(t:t:1)$.   Only six regions $\sigma^P_Q(z)$ of the domain of integration 
provide a non-zero contribution to the period integral, shown below.

\begin{figure}[h!]
    \epsfxsize=12.0cm \epsfbox{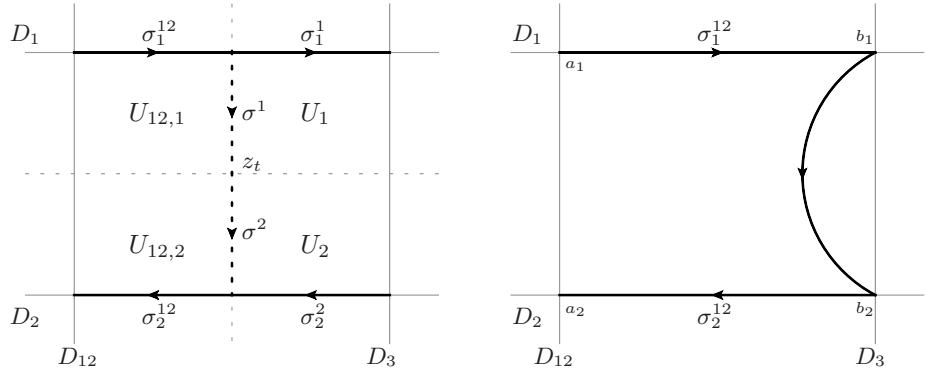}
 {\put(-350,115){{ \small $D_{1}$}}}   {\put(-300,115){{ \small $\sigma^{12}_{1}$}}}   {\put(-240,115){{ \small $\sigma^1_{1}$}}}  
  {\put(-305,85){{ $U_{12,1}$}}}   {\put(-240,85){{ $U_1$}}}  
 {\put(-335,-7){{  \small{ $D_{12}$}}}}  {\put(-215,-7){\small{ $D_{3}$}}}     
 {\put(-305,35){{ $U_{12,2}$}}}   {\put(-240,35){{ $U_2$}}} 
 {\put(-350,8){{ \small $D_{2}$}}}    {\put(-300,8){{ \small $\sigma^{12}_{2}$}}}   {\put(-240,8){{ \small $\sigma^2_{2}$}}}                              
     {\put(-262,85){\small { $\sigma^{1}$}}}  
      {\put(-262,67){\small { $z_t$}}}  
       {\put(-262,40){\small { $\sigma^{2}$}}}  
  {\put(-160,115){{ \small $D_{1}$}}}   {\put(-140,105){{ \tiny $a_{1}$}}}  {\put(-140,12){{ \tiny $a_{2}$}}}  
   {\put(-90,115){{ \small $\sigma^{12}_{1}$}}}  {\put(-30,115){{ \tiny $b_{1}$}}}{\put(-30,12){{ \tiny $b_{2}$}}}
  {\put(-155,-7){{  \small{ $D_{12}$}}}}   {\put(-90,8){{ \small $\sigma^{12}_{2}$}}} 
   {\put(-160,8){{ \small $D_{2}$}}}   
   {\put(-30,-7){\small{ $D_{3}$}}}     
 \caption{Integration of a  period.  Left: the non-trivial contributions to the period integral. Right: 
 taking the limit as $t \rightarrow \infty$ and replacing the paths with tangential base points.}
 \end{figure} 
 Here,  $\sigma^1 = \{(x:t:1), 0\leq x \leq t\}$ and $\sigma^2 = \{(t: x:1),  0 \leq x \leq t\}$.
  
From the general formula $(\ref{eqn: intpairingrelative})$ we have
$$I=  \sum_{i=1,2} \int_{\sigma^{12}_i(t)} \mu_i + \int_{\sigma^i_i(t)} \nu_i + \int_{\sigma^i(t)} \omega_{ii}\ .$$
It can  be computed directly, and does not depend on $t$. 
Instead, we shall compute it by letting $t\rightarrow \infty$ and using tangential base points. 
\subsubsection{Limits and tangential basepoints}
Consider the following basepoints on $D_1$: an ordinary base point  $a_1= D_1 \cap D_{12}$ (given by $\beta^{12,1}_2=0$), and 
a tangential base point  $b_1$ on $D_1$ at the point $D_1 \cap D_{3}$ defined by $-{\partial  /  \partial \beta^{12,1}_2}$. 
Denote the analogous basepoints on $D_2$ by $a_2, b_2$, as shown in the figure.
Since our forms have at most logarithmic poles, we deduce on taking $t\rightarrow \infty$, that 
$$I  = \int_{a_1}^{b_1} \mu_1 + \int_{b_2}^{a_2} \mu_2 + \int_{b_1}^{b_2} d \log \overline{Q}  $$
as shown in the previous figure. The third term is given by 
$$\lim_{t \rightarrow \infty} \log \Big( {\overline{Q}(t,0) \over \overline{Q}(0,t)}\Big) = \lim_{t \rightarrow \infty} \log \Big( {q^2 +m_1^2 (t+1) \over q^2 + m_2^2 (t+1)}\Big)  = \log  \Big( {m_1^2 \over m_2^2}\Big)\ . $$
The first term is given by

$$ \lim_{t\rightarrow -1_{\infty}} \int_{0}^{t} {2 m_2^2\, d x \over q^2 + m_2^2 (x+1)}  = 2  \lim_{t\rightarrow -1_{\infty}}  \log \Big({q^2 +m_2^2 (t+1)  \over 
q^2+ m_2^2} \Big) = 2 \log \Big({m_2^2  \over 
q^2+ m_2^2} \Big) $$
where $ \lim_{t\rightarrow -1_{\infty}}$ denotes the regularised limit as $t\rightarrow \infty$ with respect to the tangential base point $-{\partial/\partial t}$ (set
 $\log t$  to zero). The second term is obtained by  replacing $m_2$ by $m_1$ in this formula and changing the sign. 
In total, 
$$ I = 2 \log \Big( {q^2+ m_1^2 \over q^2 + m_2^2} \Big) - \log \Big( {m_1^2 \over m_2^2}\Big) \ .  $$
We see that $I$ is a linear combination  of limits of divergent generalised amplitudes.

On the other hand, the amplitude of the one-loop graph $G/3$   is proportional
to $\log m_1^2  m_2^{-2}$. By the face relations, this provides another  period of $G$.   From the cohomology calculations, we know that the motivic periods of $\mot'_G$ are motivic logarithms. Since these are uniquely determined by their period, we deduce the:

\begin{cor} The motivic periods  of $\mot'_G$ are spanned by $1$ and the two motivic logarithms (on the space of generic kinematics $U^{\gen}_{2,2}$)
$$\log^{\mm} \Big( {m_2^2 \over m_1^2} \Big) \qquad \hbox{ and } \qquad \log^{\mm} \Big( {q^2 + m_2^2 \over q^2 +m_1^2}\Big)\ .$$
\end{cor}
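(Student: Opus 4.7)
\medskip

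\noindent\textbf{Proof proposal.} The plan is to combine the cohomological computation already performed in the appendix with a rigidity argument for motivic logarithms to identify the two non-trivial generators of $\HF^{\mm}(\mot'_G)$ over $k_{2,2}$.

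First, I would observe that the exceptional locus spectral sequence calculation given in the appendix, together with the face-map contribution from $D_3$, shows that $\mot'_G$ is mixed Tate with weight-graded pieces $\gr^W_0 \mot'_G = \Q(0)$ and $\gr^W_2 \mot'_G = \Q(-1)^{\oplus 2}$, and $\gr^W_j \mot'_G = 0$ for all other $j$. Hence $\mot'_G$ is an iterated extension of $\Q(0)$ by two copies of $\Q(-1)$, and the space of motivic periods of $\mot'_G$ relative to $\sigma_G$ is a $k_{2,2}$-module generated by $1$ together with at most two motivic periods equivalent to motivic logarithms (matrix coefficients of extensions of $\Q(-1)$ by $\Q(0)$).

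Next, I would produce two concrete such motivic periods. The first comes from the face relation of Theorem~\ref{thmFaceRel} applied to the edge $e=3$: the inclusion of the facet $D_3\backslash(D_3\cap Y_G)$ produces a morphism $\mot_{G/3}\to \mot'_G$ sending $\sigma_{G/3}$ to the class of the corresponding boundary component of $\sigma_G$; the motivic period of the one-loop massive bubble $G/3$ is, by an elementary direct computation in $\HF^{\mm}_{0,2}$, exactly $\log^{\mm}(m_2^2/m_1^2)$ (up to sign), and its image in $\HF^{\mm}(\mot'_G)$ provides the first generator. The second comes from the class $[\widetilde{\omega}]$ constructed in the worked example: by Theorem 11.2 it represents a well-defined element of $(\mot'_G)_{dR}$ whose period, paired with $[\sigma_G]$, equals the quantity $I = 2\log((q^2+m_1^2)/(q^2+m_2^2)) - \log(m_1^2/m_2^2)$ computed above.

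It now suffices to show $I^{\mm}_G([\widetilde\omega])$ equals $2\log^{\mm}((q^2+m_1^2)/(q^2+m_2^2)) - \log^{\mm}(m_1^2/m_2^2)$ in $\HF^{\mm}_{2,2}$. Here I would invoke the rigidity of motivic logarithms: an extension of $\Q(-1)$ by $\Q(0)$ in $\HH(S)$, equipped with a Betti framing invariant under $\pi_1(U^{\gen}_{2,2})$, is determined up to isomorphism by its single period, because $\mathrm{Ext}^1_{\HH(S)}(\Q(-1),\Q(0))\otimes_{\Q}\C$ is detected injectively by the period map (the Betti framing pins down the integer lattice, so two such extensions with equal period are isomorphic as framed extensions). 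Comparing periods on both sides and using that $\{1,\log^{\mm}(m_2^2/m_1^2),\log^{\mm}((q^2+m_2^2)/(q^2+m_1^2))\}$ are linearly independent over $k_{2,2}$ (their periods are independent as multivalued meromorphic functions of $(q,m_1,m_2)$ on $U^{\gen}_{2,2}$), we conclude that these three elements span $\HF^{\mm}(\mot'_G)$, proving the corollary.

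\medskip\noindent
The main obstacle is the rigidity step: turning the equality of numerical periods into an equality of motivic periods requires the injectivity of the period pairing restricted to the two-dimensional Kummer-type sub-quotient, and this must be applied uniformly over the base $U^{\gen}_{2,2}$ rather than fibrewise. Once granted, the rest is bookkeeping with the face map and the explicit \v{C}ech--de~Rham computation already carried out.
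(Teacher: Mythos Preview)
Your proposal is correct and follows essentially the same approach as the paper: the paper's argument (stated in the paragraph immediately preceding the corollary) is exactly that the cohomology computation forces the motivic periods to be motivic logarithms, that the face relation for $D_3$ supplies the period of $G/3$, and that ``these are uniquely determined by their period'' --- precisely your rigidity step. Your identification of the rigidity over the base as the delicate point is fair, and the paper glosses over it in the same way.
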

Thus the periods of $G$ are regularised limits of  linear combinations of  amplitudes of the three graphs $G/1$, $G/2$, $G/3$.
\noindent See \cite{BrHyp} \S5.3.2 for an interpretation of these periods in terms of  hyperbolic geometry.
  This example illustrates how the motivic periods of small graphs can in principle be computed algorithmically.

\end{document}